\definecolor{ForestGreen}{rgb}{0.1333,0.5451,0.1333}
\definecolor{DarkRed}{rgb}{0.65,0,0}
\definecolor{Red}{rgb}{1,0,0}
\newcommand{\tO}{\tilde{O}}
\newcommand{\Sbar}{\overline{S}}
\newcommand{\dem}{\mathrm{dem}}
\newcommand{\rdem}{\mathrm{rdem}}
\newcommand{\super}{\mu}
\newcommand{\alert}[1]{{\textcolor{red}{#1}}}
\newcommand{\eat}[1]{}
\newcommand{\polylog}{{\rm polylog}}
\Crefname{algorithm}{Algorithm}{Algorithms}
\Crefname{Definition}{Definition}{Definitions}
\Crefname{fact}{Fact}{Facts}
\Crefname{@theorem}{Theorem}{Theorems}
\begin{document}

\newcommand\relatedversion{}
\renewcommand\relatedversion{\thanks{The full version of the paper can be accessed at \protect\url{https://arxiv.org/abs/xxx}}} 

\title{Steiner Connectivity Augmentation and Splitting-off in Poly-logarithmic Maximum Flows\thanks{Ruoxu Cen and Debmalya Panigrahi were supported in part by NSF grants CCF-1750140 (CAREER Award) and CCF-1955703.}}
\author{Ruoxu Cen\thanks{Department of Computer Science, Duke University. Email: {\tt ruoxu.cen@duke.edu}} \and William He\thanks{Duke University. Email: {\tt william.he@duke.edu}} \and Jason Li\thanks{Simons Institute for the Theory of Computing, UC Berkeley. Email: {\tt jmli@alumni.cmu.edu}} \and Debmalya Panigrahi\thanks{Department of Computer Science, Duke University. Email: {\tt debmalya@cs.duke.edu}}}
\date{}

\maketitle


\begin{abstract} \small\baselineskip=9pt
    We give an almost-linear time algorithm for the Steiner connectivity augmentation problem: given an undirected graph, find a smallest (or minimum weight) set of edges whose addition makes a given set of terminals $\tau$-connected (for any given $\tau > 0$). The running time of our algorithm is dominated by polylogarithmic calls to {\em any} maximum flow subroutine; using the recent almost-linear time maximum flow algorithm (Chen et al., FOCS 2022), we get an almost-linear running time for our algorithm as well. This is tight up to the polylogarithmic factor even for just two terminals. Prior to our work, an almost-linear (in fact, near-linear) running time was known only for the special case of global connectivity augmentation, i.e., when all vertices are terminals (Cen et al., STOC 2022). 
    
    We also extend our algorithm to the closely related Steiner splitting-off problem, where the edges incident on a vertex have to be {\em split-off} while maintaining the (Steiner) connectivity of a given set of terminals. Prior to our work, a nearly-linear time algorithm was known only for the special case of global connectivity (Cen et al., STOC 2022). The only known generalization beyond global connectivity was to preserve all pairwise connectivities using a much slower algorithm that makes $n$ calls to an all-pairs maximum flow (or Gomory-Hu tree) subroutine (Lau and Yung, SICOMP 2013), as against $\polylog(n)$ calls to a (single-pair) maximum flow subroutine in this work.
\end{abstract}


\section{Introduction}
\label{sec:introduction}
Given an undirected graph $G = (V, E)$ with a set of terminals $T\subseteq V$ and a target connectivity $\tau > 0$, the {\em Steiner connectivity augmentation} problem asks for an edge set of minimum cardinality whose addition to $G$ makes $T$ $\tau$-connected, where a set of vertices $T$ is $\tau$-connected if for every pair of vertices $s, t\in T$, the $(s,t)$-connectivity is $\ge\tau$. In the weighted version of the problem, edges in $G$ can have nonnegative integer weights and the goal is to minimize the total weight of edges added to $G$ to make $T$ $\tau$-connected. All edge weights and the target connectivity are assumed to be polynomially bounded in the size of the graph. We distinguish between adding edges of nonnegative {\em weight} and nonnegative {\em cost}. An edge of {\em weight} $w$ is equivalent to $w$ parallel edges; adding it increases the value of every cut that it belongs to by $w$ and adds $w$ to the objective that we seek to minimize. In contrast, if edges have nonuniform {\em costs} and the goal is to minimize the total cost of the added edges, then an edge only increases the value of a cut it belongs to by 1, but incurs a given nonnegative cost in the objective. This version of the problem with nonuniform costs is known to be NP-hard. We do not consider this version of the problem in this paper. 

In this paper, we give an algorithm for Steiner connectivity augmentation in weighted graphs that runs in $\polylog(n)$ times $F(m, n)$ time, where $F(m, n)$ is the running time of any maximum flow algorithm on a graph with $m$ edges and $n$ vertices. This is optimal up to the $\polylog(n)$ term since a Steiner connectivity augmentation algorithm can be used to determine $(s,t)$ connectivity by setting $T = \{s, t\}$ and (binary) searching for the maximum $\tau$ that admits an empty solution. If we use the recent breakthrough $m^{1+o(1)}$-time maximum flow algorithm of Chen  et al.~\cite{ChenKLPPS22}, then the running time of our algorithm becomes $m^{1+o(1)}$.

The special case of this problem where $T = V$, i.e., the goal is to increase the {\em global} connectivity of the graph to a target $\tau$, has been extensively studied in the literature. Watanabe and Nakamura~\cite{WatanabeN87} were the first to give a polynomial-time algorithm for this problem in unweighted graphs, while the first strongly polynomial algorithm for weighted graphs was obtained by Frank~\cite{Frank92}. Since then, several algorithms~\cite{CaiS89,NaorGM97,Gabow16,Gabow94,NagamochiI97,BenczurK00,CenLP22a} have progressively improved the running time for this problem until it was recently shown to be solvable in $\tO(m)$ time~\cite{CenLP22b}.\footnote{$\tO(\cdot)$ hides polylogarithmic factors.} Another natural special case is when $T = \{s, t\}$. In this case, the optimal solution is to add an edge of weight $\tau - \lambda(s, t)$ between $s$ and $t$, where $\lambda(s, t)$ is the $(s,t)$-connectivity. Thus, for $|T| = 2$, the problem is solvable in $F(m,n)$ time. But for general $T$ not equal to $V$ or $\{s,t\}$, it is not a priori obvious if the problem is polynomial-time tractable. In fact, by changing the problem slightly where only a given set of edges can be added, the problem becomes NP-hard. Therefore, it is somewhat surprising that the Steiner connectivity augmentation problem can be solved for general $T$ not just in polynomial-time, but with only a $\polylog(n)$ overhead over the trivial case of $|T| = 2$.

A related problem to connectivity augmentation is that of {\em splitting-off}. In this problem, we are given a graph $G = (V, E)$ and a vertex $x\in V$ that has to be split-off. A splitting-off operation at $x$ pairs two edges $(u, x)$ and $(x, v)$ and replaces them by the shortcut edge $(u, v)$ in the graph. The goal is to split-off all edges incident on $x$ 
while preserving some connectivity property of the remaining graph. 
Lov\'asz~\cite{Lovasz79} introduced the splitting-off operation and showed that any vertex can be split-off while preserving the {\em global} connectivity of the remaining graph. Since then, many splitting-off theorems have been shown (e.g.,~\cite{Mader78,Szigeti08}) that preserve a rich set of connectivity properties.
The splitting-off operation has been an influential inductive tool for both extremal graph theory and graph algorithms with applications in network design~\cite{ChekuriS2009, ChanFLY11}, tree packing~\cite{Lau2007approximate, BhalgatHKP08, ChekuriK14}, tree decomposition~\cite{Merker17}, graph orientation~\cite{FrankK02, FrankK03, BernathIKKS08}, and metric TSP~\cite{Gao18}.

Preserving a rich set of connectivity properties, however, comes at the cost of slower algorithmic implementation. While an $\tO(m)$-time implementation of Lov\'asz's theorem has recently been shown~\cite{CenLP22b}, the best implementations of stronger splitting-off theorems are much slower. In this paper, we extend our Steiner connectivity augmentation algorithm to obtain an algorithm for Steiner splitting-off, namely split-off a vertex $x$ while preserving the Steiner connectivity of any set of terminals $T\subseteq V$. (The Steiner connectivity of a set of terminals $T$ is the minimum $(s,t)$ connectivity for any terminals $s, t\in T$.) Our Steiner splitting-off algorithm has the same running time as the Steiner connectivity augmentation algorithm, namely $\polylog(n)$ calls to any maximum flow algorithm. To the best of our knowledge, prior to our work, the only known splitting-off algorithms that preserve a more general property than global connectivity actually preserve all pairwise connectivities at a much slower running time. 
The current best algorithm for this latter problem is due to Lau and Yung \cite{LauY13}, whose running time is dominated by $n$ calls to a Gomory-Hu tree subroutine. The paper describes the running time of the algorithm as is $\tO(m+n\tau^3)$ for unweighted graphs where $\tau$ is the maximum pairwise connectivity. This is based on the best known (partial) Gomory-Hu tree algorithm at the time of the Lau-Yung paper~\cite{HariharanKP07, BhalgatHKP08}. Using the more recent $\tO(n^2)$-time Gomory-Hu tree algorithm for weighted graphs or the $m^{1+o(1)}$-time Gomory-Hu tree algorithm for unweighted graphs~\cite{AbboudKLPST22} improves the running time of the Lau-Yung algorithm to $\tO(n^3)$ in weighted graphs and $mn^{1+o(1)}$ in unweighted graphs. 
In this paper, we improve the running time for both weighted and unweighted graphs further from $\tO(n^3)$ and $mn^{1+o(1)}$ respectively to $m^{1+o(1)}$, but only for the Steiner splitting-off problem.


\begin{theorem}\label{thm:main-intro}
    There are $m^{1+o(1)}$-time algorithms for the Steiner connectivity augmentation and Steiner splitting-off problems on undirected graphs with $m$ edges. The algorithms are randomized and succeed with high probability.\footnote{Following standard convention, \emph{with high probability} is used to mean with probability at least $1-n^{-c}$ for an arbitrarily large constant $c>0$, in this paper.}
\end{theorem}

\subsection{Our Techniques}
In this overview, we will only consider unweighted graphs; the same ideas also extend to weighted graphs. We will use $d(X)$ to denote the {\em degree} of any set of vertices $X$, i.e., the number of edges with exactly one endpoint in $X$.

\paragraph{Supreme Sets.}
A key role in (global) connectivity augmentation (e.g., \cite{WatanabeN87,NaorGM97,Gabow16,Benczur94,BenczurK00,CenLP22a,CenLP22b}) is played by {\em extreme sets} that are defined as follows: a set of vertices $X$ is extreme if for every proper subset $Y\subset X$, we have $d(Y) > d(X)$. Intuitively, the extreme sets represent the ``bottlenecks'' in connectivity augmentation, and we want to ensure that satisfying their deficit also satisfies that of all other sets. To this end, the algorithms first perform a step called {\em external augmentation}. In this step, an external vertex $x$ is added to the graph and the algorithm finds a minimum set of edges with one endpoint at $x$ that satisfies the deficits for all the extreme sets. It is not difficult to see that these edges also satisfy the connectivity requirements for all other sets of vertices. Now, the second step of the algorithm replaces these edges by a new set of edges only on the actual vertices (i.e., not incident to $x$) in a careful manner so as to preserve the augmented connectivity. The prior algorithms differ in how these two steps are implemented.

For Steiner connectivity augmentation, we can similarly define extreme sets, although we now need to restrict to sets that contain at least one terminal vertex (since sets not containing terminals do not have any connectivity requirement). But, this creates more extreme sets than in the global case. Consider a set $X$ containing terminals such that all its proper subsets $Y$ satisfying $d(Y) \le d(X)$ do not contain any terminals. In the global case, $X$ would not be an extreme set but now we need to declare $X$ as extreme since its requirement will not be met by any of its subsets. In fact, the number of extreme sets can now be exponentially large (see Fig.~\ref{fig:exponential} in Appendix~\ref{sec:exponential} for an example). In contrast, one of the key properties of extreme sets in global connectivity augmentation is that they are laminar in structure (and hence linear in number), which allows the design of efficient algorithms that find all the extreme sets.

Since we cannot find all extreme sets explicitly, we first partition them into equivalence classes based on their projection onto the terminals. Now, consider all the extreme sets with the same terminal projection. We show that the extreme sets are closed under union (but not intersection!) and therefore, we can define a unique maximum extreme set in this class. We call these maximum extreme sets the {\em supreme sets} and show that they form a laminar family. This also means that the terminal projections themselves also form a laminar family, and our first goal is to find this tree of terminal projections. The key property that we prove is the following:
\begin{lemma}
    A supreme set with terminal projection $R$ is also the {\em earliest} minimum-value cut that bipartitions the terminals $T$ as $R$ and $T\setminus R$.
\end{lemma}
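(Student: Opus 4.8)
The plan is to pin down two containments. Write $R$ for a terminal projection that is realized by at least one (Steiner) extreme set, set $\lambda_R := \min\{d(S): S\subseteq V,\ S\cap T = R\}$ (finite, since $S=R$ itself is feasible), and let $S^\star$ be the \emph{earliest} minimum-value cut with $S^\star\cap T = R$, i.e.\ the inclusion-wise minimal among the minimizers. A preliminary observation, obtained from submodularity of $d(\cdot)$ in the usual way, is that the minimum cuts with a fixed terminal projection $R$ are closed under union and intersection; hence $S^\star$ is well defined and equals the intersection of all of them. The goal is to show $S^\star$ is \emph{the} supreme set for $R$, which amounts to proving: (i) every extreme set $X$ with $X\cap T=R$ satisfies $X\subseteq S^\star$; and (ii) $S^\star$ is itself extreme (it automatically has projection $R$). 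Combining (i) and (ii), $S^\star$ is an extreme set with projection $R$ that contains every other such set, hence is their maximum, which is exactly the supreme set.

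For (i) I would argue by submodularity: given extreme $X$ with $X\cap T = R$, both $X\cap S^\star$ and $X\cup S^\star$ have terminal projection $R$ and therefore degree at least $\lambda_R$, so $d(X\cap S^\star)\le d(X)+d(S^\star)-d(X\cup S^\star)\le d(X)$. If $X$ were not contained in $S^\star$, then $X\cap S^\star$ would be a proper subset of $X$ still meeting $T$ (its projection is $R\ne\emptyset$), contradicting extremeness of $X$; so $X\subseteq S^\star$. For (ii), fix any extreme set $X_0$ with projection $R$ (it exists by the hypothesis on $R$); by (i), $X_0\subseteq S^\star$. Let $Y\subsetneq S^\star$ with $Y\cap T\ne\emptyset$; since $Y\subseteq S^\star$, its projection $R':=Y\cap T$ is a nonempty subset of $R$, and I must show $d(Y)>\lambda_R = d(S^\star)$. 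If $R'=R$, then $Y$ is a cut with projection $R$ strictly inside $S^\star$, and since $S^\star$ is the smallest minimum such cut, $d(Y)>\lambda_R$. If $R'\subsetneq R$, suppose toward a contradiction that $d(Y)\le\lambda_R$; then $X_0\cup Y$ has projection $R\cup R'=R$, so $d(X_0\cup Y)\ge\lambda_R$, while $X_0\cap Y$ has projection $R'\ne\emptyset$ and is a \emph{proper} subset of $X_0$ (otherwise $R = X_0\cap T\subseteq Y\cap T = R'$, impossible); submodularity then gives $d(X_0\cap Y)\le d(X_0)+d(Y)-d(X_0\cup Y)\le d(X_0)$, contradicting that $X_0$ is extreme. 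Hence $S^\star$ is extreme, and the lemma follows.

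The step I expect to be the real obstacle is (ii) in the case $R'\subsetneq R$: ruling out low-degree subsets of $S^\star$ that contain only \emph{some} of $R$'s terminals. Note the statement is false without the hypothesis that $R$ is realized by an extreme set (for instance a triangle on three terminals, with $R$ any two of them): the content of that hypothesis is precisely the witness $X_0$, and the trick is to play $X_0\cup Y$ and $X_0\cap Y$ against the extremeness of $X_0$ via submodularity. Everything else — closure of minimum cuts under $\cap/\cup$, the identification of the earliest cut with their intersection, and part (i) — is routine submodular-function bookkeeping.
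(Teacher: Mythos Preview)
Your proof is correct, and the key step in (ii) with $R'\subsetneq R$ is handled cleanly: the submodular inequality on $X_0$ and $Y$, together with $d(X_0\cup Y)\ge\lambda_R$ and the assumed $d(Y)\le\lambda_R$, forces $d(X_0\cap Y)\le d(X_0)$, contradicting extremeness of $X_0$ since $X_0\cap Y$ has projection $R'\ne\emptyset$ and is a proper subset of $X_0$.

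Your route differs from the paper's in exactly this step. The paper, to rule out an extreme violator $Z\subsetneq S^\star$ with $\rho(Z)\subsetneq R$, invokes the \emph{supreme set} $\mu(\rho(Z))$ of the smaller projection and the laminarity of supreme sets (their Lemma~3.5) to place $\mu(\rho(Z))\subsetneq\mu(R)$, then uses extremeness of $\mu(R)$ to force $d(Z)\ge d(\mu(\rho(Z)))>d(\mu(R))\ge d(S^\star)$. You instead fix a single extreme witness $X_0$ with projection $R$ and run submodularity on $X_0$ and $Y$ directly. This is more elementary: it needs neither the prior fact that $\mu(R)$ is extreme nor the laminarity lemma, only the bare hypothesis that \emph{some} extreme set realizes $R$. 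The paper's version, on the other hand, leans on structure it has already built and illustrates how the supreme-sets hierarchy controls violators at every projection level, which is thematically useful later. Your part (i) is also an addition not present in the paper (they instead close the argument with extremeness of $\mu(R)$), and it makes the identification $S^\star=\mu(R)$ symmetric and fully self-contained.
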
    
This connection allows us to design a divide and conquer algorithm that uses polylogarithmic calls to a maximum flow subroutine and returns a laminar family of vertex sets that includes all the supreme sets. Using a sequence of postprocessing steps, we eliminate the spurious sets from this laminar family and extract the tree of terminal projections. Since extreme sets (for global connectivity) are an important concept in their own right that has been studied extensively in the literature, we hope this structural and algorithmic exploration of extreme sets for Steiner connectivity and the definition of supreme sets is of independent value. We present this in \Cref{sec:extreme-sets}.

\bigskip

The augmentation algorithm (\Cref{sec:augmentation}) has three parts. The first part performs external augmentation on the tree of terminal projections (\Cref{sec:external}). Note that this means that all the edges we add are incident to the terminals. (This is the key property that makes the Steiner connectivity augmentation problem simpler than the Steiner splitting-off problem that we will consider later; this property will not hold in the latter problem.) We now need to replace the edges added by external augmentation with edges between terminals. We view this as a splitting-off process (\Cref{sec:splitoffview}). The splitting-off proceeds in two steps. The first step (\Cref{sec:chains}) achieves Steiner connectivity of $\tau-1$ by adapting an iterative framework of adding {\em augmentation chains} while respecting the vertex degrees (on the terminals) given by external augmentation~\cite{BenczurK00}. To implement this framework efficiently, we adapt the use of {\em lazy} data structures from \cite{CenLP22a}, and show that the overall implementation of this step takes only near-linear time. 

\paragraph{Augmenting Connectivity by Random Matchings.}
Finally, we need to augment Steiner connectivity by one, from $\tau-1$ to $\tau$ (\Cref{sec:augment-1}).
  For global connectivity, augmenting connectivity from $\tau-1$ to $\tau$ uses the cactus data structure for all minimum cuts which can be computed in $\tO(m)$ time~\cite{KargerP09}. For minimum Steiner cuts, however, there is no data structure that can be computed as efficiently (the only known data structure is called a {\em connectivity carcass}~\cite{DinitzV94} whose computation is highly inefficient). To remedy this, we revisit the algorithm for augmenting global connectivity from $\tau-1$ to $\tau$. Consider the simple case of augmenting connectivity of a cycle on $n$ vertices from $2$ to $3$. Now, suppose we iteratively add edges that form a matching with the following property: when adding an edge $(u, v)$, ensure that they are non-adjacent on the cycle, and after adding the edge, shortcut the two neighbors of $u$ and the two neighbors of $v$. We show this in Figure~\ref{fig:augment-cycle-1}. It is not difficult to show that this gives an optimal augmentation for the cycle. 
 
 Crucially, this algorithm can be implemented even without knowing the structure of the cycle, since a random pair of vertices is unlikely to be adjacent (except at the end). Indeed, even if we add a random matching on half the vertices instead of single edge to the cycle, it will not contain any adjacent pair with constant probability. Thus, we can implement this algorithm using $O(\log n)$ random matchings, as long as we can certify correctness of a single matching (i.e., whether it contains an adjacent pair or not). We use this intuition to design an algorithm for augmenting the (Steiner) connectivity by one. The algorithm first partitions the set of terminals based on the minimum Steiner cuts (of value $\tau-1$), and then iteratively adds a sequence of random matchings between the terminal sets in this partition. We also give an algorithm that can certify whether a random matching can be extended to an optimal solution, and if not, we simply discard the matching and repeat the step again. Given the interest in the literature in the special case of augmenting the connectivity by one and the simplicity of our algorithm, we believe this algorithm is of independent interest.
 
 
 \paragraph{Steiner Splitting-Off.} 
 The Steiner splitting off algorithm is presented in \Cref{sec:splitting}. 
 This problem can be viewed as Steiner connectivity augmentation, but where the degree of every vertex in terms of the added edges is constrained by their original degree to the vertex being split-off. This creates additional complications because the connectivity augmentation algorithm that we just sketched always adds edges to terminals whereas the splitting-off algorithm might require us to add edges to nonterminals as well. Suppose we are augmenting the connectivity of a set of extreme sets that share the same terminal projection. Earlier, we could add edges to the terminals in the projection and this benefits all the extreme sets; so we did not need the explicit extreme sets structure (and it is too expensive to compute it in any case). Now, we need to add edges incident to nonterminals, but these edges will benefit some but not all of the extreme sets with the same terminal projection. To solve the problem of adding edges to nonterminals optimally, we encode this problem as a sequence of maximum flow problems on suitably defined auxiliary graphs for each heavy path in a {\em heavy-light decomposition} of the supreme sets forest. The external edges are tied to the heavy paths, and this allows us to track the contribution of edges incident to nonterminals. The details of this encoding and the consequent algorithm are given in \Cref{sec:deg-external}.

 Another difficulty arises in the maintenance of the (lazy) data structures for tracking the cut values of the supreme sets while adding augmentation chains. Our data structures do not explicitly track the supreme sets, but rather their intersections with the terminals. This was not a problem in augmentation since the edges we add are always between terminals. But, now that we are forced to add edges incident to nonterminals, we do not immediately know whether a supreme set's cut value was increased by the addition of an edge. We bypass this problem by showing that there is a small subset of {\em representative} supreme sets such that it is sufficient for our data structures to only maintain the cut values of the representatives. The details appear in \Cref{sec:deg-splitting}. Finally, we adapt our algorithm for augmenting Steiner connectivity by one using random matchings to the degree-constrained case (\Cref{sec:deg-augment-1}).

\subsection{Related Work}
Connectivity augmentation problems are well-studied in many settings. Since we already mentioned the line of work for global connectivity augmentation earlier, we describe some other connectivity augmentation problems here. Jackson and Jord\'an \cite{JacksonJ05} proposed a polynomial time algorithm for vertex connectivity augmentation in undirected graphs when the connectivity target is a constant. This result was generalized to directed graphs by Frank and Jord\'an \cite{FrankJ99} who gave a polynomial-time algorithm, and was further improved by V\'egh and Bencz\'ur \cite{VeghB08} to a strongly polynomial-time algorithm. Another variant of connectivity augmentation considered in the literature is that of inclusion-wise monotone connectivity targets on vertex sets (Ishii and Hagiwara \cite{IshiiH06}, Ishii~\cite{Ishii09}). 
They showed that this problem is NP-hard in general, but gave polynomial-time algorithms when the targets are strictly greater than $1$.

A popular line of work in connectivity augmentation involves restricting the set of edges to a given set, or more generally, allowing the edges to have nonuniform costs. This makes the problem NP-hard to even augment (global) connectivity by one~\cite{FredericksonJ81}. Nevertheless, Marx and V\'egh \cite{MarxV15} showed that with edge costs, it is fixed parameter tractable (FPT) to increment connectivity with the number of new edges as the parameter. In a similar vein, Klinkby et al.~\cite{KlinkbyMS21} showed an FPT result for the problem of augmenting a directed graph to make it strongly connected even with edge costs. There is also a long line of work in approximation algorithms for adding a minimum number of edges to a graph to increase the connectivity to $2$. Jain's iterative rounding framework~\cite{jain2001factor} yields a $2$-approximation for this problem. The approximation factor of $2$ was subsequently improved in a series of works for the case where the input graph is a tree \cite{grandoni2018improved,byrka2020breaching,cecchetto2021bridging,traub2022better}. Recently, Grandoni et al. \cite{grandoni2022breaching} gave an algorithm breaking the approximation factor of $2$ even when the input graph is a forest (which is equivalent to general input graphs for this problem).

\section{Preliminaries}
\label{sec:prelim}
For any nonempty vertex set $X\subsetneqq V$, the cut $(X, V\setminus X)$ (or cut $X$ in short) is the set of edges connecting $X$ and $V\setminus X$.
Denote $d(X)$ to be the cut value of $X$, defined as the sum of weights of edges in cut $X$. In particular, $d(\emptyset)=d(V)=0$.

For any $s, t\in V$, $s\ne t$, an $s$-$t$ cut is a cut $(S, T)$ separating $s$ and $t$. An $s$-$t$ min cut is an $s$-$t$ cut with minimum value among all $s$-$t$ cuts. Denote $\lambda(s,t)$ to be the value of the $s$-$t$ min cut. Among the (possibly many) $s$-$t$ min cuts, the one whose $s$-side is minimal (as a vertex set) is called the earliest $s$-$t$ min cut, and the one whose $t$-side is minimal (or equivalently, maximal in $s$-side) is called the latest $s$-$t$ min cut. (They are unique because the intersection and union of two $s$-$t$ min cuts are also $s$-$t$ min cuts.)

For disjoint vertex sets $X$ and $Y$, similarly define an $X$-$Y$ min cut to be the minimum cut separating $X$ and $Y$, and $\lambda(X, Y)$ be the value of $X$-$Y$ min cut. The earliest (resp.\ latest) $X$-$Y$ min cut is the $X$-$Y$ min cut whose $X$-side (resp.\ $Y$-side) is minimal. These notations about $X$-$Y$ cuts can also be understood as the corresponding notations of $s$-$t$ cuts by contracting $X$ and $Y$ to be $s$ and $t$.

We now define two key terms that we use throughout the paper: crossing and submodularity.
\begin{Definition}[Crossing sets]
We say that vertex sets $X$ and $Y$ \emph{cross} if $X\cap Y,\ X\setminus Y$ and $Y\setminus X$ are all non-empty.

A family of sets is \emph{laminar} if no two sets in the family cross. Such a family can be represented by a collection of disjoint rooted trees, where each node corresponds to a set in the family, so that (the set corresponding to) an ancestor node of a tree contains (the set corresponding to) a descendant node, and two nodes that are incomparable in a single tree (or are in different trees) are disjoint sets. Formally, for each set $R$ in the family, the node of $R$ is a root iff it is maximal in the family, and $R$ is a child of $X$ iff $R\subseteq X$ and there is no other set $Y$ in the family such that $R\subseteq Y\subseteq X$.
\end{Definition}

\begin{fact}[Submodularity and posi-modularity of cuts]
\label{fact:submodularity}
For any vertex sets $X$ and $Y$,
\begin{equation}\label{eq:submodularity-1}
    d(X\cap Y)+d(X\cup Y)\le d(X)+d(Y)
\end{equation}
\begin{equation}\label{eq:submodularity-2}
    d(X\setminus Y)+d(Y\setminus X)\le d(X)+d(Y)
\end{equation}
The following immediate corollaries of \emph{(\ref{eq:submodularity-1})} will be useful.
\begin{equation}\label{eq:submodularity-3}
    d(X\cap Y)\ge d(X) \implies d(X\cup Y)\le d(Y)
\end{equation}
\begin{equation}\label{eq:submodularity-4}
    d(X\cap Y)> d(X) \implies d(X\cup Y)< d(Y)
\end{equation}
\end{fact}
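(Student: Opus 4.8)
The plan is to prove the submodularity inequality \eqref{eq:submodularity-1} directly by a region-counting argument, then deduce posi-modularity \eqref{eq:submodularity-2} from it by a complementation trick, and finally read off \eqref{eq:submodularity-3} and \eqref{eq:submodularity-4} as one-line algebraic consequences of \eqref{eq:submodularity-1}.

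For \eqref{eq:submodularity-1}, fix vertex sets $X,Y$ and partition $V$ into the four (possibly empty) regions $A=X\cap Y$, $B=X\setminus Y$, $C=Y\setminus X$, and $D=V\setminus(X\cup Y)$. Every edge $e$ of weight $w_e$ joins two of these regions (possibly the same one), so there are ten edge types. An edge contributes $w_e$ to $d(Z)$ precisely when exactly one of its endpoints lies in $Z$; I would tabulate this contribution for $Z$ equal to each of $X$, $Y$, $X\cap Y$ (which is $A$), and $X\cup Y$ (which is $V\setminus D$), across all ten types. The routine check shows that for every type the contribution to $d(X\cap Y)+d(X\cup Y)$ equals the contribution to $d(X)+d(Y)$, with the single exception of edges joining $B=X\setminus Y$ and $C=Y\setminus X$: such an edge contributes $2w_e$ to $d(X)+d(Y)$ but $0$ to $d(X\cap Y)+d(X\cup Y)$. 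Summing over all edges gives \eqref{eq:submodularity-1} (and, as a byproduct, that equality holds iff no edge crosses from $X\setminus Y$ to $Y\setminus X$).

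For \eqref{eq:submodularity-2}, I would apply \eqref{eq:submodularity-1} to the pair $X$ and $V\setminus Y$ in place of $X,Y$: then $X\cap(V\setminus Y)=X\setminus Y$ and $X\cup(V\setminus Y)=V\setminus(Y\setminus X)$, and invoking the identity $d(Z)=d(V\setminus Z)$ — a cut and its complement are the same edge set, so $d(V\setminus Y)=d(Y)$ and $d(V\setminus(Y\setminus X))=d(Y\setminus X)$ — inequality \eqref{eq:submodularity-1} becomes exactly $d(X\setminus Y)+d(Y\setminus X)\le d(X)+d(Y)$, which is \eqref{eq:submodularity-2}. (Alternatively, one obtains \eqref{eq:submodularity-2} directly from the same table, where now the only exceptional type is the one joining $A=X\cap Y$ and $D=V\setminus(X\cup Y)$.) Finally, rewriting \eqref{eq:submodularity-1} as $d(X\cup Y)\le d(X)+d(Y)-d(X\cap Y)$ immediately yields \eqref{eq:submodularity-3} (if $d(X\cap Y)\ge d(X)$ then $d(X\cup Y)\le d(Y)$) and \eqref{eq:submodularity-4} (the same argument with both inequalities strict).

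There is no real obstacle here; the only points needing care are the degenerate cases where some region is empty or equals all of $V$ — handled uniformly by the convention $d(\emptyset)=d(V)=0$, since the region-counting argument never uses nonemptiness of the regions — and the complementation identity $d(Z)=d(V\setminus Z)$, which is precisely what converts submodularity into posi-modularity.
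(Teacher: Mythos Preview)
Your proposal is correct and is the standard textbook proof of these inequalities. The paper itself states Fact~\ref{fact:submodularity} without proof, treating it as a well-known background result in the Preliminaries; it only remarks that \eqref{eq:submodularity-3} and \eqref{eq:submodularity-4} are immediate corollaries of \eqref{eq:submodularity-1}, which is exactly what you do in your last step.
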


\subsection{Proof of Tractability}
Before discussing our efficient algorithm, we first observe that polynomial-time tractability of the Steiner connectivity augmentation and splitting-off problems follow easily from previous works. Indeed, these problems are special cases of pairwise connectivity augmentation and splitting off problems, for which we give short proofs of polynomial tractability based on previous work below. 

For the splitting off problem preserving pairwise connectivities, the running time of the current best algorithm by Lau and Yung \cite{LauY13} is dominated by $n$ calls to a Gomory-Hu tree subroutine. Although the result stated in \Cref{lem:LauY} is for unweighted multigraphs, the same algorithm works for integer-weighted graphs if we conceptually regard an edge of weight $w$ as $w$ parallel edges. 

\begin{lemma}[\cite{LauY13}]\label{lem:LauY}
There exists an algorithm that, given an unweighted undirected graph with a special vertex $x$ to be split off where $d(x)$ is even and there is no cut edge incident to $x$, solves splitting off preserving all-pairs connectivities of $V\setminus \{x\}$ in $O(n)$ attempts. Each attempt picks a pair of vertices $(u, v)$ and splits $(x, u)$ and $(x, v)$ for some amount, and the amount can be calculated by a Gomory-Hu tree subroutine.
\end{lemma}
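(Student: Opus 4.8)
The plan is to recover the Lau--Yung algorithm together with its analysis. It rests on two ingredients: an existence statement that all edges at $x$ can be split off while preserving every pairwise edge-connectivity, and a batching scheme that realizes such a complete splitting using only $O(n)$ min-cut / Gomory--Hu computations rather than the naive $\Theta(d(x))$.

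For existence I would invoke Mader's splitting-off theorem: if $x$ has no incident cut edge and $d(x)\neq 3$, then $x$ lies in an \emph{admissible pair} $(xu,xv)$, i.e.\ one whose splitting-off (delete $xu,xv$, add $uv$) preserves $\lambda(a,b)$ for all $a,b\in V\setminus\{x\}$; iterating eliminates all $d(x)$ edges at $x$. Since $d(x)$ is even and positive it never equals $3$ during the process (it drops by $2$ per split), and the ``no cut edge at $x$'' hypothesis is maintained, because a cut edge at $x$ would witness a drop in some $\lambda(a,b)$, contradicting admissibility of all preceding splits.

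For the batching I would first compute, for a candidate pair $(u,v)$, the largest $\delta$ such that splitting off $\delta$ parallel copies of $(xu,xv)$ still preserves all pairwise connectivities. The key elementary observation is that splitting off $\delta$ copies of the pair decreases by exactly $2\delta$ the value of every cut with $x$ on one side and both $u,v$ on the other, and leaves all other cuts unchanged. Hence the operation is admissible in this strong all-pairs sense iff $2\delta\le d(S)-\lambda(a,b)$ for every cut $S$ that separates $\{u,v\}$ from $x$ and every pair $a,b\neq x$ separated by $S$, together with $\delta$ at most the multiplicities of $xu$ and of $xv$. This minimum ``slack'' over constrained cuts is exactly what a Gomory--Hu tree exposes (after contracting $u$ with $v$ and treating $x$ as a distinguished terminal), so $\delta$ for a given pair is obtained from one Gomory--Hu computation, which can moreover be maintained incrementally across attempts.

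The crux, and the step I expect to be the main obstacle, is choosing the pair $(u,v)$ so that each attempt makes guaranteed, irreversible progress, yielding the $O(n)$ bound. Here one uses Mader's theorem to guarantee that an admissible pair always exists, and argues that one can be located among a small set of candidates read off the current Gomory--Hu tree near $x$; then splitting off the \emph{maximal} feasible amount $\delta$ between the chosen pair either exhausts all $xu$ or all $xv$ edges (permanently removing a neighbor of $x$, which happens at most $n-1$ times) or creates a new tight constrained cut that alters the Gomory--Hu tree in the vicinity of $x$ in a way that cannot recur; a potential bounded by $O(n)$ (roughly, the number of neighbors of $x$ plus the size of the part of the Gomory--Hu tree near $x$) therefore strictly decreases at each attempt. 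Summing over attempts gives the claimed $O(n)$ bound, each attempt's amount computed by a Gomory--Hu subroutine, and the final graph with $x$ deleted has the same all-pairs connectivities on $V\setminus\{x\}$ as $G$. The extension to integer-weighted graphs is immediate by regarding a weight-$w$ edge as $w$ parallel edges and running all cut/flow subroutines on the weighted graph directly.
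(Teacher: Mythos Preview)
The paper does not prove this lemma; it is quoted verbatim as a result of Lau and Yung \cite{LauY13} and used as a black box (see the sentence preceding the lemma and its use in \Cref{lem:simple-augment-time}). There is therefore no proof in the paper to compare your proposal against.

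As a standalone sketch of the Lau--Yung result, your outline has the right shape: Mader's theorem supplies existence of admissible pairs, maximal batching determines the amount per attempt, and a Gomory--Hu tree computes that amount. But the part you correctly flag as ``the crux'' is also where the proposal stops being a proof. The statement that each maximal split ``either exhausts all $xu$ or all $xv$ edges \ldots\ or creates a new tight constrained cut \ldots\ in a way that cannot recur'' is exactly the content of the $O(n)$ bound, and you have not defined the potential, shown it is bounded by $O(n)$, or argued that it strictly decreases. In particular, the second alternative (a new tight cut appears) can in principle happen many times unless one is careful about \emph{which} pair to split; Lau and Yung's analysis hinges on a specific rule for selecting the pair from the Gomory--Hu tree and on tracking how the structure of tight sets near $x$ evolves. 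Your sketch also glosses over the fact that the ``no cut edge at $x$'' hypothesis needs a one-line check after each batch (not just after each unit split), and that locating an admissible pair ``among a small set of candidates read off the current Gomory--Hu tree'' is itself nontrivial. If you want to turn this into an actual proof, you would need to consult \cite{LauY13} for the precise pair-selection rule and the potential argument; what you have written is a plausible roadmap but not a proof.
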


Using the $\tO(n^2)$ time Gomory-Hu tree algorithm \cite{AbboudKLPST22}, we can solve Steiner splitting-off as a special case of splitting-off preserving all pairwise connectivities in $\tO(n^3)$ time.

For the Steiner connectivity augmentation problem, we first perform external augmentation, and then split-off the external augmentation solution to get a solution to the original problem. This transformation is discussed in \Cref{sec:splitoffview}. Here, we show in \Cref{lem:simple-augment-time} that the external augmentation solution can be computed in $\tO(n^3)$ time using prior work.



We introduce a greedy external augmentation algorithm proposed by Frank \cite{Frank92}. It works for the all-pairs connectivity setting, where every pair of vertices $u, v\in V\setminus\{x\}$ has a $u$-$v$ connectivity requirement. In external augmentation, we add a new vertex $x$.
First, we add sufficiently many edges (called external edges) from every vertex to the external vertex $x$, so that the connectivity requirement is satisfied. Then, we repeatedly remove any external edge as long as the connectivity requirement is satisfied, until no external edge can be removed. The remaining external edges form a feasible external augmentation solution. 

\begin{lemma}[Lemma 5.6 of \cite{Frank92}]\label{lem:frank-external}
If the optimal value of an all-pairs connectivity augmentation instance is $\gamma$, and no vertex pair has connectivity requirement 1, then the greedy external augmentation algorithm outputs a feasible external augmentation solution with total weight at most $2\gamma$.
\end{lemma}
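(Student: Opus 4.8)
The plan is to compare the greedy output directly against an optimal augmentation, using the standard ``minimality $\Rightarrow$ tight witnessing cuts $\Rightarrow$ uncrossing'' template. Let $F^\star$ be an optimal augmentation of the original instance: a nonnegative-integer-weighted edge set on $V$ of total weight $\gamma$ with $\lambda_{G+F^\star}(p,q)\ge r(p,q)$ for every required pair $(p,q)$. Let $z$ be the greedy algorithm's output, viewed as a nonnegative integer vector on $V$ recording how many external edges $(u,x)$ each vertex $u$ keeps. Two facts are immediate: $z$ is feasible (the process starts feasible and rechecks feasibility after each removal) and \emph{minimal} (decreasing any coordinate destroys feasibility). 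It is convenient to phrase feasibility in cut language: $z$ is feasible iff $z(A):=\sum_{u\in A}z(u)\ge \mathrm{def}(A)$ for every $\emptyset\ne A\subseteq V$, where $\mathrm{def}(A)=\max\bigl(0,\ r_{\max}(A)-d_G(A)\bigr)$ and $r_{\max}(A)$ is the largest requirement over pairs separated by $(A,V\setminus A)$ --- this is just the statement that a cut $(A,\overline A)$ of $G+x+z$ separating $p,q$ with $x\notin A$ has value $d_G(A)+z(A)$.

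From minimality, each $u$ in the support of $z$ admits a \emph{tight witness} $A_u$ with $u\in A_u$, $x\notin A_u$, and $z(A_u)=\mathrm{def}(A_u)\ge 1$. I would then uncross the family $\{A_u\}$ into a \emph{subpartition} $\mathcal P$ (pairwise disjoint tight sets) whose union still contains the support of $z$: when two tight witnesses cross, submodularity and posimodularity of $d_G$ (Fact~\ref{fact:submodularity}) together with the skew-supermodularity of $r_{\max}$ let us replace them by tight sets crossing strictly less --- either $A\cap B,A\cup B$ or $A\setminus B,B\setminus A$ --- and iterating yields $\mathcal P$. Once $\mathcal P$ is in hand the bound is short: for $A\in\mathcal P$ with witness pair $(p,q)$ we have $r_{\max}(A)=r(p,q)\le \lambda_{G+F^\star}(p,q)\le d_{G+F^\star}(A)=d_G(A)+d_{F^\star}(A)$, hence $\mathrm{def}(A)\le d_{F^\star}(A)$; summing over the disjoint family, and using that each edge of $F^\star$ is cut by at most two members of a subpartition, $\|z\|_1=\sum_{A\in\mathcal P}z(A)=\sum_{A\in\mathcal P}\mathrm{def}(A)\le\sum_{A\in\mathcal P}d_{F^\star}(A)\le 2\gamma$.

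I expect the uncrossing to be the main obstacle: $r_{\max}$ is only \emph{skew}-supermodular, so for a crossing pair of tight witnesses one must sometimes keep the ``parallel'' pair $A\setminus B,B\setminus A$ rather than $A\cap B,A\cup B$, and one has to check both that the retained pair is still tight and that the support of $z$ stays covered throughout, so that the final subpartition charges every external edge. This is precisely where the hypothesis ``no required pair has value $1$'' enters, as in Frank's framework: value-$1$ demands act like bare connectivity constraints, for which $r_{\max}$ loses the supermodular behaviour needed to keep tightness under uncrossing, and the $2\gamma$ bound can fail. As an independent sanity check one can also verify directly that \emph{some} feasible external augmentation has weight $2\gamma$, by bending each edge $(u,v)$ of $F^\star$ into $(u,x),(x,v)$ so that every cut avoiding $x$ retains at least its value in $G+F^\star$; this shows the minimum-weight feasible external augmentation is at most $2\gamma$, but deducing the bound for the greedy (minimal, not a priori minimum) solution still goes through the same structural argument.
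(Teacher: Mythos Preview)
The paper does not prove this lemma; it is quoted as Lemma~5.6 of Frank's 1992 paper and used as a black box, so there is no in-paper proof to compare against. Your plan is essentially Frank's argument, and the high-level structure---minimality gives a tight witness through every active vertex, uncross to a subpartition, then charge each tight set against $F^\star$---is correct and yields the bound once the uncrossing is carried out.

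You correctly flag the uncrossing as the obstacle but stop short of resolving it; the concern that passing to $A\setminus B,\,B\setminus A$ might drop a support element in $A\cap B$ is exactly the right worry, and it has a one-line fix you are missing. When the skew-supermodular inequality for $q'(X)=r_{\max}(X)-d_G(X)$ fires on the posimodular side, tightness of $A,B$ combined with modularity of $z$ gives
\[
z(A)+z(B)=q'(A)+q'(B)\le q'(A\setminus B)+q'(B\setminus A)\le z(A\setminus B)+z(B\setminus A)=z(A)+z(B)-2\,z(A\cap B),
\]
forcing $z(A\cap B)=0$. Hence no support element lies in $A\cap B$, equality holds throughout so $A\setminus B$ and $B\setminus A$ are again tight, and replacing $\{A,B\}$ by $\{A\setminus B,B\setminus A\}$ loses no coverage. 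In the submodular case $A\cup B$ is tight and you simply keep it in place of both. Either way the family becomes strictly more laminar while still covering the support of $z$, the iteration terminates in a subpartition $\mathcal P$, and your inequality $\|z\|_1=\sum_{A\in\mathcal P}\mathrm{def}(A)\le\sum_{A\in\mathcal P}d_{F^\star}(A)\le 2\gamma$ finishes the proof.

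One caution on your last paragraph: the diagnosis that the hypothesis ``no requirement equals $1$'' is what rescues skew-supermodularity of $r_{\max}$ is not correct. For any symmetric nonnegative $r$, the function $r_{\max}(X)=\max\{r(u,v):u\in X,\,v\notin X\}$ is skew supermodular (a direct case analysis on the positions of the two maximizing pairs relative to the four regions of $A,B$); combined with both submodularity and posimodularity of $d_G$, this makes $q'$ skew supermodular unconditionally, so the uncrossing step itself does not use the hypothesis. If you want to locate where Frank actually invokes the $r\ne 1$ condition, you will have to look elsewhere in his argument.
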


\begin{lemma}\label{lem:greedy-ext-time}
The greedy external augmentation algorithm can be implemented in $\tO(n^3)$ time.
\end{lemma}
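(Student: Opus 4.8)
The plan is to exploit the fact that the greedy algorithm of \Cref{lem:frank-external} is only specified up to the choice of which external edge to delete next, and that its guarantee holds for \emph{every} such choice; this freedom lets us fix a convenient deletion order and accelerate it with binary search. Let $\tau$ denote the largest connectivity requirement, which is polynomially bounded by assumption. For the initialization step, I would add a single external edge of weight $\tau$ from every vertex of $V\setminus\{x\}$ to $x$. A one-line cut-counting argument shows this already satisfies all requirements: any cut separating two original vertices $u,v$ has at least one original vertex on the side not containing $x$, and that vertex's external edge alone contributes $\tau\ge r(u,v)$ to the cut. This costs $O(n)$ time.

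For the deletion phase, fix an arbitrary order $v_1,\dots,v_n$ of $V\setminus\{x\}$ and process the vertices in this order; when $v_i$ is processed, the weights $w_{v_1},\dots,w_{v_{i-1}}$ already hold their final values and $w_{v_{i+1}}=\dots=w_{v_n}=\tau$. Since increasing the external weight at $v_i$ can only increase connectivities, the set of values of $w_{v_i}$ (with the other weights frozen) for which all requirements still hold is an up-interval $[w_{v_i}^\ast,\tau]$; I would binary search in $[0,\tau]$ for its left endpoint and set $w_{v_i}\gets w_{v_i}^\ast$, using $O(\log\tau)=O(\log n)$ feasibility tests per vertex. This is a legal run of the greedy algorithm (no deletion ever breaks feasibility), and it is maximal: for each $v_i$, the configuration $(w_{v_1}^\ast,\dots,w_{v_i}^\ast-1,\tau,\dots,\tau)$ was infeasible when $v_i$ was processed, and since every final weight is at most $\tau$, the terminal configuration with $w_{v_i}$ decremented is coordinatewise dominated by it and hence infeasible as well; so no external edge can be removed at the end, and \Cref{lem:frank-external} applies to the output.

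It remains to bound the cost of a feasibility test, i.e.\ of deciding whether $\lambda(u,v)\ge r(u,v)$ for all pairs $u,v$ in the current graph $G$ together with its external edges. I would compute a Gomory--Hu tree of this $(n{+}1)$-vertex graph in $\tO(n^2)$ time \cite{AbboudKLPST22}; then, for each tree edge of weight $w$ whose removal splits the vertex set into parts $A$ and $B$, I would check $w\ge\max_{u\in A,\,v\in B}r(u,v)$, which over all tree edges takes $\tO(n^2)$ time by a standard sweep. Multiplying the $\tO(n^2)$ cost per test by the $O(n\log n)$ tests gives the claimed $\tO(n^3)$ bound. I expect the only subtle point to be the maximality argument above (that a vertex-by-vertex binary search realizes a run of the greedy algorithm that admits no further deletion); the initialization and the Gomory--Hu bookkeeping are routine.
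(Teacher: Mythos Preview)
Your proof is correct and follows essentially the same vertex-by-vertex strategy as the paper: process the vertices in a fixed order, drive each external weight down to its minimum feasible value, and argue maximality via the monotonicity of feasibility in the edge weights (your coordinatewise-domination argument is exactly the paper's observation that ``removing other edges cannot increase cut values''). The one difference is that you determine the minimum feasible weight $w_{v_i}^\ast$ by binary search over $[0,\tau]$ using $O(\log\tau)$ Gomory--Hu computations, whereas the paper computes it in a single Gomory--Hu call: remove \emph{all} $(v_i,x)$ edges, compute the maximum violation $\Delta$ of any requirement, and observe that $w_{v_i}^\ast=\Delta$ (since every violated cut must have been crossed by the removed $(v_i,x)$ edges, adding back $\Delta$ of them repairs all violations, and adding back fewer cannot). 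This saves a logarithmic factor that is in any case absorbed by $\tO(\cdot)$, so both arguments yield the stated bound.
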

\begin{proof}
In the removal stage, we process one vertex at a time. Each iteration picks a vertex $v$, and removes the maximum number of $(v, x)$ edges such that no connectivity requirement is violated. After vertex $v$ has been processed, none of the $(v, x)$ edges can be removed without violating a connectivity requirement in the future. This property holds irrespective of what other edges are removed in the future, because removing other edges cannot increase cut values. Therefore, it suffices to process every vertex $v$ exactly once, and the algorithm terminates in $n$ iterations.

In a single iteration, for a vertex $v$, we need to decide the maximum number of removable $(v, x)$ edges. To this end, we compute all-pairs min-cuts in the graph after removing all $(v, x)$ edges. Suppose there are $w$ parallel $(v, x)$ edges. Let $\Delta$ be the maximum violation of a connectivity requirement after removing these $w$ edges. That is, $\Delta$ is the maximum difference between the $(s, t)$-connectivity requirement and the value of the $(s, t)$ min-cut after removing the $(v, x)$ edges, among all vertex pairs $s, t$. Since all cut values satisfy the connectivity requirements before removing the $(v, x)$ edges, the $(v, x)$ edges must cross all violated cuts. Therefore, by adding back $\Delta$ parallel $(v, x)$ edges, all violations can be resolved. In other words, removing $w-\Delta$ edges is feasible. Moreover, removing more than $w-\Delta$ edges is infeasible because adding back less than $\Delta$ edges cannot resolve the maximum violation of $\Delta$. Therefore the maximum number of removable edges is $w-\Delta$.
%
 
The algorithm runs $n$ iterations. Each iteration calls an all-pairs min-cuts subroutine, which takes $\tO(n^2)$ time \cite{AbboudKLPST22}. The total running time is $\tO(n^3)$.
\end{proof}


\begin{lemma}\label{lem:simple-augment-time}
The Steiner connectivity augmentation problem can be solved in $\tO(n^3)$ time.
\end{lemma}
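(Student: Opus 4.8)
The plan is to cast Steiner connectivity augmentation as a special case of all-pairs connectivity augmentation, compute a cheap external augmentation with the greedy algorithm of \Cref{lem:greedy-ext-time}, and then split the external vertex off completely with the Lau--Yung procedure of \Cref{lem:LauY}; both phases run in $\tO(n^3)$ time, and a short counting argument recovers exact optimality. First I would dispose of degenerate cases: if $|T|\le 1$ or $\tau=0$ the empty edge set is optimal, and if $\tau=1$ it suffices to add $k-1$ edges joining the $k=O(n)$ connected components of $G$ that contain a terminal, which is clearly optimal and computable in near-linear time. So assume $\tau\ge 2$ and $|T|\ge 2$, and regard the instance as an all-pairs augmentation instance with requirement $r(s,t)=\tau$ for $s,t\in T$ and $r(s,t)=0$ otherwise. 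Since no requirement equals $1$, \Cref{lem:frank-external} applies to this instance.

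Now run the greedy external augmentation of \Cref{lem:greedy-ext-time}: in $\tO(n^3)$ time it returns a feasible graph $G^+=G+x+(\text{external edges})$ with $d(x)=w\le 2\gamma$, where $\gamma$ is the optimum (\Cref{lem:frank-external}). If $w$ is odd, add one further parallel copy of an existing external edge so that $d(x)$ becomes even and still at most $2\gamma$, leaving $G^+$ feasible. Next, split $x$ off completely using \Cref{lem:LauY}; this makes $O(n)$ Gomory--Hu tree calls, hence runs in $\tO(n^3)$ time with the $\tO(n^2)$-time weighted Gomory--Hu algorithm of \cite{AbboudKLPST22}, and produces a graph $G+F$ on $V$ of total edge weight $|F|=d(x)/2\le\gamma$ in which all pairwise connectivities among vertices of $V$ -- in particular $\lambda(s,t)\ge\tau$ for every $s,t\in T$ -- are preserved. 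Thus $F$ is a feasible Steiner augmentation, so $|F|\ge\gamma$, whence $|F|=\gamma$: the output is optimal, and the overall running time is $\tO(n^3)$.

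The step I expect to require the most care is verifying the remaining precondition of \Cref{lem:LauY}, namely that there is no cut edge incident to $x$ in the graph produced by the greedy algorithm (the even-degree precondition we arranged by hand, and adding a parallel copy of an existing external edge creates no new cut edge, since it can only raise connectivities). Suppose for contradiction that some external edge $e=(v,x)$ were a cut edge of $G^+$, and let $C\ni v$ be the side of the resulting bipartition not containing $x$, so $e$ is the unique edge of $G^+$ leaving $C$, i.e.\ $d_{G^+}(C)=1$. Since every edge incident to $x$ is external and $x\notin C$, any path between two vertices of $V$ can traverse $e$ at most once and therefore, if its endpoints lie on the same side of the cut $(C,V^+\setminus C)$, cannot use $e$ at all; hence deleting $e$ does not decrease $\lambda(s,t)$ for any such pair. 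If instead some terminal pair $s,t$ straddles the cut, then $(C,V^+\setminus C)$ is an $s$--$t$ cut of value $1<\tau$, contradicting feasibility of $G^+$ outright. So either $G^+$ is infeasible (impossible) or no terminal pair straddles the cut, in which case deleting $e$ violates no connectivity requirement at all, i.e.\ $e$ is non-critical. But by the argument in the proof of \Cref{lem:greedy-ext-time}, once the greedy removal phase has processed every vertex, each surviving external edge is critical -- a contradiction. This rules out cut edges incident to $x$ and completes the plan; getting this structural claim together with the parity/optimality bookkeeping exactly right is the only genuinely delicate part of the argument.
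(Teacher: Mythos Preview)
Your proof is correct and follows essentially the same route as the paper: reduce to all-pairs augmentation, run Frank's greedy external augmentation (\Cref{lem:frank-external,lem:greedy-ext-time}), fix parity, then split off via \Cref{lem:LauY}. Your handling of the ``no cut edge incident to $x$'' precondition is in fact more careful than the paper's, which simply asserts it ``because the Steiner connectivity is at least~2''; you correctly observe that Steiner connectivity alone does not rule out a cut edge whose removal separates no terminal pair, and you close that gap using the criticality of every surviving external edge after the greedy phase.
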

\begin{proof}
When $\tau=1$ the optimal solution is simply a tree over the connected components containing terminals. The main case is $\tau\ge 2$. Assume the optimal solution has total weight $\gamma$.

Run the greedy external augmentation algorithm to get an external augmentation solution $F^{ext}$. (To convert Steiner connectivity augmentation into all-pairs connectivity augmentation, set the connectivity requirement for all pairs of terminals to $\tau$ and that for all other pairs to $0$.)  By \Cref{lem:frank-external}, the total weight of $F^{ext}$ is at most $2\gamma$. If the total weight is odd, we increase the weight of an arbitrary edge in the solution by 1 so that the total weight is even and still at most $2\gamma$. Adding these edges increases the Steiner connectivity to at least $\tau$.

Then, we split-off the external vertex using \Cref{lem:LauY} to get an edge set $F$ with total weight at most $\gamma$. (There is no cut edge incident to the external vertex because the Steiner connectivity is at least 2.) Splitting-off preserves pairwise connectivity; therefore, the Steiner connectivity is still at least $\tau$. Thus, $F$ is a feasible solution to the Steiner connectivity augmentation problem, and its total weight is no more than the optimal value $\gamma$. Hence, $F$ it is an optimal solution.

The running time is $\tO(n^3)$ in both steps by \Cref{lem:LauY,lem:greedy-ext-time}. Therefore, the total running time is $\tO(n^3)$.
\end{proof}

\section{Extreme Sets for Steiner Connectivity}
\label{sec:extreme-sets}
For a graph $G$ on vertex set $V$ and a set $T\subseteq V$ of terminals, we define (one side of) a {\it Steiner cut} as a vertex set $X$ such that $X\cap T\ne\emptyset $ and $T\not\subseteq X$. The concept of {\it extreme sets} in global connectivity setting can be generalized in the following natural way.

\begin{Definition}[Extreme sets]\label{def:extreme}
A Steiner cut $X$ is \emph{extreme} if and only if for any Steiner cut $Y$ that is also a proper subset of $X$, $d(Y)>d(X)$.
\end{Definition}

If a Steiner cut $X$ is not extreme, by definition there exists a Steiner cut $Z\subsetneqq X$ with $d(Z)\le d(X)$. Such a set $Z$ is called a \emph{violator}. 

Let $\mathcal{X}$ be the family of all extreme sets in $G$.
The following facts are immediate from the definition.

\begin{fact}
\label{fact:subset-of-extreme}
If $X$ is an extreme set and $Y\subseteq X$ is a Steiner cut, then $d(Y)\ge d(X)$, and equality only holds when $Y=X$.
\end{fact}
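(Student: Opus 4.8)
The plan is to argue directly by a case split on whether the containment $Y\subseteq X$ is strict. Since $Y$ is a Steiner cut, it is a legitimate vertex set (in particular nonempty), so exactly one of the two alternatives $Y=X$ or $Y\subsetneqq X$ holds, and these two cases are exhaustive.

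In the first case, $Y=X$, so trivially $d(Y)=d(X)$; the inequality $d(Y)\ge d(X)$ holds with equality, which is exactly the behavior the statement permits. In the second case, $Y$ is a Steiner cut that is a \emph{proper} subset of the extreme set $X$, so the defining property in Definition~\ref{def:extreme} applies and yields $d(Y)>d(X)$. In particular $d(Y)\ge d(X)$ and, moreover, the inequality is strict, so equality cannot occur in this case. Combining the two cases gives $d(Y)\ge d(X)$ unconditionally, and if $d(Y)=d(X)$ then we must be in the first case, i.e.\ $Y=X$.

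There is no real obstacle here: the statement is an immediate unpacking of Definition~\ref{def:extreme}. The only point that warrants a moment's care is that the hypothesis ``$Y\subseteq X$ is a Steiner cut'' leaves open the boundary instance $Y=X$, which the definition of extreme set does not address directly (it quantifies only over \emph{proper} subsets), so this instance must be separated out and dispatched by hand as above.
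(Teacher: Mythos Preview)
Your proof is correct and matches the paper's treatment: the paper states this fact as ``immediate from the definition'' without giving an explicit argument, and your case split on $Y=X$ versus $Y\subsetneqq X$ is exactly the unpacking of Definition~\ref{def:extreme} that justifies it.
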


\begin{fact}
\label{fact:extreme-violator}
If a Steiner cut $X$ is not extreme, then there exists a violator of $X$ that is extreme.
\end{fact}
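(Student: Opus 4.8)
The plan is to take a violator of $X$ that is inclusion‑minimal and show it must itself be extreme. Concretely: since $X$ is not extreme, the collection of violators of $X$ — Steiner cuts $Z$ with $Z\subsetneqq X$ and $d(Z)\le d(X)$ — is nonempty, and because $V$ is finite this collection has at least one element that is minimal with respect to set inclusion. Fix such a minimal violator $Z$. I claim $Z$ is extreme, and this claim is exactly the statement to be proved.

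To prove the claim I would argue by contradiction using \Cref{def:extreme}. If $Z$ were not extreme, there would be a Steiner cut $W$ with $W\subsetneqq Z$ and $d(W)\le d(Z)$. Chaining the two inequalities gives $d(W)\le d(Z)\le d(X)$, and transitivity of proper inclusion gives $W\subsetneqq Z\subsetneqq X$, so in particular $W\subsetneqq X$ and $W$ is a Steiner cut. Hence $W$ satisfies the definition of a violator of $X$, while $W\subsetneqq Z$ contradicts the inclusion‑minimality of $Z$ among violators of $X$. Therefore no such $W$ exists, $Z$ is extreme, and the fact follows.

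There is essentially no real obstacle here beyond bookkeeping: the only thing one has to observe is that ``being a violator of $X$'' is inherited downward along Steiner‑cut subsets of no larger degree, which is immediate from transitivity of $\subsetneqq$ and of $\le$ as above. I would note that the same minimal‑counterexample descent works if one instead picks a violator of $X$ of minimum degree (breaking ties by cardinality, say), since any $W$ produced above would then be a violator of strictly smaller cardinality or equal cardinality but strictly smaller degree — either way contradicting minimality — but the inclusion‑minimal choice keeps the argument shortest.
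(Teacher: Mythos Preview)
Your proof is correct and essentially the same as the paper's: both pick a minimal violator and verify it is extreme, with the only cosmetic difference that you choose an inclusion-minimal violator while the paper chooses one of minimum cardinality. The contradiction/verification step is identical in spirit, so there is nothing to add.
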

\begin{proof}
Let $Z$ be the violator of $X$ of minimum size. By definition, $d(Z)\le d(X)$. We prove that $Z$ is extreme, which implies the statement.

Let $W$ be any Steiner cut that is a proper subset of $Z$. Because $|W|<|Z|$ and $Z$ is the minimum-size  violator of $X$, $W$ is not a violator of $X$. That is, $d(W)>d(X)\ge d(Z)$. Therefore $Z$ is extreme by \Cref{def:extreme}.
\end{proof}

Finally, we define the projection of a Steiner cut to be the set of terminals in it.
\begin{Definition}
Define $\rho$ to be the \emph{projection} from vertex sets to terminal sets. For any $X\subseteq V$, $\rho(X)=X\cap T$.
\end{Definition}
\begin{Remark}
$X$ is a Steiner cut if and only if $\rho(X)\ne \emptyset$ and $\rho(X)\ne T$.
\end{Remark}

\subsection{Structure of Extreme Sets and Supreme Sets}
In the Steiner setting, the structure of extreme sets is much weaker than in the global connectivity setting. Most importantly, we are no longer guaranteed that extreme sets never cross. Fortunately, the extreme sets are still closed under set union, which we establish in the following chain of lemmas.

\begin{lemma}
\label{lem: extreme-difference-terminal}
If extreme sets $X$ and $Y$ cross, then at least one of $X\setminus Y$ or $Y\setminus X$ contains no terminal.
\end{lemma}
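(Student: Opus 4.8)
The plan is to use submodularity of cuts together with the extremality of $X$ and $Y$. Suppose $X$ and $Y$ cross and, for contradiction, that both $X\setminus Y$ and $Y\setminus X$ contain a terminal. Since $X\cap Y$ is also nonempty and $X\setminus Y,\ Y\setminus X$ are nonempty, each of $X\setminus Y$ and $Y\setminus X$ is a proper subset of $X$ or $Y$ respectively; moreover each of them contains a terminal by assumption, and each is properly contained in $T$ (because, e.g., $Y\setminus X$ misses the terminal that lies in $X\cap Y$ if there is one, and in any case misses all terminals in $X\cap Y$ and in $X\setminus Y$). So both $X\setminus Y$ and $Y\setminus X$ are Steiner cuts, and they are proper subsets of $X$ and $Y$ respectively.

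Now I would apply Fact~\ref{fact:subset-of-extreme} (using that $X$ is extreme and $X\setminus Y\subsetneq X$ is a Steiner cut) to get $d(X\setminus Y) > d(X)$, and symmetrically $d(Y\setminus X) > d(Y)$. Adding these two strict inequalities gives $d(X\setminus Y) + d(Y\setminus X) > d(X) + d(Y)$, which directly contradicts the posi-modularity inequality \eqref{eq:submodularity-2}. Hence at least one of $X\setminus Y$, $Y\setminus X$ contains no terminal.

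The one place that needs a little care — and which I expect to be the only real subtlety — is verifying that $X\setminus Y$ and $Y\setminus X$ are genuinely Steiner cuts, i.e. that they are nonempty, contain a terminal, and do not contain all of $T$. Nonemptiness is exactly the hypothesis that $X$ and $Y$ cross; containing a terminal is the contradiction hypothesis; and not containing all terminals follows because $X\cap Y\neq\emptyset$ forces at least one vertex of $X\cap Y$ out of $X\setminus Y$ — but to get a \emph{terminal} excluded I should note instead that $Y\setminus X$ excludes every terminal of $X$, and $X$ being a Steiner cut means $X\cap T\neq\emptyset$, so $T\not\subseteq Y\setminus X$; symmetrically for $X\setminus Y$. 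Once these membership checks are in place, the submodularity step is immediate and the proof is essentially two lines.
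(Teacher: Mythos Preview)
Your proposal is correct and follows essentially the same argument as the paper: assume both differences contain terminals, use extremality of $X$ and $Y$ to get $d(X\setminus Y)>d(X)$ and $d(Y\setminus X)>d(Y)$, and add to contradict posi-modularity \eqref{eq:submodularity-2}. The paper handles the ``Steiner cut'' check more tersely (since $X\setminus Y\subsetneq X$ and $X$ already omits some terminal, so does $X\setminus Y$), but your more explicit verification is fine.
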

\begin{proof}
Assume for contradiction that both $X\setminus Y$ and $Y\setminus X$ contain terminals. Then they are Steiner cuts. Because $X$ and $Y$ cross, $X\setminus Y$ is a proper subset of $X$, and $Y\setminus X$ is a proper subset of $Y$. Then $d(X\setminus Y)>d(X)$ and $d(Y\setminus X)>d(Y)$ since $X$ and $Y$ are extreme (\Cref{def:extreme}). Adding these two inequalities gives \[d(X\setminus Y)+d(Y\setminus X)>d(X)+d(Y)\] which contradicts (\ref{eq:submodularity-2}) of \Cref{fact:submodularity}.
\end{proof}

\begin{lemma}
\label{lem: extreme-intersect-terminal}
For any extreme sets $X$ and $Y$, if $X\cap Y$ is nonempty, then $X\cap Y$ is a Steiner cut.
\end{lemma}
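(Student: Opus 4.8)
Recall that a Steiner cut is a vertex set $Z$ with $\rho(Z)=Z\cap T\neq\emptyset$ and $T\not\subseteq Z$. So I need to verify two things about $Z:=X\cap Y$, assuming $Z\neq\emptyset$: that $Z$ contains a terminal, and that $Z$ does not contain all terminals. The second is free: since $X$ is a Steiner cut we have $T\not\subseteq X$, and $Z\subseteq X$, so $T\not\subseteq Z$. The real content is showing $Z\cap T\neq\emptyset$, and I would handle this by splitting into the two cases of whether or not $X$ and $Y$ cross.

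\textbf{Case 1: $X$ and $Y$ do not cross.} Since not crossing means one of $X\cap Y$, $X\setminus Y$, $Y\setminus X$ is empty, and we have assumed $X\cap Y\neq\emptyset$, it must be that $X\setminus Y=\emptyset$ or $Y\setminus X=\emptyset$, i.e.\ $X\subseteq Y$ or $Y\subseteq X$. In either subcase $X\cap Y$ equals $X$ or $Y$, which is a Steiner cut by hypothesis, so in particular it contains a terminal.

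\textbf{Case 2: $X$ and $Y$ cross.} Here I invoke \Cref{lem: extreme-difference-terminal}: at least one of $X\setminus Y$ or $Y\setminus X$ contains no terminal. By symmetry say $(Y\setminus X)\cap T=\emptyset$. Then every terminal of $Y$ lies in $X\cap Y$, i.e.\ $\rho(Y)=\rho(X\cap Y)$. Since $Y$ is a Steiner cut, $\rho(Y)\neq\emptyset$, hence $\rho(X\cap Y)\neq\emptyset$, so $X\cap Y$ contains a terminal. Combining with the observation $T\not\subseteq X\cap Y$ above, $X\cap Y$ is a Steiner cut.

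\textbf{Expected obstacle.} There is no real obstacle here — the statement falls out once one makes the case distinction on crossing and applies the preceding lemma in the crossing case. The only point requiring mild care is making sure the nonemptiness hypothesis $X\cap Y\neq\emptyset$ is used precisely to rule out the ``disjoint'' subcase of non-crossing sets, and that the $T\not\subseteq X\cap Y$ half is dispatched uniformly from $X$ (or $Y$) being a Steiner cut.
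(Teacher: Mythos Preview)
Your proof is correct and follows essentially the same approach as the paper: split into the non-crossing and crossing cases, and in the crossing case invoke \Cref{lem: extreme-difference-terminal} to conclude that $\rho(X\cap Y)$ equals $\rho(X)$ or $\rho(Y)$, hence is nonempty. The only cosmetic difference is that the paper bundles the two Steiner-cut conditions together by showing $\rho(X\cap Y)=\rho(X)$, whereas you dispatch $T\not\subseteq X\cap Y$ separately upfront.
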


\begin{proof}
When $X\subseteq Y$ (resp.\ $Y\subseteq X$), $X\cap Y$ is a Steiner cut because it is $X$ (resp.\ $Y$). The remaining case is that both $X\setminus Y$ and $Y\setminus X$ are nonempty. The statement also assumes $X\cap Y$ is nonempty, so $X$ and $Y$ cross. By \Cref{lem: extreme-difference-terminal}, one of $X\setminus Y$ or $Y\setminus X$ contains no terminal. Without loss of generality, assume $\rho(X\setminus Y)=\emptyset$.  Then 
\begin{equation*}
\rho(X\cap Y)=\rho(X\setminus(X\setminus Y))=\rho(X) \setminus \rho(X\setminus Y)=\rho(X)
\end{equation*}
Because $X$ is a Steiner cut and $\rho(X\cap Y)=\rho(X)$, $X\cap Y$ is also a Steiner cut.
\end{proof}

\begin{lemma}
\label{lem:extreme-union}
If two extreme sets $X$ and $Y$ cross and $X\cup Y$ is a Steiner cut, then $X\cup Y$ is extreme.
\end{lemma}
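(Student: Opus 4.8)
The plan is to take an arbitrary Steiner cut $W\subsetneqq X\cup Y$ and show $d(W)>d(X\cup Y)$, which is exactly \Cref{def:extreme} for $X\cup Y$. By \Cref{lem: extreme-difference-terminal}, since $X$ and $Y$ cross, one of $X\setminus Y$, $Y\setminus X$ is terminal-free; assume without loss of generality $\rho(X\setminus Y)=\emptyset$, so $\rho(X)\subseteq\rho(Y)$ and in fact $\rho(X\cup Y)=\rho(Y)$. I would also record the submodularity consequence that will drive the proof: because $X\cap Y$ is a Steiner cut (\Cref{lem: extreme-intersect-terminal}) and a proper subset of the extreme set $X$ (properness holds since $X$ and $Y$ cross, so $X\setminus Y\neq\emptyset$), we have $d(X\cap Y)>d(X)$, and hence by (\ref{eq:submodularity-4}) of \Cref{fact:submodularity}, $d(X\cup Y)<d(Y)$.

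The main case analysis is on how the candidate violator $W$ sits relative to $Y$. First I would dispose of the case $W\subseteq Y$: then $W$ is a Steiner cut that is a (not necessarily proper) subset of the extreme set $Y$, so by \Cref{fact:subset-of-extreme} $d(W)\ge d(Y)>d(X\cup Y)$, as desired. The remaining case is $W\not\subseteq Y$, i.e.\ $W\setminus Y\neq\emptyset$. Here I want to look at $W\cap Y$ and $W\cup Y$ and apply submodularity (\ref{eq:submodularity-1}): $d(W\cap Y)+d(W\cup Y)\le d(W)+d(Y)$. Since $W\cup Y\subseteq X\cup Y$ and — this is the point to check carefully — $W\cup Y$ is a Steiner cut (its projection is nonempty as it contains $\rho(Y)\neq\emptyset$, and it is not all of $T$ since $W\cup Y\subsetneqq X\cup Y$ and $\rho(X\cup Y)=\rho(Y)$ forces $W\cup Y\neq V$ only if... here one must verify $\rho(W\cup Y)\neq T$; this follows because $\rho(W\cup Y)\subseteq\rho(X\cup Y)=\rho(Y)$ and if $\rho(Y)=T$ then $Y$ would not be a Steiner cut). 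So $W\cup Y$ is a Steiner cut strictly contained in $X\cup Y$; but we do not yet know $X\cup Y$ is extreme — that is what we are proving — so I instead use that $W\cup Y$ is a Steiner cut subset of... no: the clean route is to compare $W\cup Y$ with $Y$ via $X$. Actually the cleanest finish: $W\cup Y\subseteq X\cup Y$ but we can write $d(W\cup Y)\ge d(Y)$ would require $W\cup Y$ related to an extreme set — it need not be. So instead I bound $d(W\cap Y)$: $W\cap Y$ is a subset of $Y$; if it is a Steiner cut then $d(W\cap Y)\ge d(Y)$ by \Cref{fact:subset-of-extreme}, and then submodularity gives $d(W\cup Y)\le d(W)$, combined with $d(W\cup Y)\ge$ ... — again the missing piece is a lower bound on $d(W\cup Y)$. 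The resolution is to intersect with $X$ instead: consider $W$ versus $X$ and $Y$ via the three-set structure.

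Given the subtlety above, the step I expect to be the main obstacle is correctly choosing which pair of sets to feed into submodularity so that every intermediate set is certifiably a Steiner cut and lands as a (proper) subset of one of the known extreme sets $X$ or $Y$. Concretely, I would handle $W\not\subseteq Y$ by noting $W\setminus Y$ is terminal-free (since $\rho(W)\subseteq\rho(X\cup Y)=\rho(Y)$ gives $\rho(W\setminus Y)=\emptyset$), so $\rho(W\cap Y)=\rho(W)\neq\emptyset$, hence $W\cap Y$ is a Steiner cut and a proper subset of $Y$ (proper because $W\cup Y\subsetneqq X\cup Y$ means $Y\not\supseteq W$... wait, $W\cap Y\subsetneqq Y$ iff $Y\not\subseteq W$; if $Y\subseteq W$ then $W=W\cup Y\subsetneqq X\cup Y$ and $\rho(W)=\rho(Y)$, reducing to the earlier argument pattern). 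In the generic subcase $W\cap Y\subsetneqq Y$: by \Cref{def:extreme} applied to the extreme set $Y$, $d(W\cap Y)>d(Y)$; then (\ref{eq:submodularity-4}) yields $d(W\cup Y)<d(W)$; finally I need $d(W\cup Y)\ge d(X\cup Y)$, which I get because $d(X\cup Y)<d(Y)\le d(W\cap Y)$ and... I will instead directly argue $d(W)>d(W\cup Y)\ge d(X\cup Y)$ by showing the last inequality via posi-modularity or by a second submodularity application on $W\cup Y$ and $X$, using that $(W\cup Y)\cap X \supseteq X\cap Y$ which has $d>d(X)$. Assembling these inequalities carefully — ensuring the chain $d(W)>d(X\cup Y)$ closes in every subcase — is the crux; the rest is bookkeeping with projections and \Cref{fact:submodularity}.
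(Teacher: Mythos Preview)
Your direct approach is valid and genuinely different from the paper's. The paper argues by contradiction: it assumes $X\cup Y$ is not extreme, takes an extreme violator $Z$ of \emph{minimum cut value} among all extreme violators, and derives a contradiction via a chain of submodularity inequalities involving $X\cup Z$, $Y\cup Z$, and $(X\cap Y)\cup Z$; the minimality of $d(Z)$ is essential to close the loop. Your route instead shows directly that every Steiner cut $W\subsetneqq X\cup Y$ satisfies $d(W)>d(X\cup Y)$, and is more elementary: after the reduction $\rho(X\setminus Y)=\emptyset$ (so $\rho(W)\subseteq\rho(X\cup Y)=\rho(Y)$ and $W\cap Y$ is always a Steiner cut), only two submodularity applications are needed, on $(W,Y)$ and then on $(W\cup Y,X)$.

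Your writeup, however, does not close the case analysis. The second submodularity step you gesture at works, but you must handle two subcases explicitly. If $X\subseteq W\cup Y$, then $W\cup Y=X\cup Y$, and you need the \emph{strict} inequality $d(W\cup Y)<d(W)$ from the first step; this requires $W\cap Y\subsetneqq Y$, which holds because $Y\subseteq W$ together with $X\subseteq W\cup Y=W$ would force $X\cup Y\subseteq W$, contradicting $W\subsetneqq X\cup Y$. If $X\not\subseteq W\cup Y$, then $(W\cup Y)\cap X$ is a proper subset of the extreme set $X$ and a Steiner cut (it contains $X\cap Y$), so $d((W\cup Y)\cap X)>d(X)$ and hence $d(X\cup Y)=d((W\cup Y)\cup X)<d(W\cup Y)\le d(W)$. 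Once these two subcases are written out, your proof is complete and shorter than the paper's.
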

\begin{proof}
Assume for contradiction that there exist crossing extreme sets $X$ and $Y$ such that $X\cup Y$ is a Steiner cut but not extreme.
By \Cref{fact:extreme-violator}, there exists an extreme violator of $X\cup Y$. Let $Z$ be the one with minimum cut value among all extreme violators of $X\cup Y$. \begin{equation}\label{eq:extreme-union-1}  d(Z)\le d(X\cup Y) \end{equation}

Because $X, Y$ cross, $X\cap Y\ne\emptyset$. By \Cref{lem: extreme-intersect-terminal}, $X\cap Y$ is a Steiner cut. Again because $X, Y$ cross, both $X\setminus Y$ and $Y\setminus X$ are non-empty, so $X\cap Y$ is a proper subset of $X$ and $Y$. Since $X$ and $Y$ are extreme (\Cref{def:extreme}), 
\[d(X\cap Y)>\max\{d(X),d(Y)\}\]
Applying (\ref{eq:submodularity-4}) of \Cref{fact:submodularity} on $X$ and $Y$, we have 
\begin{equation}\label{eq:extreme-union-2}
    d(X\cup Y)<\min\{d(X),d(Y)\}
\end{equation}
(\ref{eq:extreme-union-1}) and (\ref{eq:extreme-union-2}) imply $d(Z)<\min\{d(X), d(Y)\}$, which means $Z$ cannot be a subset of $X$ or $Y$ by \Cref{fact:subset-of-extreme}. Therefore $X\cap Z\ne\emptyset,\ Y\cap Z\ne\emptyset$.
By \Cref{lem: extreme-intersect-terminal}, $X\cap Z$ and $Y\cap Z$ are Steiner cuts, so
\[d(X\cap Z)\ge d(X), d(Y\cap Z)\ge d(Y)\]
by \Cref{fact:subset-of-extreme}. Moreover, one of the two inequalities must be strict. Because $Z$ is a proper subset of $X\cup Y$, $X\cap Z=X$ and $Y\cap Z=Y$ cannot hold simultaneously. Without loss of generality, assume $Y\cap Z\subsetneqq Y$; then since $Y$ is extreme, \[d(Y\cap Z)>d(Y)\]

Applying (\ref{eq:submodularity-3}) of \Cref{fact:submodularity} on $X$ and $Z$, and (\ref{eq:submodularity-4}) on $Y$ and $Z$, we have \begin{equation}\label{eq:extreme-union-3}
    d(X\cup Z)\le d(Z)
\end{equation}
\begin{equation}\label{eq:extreme-union-4}
    d(Y\cup Z)<d(Z)
\end{equation}
Adding (\ref{eq:extreme-union-3}) and (\ref{eq:extreme-union-4}) and using (\ref{eq:extreme-union-1}) give
\begin{equation}\label{eq:extreme-union-5}
    d(X\cup Z)+d(Y\cup Z)<d(Z)+d(Z)\le d(Z)+d(X\cup Y)
\end{equation}
(\ref{eq:submodularity-1}) of \Cref{fact:submodularity} on $X\cup Z$ and $Y\cup Z$ says
\begin{equation}\label{eq:extreme-union-6}
    d(X\cup Z)+d(Y\cup Z)\ge d((X\cup Z)\cap (Y\cup Z))+d((X\cup Z)\cup (Y\cup Z))=d((X\cap Y)\cup Z)+d(X\cup Y)
\end{equation}
Finally, (\ref{eq:extreme-union-5}) and (\ref{eq:extreme-union-6}) imply $d((X\cap Y)\cup Z)<d(Z)$.

If $(X\cap Y)\cup Z$ is extreme, let $Z'=(X\cap Y)\cup Z$; else let $Z'$ be an extreme violator of $(X\cap Y)\cup Z$ (which exists by \Cref{fact:extreme-violator}). Then $d(Z')\le d((X\cap Y)\cup Z) < d(Z)\le d(X\cup Y)$, $Z'\subseteq (X\cap Y)\cup Z \subseteq X\cup Y$, so $Z'$ is an extreme violator of $X\cup Y$ with cut value less than $Z$, which contradicts the assumption that $Z$ has minimum cut value among all extreme violators of $X\cup Y$. 
\end{proof}

\begin{Remark}
We note that $X\cap Y$ is not necessarily extreme, unlike in the global connectivity setting.
\end{Remark}

Since extreme sets are closed under set union, the concept of maximal extreme sets is well-defined. This motivates one of the key new concepts of the paper, the supreme sets, which we show form a laminar family.
\begin{Definition}[Supreme sets]\label{def:supreme}
Partition the extreme sets into equivalence classes according to projection: two extreme sets $X$ and $Y$ are equivalent if their projections $X\cap T$ and $Y\cap T$ are equal.

For each equivalence class with a common terminal set $R$, define the \emph{supreme set} of $R$, denoted $\super(R)$, to be the union of all extreme sets in the equivalence class, i.e.\ $\super(R)=\bigcup_{X\in\mathcal{X}:\rho(X)=R} X$. The set $\super(R)$ is defined only if $R\subseteq T$ is the projection of some extreme set; otherwise we say $\super(R)$ is undefined. In particular, $\super(\emptyset)$ and $\super(T)$ are undefined.
\end{Definition}
\begin{fact}\label{fact:supreme-extreme}
If $\super(R)$ is defined, then $\super(R)$ is an extreme set and $\rho(\super(R))=R$.
\end{fact}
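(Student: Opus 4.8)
The plan is to prove the two assertions separately, handling the projection identity first since it is almost immediate. Every extreme set $X$ in the equivalence class that defines $\super(R)$ satisfies $X\cap T=R$ by construction, so
\[
\rho(\super(R))=\Big(\bigcup_{X\in\mathcal X:\,\rho(X)=R}X\Big)\cap T=\bigcup_{X\in\mathcal X:\,\rho(X)=R}(X\cap T)=R .
\]
This already tells us that $\super(R)$ is a Steiner cut: since $\super(R)$ is defined, $R$ is the projection of some extreme set, which is in particular a Steiner cut, so $\emptyset\ne R\subsetneqq T$; and by the Remark following the definition of $\rho$, any set whose projection is neither empty nor all of $T$ is a Steiner cut. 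So it remains only to show $\super(R)$ is \emph{extreme}.

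For extremeness I would run a finite induction over the extreme sets with projection $R$. Because $V$ is finite, there are only finitely many such sets; enumerate them $X_1,\dots,X_k$ (with $k\ge 1$ since $\super(R)$ is defined), and set $U_1=X_1$ and $U_{i+1}=U_i\cup X_{i+1}$, so that $U_k=\super(R)$. I claim by induction that each $U_i$ is an extreme set with $\rho(U_i)=R$; the base case $U_1=X_1$ is the hypothesis. For the inductive step, $\rho(U_{i+1})=\rho(U_i)\cup\rho(X_{i+1})=R$, so $U_{i+1}$ is a Steiner cut. Now compare $U_i$ and $X_{i+1}$: their intersection contains $R\neq\emptyset$, hence is nonempty. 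If $U_i$ and $X_{i+1}$ do not cross, then (since the intersection is nonempty) one is contained in the other, so $U_{i+1}$ is the larger of the two and is extreme by the inductive hypothesis. If $U_i$ and $X_{i+1}$ do cross, then since $U_{i+1}=U_i\cup X_{i+1}$ is a Steiner cut, \Cref{lem:extreme-union} applies to the crossing pair $U_i,X_{i+1}$ and gives that $U_{i+1}$ is extreme. In either case $U_{i+1}$ is an extreme set with projection $R$, completing the induction; taking $i=k$ shows $\super(R)=U_k$ is extreme.

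I do not anticipate a real obstacle here. The only points needing care are (i) checking that the hypotheses of \Cref{lem:extreme-union} hold at each step, i.e.\ that the running union stays a Steiner cut — which is automatic because the common projection $R$ is preserved under union and satisfies $\emptyset\ne R\subsetneqq T$ — and (ii) separating out the degenerate non-crossing case, where one set already contains the other and there is nothing to prove. The finiteness of the family $\{X\in\mathcal X:\rho(X)=R\}$, which is what makes the induction legitimate, follows from finiteness of $V$, even though (as noted in the overview) this family can be exponentially large.
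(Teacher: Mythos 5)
Your proof is correct and follows essentially the same approach as the paper: establish $\rho(\super(R))=R$ directly, then run a finite induction on a running union $U_i$ of the extreme sets in the class, splitting the inductive step into the non-crossing case (containment, nothing to prove) and the crossing case (handled by \Cref{lem:extreme-union}, after noting the union is a Steiner cut because its projection is $R$). Nothing further to add.
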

\begin{proof}
$R\subsetneqq T$ because $\super(R)$ is defined. By definition, $\super(R)$ is the union of extreme sets $\{X_1,\ldots,X_r\}$, where $\rho(X_i)=R$ for each $X_i$. For any $1\le k\le r$, let $Y_k=\bigcup_{i=1}^{k} X_i$. By a simple inductive argument, $\rho(Y_k)=R$ and $Y_k$ is a Steiner cut for each $k$.

We prove by induction that $Y_k$ is extreme. The base case $k=1$ is trivial. As an induction step, assume $Y_k$ is extreme. If $Y_k\subseteq X_{k+1}$ (resp.\ $X_{k+1}\subseteq Y_k$), $Y_{k+1}=X_{k+1}$ (resp.\ $Y_{k+1}=Y_k$) is an extreme set. The remaining case is that both $Y_k\setminus X_{k+1}$ and $X_{k+1}\setminus Y_k$ are nonempty. $Y_k\cap X_{k+1}\supseteq R$ is also nonempty, so $Y_k$ and $X_{k+1}$ cross. We have shown $Y_{k+1}$ is a Steiner cut. By \Cref{lem:extreme-union}, $Y_{k+1}=Y_k\cup X_{k+1}$ is extreme.

In conclusion, $\super(R)=Y_r$ is extreme.
\end{proof}

\begin{lemma}
\label{lem:supreme-laminar}
The supreme sets form a laminar family.
\end{lemma}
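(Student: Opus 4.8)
The plan is to argue by contradiction, leveraging closure of extreme sets under union together with the maximality built into the definition of supreme sets. Suppose two supreme sets $\super(R_1)$ and $\super(R_2)$ cross; write $X=\super(R_1)$ and $Y=\super(R_2)$. By \Cref{fact:supreme-extreme} these are extreme sets with $\rho(X)=R_1$ and $\rho(Y)=R_2$. Since $X$ and $Y$ cross, \Cref{lem: extreme-difference-terminal} implies that at least one of $X\setminus Y$, $Y\setminus X$ contains no terminal; without loss of generality assume $\rho(X\setminus Y)=\emptyset$. Partitioning $X=(X\cap Y)\sqcup(X\setminus Y)$ then gives $\rho(X)=\rho(X\cap Y)$, and in particular $R_1=\rho(X\cap Y)\subseteq\rho(Y)=R_2$.

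Next I would analyze $X\cup Y$. Writing $X\cup Y=Y\sqcup(X\setminus Y)$ and using $\rho(X\setminus Y)=\emptyset$ yields $\rho(X\cup Y)=\rho(Y)=R_2$, which is nonempty and not all of $T$ because $Y$ is a Steiner cut; hence $X\cup Y$ is itself a Steiner cut. Now \Cref{lem:extreme-union} applies to the crossing extreme sets $X,Y$ whose union is a Steiner cut, so $X\cup Y$ is extreme, and it belongs to the equivalence class of projection $R_2$. By the definition of $\super(R_2)$ as the union of all extreme sets with projection $R_2$, we conclude $X\cup Y\subseteq\super(R_2)=Y$, i.e.\ $X\subseteq Y$. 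This contradicts the assumption that $X$ and $Y$ cross, since crossing requires $X\setminus Y\ne\emptyset$. Hence no two supreme sets cross, and the family is laminar.

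The technical heart of this claim has in fact already been established in the preceding lemmas: the delicate ingredients are \Cref{lem:extreme-union} (closure under union when the union is a Steiner cut) and \Cref{lem: extreme-difference-terminal} (one side of a crossing pair is terminal-free). Given those, laminarity of the supreme family reduces to the short maximality argument above, so I do not expect a serious obstacle here. The only points requiring a little care are verifying that $X\cup Y$ is a genuine Steiner cut, so that \Cref{lem:extreme-union} is applicable, and that it lands in the equivalence class of $R_2$ (not some other projection) — both of which follow immediately once one of the two set-differences is known to contain no terminal.
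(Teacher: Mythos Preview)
Your argument is correct and follows essentially the same route as the paper: assume two supreme sets cross, use \Cref{lem: extreme-difference-terminal} to force one difference to be terminal-free so that the projections nest, compute $\rho(X\cup Y)=R_2$, and then invoke maximality of $\super(R_2)$ to reach a contradiction. The only difference is that you explicitly appeal to \Cref{lem:extreme-union} to certify $X\cup Y$ is extreme before applying the definition of $\super(R_2)$, whereas the paper leaves this step implicit; your version is arguably cleaner on that point.
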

\begin{proof}
Assume for contradiction that two supreme sets $X_1=\super(R_1)$ and $X_2=\super(R_2)$ cross. 
By \Cref{fact:supreme-extreme}, $X_1$ and $X_2$ are extreme sets. By \Cref{lem: extreme-difference-terminal}, one of $X_1\setminus X_2$ and $X_2\setminus X_1$ contains no terminal. Therefore $R_1\subseteq R_2$ or $R_2\subseteq R_1$. Without loss of generality, assume $R_1\subseteq R_2$. Then
$$\rho(X_1\cup X_2)=\rho(X_1)\cup\rho(X_2)=R_1\cup R_2=R_2.$$
By \Cref{def:supreme} of $\super(R_2)$, $X_1\cup X_2\subseteq \super(R_2)=X_2$, which contradicts the assumption that $X_1$ crosses $X_2$.

There is no crossing supreme sets, so they form a laminar family.
\end{proof}


Finally, we show that supreme sets can be efficiently computed as long as we know their projections, a key ingredient in our supreme sets algorithm.
\begin{lemma}
\label{lem:supreme-mincut}
For any supreme set $X=\super(R)$, $X$ is the earliest $R$-$(T\setminus R)$ mincut.
\end{lemma}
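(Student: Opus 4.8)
The plan is to let $C$ denote the earliest $R$-$(T\setminus R)$ min cut and prove $X=C$ by a sandwich argument: first establish the inclusion $X\subseteq C$, then prove $d(X)=d(C)$, and finally observe that since $C$ is the \emph{earliest} such min cut, $d(X)=\lambda(R,T\setminus R)$ together with $X\subseteq C$ forces $C\subseteq X$. Throughout I will use that $X=\super(R)$ is itself an extreme set with $\rho(X)=R$ (\Cref{fact:supreme-extreme}), that $C$ separates $R$ from $T\setminus R$ and hence $\rho(C)=R$, and consequently that $X\cap C$, $X\cup C$ and $C$ are all Steiner cuts with projection $R$ (using $\rho(X\cap C)=\rho(X)\cap\rho(C)$ and $\rho(X\cup C)=\rho(X)\cup\rho(C)$).

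For the inclusion $X\subseteq C$: both $X$ and $X\cup C$ are $R$-$(T\setminus R)$ cuts, so $d(X)\ge d(C)$ and $d(X\cup C)\ge d(C)$. Plugging these into submodularity (\ref{eq:submodularity-1}) of \Cref{fact:submodularity} gives $d(X\cap C)\le d(X)+d(C)-d(X\cup C)\le d(X)$. But $X\cap C$ is a Steiner cut contained in the extreme set $X$, so $d(X\cap C)\ge d(X)$ with equality only when $X\cap C=X$ (\Cref{fact:subset-of-extreme}); hence $X\cap C=X$, i.e.\ $X\subseteq C$.

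For $d(X)=d(C)$: since $X$ is an $R$-$(T\setminus R)$ cut we already have $d(X)\ge d(C)$, so it suffices to produce an extreme set with projection $R$ and cut value at most $d(C)$ — such a set lies inside $\super(R)=X$ and, being a Steiner subset of the extreme set $X$, has cut value $\ge d(X)$, giving $d(X)\le d(C)$. If $C$ is extreme we are already done, since then $C\subseteq\super(R)=X$, which with $X\subseteq C$ yields $X=C$. Otherwise take an extreme violator $Z\subsetneq C$ with $d(Z)\le d(C)$ (\Cref{fact:extreme-violator}); note $\emptyset\ne\rho(Z)\subseteq\rho(C)=R$. The key claim is $Z\subseteq X$. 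Since $\rho(X\cap Z)=\rho(X)\cap\rho(Z)=\rho(Z)\ne\emptyset$, the sets $X$ and $Z$ are not disjoint; if $X\subseteq Z$ then $R=\rho(X)\subseteq\rho(Z)\subseteq R$ forces $\rho(Z)=R$, so $Z\subseteq\super(R)=X$ and in fact $Z=X$; and if $X$ and $Z$ cross, then $X\cup Z$ is a Steiner cut with $\rho(X\cup Z)=R$, hence extreme by closure under union (\Cref{lem:extreme-union}), hence $X\cup Z\subseteq\super(R)=X$, contradicting $Z\setminus X\ne\emptyset$. So $Z\subseteq X$ in every case, whence $d(Z)\ge d(X)$ by \Cref{fact:subset-of-extreme} and $d(X)\le d(Z)\le d(C)$. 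Combining with $d(X)\ge d(C)$ gives $d(X)=d(C)=\lambda(R,T\setminus R)$, so $X$ is an $R$-$(T\setminus R)$ min cut; since $X\subseteq C$ and $C$ has the minimal $R$-side among all such min cuts, $C\subseteq X$, and therefore $X=C$.

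The step I expect to be the real obstacle is proving $Z\subseteq X$ for the extreme violator: a violator of $C$ need not project onto all of $R$, so it is not automatically among the extreme sets whose union defines $\super(R)$, and the argument must "lift" $Z$ by taking a union with $X$ and invoking closure of extreme sets under union — exactly the point where the supreme-set structure (rather than ordinary extreme-set theory) is essential. Everything else is routine bookkeeping with projections and two applications of submodularity.
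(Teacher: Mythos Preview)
Your proof is correct. It reaches the same conclusion as the paper but organizes the argument differently and leans on a different supporting lemma.

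The paper lets $S$ be the earliest $R$-$(T\setminus R)$ min cut and proves directly that $S$ is extreme: taking an extreme violator $Z\subsetneq S$, it passes to $Y=\super(\rho(Z))$ and invokes the \emph{laminarity of supreme sets} (\Cref{lem:supreme-laminar}) to force $Y\subsetneq X$, which yields $d(Z)\ge d(Y)>d(X)\ge d(S)$, a contradiction. Once $S$ is extreme, $S\subseteq\super(R)=X$, and the extreme property of $X$ finishes.

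You instead first establish $X\subseteq C$ by a clean submodularity step (not present in the paper), and then show the extreme violator $Z$ of $C$ sits inside $X$ by applying the \emph{union closure} lemma (\Cref{lem:extreme-union}) directly to the pair $X,Z$: if they crossed, $X\cup Z$ would be an extreme set with projection $R$ strictly containing $\super(R)$. Since \Cref{lem:supreme-laminar} is itself derived from \Cref{lem:extreme-union}, your route short-circuits one layer of the paper's argument. The tradeoff is that the paper ends up proving the slightly stronger intermediate fact that the earliest min cut $C$ is itself extreme, whereas you only extract $d(X)=d(C)$ and then use the ``earliest'' minimality of $C$ to close; both are entirely adequate here.
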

\begin{proof}
Because $\mu(R)$ is defined, $R\ne \emptyset$ and $R\ne T$. Let $S$ be the $R$-side of the earliest $R$-$(T\setminus R)$ min cut, which is a Steiner cut because $\rho(S)=R$. Because $X$ is an $R$-$(T\setminus R)$ cut and $S$ is an $R$-$(T\setminus R)$ min cut, 
\begin{equation}\label{eq:supreme-mincut-1}
    d(S)\le d(X)
\end{equation}

First we prove that $S$ is extreme. Assume otherwise; then by \Cref{fact:extreme-violator}, there exists an extreme violator $Z\subsetneqq S$ such that
\begin{equation}\label{eq:supreme-mincut-2}
    d(Z)\le d(S)
\end{equation}

Because $Z\subsetneqq S$, $\rho(Z)\subseteq \rho(S)=R$. We divide into two cases $\rho(Z)=R$ and $\rho(Z)\subsetneqq R$ and derive a contradiction for both cases.

When $\rho(Z)=R$, $Z$ is an $R$-$(T\setminus R)$ cut, and also a proper subset of the earliest $R$-$(T\setminus R)$ min cut $S$. So $d(Z)>d(S)$, which contradicts (\ref{eq:supreme-mincut-2}).

The remaining case is $\rho(Z)\subsetneqq R$. Let $Y=\super(\rho(Z))$. Because $Z$ is an extreme set, $Z\subseteq Y$ by \Cref{def:supreme}. Using \Cref{fact:subset-of-extreme},
\begin{equation}\label{eq:supreme-mincut-3}
    d(Z)\ge d(Y)
\end{equation}
By \Cref{lem:supreme-laminar}, $Y=\super(\rho(Z))$ and $X=\super(R)$ cannot cross. Because $X\setminus Y\supseteq R\setminus \rho(Z)\ne\emptyset$ and $X\cap Y\supseteq \rho(Z)\ne\emptyset$, we have $Y\setminus X=\emptyset$ and $Y\subsetneqq X$. Since $X$ is extreme,
\begin{equation}\label{eq:supreme-mincut-4}
    d(Y) > d(X)
\end{equation}

(\ref{eq:supreme-mincut-3}) and (\ref{eq:supreme-mincut-4}) give $d(Z)>d(X)$, but (\ref{eq:supreme-mincut-1}) and (\ref{eq:supreme-mincut-2}) give $d(Z)\le d(X)$, a contradiction.
In conclusion, $S$ is extreme.

By \Cref{def:supreme}, $S\subseteq \super(R)=X$.
$S\subsetneqq X$ would imply $d(S)>d(X)$ since $X$ is extreme, which contradicts (\ref{eq:supreme-mincut-1}). Therefore $S=X$.
\end{proof}

\eat{\begin{lemma}\label{lem:mincut-max-supreme}
A minimal Steiner min cut is a maximal supreme set.
\end{lemma}
\begin{proof}
Let $X$ be any minimal Steiner min cut, and $\lambda=d(X)$ be the Steiner min cut value. For any Steiner cut $W\subsetneqq X$, $d(W)>d(X)$ because $W$ is not a Steiner min cut. So $X$ is extreme.

If there exists an extreme set $Y\supsetneqq X$, then $d(X)>d(Y)$, which contradicts that $X$ is a Steiner min cut. So $X$ is a maximal extreme set. 

$\super(\rho(X))$ is an extreme set containing $X$, so $\super(\rho(X))=X$. $X$ has no extreme proper superset, so no supreme proper superset. In conclusion $X$ is a maximal supreme set.
\end{proof}}

\begin{lemma}\label{lem:supreme-contract}
For any supreme set $X=\super(R)$ in $G$, and any set $K$ which is either a subset of $X$ or disjoint from $X$, consider a graph $H$ by contracting $K$ from $G$. Then $X$'s image $X'$ is still a supreme set in $H$.
\end{lemma}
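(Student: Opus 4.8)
The plan is to deduce everything from the earliest--min-cut characterization of supreme sets, \Cref{lem:supreme-mincut}. Let $k^*$ be the vertex of $H$ into which $K$ is contracted (if $|K|\le 1$ then $H=G$ and there is nothing to do), and let $T_H$ be the terminal set of $H$, i.e.\ $T_H=(T\setminus K)\cup\{k^*\}$ if $K$ contains a terminal and $T_H=T$ otherwise. Let $X'$ be the image of $X$ (so $X'=X$ if $K\cap X=\emptyset$, and $X'=(X\setminus K)\cup\{k^*\}$ if $K\subseteq X$) and put $R'=\rho(X')=X'\cap T_H$. I will show: (i) $X'$ is an extreme set of $H$ with projection $R'$; and (ii) $X'$ is an $R'$-$(T_H\setminus R')$ min cut of $H$. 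Granting (i), the extreme sets of $H$ with projection $R'$ form a nonempty class, so $\super(R')$ is defined in $H$ and contains $X'$ by \Cref{def:supreme}; by \Cref{lem:supreme-mincut} applied in $H$, $\super(R')$ is the earliest $R'$-$(T_H\setminus R')$ min cut of $H$, hence contained in the min cut $X'$ from (ii). Thus $\super(R')\subseteq X'\subseteq\super(R')$, so $X'=\super(R')$ is a supreme set of $H$, as desired.

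Both (i) and (ii) rest on the standard cut-value-preserving correspondence between the vertex sets of $G$ that do not cross $K$ (those containing $K$ or disjoint from $K$) and all vertex sets of $H$: a set disjoint from $K$ maps to itself, a set containing $K$ maps to $(\cdot\setminus K)\cup\{k^*\}$, and the inverse map $\psi$ replaces $k^*$ by $K$; this preserves $d(\cdot)$. Two observations about how terminals transform will be used throughout. First, collapsing cannot turn a nonempty terminal set into an empty one, nor a proper terminal subset into all of $T_H$, so Steiner cuts of $G$ not crossing $K$ correspond exactly to Steiner cuts of $H$. Second, and crucially, if $K$ contains a terminal then $k^*$ is itself a terminal of $H$, and it lies in $R'$ when $K\subseteq X$ (because then $K\cap T\subseteq X\cap T=R$) and in $T_H\setminus R'$ when $K\cap X=\emptyset$; in either case $k^*$ is forced onto a predictable side of every $R'$-$(T_H\setminus R')$ cut of $H$, which is exactly what keeps pulled-back cuts well behaved.

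For (i): $X'$ is a Steiner cut of $H$ since a terminal of $T\setminus X$ (which exists as $X$ is a Steiner cut of $G$) survives in $T_H\setminus R'$. Now take any Steiner cut $W'\subsetneq X'$ of $H$ and set $W=\psi(W')$; the correspondence gives that $W$ is a Steiner cut of $G$ (its projection is nonempty by the first observation and is contained in $R\subsetneq T$ since $W\subseteq X$) with $W\subsetneq X$, so $d(W)>d(X)$ by extremeness of $X$ in $G$ (\Cref{def:extreme}), and therefore $d(W')=d(W)>d(X)=d(X')$. Hence $X'$ is extreme in $H$, with projection $R'$ by construction. For (ii): $X'$ is an $R'$-$(T_H\setminus R')$ cut automatically since $X'\cap T_H=R'$. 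For minimality, take an arbitrary $R'$-$(T_H\setminus R')$ cut $S'$ of $H$, set $S=\psi(S')$, and check---splitting on whether $K\cap X=\emptyset$ or $K\subseteq X$ and using the second observation to place the terminals inside $K$ correctly---that $S$ separates $R$ from $T\setminus R$ in $G$; then $d(S)\ge d(X)$ because $X=\super(R)$ is in particular a minimum $R$-$(T\setminus R)$ cut of $G$ by \Cref{lem:supreme-mincut}, so $d(S')=d(S)\ge d(X)=d(X')$.

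The conceptual content is short: contracting a set that does not cross $X$ preserves both ``$X$ is extreme'' and ``$X$ is a minimum $R$-versus-$(T\setminus R)$ cut,'' and \Cref{lem:supreme-mincut} turns this into ``$X'$ is a supreme set.'' I expect the real work to be bookkeeping rather than a single hard step: one runs a four-way case analysis (whether or not $K$ contains a terminal, crossed with $K\cap X=\emptyset$ versus $K\subseteq X$) and must, in each branch, correctly determine the side of a pulled-back cut on which $k^*$ sits. The main danger is an overlooked corner case---for instance $K=X$, where $X'=\{k^*\}$, or $R\subseteq K\subseteq X$, where $\rho(X')=\{k^*\}$---so I would double-check that the arguments for (i) and (ii) go through verbatim in those degenerate situations.
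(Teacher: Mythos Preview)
Your argument is correct, and it takes a genuinely different route from the paper. Both proofs establish part (i)---that $X'$ is extreme in $H$---by the same pull-back argument. The difference is in how the maximality $X'=\super_H(R')$ is obtained. The paper argues by contradiction: assuming $X'\subsetneq\super_H(R')$, it pulls back $\super_H(R')$ to a set $Y\supsetneq X$ in $G$ with $\rho(Y)=R$ and $d(Y)<d(X)$, observes that $Y$ cannot be extreme, picks an extreme violator $Z$ of $Y$, and uses laminarity of supreme sets in $G$ to force $Z\subseteq X$, giving $d(Z)\ge d(X)$ by \Cref{fact:subset-of-extreme} and hence a contradiction. You instead prove directly that $X'$ is an $R'$-$(T_H\setminus R')$ \emph{min} cut by pulling back an arbitrary competing cut to an $R$-$(T\setminus R)$ cut in $G$ and invoking \Cref{lem:supreme-mincut} there; then you sandwich $\super_H(R')$ between $X'$ (via \Cref{def:supreme}) and $X'$ (via the ``earliest'' part of \Cref{lem:supreme-mincut} in $H$). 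Your route is shorter and more symmetric---it uses \Cref{lem:supreme-mincut} in both $G$ and $H$ and never touches violators or laminarity---at the cost of the small case analysis you flag, which is indeed purely bookkeeping and goes through in the degenerate cases you mention.
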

\begin{proof}
The contraction homomorphism $\varphi$ maps vertices in $K$ to a contracted node $k$, and maps vertices outside $K$ to themselves. Define its inverse homomorphism $\varphi^{-1}$ that maps $k$ to $K$ and vertices outside $K$ to themselves. We have $d_H(Z)=d(\varphi^{-1}(Z))$ for any vertices set $Z$ in $H$. In particular, the assumption means $X=\varphi^{-1}(X')$, so $d_H(X')=d(X)$.

Define extreme sets and supreme sets in $H$ with terminal set $\varphi(T)$.
Let $R'=X'\cap \varphi(T)$. Our goal is to prove that $X'=\super_H(R')$.

First we prove that $X'$ is extreme in $H$. Intuitively, by contracting a subset or disjoint set, the extreme condition of $X$ can only become weaker. Formally, for any Steiner cut $Z'\subsetneqq X'$, let $Z=\varphi^{-1}(Z')$, so that $Z\subsetneqq X$. Because $Z'$ is a Steiner cut in $H$, $\rho_H(Z')$ is not empty nor $\varphi(T)$, so $\rho(Z)=\varphi^{-1}(\rho_H(Z'))$ is not empty nor $T$. That is, $Z$ is a Steiner cut and also a proper subset of $X$.
Because $X$ is extreme,
\[d_H(Z')=d(Z)>d(X)=d_H(X')\]
So $X'$ is extreme in $H$ by \Cref{def:extreme}.
By \Cref{def:supreme} of $\super_H(R')$, $X'\subseteq \super_H(R')$.


Assume for contradiction that $X'\subsetneqq\super_H(R')$. 
Let $Y=\varphi^{-1}(\super_H(R'))$, so that $X\subsetneqq Y$. If there exists a terminal $t\in Y\setminus X$, then $t$ is mapped to $R'$, which means $t$ and an vertex in $X$ have the same image. However, this is impossible because $K$ is either contained in $X$ or disjoint from $X$. Therefore $\rho(Y)=\rho(X)=R$.

Because $\super_H(R')$ is extreme in $H$,
\begin{equation}\label{eq:supreme-contract-1}
    d(X)=d_H(X')>d_H(\super_H(R'))=d(Y)
\end{equation}
If $Y$ is extreme, then $Y$ is a subset of $X=\super(R)$ and $d(Y)\le d(X)$, which contradicts (\ref{eq:supreme-contract-1}). Therefore $Y$ is not extreme and there exists a violator $Z$ of $Y$ with
\begin{equation}\label{eq:supreme-contract-2}
    d(Z)\le d(Y)
\end{equation}
Because $\rho(Z)\subseteq \rho(Y)=R$ and $Z$ is extreme, $Z\subseteq \super(\rho(Z))\subseteq X$. Then $d(Z)\le d(X)$ by \Cref{fact:subset-of-extreme}. However, (\ref{eq:supreme-contract-1}) and (\ref{eq:supreme-contract-2}) give $d(X)>d(Z)$, a contradiction.
In conclusion, $X'=\super_H(R')$.
\end{proof}

\subsection{Algorithm for Finding Supreme Sets}

This section presents an algorithm that constructs a laminar forest $L$\eat{laminar tree $L$ of subsets of $T$} representing the terminal sets of all extreme sets, and calculates $c(T_i)=d(\super(T_i))$ for each $T_i\in L$ as a byproduct. This is a little weaker than the laminar forest of supreme sets guaranteed by \Cref{lem:supreme-laminar}. Instead of finding all supreme sets $\super(T_i)$, we only find their terminal sets $T_i$ and cut values $c(T_i)$, which are enough for our augmentation algorithm.

The algorithm consists of three phases. Phase 1 performs perturbation on edge weights to make cut values distinct, but also distorts the supreme sets. Phase 2 applies a divide and conquer algorithm to find a laminar forest containing all supreme sets of the perturbed graph. Finally in Phase 3, several rounds of post-processing extract the laminar forest on terminals of original graph from the laminar forest of supreme sets of perturbed graph.


\subsubsection{Phase 1: Perturbation}
We apply the same perturbation used in \cite{CenLP22a}. That is, for each edge $(u, v)$, let the new weight $\widetilde{w}(u,v)=mN\cdot w(u,v)+r(u,v)$, where $N$ is an integer larger than the sum of all edge weights but still polynomial in $n$, and $r(u,v)$ are uniformly and independently drawn from $\{1,2,\ldots, N\}$.
Let $\widetilde{G}$ be the graph after perturbation. Use $\widetilde{d}(\cdot)$ and $\widetilde{\lambda}(\cdot,\cdot)$ to denote cut values under edge weights $\widetilde{w}$.

Phase 1 takes $O(m)$ time to generate the random numbers.

\begin{fact}[\cite{CenLP22a}]\label{fact:perturb-order}
Perturbation does not change order of two unequal cut values.
\end{fact}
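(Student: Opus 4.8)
The plan is to make the two key quantitative observations precise: (i) the perturbation term $r(u,v)$ is small relative to the scaling factor $mN$, and (ii) every cut contains at most $m$ edges, so the total perturbation on any cut is bounded by $mN$. Concretely, for any cut $X$ we have $\widetilde d(X) = mN\cdot d(X) + \sum_{e\in\delta(X)} r(e)$, where $\delta(X)$ denotes the edge set of the cut. Since each $r(e)\in\{1,\ldots,N\}$ and $|\delta(X)|\le m$, the correction term $\sum_{e\in\delta(X)} r(e)$ lies in the interval $[1, mN]$ whenever the cut is nonempty, and it equals $0$ when $d(X)=0$ (i.e., $X\in\{\emptyset,V\}$). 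Thus $\widetilde d(X)$ always lies in the half-open block $\big(mN\cdot d(X),\ mN\cdot d(X) + mN\big]$ indexed by the integer $d(X)$.

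The main step is then immediate: suppose $d(X)\ne d(Y)$, say $d(X) < d(Y)$, so $d(Y)\ge d(X)+1$ since cut values are integers (all weights are integral). Then
\[
\widetilde d(X) \le mN\cdot d(X) + mN = mN\cdot(d(X)+1) \le mN\cdot d(Y) < \widetilde d(Y).
\]
Hence $\widetilde d(X) < \widetilde d(Y)$, so the perturbed order agrees with the original order on any pair of unequal cut values. (One should note that the boundary case is handled by the strict inequality $mN\cdot d(X)+mN \le mN\cdot d(Y)$ combined with $\sum_{e\in\delta(Y)}r(e)\ge 1$; if $Y$ is empty then $d(Y)=0$ cannot exceed $d(X)\ge 0$, so this case does not arise.) This argument is deterministic — it does not even use randomness of the $r(e)$'s, only their range — so it holds with certainty, not just with high probability.

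I do not expect any real obstacle here: the only thing to be careful about is the off-by-one bookkeeping on the block boundaries (ensuring the blocks $(mN\cdot k, mN\cdot(k+1)]$ for consecutive integers $k$ are disjoint and that each $\widetilde d(X)$ genuinely falls in the block for $k=d(X)$), together with the observation that integrality of the original weights is what guarantees the gap $d(Y)-d(X)\ge 1$. The choice of $N$ being polynomial in $n$ is not needed for this fact; it is only needed later to keep $\widetilde G$ polynomially bounded and to make the random-distinctness argument (a separate claim, established in \cite{CenLP22a}) go through.
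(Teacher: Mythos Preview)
Your proof is correct and is the standard scaling argument: because the original weights are integers, unequal cut values differ by at least $1$, and the total random perturbation on any cut is at most $mN$, so the $mN$-scaled gap cannot be closed. The paper does not actually prove this fact; it simply cites \cite{CenLP22a}, where the same argument appears. One cosmetic remark: your parenthetical ``$d(X)=0$ (i.e., $X\in\{\emptyset,V\}$)'' is not literally true in the presence of zero-weight edges, but this does not affect the argument, since the case you need---that $\delta(Y)$ is nonempty whenever $d(Y)>d(X)\ge 0$---follows directly from $d(Y)\ge 1$.
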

\begin{fact}\label{fact:perturb-extreme}
If $X$ is an extreme set under edge weights $w$, then $X$ remains extreme under edge weights $\widetilde{w}$.
\end{fact}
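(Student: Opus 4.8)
The plan is to show that the perturbation of edge weights used in Phase~1, namely $\widetilde w(u,v) = mN\cdot w(u,v) + r(u,v)$ with $r(u,v)\in\{1,\dots,N\}$ and $N$ larger than the total original weight, preserves every strict inequality of the form $d(Y) > d(X)$ whenever $Y,X$ are both Steiner cuts with $Y\subsetneqq X$. Since a set $X$ being extreme is precisely the statement that $d(Y)>d(X)$ holds for every Steiner cut $Y\subsetneqq X$, preserving each such strict inequality preserves extremeness. The one subtlety relative to Fact~\ref{fact:perturb-order} is that the original inequality is already strict under $w$ (because $X$ is extreme), so I do not need to worry about the perturbation \emph{creating} an order where there was a tie in the ``wrong'' direction; I only need that a strict inequality under $w$ remains strict under $\widetilde w$.

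First I would observe that for any vertex set $Z$, $\widetilde d(Z) = mN\cdot d(Z) + \sum_{e\in\delta(Z)} r(e)$, where $\delta(Z)$ is the cut. Next, the ``noise'' term $\sum_{e\in\delta(Z)} r(e)$ lies strictly between $0$ and $mN$ for any nonempty cut: it is at least $1$ (each $r(e)\ge 1$ and a cut of a Steiner set is nonempty) and at most $|\delta(Z)|\cdot N \le mN$; in fact since the number of edges is at most $m$ we get $\sum_{e\in\delta(Z)} r(e) < mN$ as long as not every one of the (at most $m$) edges is a cut edge with $r(e)=N$ — and even the loose bound $\le mN$ suffices because of the strict integrality gap argument below. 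Now take $X$ extreme and any Steiner cut $Y\subsetneqq X$. Under $w$ we have $d(Y) > d(X)$, and since edge weights are integers this means $d(Y) \ge d(X) + 1$. Therefore
\begin{equation*}
\widetilde d(Y) = mN\cdot d(Y) + \sum_{e\in\delta(Y)} r(e) \ge mN\cdot(d(X)+1) + 1 = mN\cdot d(X) + mN + 1 > mN\cdot d(X) + \sum_{e\in\delta(X)} r(e) = \widetilde d(X),
\end{equation*}
where the strict inequality uses $\sum_{e\in\delta(X)} r(e) \le mN < mN+1$. Hence $\widetilde d(Y) > \widetilde d(X)$ for every Steiner cut $Y\subsetneqq X$, so $X$ is extreme under $\widetilde w$ as well. (Note the notion of Steiner cut is unchanged by perturbation since it depends only on the terminal set $T$, not on weights.)

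I expect no real obstacle here — this is a routine ``multiply by a large integer, add bounded noise'' argument, identical in spirit to Fact~\ref{fact:perturb-order}. The only point requiring a moment's care is making sure the noise term is compared against the correct scale: the original strict inequality gives a gap of at least one unit at scale $mN$, while the total noise across any cut is at most $mN$ (or strictly less), so the gap survives. If one wanted to be fully careful about the $\le mN$ versus $<mN$ boundary, one can instead choose $N$ strictly larger than $m$ times the maximum possible per-edge perturbation divided appropriately, or simply note that a Steiner cut has at most $m-1$ edges in any connected relevant component, but none of this is necessary: the displayed chain already closes because $mN \ge \sum_{e\in\delta(X)} r(e)$ and $mN+1 > mN$.
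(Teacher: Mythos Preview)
Your proof is correct and takes essentially the same approach as the paper: the paper's proof simply invokes Fact~\ref{fact:perturb-order} (perturbation preserves the order of unequal cut values) directly, whereas you inline the standard scale-plus-bounded-noise argument that underlies that fact. The content is the same; the paper's version is just shorter because it treats Fact~\ref{fact:perturb-order} as a black box.
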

\begin{proof}
For every Steiner cut $Y$ that is a proper subset of $X$, we have $d(Y)>d(X)$ by \Cref{def:extreme}. By \Cref{fact:perturb-order}, $\widetilde{d}(Y)>\widetilde{d}(X)$. Therefore $X$ satisfies  \Cref{def:extreme} under edge weights $\widetilde{w}$.
\end{proof}
\begin{lemma}[\cite{CenLP22a}]
\label{lem:perturb}
After the perturbation, the following holds with high probability.
Fix any vertex $s$. For every vertex $t\ne s$, the minimum $s$-$t$ cut under $\widetilde{w}$ is unique. Moreover, for every vertex pair $t,t'\in V\setminus \{s\}$, either $\widetilde{\lambda}(s,t)\ne \widetilde{\lambda}(s,t')$, or the unique minimum $s$-$t$ cut is identical to the unique minimum $s$-$t'$ cut under $\widetilde{w}$.
\end{lemma}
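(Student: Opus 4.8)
The plan is to reduce the statement to a single structural claim about linear-algebraic genericity of the perturbation. First I would observe that the perturbed weight of edge $e=(u,v)$ can be written as $\widetilde{w}(e) = mN\cdot w(e) + r(e)$, where the ``base'' term $mN\cdot w(e)$ is a fixed large multiple and $r(e)\in\{1,\dots,N\}$ is the random low-order part. Since $N$ exceeds the total weight and there are at most $m$ edges, for any two distinct cuts $A,B$ the quantity $\sum_{e\in \delta(A)} mN\cdot w(e) - \sum_{e\in\delta(B)} mN\cdot w(e)$ is a multiple of $mN$, whereas $|\sum_{e\in\delta(A)} r(e) - \sum_{e\in\delta(B)} r(e)| < mN$. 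Hence $\widetilde d(A)=\widetilde d(B)$ forces $d(A)=d(B)$ \emph{and} $\sum_{e\in\delta(A)} r(e) = \sum_{e\in\delta(B)} r(e)$; conversely if $d(A)\ne d(B)$ the order is preserved (this is exactly Fact~\ref{fact:perturb-order}, which I may cite). So the entire lemma reduces to controlling ties \emph{among cuts of equal original value}, and within each such tie-class the behavior of $\widetilde d$ is governed purely by the random vector $r=(r(e))_{e\in E}$.

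Next I would set up the uniqueness part. Fix $s$ and a vertex $t\ne s$. If the minimum $s$-$t$ cut under $\widetilde w$ is not unique, then there are two distinct $s$-$t$ cuts $A\ne B$ with $\widetilde d(A)=\widetilde d(B)$ minimum, hence (by the reduction above) $d(A)=d(B)$ and $\langle \mathbf{1}_{\delta(A)} - \mathbf{1}_{\delta(B)},\, r\rangle = 0$. Since $A\ne B$ (as vertex sets, with $s$ on one side) and an edge separates $A$ from $B$ somewhere, the incidence vectors $\mathbf{1}_{\delta(A)}$ and $\mathbf{1}_{\delta(B)}$ over $\{0,1\}^E$ are distinct, so the integer vector $\mathbf{1}_{\delta(A)} - \mathbf{1}_{\delta(B)}$ is nonzero. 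For any fixed nonzero integer vector $z\in\mathbb{Z}^E$, the probability over the independent uniform draws $r(e)\in\{1,\dots,N\}$ that $\langle z, r\rangle = 0$ is at most $1/N$: condition on all coordinates except one where $z_e\ne 0$; then $r(e)$ must equal a fixed value, which happens with probability $\le 1/N$. Now union-bound over all pairs of distinct cuts — there are at most $2^n$ cuts, hence $\le 4^n$ pairs — so the failure probability is at most $4^n/N$. Here is where I would need $N$ to be chosen large enough; the excerpt only says $N$ is polynomial in $n$, which is \emph{not} by itself enough for a $4^n$ union bound, so I expect the real argument (as in \cite{CenLP22a}) is more careful. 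The standard fix is to union-bound not over all cut pairs but only over the relevant \emph{minimum} $s$-$t$ cuts for each $t$: there are at most $n$ choices of $t$, and for each the number of minimum $s$-$t$ cuts in a graph is polynomially bounded only if we are willing to invoke a counting bound (in fact, the number of minimum cuts separating a fixed pair can be exponential in general but the number of \emph{near-minimum} ones relevant here is controlled by the structure). More robustly, one argues via the submodular/distributive lattice structure of $s$-$t$ min cuts: the minimum $s$-$t$ cuts under $w$ form a lattice, and a generic perturbation picks out a unique minimal element, so it suffices that the perturbation breaks ties \emph{within this lattice}, whose Hasse diagram has polynomially many ``atomic'' moves — giving a polynomial union bound and so $1/\mathrm{poly}(n)$ suffices. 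I would follow whichever of these the cited paper uses; the cleanest exposition is the atomic-move one.

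For the second (``moreover'') part, I would argue contrapositively: suppose $\widetilde\lambda(s,t) = \widetilde\lambda(s,t')$ but the unique minimum $s$-$t$ cut $A$ and the unique minimum $s$-$t'$ cut $A'$ differ. By posi-modularity/submodularity of cuts (Fact~\ref{fact:submodularity}), standard uncrossing shows that among $\{A, A', A\cap A', A\cup A'\}$ one can extract an $s$-$t$ cut and an $s$-$t'$ cut whose values sum to at most $\widetilde d(A)+\widetilde d(A')$; if $A,A'$ cross, at least one of the uncrossed cuts is a \emph{strictly} smaller $s$-$t$ or $s$-$t'$ cut, or else both $A\cap A'$ and $A\cup A'$ are simultaneously minimum $s$-$t$ and minimum $s$-$t'$ cuts, forcing (by uniqueness in the first part, applied after perturbation) $A = A\cap A' = A\cup A' = A'$, a contradiction. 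The only remaining case is $A\subsetneq A'$ or $A'\subsetneq A$; there, since $\widetilde d(A) = \widetilde d(A')$ is the common minimum, and say $A\subsetneq A'$, the cut $A$ is also an $s$-$t'$ cut of the same minimum value, so by uniqueness $A = A'$, again a contradiction. Carrying out the uncrossing carefully, using that the perturbed weights already guarantee a \emph{unique} minimum for every fixed separated pair, is the technical heart; I would organize it as a short lemma: ``if two perturbed min-cut values for $(s,\cdot)$ coincide, the witnessing cuts are nested, and nesting plus equal value plus uniqueness forces equality.'' The main obstacle throughout is making the union bound honest with only $N = \mathrm{poly}(n)$ — i.e., identifying the correct polynomial-size family of ``bad events'' (atomic moves in the min-cut lattice, rather than all cut pairs) — and this is the step I would spend the most care on, deferring to the argument of \cite{CenLP22a}.
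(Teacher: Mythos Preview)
The paper does not prove this lemma; it is quoted from \cite{CenLP22a} and stated without proof. So there is no in-paper argument to compare your proposal against.

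That said, your proposal has a genuine gap in the ``moreover'' part. Your uncrossing argument assumes that when the unique min $s$--$t$ cut $A$ and the unique min $s$--$t'$ cut $A'$ cross, the sets $A\cap A'$ and $A\cup A'$ are again $s$--$t$ and $s$--$t'$ cuts. But $A\cup A'$ need not separate anything: take $A,A'$ to be the $s$-sides, so $s\in A\cap A'$; in the subcase $t\in A'$ and $t'\in A$ (which is exactly the case left after your first reduction), posi-modularity forces $A\cup A'=V$, and then submodularity only gives $\widetilde d(A\cap A')\le 2\widetilde\lambda$, with $\widetilde d(A\cap A')>\widetilde\lambda$ by uniqueness---no contradiction. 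A concrete instance is the triangle $\{s,t,t'\}$ with all three edges of equal weight: after perturbation the unique min $s$--$t$ cut can be $\{s,t'\}$ and the unique min $s$--$t'$ cut can be $\{s,t\}$, and the event $\widetilde\lambda(s,t)=\widetilde\lambda(s,t')$ reduces to $\widetilde w(s,t)=\widetilde w(s,t')$, which is a genuine probability-$1/N$ bad event, not a structural impossibility.

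The fix is to make the second part probabilistic as well. For each pair $t,t'$ with $\lambda(s,t)=\lambda(s,t')$, apply the Isolation Lemma to the family $\mathcal F$ consisting of all original min $s$--$t$ cuts together with all original min $s$--$t'$ cuts: with probability at least $1-m/N$ the $r$-minimum over $\mathcal F$ is unique, and if $\widetilde\lambda(s,t)=\widetilde\lambda(s,t')$ then both perturbed minima are attained at this unique set, hence the two cuts coincide. A union bound over the $O(n^2)$ pairs then needs only $N=\mathrm{poly}(n)$. Your uniqueness part is already pointed in this Isolation-Lemma direction and is fine once you commit to it; just carry the same idea through to the second claim rather than switching to a structural argument.
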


\subsubsection{Phase 2: Divide and Conquer}

\begin{algorithm2e}[t]
\caption{Find the laminar forest of supreme sets.}
\label{alg:extreme}
\SetKwInOut{Input}{Input}
\SetKwInOut{Output}{Output}
\setcounter{AlgoLine}{0}
\Input{Graph $\widetilde{G}=(V,E)$ with edge weights $\widetilde{w}$, terminal set $T$.}
\smallskip
\If{$|T|\le 16$}{
Calculate the earliest $T_1$-$T_2$ min cut for every bipartition $(T_1,T_2)$ of $T$.\\
For any pair of crossing sets $(X, Y)$ in the family, either $d(X\setminus Y)\le d(X)$ or $d(Y\setminus X)\le d(Y)$ by (\ref{eq:submodularity-2}). In the former case remove $X$, and in the latter case remove $Y$.\\\label{line:base-check}
Return the laminar forest of remaining sets. 
}\Else{
\Repeat{$|T_1|\ge \frac{1}{16}|T|, |T_2|\ge \frac{1}{16}|T|$\label{line:repeat-condition}}{
 Uniformly sample two terminals $s,t\in T$. Let $\phi=\tilde{\lambda}(s,t)$.\\
 Compute $T_1=ct(s,\phi)\cap T$. Let $T_2=T\setminus T_1$.\\
 }
 Compute the earliest $T_1$-$T_2$ min cut. Let the two sides of the min cut be $S_1$ and $S_2$ where $T_1\subseteq S_1$, $T_2\subseteq S_2$.\\ \label{line:ct-mincut}  
 Form $\widetilde{G}_1$ by contracting $S_1$ in $\widetilde{G}$. Let $t_1$ be the contracted node of $S_1$. Recursively run the algorithm on input $(\widetilde{G}_1, T_2\cup\{t_1\})$. Let $L_1$ be the returned laminar forest.\\
 Form $\widetilde{G}_2$ by contracting $S_2$ into $t_2$. Recurse on $(\widetilde{G}_2, T_1\cup\{t_2\})$ to get laminar forest $L_2$.\\
 After deleting the leaf of $\{t_1\}$ in the forest $L_1$, attach all trees in $L_1$ to be children of the leaf of $\{t_2\}$ in $L_2$.\\ 
 \label{line:merge}
 Output the merged laminar forest.
}
\end{algorithm2e}

This section works on the perturbed graph $\widetilde{G}$ assuming the properties of \Cref{lem:perturb} hold. We remark that for simplicity, the reader can assume that the original graph $G$ has all $s$-$t$ min cuts unique, and simply pretend that no perturbation exists, i.e.\ $\widetilde G=G$.

We present in \Cref{alg:extreme} a divide and conquer algorithm that outputs a laminar family containing all supreme sets of $\widetilde{G}$. The algorithm is based on uncrossing property about cut thresholds (\Cref{lem:ct-uncross-terminal}).

\begin{Definition}\label{def:ct}
The cut threshold of source $s$ and threshold $\phi$ is defined as $$ct(s,\phi)=\{t\in V:\widetilde{\lambda}(s,t)\ge \phi\}.$$ We also use the convention that $s\in ct(s,\phi)$.
\end{Definition}

Given graph $\widetilde{G}$, the algorithm samples two terminals $s, t\in T$ uniformly, and uses cut threshold $ct(s,\tilde{\lambda}(s,t))$ to partition the terminals into $T_1$ and $T_2$. By \Cref{lem:ct-balance} proved later on, this partition is balanced on terminals with constant probability, which is crucial for efficiency. Let $(S_1, S_2)$ be the earliest $T_1$-$T_2$ min cut. We form two subproblems by contracting $S_1$ in one and $S_2$ in the other. \Cref{lem:ct-mincut-uncross-supreme} guarantees that the cut $(S_1, S_2)$ does not cross any supreme set. Then for any supreme set $X$, one of $S_1$ and $S_2$ is a subset of $X$ or disjoint from $X$, and $X$ remains supreme in one of the recursive calls by \Cref{lem:supreme-contract}. Therefore, all supreme sets can be found in the recursive calls.

The recursion reaches the base case when $|T|\le 16$, which is still nontrivial because the number of non-terminals is unbounded. By \Cref{lem:supreme-mincut}, every supreme set is the earliest min cut of a subset of $T$, so finding all $2^{|T|-1}$ such earliest min cuts produces a family containing all supreme sets. However, this family may not be laminar. Fortunately, for any pair of crossing sets $(X, Y)$, either $d(X\setminus Y)\le d(X)$ or $d(Y\setminus X)\le d(Y)$ by (\ref{eq:submodularity-2}), which means one of $X$ and $Y$ is violated by its subset and hence not extreme. By checking all pairs of sets in the family and remove a non-extreme set in each crossing pair, we get a laminar family containing all supreme sets.

Finally, we merge the two laminar forests returned by the recursive calls in \Cref{line:merge}, without losing any supreme sets.
By attaching the forest $L_1$ to the leaf of $\{t_2\}$ in $L_2$
and deleting the leaf of $\{t_1\}$, we preserve all sets in $L_1$ and $L_2$, except the sets in $L_1$ that contain $t_1$. \Cref{lem:ct-mincut-uncross-supreme} guarantees that these sets cannot be supreme, so we do not lose any supreme set in the merging step, and the final output is a laminar forest containing all supreme sets of $\widetilde{G}$.


\begin{lemma}
\label{lem:ct-uncross-terminal}
For any cut threshold $X=ct(s,\phi)$ and any extreme set $Y$, either $Y\setminus X$ contains no terminal, or at least one of $X\cap Y$ and $X\setminus Y$ is empty.
\end{lemma}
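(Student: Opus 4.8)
The plan is to proceed by contradiction: suppose that $X = ct(s,\phi)$ and an extreme set $Y$ violate the conclusion, so that $Y\setminus X$ contains a terminal and both $X\cap Y$ and $X\setminus Y$ are nonempty. I first want to extract a single convenient ``witness'' vertex from $Y\setminus X$. Since $Y\setminus X$ is nonempty and contains a terminal, pick a terminal $t'\in Y\setminus X$; by definition of the cut threshold, $t'\notin X = ct(s,\phi)$ means $\widetilde\lambda(s,t') < \phi$. So there is an $s$-$t'$ min cut $C$ (with $t'$-side containing $t'$ and $s$-side containing $s$) of value $\widetilde d(C) = \widetilde\lambda(s,t') < \phi$. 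The natural plan is to use $C$ to build a Steiner cut that is a proper subset of $Y$ but has cut value $\le \widetilde d(Y)$, contradicting that $Y$ is extreme (\Cref{def:extreme})—here I should be careful that all reasoning takes place in $\widetilde G$, where $Y$ is still extreme by \Cref{fact:perturb-extreme}.

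Next I would locate $s$ relative to $X$ and $Y$. Note $s\in X$ by the convention $s\in ct(s,\phi)$, and we have assumed $X\cap Y$ and $X\setminus Y$ are both nonempty. The two cases to track are whether $s\in Y$ or $s\notin Y$. If $s\notin Y$: then $Y\subseteq V\setminus\{s\}$, and I would like to uncross $Y$ with the $s$-side of $C$. Writing $A$ for the $s$-side of $C$ (so $s\in A$, $t'\notin A$), consider $Y\setminus A = Y\cap (V\setminus A)$. This set contains $t'$, hence is a Steiner cut; and since $s\notin Y$ while $s\in A$, and $X\cap Y\subseteq X\subseteq$ ... — here is where I need the relationship between $X$ and $A$. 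The key observation: every vertex of $X\cap Y$ has $\widetilde\lambda(s,\cdot)\ge \phi > \widetilde d(C) = \widetilde\lambda(s,t')$, so by the uncrossing/``higher connectivity'' behavior captured in \Cref{lem:perturb} (the min $s$-$t$ cuts are nested by connectivity value), each such vertex lies on the $s$-side $A$ of $C$; thus $X\cap Y\subseteq A$, so $Y\setminus A \subseteq Y\setminus(X\cap Y) = Y\setminus X \subsetneq Y$ (proper since $X\cap Y\ne\emptyset$). Then posi-modularity \eqref{eq:submodularity-2} applied to $Y$ and $A$ gives $\widetilde d(Y\setminus A) + \widetilde d(A\setminus Y) \le \widetilde d(Y) + \widetilde d(A) < \widetilde d(Y) + \phi$, and combining with $Y\setminus A$ being a Steiner proper subset of $Y$ forces $\widetilde d(Y\setminus A) > \widetilde d(Y)$, hence $\widetilde d(A\setminus Y) < \phi$; I then push this further (or symmetrically use submodularity \eqref{eq:submodularity-1}) to manufacture the contradiction with extremality of $Y$. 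The case $s\in Y$ should be handled symmetrically by working with the $t'$-side of $C$ instead and intersecting with $Y$; since $t'\in Y\setminus X$, the set $(t'\text{-side of }C)\cap Y$ is a Steiner cut, and I expect $X\cap Y$ to be disjoint from the $t'$-side so that this intersection is again a proper subset of $Y$.

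The main obstacle I anticipate is making precise the claim ``every vertex of $X\cap Y$ lies on the $s$-side of the min $s$-$t'$ cut $C$.'' This is exactly the nestedness structure promised by \Cref{lem:perturb}: for any $v$ with $\widetilde\lambda(s,v)\ge\phi>\widetilde\lambda(s,t')$, the unique min $s$-$t'$ cut cannot separate $s$ from $v$, for otherwise it would be an $s$-$v$ cut of value $\widetilde\lambda(s,t')<\phi\le\widetilde\lambda(s,v)$, a contradiction. So $v$ is on the $s$-side of $C$. This gives $X\subseteq A$ (in fact $X\subseteq A\cup\{\text{vertices with }\widetilde\lambda(s,\cdot)\text{ exactly on the boundary}\}$, but uniqueness from \Cref{lem:perturb} rules out ambiguity), which is the crucial containment. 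Once this containment is in hand, the rest is a routine application of \eqref{eq:submodularity-1}–\eqref{eq:submodularity-2} together with \Cref{def:extreme}, and the two cases on whether $s\in Y$ are mirror images of each other.
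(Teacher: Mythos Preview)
Your approach is essentially the paper's: pick a terminal $t'\in Y\setminus X$, take the min $s$-$t'$ cut with $t'$-side $Z$ (so your $A=V\setminus Z$), observe $X\cap Z=\emptyset$ (equivalently $X\subseteq A$), apply extremality of $Y$ to the proper Steiner subset $Y\cap Z=Y\setminus A$, and use submodularity/posi-modularity to get $\widetilde d(Y\cup Z)=\widetilde d(A\setminus Y)<\phi$. The step you left as ``push this further'' is precisely the paper's punchline and is not another appeal to extremality: pick $u\in X\cap Y$ when $s\notin Y$ and $u\in X\setminus Y$ when $s\in Y$, so that $Y\cup Z$ (the complement of $A\setminus Y$) is an $s$-$u$ cut and hence has value $\ge\widetilde\lambda(s,u)\ge\phi$, yielding the contradiction.
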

\begin{proof}
Assume for contradiction that there exists $X=ct(s,\phi)$ and extreme $Y$ such that $\rho(Y\setminus X)\ne\emptyset,\ X\cap Y\ne\emptyset,\ X\setminus Y\ne\emptyset$. Pick $u\in X\setminus Y$ if $s\in Y$, and pick $u\in X\cap Y$ if $s\notin Y$. Then $Y$ separates $s$ from $u \in X$. Also pick terminal $v\in \rho(Y\setminus X)$. By \Cref{def:ct} on $X$,
\begin{equation}\label{eq:ct-uncross-terminal-1}
    \widetilde{\lambda}(s,u)\ge \phi,
\end{equation}
\begin{equation}\label{eq:ct-uncross-terminal-2}
    \widetilde{\lambda}(s,v)< \phi.
\end{equation}

Let $Z$ be the $v$-side of the $s$-$v$ min cut (which is unique by \Cref{lem:perturb}).
\begin{equation}\label{eq:ct-uncross-terminal-3}
    \widetilde{d}(Z)=\widetilde{\lambda}(s,v)<\phi
\end{equation}
Then for all $z\in Z,\ \widetilde{\lambda}(s,z)\le \widetilde{d}(Z)<\phi$ and $z\notin X$ by \Cref{def:ct} on $X$, so $Z\cap X=\emptyset$. 
Therefore, $Y\cup Z$ is a $s$-$u$ cut, and
\begin{equation}\label{eq:ct-uncross-terminal-4}
\widetilde{d}(Y\cup Z) \ge  \widetilde{\lambda}(s,u)
\end{equation}

$Y\cap Z$ is a proper subset of $Y$ containing terminal $v$, and since $Y$ is extreme,
$\widetilde{d}(Y\cap Z)>\widetilde{d}(Y)$.
Use (\ref{eq:submodularity-4}) of \Cref{fact:submodularity} and then (\ref{eq:ct-uncross-terminal-3}),
\begin{equation}\label{eq:ct-uncross-terminal-5}
\widetilde{d}(Y\cup Z) < \widetilde{d}(Z) < \phi
\end{equation}
But (\ref{eq:ct-uncross-terminal-1}) and (\ref{eq:ct-uncross-terminal-4}) give $\widetilde{d}(Y\cup Z)\ge \phi$, which contradicts (\ref{eq:ct-uncross-terminal-5}).
\end{proof}

\begin{figure}
\centering
\begin{tikzpicture}
\draw {(0,.5) ellipse (1 and 2)}
{(.6,-0.3) ellipse (1.5 and 1)}
{(.6,1.9) ellipse (1.2 and .5)};
\draw (2.3,-.5) node{$X$} (2.1,2) node{$Z$} (-1.3,1) node{$Y$};
\draw (.1, 1.9) node{$v$} (1.5, -.3) node{$u$($s$)}  (.1,-.3) node {$s$($u$)};
\end{tikzpicture}
\caption{Proof of \Cref{lem:ct-uncross-terminal}}
\end{figure}

\begin{corollary}\label{cor:ct-uncross-terminal}
For any cut threshold $X$ and any extreme set $Y$,  $\rho(X)$ does not cross $\rho(Y)$. 
\end{corollary}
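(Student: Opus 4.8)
The plan is to derive \Cref{cor:ct-uncross-terminal} directly from \Cref{lem:ct-uncross-terminal} by a short case analysis. Let $X = ct(s,\phi)$ and let $Y$ be an extreme set. We must show that the terminal sets $\rho(X) = X \cap T$ and $\rho(Y) = Y \cap T$ do not cross, i.e.\ at least one of $\rho(X)\cap\rho(Y)$, $\rho(X)\setminus\rho(Y)$, $\rho(Y)\setminus\rho(X)$ is empty. Note $\rho(X)\setminus\rho(Y) = \rho(X\setminus Y)$, $\rho(Y)\setminus\rho(X) = \rho(Y\setminus X)$, and $\rho(X)\cap\rho(Y) = \rho(X\cap Y)$, so it is equivalent to show that one of $\rho(X\setminus Y)$, $\rho(Y\setminus X)$, $\rho(X\cap Y)$ is empty.

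\Cref{lem:ct-uncross-terminal} hands us exactly the dichotomy we need: either $\rho(Y\setminus X) = \emptyset$, in which case we are immediately done (the ``$\rho(Y)\setminus\rho(X)$ is empty'' case of non-crossing); or at least one of $X\cap Y$ and $X\setminus Y$ is empty (as vertex sets). In the latter case, if $X\cap Y = \emptyset$ then $\rho(X\cap Y) = \emptyset$, so $\rho(X)$ and $\rho(Y)$ are disjoint and hence do not cross; and if $X\setminus Y = \emptyset$ then $\rho(X\setminus Y) = \emptyset$, so $\rho(X)\subseteq\rho(Y)$ and again they do not cross. In every case one of the three projection sets is empty, which is precisely the statement that $\rho(X)$ does not cross $\rho(Y)$.

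There is essentially no obstacle here: the corollary is a routine unwinding of definitions once \Cref{lem:ct-uncross-terminal} is in hand, using only that $\rho$ distributes over the Boolean operations on sets (which is immediate since $\rho(Z) = Z \cap T$). The one thing to be mildly careful about is the convention $s\in ct(s,\phi)$ and the edge cases where $\rho(X)$ or $\rho(Y)$ might be empty or equal to $T$, but these only make the non-crossing conclusion easier, since an empty set or the full set $T$ crosses nothing. So the proof is just the three-way case split above.
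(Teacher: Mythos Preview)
Your proposal is correct and is exactly the intended derivation: the paper states this as an immediate corollary of \Cref{lem:ct-uncross-terminal} without giving a proof, and your three-way case split is the natural unwinding of that lemma's conclusion using $\rho(A\setminus B)=\rho(A)\setminus\rho(B)$ and $\rho(A\cap B)=\rho(A)\cap\rho(B)$.
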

\begin{Remark}
Recall that in the setting of global min cuts, a cut threshold does not cross any extreme set. However, in Steiner min cuts setting, it is possible that a cut threshold crosses an extreme set in non-terminals, so \Cref{cor:ct-uncross-terminal} only gives the weaker property of uncrossing in terminal.
\end{Remark}

\begin{lemma}
\label{lem:ct-mincut-uncross-supreme}
Suppose $ct(s,\phi)$ partitions the terminals into two nonempty sides $T_1=\{t\in T:\widetilde{\lambda}(s,t)\ge \phi\}$ and $T_2=\{t\in T:\widetilde{\lambda}(s,t)<\phi\}$. Let $S$ be the $T_1$-side of the earliest $T_1$-$T_2$ mincut. Then $S$ does not cross any supreme set.
\end{lemma}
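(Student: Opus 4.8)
The plan is to fix a supreme set $X = \super(R)$ that crosses $S$, and derive a contradiction using submodularity (posi-modularity), together with the structural facts about $ct(s,\phi)$ proved in Lemma \ref{lem:ct-uncross-terminal} and its corollary. First I would record the basic facts: $S$ is a $T_1$-$T_2$ min cut, so $\rho(S)$ does not cross $\rho(X) = R$ by combining $\rho(S)=T_1$ (or rather $T_1\subseteq\rho(S)$) with Corollary \ref{cor:ct-uncross-terminal} applied through the cut-threshold structure — more precisely, since $S$ is the earliest $T_1$-$T_2$ min cut, I expect $\rho(S)=T_1=ct(s,\phi)\cap T$ exactly, so $\rho(S)$ is the terminal projection of a cut threshold, and hence does not cross $R=\rho(X)$. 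That forces either $T_1 \cap R = \emptyset$, or $T_1 \subseteq R$, or $R\subseteq T_1$. I would handle these cases to pin down where the terminals of $X$ sit relative to $S$: in particular, since $X$ crosses $S$, at least one of $X\setminus S$, $S\setminus X$ has no terminal (this is where I'd invoke Lemma \ref{lem: extreme-difference-terminal}-type reasoning, noting $S$ need not be extreme, so instead I use directly that $\rho(S)$ and $\rho(X)$ don't cross).

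The heart of the argument is a posi-modular / submodular uncrossing. Suppose first $S\setminus X$ contains no terminal, i.e.\ $\rho(S)\subseteq R$, and $X$ crosses $S$. Then $X\cap S$ is a Steiner cut (its projection is $\rho(S)\ne\emptyset$, and it misses some terminal of $T\setminus R$), and it is a proper subset of $X$, so since $X=\super(R)$ is extreme, $\widetilde d(X\cap S) > \widetilde d(X)$. By (\ref{eq:submodularity-4}) applied to $X$ and $S$, $\widetilde d(X\cup S) < \widetilde d(S)$. But $X\cup S$ is still a $T_1$-$T_2$ cut (it contains $T_1\subseteq\rho(S)\cup R$ on one side — I need to check $T_2$ stays outside, which follows because $X\cup S$ has projection $R\cup\rho(S)=R$, and $T_2=T\setminus T_1$; if $T_1\subseteq R$ this works, and the other subcase $R\subsetneq T_1$ I'd treat symmetrically or rule out via the cut-threshold property that pushes $R$'s complement-terminals far from $s$). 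This contradicts minimality of $S$ as a $T_1$-$T_2$ cut, or at least contradicts that $S$ is the \emph{earliest} such cut since $X\cup S\supsetneq S$ would be a strictly cheaper $T_1$-$T_2$ cut. The symmetric case, $X\setminus S$ has no terminal (so $R\subseteq\rho(S)=T_1$), uses $X\cap S = X$ — wait, that can't cross — so in fact this case means $X\subseteq S$ after accounting for terminals only if the non-terminal part also behaves; here I'd instead look at $S\setminus X$ versus $S$ and use posi-modularity (\ref{eq:submodularity-2}) on $X$ and $S$: $\widetilde d(X\setminus S) + \widetilde d(S\setminus X)\le \widetilde d(X)+\widetilde d(S)$, combined with $S\setminus X$ being a $T_1$-$T_2$ cut (it still contains $T_1$ since $T_1=R\cup(\text{stuff})\subseteq \rho(S)$ and $R\subseteq$... hmm) — I'd need $\widetilde d(S\setminus X)\ge\widetilde d(S)$, forcing $\widetilde d(X\setminus S)\le\widetilde d(X)$, making $X\setminus S$ a violator of the extreme set $X$, contradiction, provided $X\setminus S$ is a Steiner cut, which holds iff it contains a terminal — and that's exactly the case we're in.

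The main obstacle I anticipate is the bookkeeping around terminal projections: showing that the relevant uncrossed sets ($X\cup S$, $S\setminus X$, $X\setminus S$, $X\cap S$) remain valid $T_1$-$T_2$ cuts or remain Steiner cuts, which requires carefully using that $\rho(S)=T_1=ct(s,\phi)\cap T$ is a cut-threshold projection and that $R=\rho(X)$ doesn't cross it. The cleanest route is probably to first prove the helper claim that $\rho(S)$ equals $T_1$ exactly (not just contains it), using Lemma \ref{lem:perturb} and the earliest-cut structure — essentially that the earliest $T_1$-$T_2$ min cut has terminal side exactly $T_1$ because adding any terminal of $T_2$ to the $T_1$-side would make it not a $T_1$-$T_2$ cut. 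Then the three cases for how $R$ sits inside/outside/disjoint from $T_1$ each reduce to one of the two submodular-uncrossing arguments above, with the "earliest" qualifier handling the boundary case where cut values tie.
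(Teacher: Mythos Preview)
Your overall strategy—split into cases according to how $R=\rho(X)$ relates to $T_1=\rho(S)$, then uncross via submodularity/posi-modularity—is the right one and matches the paper. Two of the three cases go through essentially as you sketch (modulo some confusion in your write-up): when $R\subseteq T_1$, the set $X\cap S$ is a Steiner cut properly inside the extreme set $X$, so $\widetilde d(X\cap S)>\widetilde d(X)$, whence $\widetilde d(X\cup S)<\widetilde d(S)$, and $X\cup S$ \emph{is} a $T_1$-$T_2$ cut since $\rho(X\cup S)=R\cup T_1=T_1$; when $R\cap T_1=\emptyset$ (a case your dichotomy ``one of $S\setminus X$, $X\setminus S$ has no terminal'' actually misses, since here both differences contain terminals), one uses $X\setminus S$ inside $X$ and posi-modularity to show $S\setminus X$ is a cheaper $T_1$-$T_2$ cut.

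The genuine gap is the case $T_1\subsetneq R$. Your argument there claims $X\cup S$ is a $T_1$-$T_2$ cut, but it is not: $\rho(X\cup S)=R$ strictly contains $T_1$, hence contains a terminal of $T_2$. So $\widetilde d(X\cup S)<\widetilde d(S)$ gives no contradiction with minimality of $S$. The paper's fix is to reverse the roles of $S$ and $X$ in this case. Observe that $X\cap S$ is a $T_1$-$T_2$ cut (its projection is $T_1$) and a \emph{proper subset of $S$}; since $S$ is the \emph{earliest} $T_1$-$T_2$ min cut, $\widetilde d(X\cap S)>\widetilde d(S)$, and submodularity yields $\widetilde d(X\cup S)<\widetilde d(X)$. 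Now you need a min-cut property of $X$, not $S$: this is exactly \Cref{lem:supreme-mincut}, which says the supreme set $X=\super(R)$ is the earliest $R$-$(T\setminus R)$ min cut. Since $X\cup S$ is an $R$-$(T\setminus R)$ cut, $\widetilde d(X\cup S)\ge\widetilde d(X)$, the desired contradiction. You never invoke \Cref{lem:supreme-mincut}, and without it this case does not close; the ``earliest'' qualifier on $S$ alone is not enough.
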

\begin{proof}
Assume for contradiction that $S$ crosses some supreme set $X=\widetilde{\super}(R)$.
By \Cref{cor:ct-uncross-terminal}, $R$ does not cross $T_1$, which is the terminal set of a cut threshold. Then there are three cases of the relation between $R$ and $T_1$:
\begin{enumerate}
\item[(1)] $R\subseteq T_1$. Then $ \rho(X\cap S)=R$ and $X\cap S$ is a Steiner cut. $X\cap S$ is also a proper subset of $X$ because $X$ and $S$ cross. Since $X$ is extreme, 
$\widetilde{d}(X\cap S)>\widetilde{d}(X)$. Using (\ref{eq:submodularity-4}) of \Cref{fact:submodularity}, $\widetilde{d}(X\cup S)<\widetilde{d}(S)$. This contradicts the the fact that $S$ is a $T_1$-$T_2$ min cut because $X\cup S$ is a $T_1$-$T_2$ cut.
\item[(2)] $R\cap T_1=\emptyset$. Then $ \rho(X\setminus S)=R$ and $X\setminus S$ is a Steiner cut. $X\setminus S$ is also a proper subset of $X$ because $X$ and $S$ cross. Since $X$ is extreme, 
$\widetilde{d}(X\setminus S)>\widetilde{d}(X)$. Using (\ref{eq:submodularity-2}) of \Cref{fact:submodularity}, $\widetilde{d}(S\setminus X)<\widetilde{d}(S)$. This contradicts the fact that $S$ is a $T_1$-$T_2$ min cut because $S\setminus X$ is a $T_1$-$T_2$ cut.

\item[(3)] $T_1 \subsetneqq R$. $X\cap S$ is a proper subset of $S$ because $X$ and $S$ cross, and also a $T_1$-$T_2$ cut. Because $S$ is the earliest $T_1$-$T_2$ cut, $\widetilde{d}(X\cap S)>\widetilde{d}(S)$. Using (\ref{eq:submodularity-4}) of \Cref{fact:submodularity},
\begin{equation}\label{eq:ct-mincut-uncross-supreme-1}
    \widetilde{d}(X\cup S)<\widetilde{d}(X)
\end{equation}
By \Cref{lem:supreme-mincut}, $X$ is an $R$-$(T\setminus R)$ min cut. Notice that $X\cup S$ is a $R$-$(T\setminus R)$ cut. Therefore, $\widetilde{d}(X\cup S)\ge \widetilde{d}(X)$, which contradicts (\ref{eq:ct-mincut-uncross-supreme-1}).
\end{enumerate}
All three cases derive contradiction, so $S$ does not cross any supreme set.
\end{proof}

\begin{lemma}\label{lem:divide-conquer-correct}
\Cref{alg:extreme} outputs a laminar forest of Steiner cuts and the family contains all supreme sets of $\widetilde{G}$.
\end{lemma}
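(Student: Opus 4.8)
The plan is to prove both claims — laminarity of the output family and inclusion of all supreme sets — by induction on $|T|$, following the recursive structure of \Cref{alg:extreme}. First I would dispose of the base case $|T|\le 16$: the algorithm explicitly enumerates the earliest $T_1$-$T_2$ min cut over all bipartitions $(T_1,T_2)$ of $T$; by \Cref{lem:supreme-mincut} every supreme set $\super(R)$ equals the earliest $R$-$(T\setminus R)$ min cut, so the enumerated family contains every supreme set. Laminarity is then enforced in \Cref{line:base-check}: for any crossing pair $(X,Y)$, posi-modularity (\ref{eq:submodularity-2}) gives $d(X\setminus Y)\le d(X)$ or $d(Y\setminus X)\le d(Y)$; in the first case $X\setminus Y$ is a Steiner cut (both $X,Y$ contain terminals among $T$, but one needs to check $X\setminus Y$ and $Y\setminus X$ are Steiner cuts — here I would use that each enumerated set has a prescribed terminal projection, so a crossing pair with both differences terminal-free is impossible, mirroring \Cref{lem: extreme-difference-terminal}), witnessing that $X$ is not extreme, hence not supreme, so removing it loses no supreme set. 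Iterating this removal over all crossing pairs yields a laminar family still containing all supreme sets. I also need to observe that all returned sets are Steiner cuts, which holds since each is one side of a nontrivial bipartition min cut.

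For the inductive step, suppose $|T|>16$. The algorithm picks the partition $(T_1,T_2)$ via a cut threshold and computes the earliest $T_1$-$T_2$ min cut with sides $(S_1,S_2)$. The crucial structural input is \Cref{lem:ct-mincut-uncross-supreme}: $S_1$ (equivalently $S_2$) does not cross any supreme set of $\widetilde G$. Hence for every supreme set $X$ of $\widetilde G$, either $S_1\subseteq X$, or $S_1\cap X=\emptyset$, or $X\subseteq S_1$ (and symmetrically for $S_2$). If $X$ is disjoint from $S_1$ or contains $S_1$, then by \Cref{lem:supreme-contract} applied with $K=S_1$, the image of $X$ in $\widetilde G_1$ is supreme there, so by the induction hypothesis it appears in $L_1$; the only supreme sets not captured by this are those $X$ with $X\subseteq S_1$, i.e.\ $X\subsetneqq S_1$ together possibly with $X=S_1$ itself — but these satisfy $S_2\cap X=\emptyset$, so by \Cref{lem:supreme-contract} with $K=S_2$ their images are supreme in $\widetilde G_2$ and appear in $L_2$ by induction. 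Thus every supreme set of $\widetilde G$ has its image recorded in $L_1$ or $L_2$. I then need to check the merge step (\Cref{line:merge}) preserves exactly these: deleting the leaf $\{t_1\}$ from $L_1$ and reattaching the rest under $\{t_2\}$ in $L_2$ discards precisely the sets of $L_1$ containing $t_1$, i.e.\ containing the contracted node $S_1$ — but any such set, un-contracted, is a Steiner cut of $\widetilde G$ that contains $S_1$ and crosses... here I invoke \Cref{lem:ct-mincut-uncross-supreme} once more: a supreme set of $\widetilde G$ containing $S_1$ strictly, when we also know it does not cross $S_1$, is fine, but its image in $\widetilde G_1$ contains $t_1$ — wait, such sets ARE supreme and DO appear in $L_1$ containing $t_1$; the resolution is that these same supreme sets also appear (as sets disjoint from $S_2$, hence with image in $\widetilde G_2$) in $L_2$, so dropping the $L_1$ copy loses nothing. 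I would make this bookkeeping precise by tracking, for each supreme set $X$, which of $\{\widetilde G_1,\widetilde G_2\}$ witnesses it, and confirming the merge keeps at least one witness.

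Finally, laminarity of the merged forest: $L_1$ and $L_2$ are each laminar by induction; after un-contracting, the sets of $L_1$ not containing $t_1$ become subsets of $S_1$, and the sets of $L_2$ not containing $t_2$ become subsets of $S_2$ (disjoint from $S_1$), while the set $\{t_2\}$ un-contracts to $S_2\supseteq$ all of the reattached $L_1$ sets — so the reattachment respects the laminar tree structure, and no new crossings are created between an $L_1$-descendant of $S_1$ and an $L_2$-set, since the former sits inside $S_1$ and the latter either inside $S_2$ (disjoint) or containing $S_2$ (hence containing $\{t_2\}$'s position, consistent as an ancestor). This completes the induction.

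\medskip

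\noindent\textbf{Main obstacle.} I expect the delicate part to be the merge-step bookkeeping: verifying that every supreme set of $\widetilde G$ that gets dropped from $L_1$ (because its image contains $t_1$) is genuinely re-found inside $L_2$, and simultaneously that the reattachment introduces no crossing pairs. This requires carefully combining \Cref{lem:ct-mincut-uncross-supreme} (no supreme set crosses $S_1$) with \Cref{lem:supreme-contract} (contraction preserves supremacy) and arguing about the two contracted graphs in tandem, rather than either piece in isolation. The base-case laminarity cleanup is routine once one identifies that crossing pairs in the enumerated family must each have a non-extreme member, but pinning down that $X\setminus Y$ or $Y\setminus X$ is actually a Steiner cut (not terminal-free) needs the analogue of \Cref{lem: extreme-difference-terminal} for earliest-min-cut sets with prescribed projections.
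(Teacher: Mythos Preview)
Your inductive skeleton is exactly the paper's approach: base case by enumerating earliest $T_1$-$T_2$ min cuts and uncrossing, recursive step via \Cref{lem:ct-mincut-uncross-supreme} and \Cref{lem:supreme-contract}. Two points deserve tightening.

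\textbf{Base case.} When you remove $X$ because $d(X\setminus Y)\le d(X)$, you claim $X\setminus Y$ witnesses non-extremity. But you yourself flag that $X\setminus Y$ need not be a Steiner cut: it may have $\rho(X\setminus Y)=\emptyset$, and the analogue of \Cref{lem: extreme-difference-terminal} you invoke only rules out \emph{both} differences being terminal-free, not one. The paper closes this gap differently: when $\rho(X\setminus Y)=\emptyset$, it observes $\rho(X\cup Y)=\rho(Y)$, so $d(X\cup Y)\ge d(Y)$ since $Y$ is itself an earliest min cut for its terminal projection; submodularity then gives $d(X\cap Y)\le d(X)$, and $X\cap Y$ (which does contain $\rho(X)\ne\emptyset$) is the violator. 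You should use this argument rather than the difference set.

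\textbf{Merge step.} Your resolution for supreme sets $X\supsetneq S_1$ is incorrect. You write that such $X$ ``also appear (as sets disjoint from $S_2$, hence with image in $\widetilde G_2$) in $L_2$'' --- but $X\supsetneq S_1$ forces $X\setminus S_1\subseteq S_2$ to be nonempty, so $X$ is \emph{not} disjoint from $S_2$, and \Cref{lem:supreme-contract} does not apply in $\widetilde G_2$. These sets are recovered in $L_1$, not $L_2$: the image of $X$ in $\widetilde G_1$ contains $t_1$ but (since $X\supsetneq S_1$) is strictly larger than $\{t_1\}$, so deleting only the leaf $\{t_1\}$ leaves it intact; un-contracting $t_1$ back to $S_1$ recovers $X$. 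The paper's formal proof handles case (b) precisely this way. Your misreading of the merge as ``discarding precisely the sets of $L_1$ containing $t_1$'' is the source of the confusion --- only the singleton leaf is deleted.
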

\begin{proof}
First we prove that all sets in the output family are Steiner cuts. In the base case, we only calculate min cuts between terminal sets. In the recursive case, the output is merged from the output of subproblems. So the algorithm only outputs Steiner cuts.

Next we prove that every supreme set is in the output by induction on recursion depth. In the base case, we find all sets in the form of $T_1$-$T_2$ earliest min cut, which contains all supreme sets by \Cref{lem:supreme-mincut}. 
\Cref{line:base-check} removes a set $X$ only when $X$ crosses $Y$ and $d(X\setminus Y)\le d(X)$. If $X\setminus Y$ is a Steiner cut, then it is a violator of $X$. If $\rho(X\setminus Y)=\emptyset$, then $\rho(X\cup Y)=\rho(Y)$, and $d(X\cup Y)\ge d(Y)$ because $Y$ is a min cut between terminals. By (\ref{eq:submodularity-1}), $d(X\cap Y)\le d(X)$, and $X\cap Y$ is a violator of $X$. In both cases $X$ does not satisfy the extreme property, so the supreme sets are not removed.

As an inductive step, consider any supreme set $X$ and its relation to the cut $S_1$ in \Cref{line:ct-mincut} of \Cref{alg:extreme}. By \Cref{lem:ct-mincut-uncross-supreme}, one of the following holds: (a) $X\subseteq S_1$, (b) $X\supseteq S_1$, or (c) $X\cap S_1=\emptyset$. By \Cref{lem:supreme-contract}, in cases (b) and (c) $X$ is a supreme set in the subproblem contracting $S_1$, and in case (a) $X$ is a supreme set in the subproblem contracting $S_2$. The two subproblems return two laminar forests $L_1$ and $L_2$. Using the induction hypothesis, $X$ is found in $L_1$ for cases (b) and (c), and found in $L_2$ for case (a).

In the merge step (\Cref{line:merge}), we form a merged laminar forest $L$ by attaching $L_1$ to the leaf of contracted node in $L_2$, 
and deleting the leaf of contracted node of $L_1$. For case (a), $X$ is a set in $L_2$ disjoint from the contracted $S_2$, so $X$ is preserved in $L$. For case (b), $X$ is a set in $L_1$ containing the contracted $S_1$, and is preserved in $L$ because we expand the contracted node of $L_1$ to be $S_1$. Finally for case (c), $X$ is a set in $L_2$ disjoint from the contracted $S_1$, so $X$ is preserved in $L$. In conclusion the merge step does not lose supreme sets. The merged forest represents a laminar family because what we do is just expanding the contracted $S_1$ by a subtree of subsets of $S_1$.

By induction, the overall merged laminar forest contains all supreme sets of $\widetilde{G}$.
\end{proof}

Next we analyze the running time of \Cref{alg:extreme}. The analysis has the same spirit as \cite{CenLP22a}. We first show the recursion depth is $O(\log n)$, then bound the running time of each recursive call to be poly-logarithmic many max flows on the contracted graph of the subproblem, and finally prove that the total size of subproblems in each recursion layer is $\tO(m)$ edges and $\tO(n)$ vertices. It follows that the total running time is $\tO(F(m,n))$.

\begin{lemma}[\cite{LiP21approximate}]\label{lem:ct-time}
Given a graph $G=(V,E)$, a source vertex $s\in V$ and an integer $\phi>0$, cut threshold $ct(s,\phi)$ can be computed in $\tO(F(m,n))$ time with high probability.
\end{lemma}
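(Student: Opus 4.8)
The plan is to reduce the computation of $ct(s,\phi)$ to the \emph{single-source minimum cut} problem: compute $\lambda(s,v)$ for every $v\in V$, and then output $\{s\}\cup\{v:\lambda(s,v)\ge\phi\}$. So it suffices to recover all the values $\{\lambda(s,v)\}_{v\in V}$ using $\polylog(n)$ maximum-flow computations, which is exactly the guarantee of \cite{LiP21approximate}, and I would cite that result as a black box. For concreteness, here is the structure of that algorithm and where its running time comes from.

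The engine is the isolating-cuts lemma: given a terminal set $R\subseteq V$, one can compute, simultaneously for all $r\in R$, the minimum cut separating $r$ from $R\setminus\{r\}$, together with the pairwise-disjoint $r$-sides $\{C_r\}_{r\in R}$, using only $O(\log|R|)$ maximum-flow computations (one flow per bit of an $O(\log|R|)$-bit labelling of $R$, plus nearly-linear overhead). The reduction to single-source mincut is recursive, always keeping $s\in R$. The point is that for $r\ne s$ the set $C_r$ is in particular an $s$-$r$ cut, so $d(C_r)\ge\lambda(s,r)$, with equality whenever the minimal $r$-side of a minimum $s$-$r$ cut contains no terminal besides $r$; in that case we have learned $\lambda(s,r)$ exactly. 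One then contracts $V\setminus C_r$ (gluing $s$ together with $R\setminus\{r\}$ into a single supersource), recurses inside each $C_r$, and merges the results. Since the $C_r$ are disjoint, the subproblems at each recursion level have total size $\tO(m)$ edges and $\tO(n)$ vertices, and one argues the recursion depth is $O(\log n)$; hence the total work is $\polylog(n)$ maximum flows, i.e.\ $\tO(F(m,n))$. Extracting $ct(s,\phi)$ from the computed values, and the per-level size bound, are routine bookkeeping.

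The hard part, and the reason randomization and the ``with high probability'' qualifier are needed, is the choice of the random terminal sets: one runs $O(\log n)$ scales with geometrically spaced inclusion probabilities, so that across the scales and the recursive contractions, with high probability \emph{every} vertex $v$ is, in some subproblem, the unique terminal lying in the minimal side of its $s$-$v$ minimum cut and therefore has $\lambda(s,v)$ recorded correctly. A union bound over this event and over the $\polylog(n)$ randomized maximum-flow calls (each correct w.h.p.) gives the stated guarantee. I expect this structural/probabilistic step to be the main obstacle; everything else is bookkeeping or a direct appeal to prior work.
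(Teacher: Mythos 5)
This lemma is cited as a black box from \cite{LiP21approximate}; the paper contains no proof of its own, so there is nothing internal to compare you against. Your sketch, however, does not accurately describe what the cited work proves, and the reduction you propose has a real gap.

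You open by claiming it ``suffices to recover all the values $\{\lambda(s,v)\}_{v\in V}$'' in $\polylog(n)$ max-flow calls and that ``that is exactly the guarantee of \cite{LiP21approximate}.'' That is not the guarantee. Computing the \emph{exact} single-source minimum cut values for a general weighted graph in $\polylog(n)$ max flows is a strictly stronger statement than the cut-threshold lemma, and it was not established in that reference; \cite{LiP21approximate} gives \emph{approximate} single-source mincut values, and the cut-threshold procedure is exact only in the weaker sense of answering, for a single fixed $\phi$, which side of $\phi$ each $\lambda(s,v)$ lies on. The algorithm uses the one-sided soundness property of isolating cuts --- if $d(C_r)<\phi$ for a sampled terminal $r\neq s$, every vertex in $C_r$ has $\lambda(s,\cdot)<\phi$ --- together with multi-scale sampling to certify completeness, without ever recovering the exact values. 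Your proposal replaces this with a harder intermediate task and then asserts it as a black box, which is circular.

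The recursive contraction scheme you describe (contract $V\setminus C_r$ into a supersource, recurse inside each $C_r$, argue $O(\log n)$ depth) is essentially the isolating-cuts Gomory--Hu construction, and your completeness claim --- ``with high probability every vertex $v$ is, in some subproblem, the unique terminal lying in the minimal side of its $s$-$v$ minimum cut'' --- is exactly the hard part that this construction does \emph{not} achieve in $\polylog(n)$ max flows for weighted graphs. If it did, you would have a weighted Gomory--Hu tree in near-linear time, which was not known at the time of \cite{LiP21approximate} (the paper you are reviewing explicitly cites $\tO(n^2)$ for weighted Gomory--Hu trees). Concretely, the step that fails is the implicit claim that contracting $V\setminus C_r$ preserves $\lambda(s,v)$ for \emph{all} $v\in C_r$; this only holds for the specific $v$ whose minimal $s$-$v$ cut side is contained in $C_r$, and ensuring this happens for every vertex across only $\polylog(n)$ flow computations is precisely what is not known. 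You should state the lemma as a cited black box (as the paper does), or, if you want to sketch the mechanism, describe the one-sided marking argument for the threshold partition rather than a reduction to exact single-source mincuts.
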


\begin{lemma}[\cite{AbboudKT20focs}]
\label{lem:tournament-count}
For any graph $G=(V,E)$ and terminal set $T\subseteq V$, there exists $|T|/2$ nodes $u\in T$ such that at least $|T|/4$ other nodes $w\in T\setminus\{u\}$ satisfy $|U\cap T|>|W\cap T|$, where $(U,W)$ is the $u$-$w$ min cut. (If not unique, choose the same cut for $u$-$w$ min cut and $w$-$u$ min cut.)
\end{lemma}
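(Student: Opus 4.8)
This is a combinatorial counting argument, and I would deduce it from the structure of a Gomory-Hu tree; write $k:=|T|$. First, reduce to the case that all $s$-$t$ min cuts of $G$ are unique by perturbing the edge weights exactly as in Phase~1: perturbation preserves the strict order of cut values (\Cref{fact:perturb-order}), so every min cut of the perturbed graph is also a min cut of $G$, and a legal choice of cuts for the perturbed graph is a legal choice for $G$ (this also subsumes the ``choose the same cut'' clause). Fix a Gomory-Hu tree $\mathcal{T}$ of the perturbed graph. For terminals $u\neq w$ the unique $u$-$w$ min cut $(U,W)$ is obtained by deleting the (now unique) minimum-weight edge $M_{uw}$ on the $u$-$w$ path of $\mathcal{T}$; since $|U\cap T|+|W\cap T|=k$, the condition $|U\cap T|>|W\cap T|$ says exactly that $u$ lies on the side of $M_{uw}$ holding a strict majority of $T$. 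Call this relation ``$u$ \emph{dominates} $w$'', let $d^+(u)$ be the number of terminals dominated by $u$, and call $\{u,w\}$ a \emph{tie} when $|U\cap T|=|W\cap T|$ (which forces $k$ even). It suffices to find $k/2$ terminals $u$ with $d^+(u)\ge k/4$.

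\textbf{Structural step: domination is a strict partial order.} Antisymmetry is clear. For transitivity, suppose $a$ dominates $b$, $b$ dominates $c$, but $a$ does not dominate $c$. Let $m$ be the median of $a,b,c$ in $\mathcal{T}$ and let $\alpha,\beta,\gamma$ be the minimum edge weights on the tree paths $m$-$a$, $m$-$b$, $m$-$c$ (an empty segment contributing $+\infty$). Then $w(M_{ab})=\min(\alpha,\beta)$, $w(M_{bc})=\min(\beta,\gamma)$, and $w(M_{ac})=\min(\alpha,\gamma)$, so whichever of $\alpha,\beta,\gamma$ is smallest forces two of $M_{ab},M_{bc},M_{ac}$ to be the \emph{same} edge. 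Each possibility is contradictory: $M_{ab}=M_{ac}$ would put $a$ on the $>k/2$ side of this edge (since $a$ dominates $b$) and simultaneously on the $\le k/2$ side (since $a$ does not dominate $c$); $M_{ab}=M_{bc}$ would put $b$ on the $<k/2$ side and simultaneously on the $>k/2$ side; $M_{bc}=M_{ac}$ would make $c$'s side have $<k/2$ terminals, while $a$ not dominating $c$ forces $c$'s side to have $\ge k/2$. The degenerate cases $m\in\{a,b,c\}$ are handled identically. Hence domination is a strict partial order; on any tie-free subset of $T$ it is therefore a total order.

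\textbf{Counting step.} If there are no ties, domination is a total order $v_1\succ\cdots\succ v_k$ on $T$, so $d^+(v_i)=k-i$ and the $\lceil k/2\rceil$ terminals $v_1,\dots,v_{\lceil k/2\rceil}$ each dominate at least $\lfloor k/2\rfloor\ge k/4$ others. If ties exist, I would first observe that they all use one common balanced bipartition: the bottleneck edge of a tie pair splits $T$ into two halves, distinct tree edges induce laminar bipartitions of $V$, and two nested halves of equal size must agree on $T$; hence there is a partition $T=T_L\sqcup T_R$ with $|T_L|=|T_R|=k/2$ such that every tie pair runs between $T_L$ and $T_R$. Then domination restricted to $T_L$ (resp.\ $T_R$) is tie-free, hence a total order on $k/2$ elements, and its top $\lfloor k/4\rfloor$ elements each dominate $\ge k/2-\lfloor k/4\rfloor\ge k/4$ terminals already inside their own part; the two parts together yield at least $2\lfloor k/4\rfloor\ge k/2-O(1)$ good pivots.

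\textbf{Main obstacle.} The only real content is the structural step and the bookkeeping around non-unique min cuts; once domination is known to be a strict partial order with ties confined to a single balanced bipartition, the count is immediate. This lemma is due to \cite{AbboudKT20focs}, and we use it only to ensure the partition sampled in \Cref{alg:extreme} is balanced with constant probability, so the $O(1)$ slack above is harmless.
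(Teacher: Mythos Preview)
The paper does not prove this lemma; it is quoted verbatim from \cite{AbboudKT20focs} and used as a black box in the proof of \Cref{lem:ct-balance}. So there is no in-paper proof to compare against. That said, your Gomory--Hu tree argument is sound and is worth recording.

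Your structural step is the heart of the matter and is correct: after perturbation the bottleneck edge $M_{uw}$ on the $u$--$w$ path in a Gomory--Hu tree is unique, and the median-of-three case analysis cleanly shows that ``$u$'s side of $M_{uw}$ holds $>k/2$ terminals'' is a transitive relation. The observation that any two tree edges inducing a balanced split of $T$ must induce the \emph{same} split of $T$ (by laminarity and a cardinality squeeze) is exactly what is needed to confine all ties to a single bipartition $T=T_L\sqcup T_R$, after which domination is a total order on each half.

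The only imprecision is the one you flag yourself: when ties exist and $k\equiv 2\pmod 4$, your count yields $2\lfloor k/4\rfloor=k/2-1$ good pivots rather than $k/2$. This is not an artifact of your argument---a path Gomory--Hu tree with a single lightest edge in the middle realizes this bound---so the lemma as stated is tight only up to an additive constant in that corner case. For the application in \Cref{lem:ct-balance} this is immaterial: the lemma is invoked only when $|T|>16$, and it is used solely to lower-bound the probability that a uniformly random $s\in T$ is a ``good pivot'' by a fixed constant. A bound of $(k/2-1)/k$ instead of $1/2$ changes nothing downstream.

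One small remark on the reduction: you should state explicitly that you are invoking the existence of a Gomory--Hu tree for the perturbed graph (which is standard for undirected weighted graphs) and that with high probability all bottleneck edges along tree paths are unique; the latter follows from the perturbation guarantee but is slightly stronger than what \Cref{lem:perturb} states for a single fixed source.
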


\begin{lemma}\label{lem:ct-balance}
Uniformly sample two terminals $s,t\in T$ and let $\phi=\widetilde{\lambda}(s,t)$. 
After perturbation, use cut threshold $ct(s,\phi)$ to partition terminals into two sides $T_1=\{t\in T:\widetilde{\lambda}(s,t)\ge \phi\}$ and $T_2=\{t\in T:\widetilde{\lambda}(s,t)< \phi\}$. Then with  constant probability, $|T_1|\ge \frac{1}{16}|T|, |T_2|\ge \frac{1}{16}|T|$.
\end{lemma}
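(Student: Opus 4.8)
The plan is to prove \Cref{lem:ct-balance} by reducing to the tournament-style counting result of \Cref{lem:tournament-count}. The key observation is that for a fixed source $s$, the cut threshold $ct(s,\phi)\cap T$ with $\phi=\widetilde\lambda(s,t)$ has size exactly $|\{t'\in T:\widetilde\lambda(s,t')\ge\widetilde\lambda(s,t)\}|$, i.e.\ it is the rank of $t$ (from the top) in the list of terminals sorted by their $\widetilde\lambda(s,\cdot)$ value. So conditioned on a good choice of $s$, a uniformly random second terminal $t$ has a constant chance of landing in a ``middle'' rank band, which makes both $T_1$ and $T_2$ large. By \Cref{lem:perturb}, distinct $\widetilde\lambda(s,\cdot)$ values correspond to distinct (nested) minimum cuts, so ties in $\widetilde\lambda(s,\cdot)$ behave coherently; I will use this to argue that equal-$\lambda$ terminals all fall on the same side and the partition is determined purely by the threshold value $\phi$.

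Concretely, first I would fix the source $s$ and analyze the random second terminal $t$. Order the terminals $t_1,\dots,t_k$ (where $k=|T|$, and $t_1=s$) so that $\widetilde\lambda(s,t_2)\ge\widetilde\lambda(s,t_3)\ge\cdots$; then if $t=t_j$ we get $|T_1|=$ (number of terminals with $\widetilde\lambda(s,\cdot)\ge\widetilde\lambda(s,t_j)$) which is at least $j-1$ if we only count strict predecessors but could be larger with ties, and $|T_2|=k-|T_1|$. The clean way to handle ties is to note $|T_1|\ge j-1$ always and $|T_2|\ge$ (number of terminals strictly below $t_j$ in $\widetilde\lambda$-value) $\ge k - (\text{number with value}\ge\widetilde\lambda(s,t_j))$. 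I would then argue: if $t$ is chosen uniformly and its rank $j$ satisfies $\frac{k}{16}\le j\le k-\frac{k}{16}$, we are not immediately done because of ties at the boundary, so instead I would invoke \Cref{lem:tournament-count} to pick $s$ well. That lemma says there are $\ge |T|/2$ candidate vertices $u$ such that $\ge|T|/4$ other terminals $w$ have $|U\cap T|>|W\cap T|$ for the $u$-$w$ min cut; since $|U\cap T| = |ct(u,\widetilde\lambda(u,w))\cap T|$ is exactly what $|T_1|$ would be with source $u$ and second vertex $w$, this directly says: for a random good source $u$, a constant fraction of second terminals $w$ give $|T_1| = |U\cap T| > |W\cap T| = |T| - |U\cap T| \ge |T|/2$... wait, that only gives an upper bound complement; I would symmetrize by also using the reverse direction to get both $|T_1|$ and $|T_2|$ bounded below.

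The cleanest route, which I expect to be the one taken: sample $s$ uniformly; with probability $\ge 1/2$ it is one of the $|T|/2$ ``good'' vertices of \Cref{lem:tournament-count}. Conditioned on that, among the $\ge |T|/4$ witnesses $w$ we have $|ct(s,\widetilde\lambda(s,w))\cap T| > |T\setminus ct(s,\widetilde\lambda(s,w))\cap T|$, hence $|T_1| > |T_2|$, i.e.\ $|T_1| > |T|/2 \ge |T|/16$. For the lower bound on $|T_2|$, I would combine with the observation that $w\in T_2$ itself (since $\widetilde\lambda(s,w)\ge\widetilde\lambda(s,w)$ puts $w$ in $T_1$ — hmm, actually $w\in T_1$). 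So I need a sharper version: I would argue that because the tournament lemma gives $|T|/4$ distinct witnesses $w$ each with $|ct(s,\widetilde\lambda(s,w))\cap T|>|T|/2$, and these witnesses have varying $\widetilde\lambda$-values, picking $t$ uniformly from the bottom half of the $\widetilde\lambda(s,\cdot)$-order (which has constant probability) gives $|T_1|\ge$ constant fraction while $t$ being not among the very top few gives $|T_2|\ge$ constant fraction. I anticipate the main obstacle is precisely this two-sided bound in the presence of ties: \Cref{lem:tournament-count} is inherently one-sided (it bounds $|U\cap T|$ from below via the comparison), so getting $|T_2|\ge|T|/16$ simultaneously requires either applying the lemma a second time to the complement structure or a direct counting argument on the sorted $\widetilde\lambda$-sequence showing that a uniformly random $t$ avoids both the top $\frac{|T|}{16}$ and bottom $\frac{|T|}{16}$ ranks with constant probability, then translating ``rank'' to ``$|T_1|$'' using \Cref{lem:perturb} to rule out large tie-blocks from collapsing the partition. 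Once both bounds are in hand, the constant probability follows by a union bound over the (constantly many) bad events.
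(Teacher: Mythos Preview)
Your overall plan (sort terminals by $\widetilde\lambda(s,\cdot)$, pick $t$ in a middle rank band, translate rank to $|T_1|$) matches the paper's, but there is a genuine gap in how you assign roles to the two lemmas. You end by saying you will ``translate rank to $|T_1|$ using \Cref{lem:perturb} to rule out large tie-blocks.'' Perturbation alone does \emph{not} rule out large tie-blocks: \Cref{lem:perturb} only tells you that all terminals sharing a common $\widetilde\lambda(s,\cdot)$-value share the same unique $s$-$t$ min cut. It says nothing about how many terminals can share that value, and indeed nothing in the perturbation prevents, say, $0.9|T|$ terminals from having identical $\widetilde\lambda(s,\cdot)$.

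The missing step is to use the tournament lemma \emph{indirectly}, not to bound $|T_1|$ directly (as you first attempt and correctly find one-sided), but to bound the largest tie-block. The paper's argument is: condition on $s$ being one of the $|T|/2$ good vertices of \Cref{lem:tournament-count}. Suppose for contradiction some value is shared by $\ge \tfrac{3}{4}|T|$ terminals $R_1$. By \Cref{lem:perturb} they all share one min cut, and since $|R_1|>|T|/2$ the $t$-side of that common cut holds more than half the terminals. But the tournament property of $s$ gives $|T|/4$ terminals $R_2$ whose $t$-side holds fewer than half the terminals; hence $R_1\cap R_2=\emptyset$ and $|T\setminus\{s\}|\ge |R_1|+|R_2|\ge |T|$, a contradiction. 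With the tie bound $<\tfrac{3}{4}|T|$ in hand, the paper picks $t$ with rank in $[\tfrac{1}{16}|T|,\tfrac{1}{8}|T|]$ (probability $\tfrac{1}{16}$), giving $|T_1|\ge\tfrac{1}{16}|T|$ immediately and $|T_1|\le \tfrac{1}{8}|T|+\tfrac{3}{4}|T|=\tfrac{7}{8}|T|$, hence $|T_2|\ge\tfrac{1}{16}|T|$. Your attempts to apply \Cref{lem:tournament-count} directly to bound $|T_1|$ or $|T_2|$ are where the proposal goes astray.
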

\begin{proof}
Uniformly sample $s\in T$. Assume the property of \Cref{lem:perturb} holds, which happens with high probability. Then the $s$-$t$ min cut for any other terminal $t$ is unique. Further assume $s$ satisfies the property of $u$ in \Cref{lem:tournament-count}, which holds with probability $\frac 12$.

We claim that there does not exist $\frac 34|T|$ terminals $t\in T\setminus\{s\}$ with the same $\widetilde{\lambda}(s, t)$ value. Assume for contradiction that there exists such a subset $R_1$. By \Cref{lem:perturb}, the unique $s$-$t$ min cuts are the same for all $t\in R_1$. Then for each $t\in R_1$, the $t$-side of $s$-$t$ min cut has more than $\frac 12|T|$ terminals. By property of \Cref{lem:tournament-count}, there exist $\frac 14|T|$ terminals $R_2\subseteq T\setminus\{s\}$ such that for each $t\in R_2$, the $t$-side of $s$-$t$ min cut has less than $\frac 12|T|$ terminals. $R_1$ and $R_2$ are disjoint, so $|T\setminus\{s\}|\ge |R_1|+|R_2| = \frac 34 |T|+ \frac 14 |T|=|T|$, which is impossible.

List all other terminals $t\in T\setminus\{s\}$ in non-increasing order of $\widetilde{\lambda}(s, t)$. 
Sample $t$ uniformly from the list. Further assume that $t$ has index between $\frac{1}{16} |T|$ and $\frac{1}{8}|T|$ in the list, which holds with probability $\frac{1}{16}$. Then $|T_1|=|\{t\in T:\widetilde{\lambda}(s,t)\ge \phi\}|\ge \frac{1}{16}|T|$. Also $|T_1|=|\{t:\widetilde{\lambda}(s,t)> \phi\}| + |\{t:\widetilde{\lambda}(s,t)= \phi\}| \le \frac 18 |T| + \frac 34 |T| = \frac 78 |T|$ by the claim. So $|T_2|=|T|-|T_1|\ge \frac{1}{16}|T|$. The probability that all three assumptions hold is at least $\frac{1}{32}(1-n^{-D})$ for a large enough constant $D$.
\end{proof}

\begin{lemma}
\Cref{alg:extreme} runs in $\tO(F(m,n)) $ time with high probability.
\end{lemma}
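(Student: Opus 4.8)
The plan is to establish three ingredients, mirroring the outline already given right before the statement, and then multiply them together. First, I would bound the \emph{recursion depth} by $O(\log n)$. Each recursive call splits a terminal set $T$ into $T_1$ and $T_2$, and by the repeat loop's exit condition (\Cref{line:repeat-condition}) we are guaranteed $|T_1|,|T_2|\ge \frac{1}{16}|T|$, so each of the two subproblems inherits a terminal set of size at most $\frac{15}{16}|T|+1$ (the $+1$ for the contracted node). Hence after $O(\log_{16/15} n)=O(\log n)$ levels every terminal set has size $\le 16$ and we hit the base case. I also need to confirm the repeat loop terminates quickly: by \Cref{lem:ct-balance} one iteration succeeds with constant probability, so $O(\log n)$ iterations suffice with high probability, and a union bound over all $\poly(n)$ recursive calls keeps the failure probability negligible.

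Second, I would bound the \emph{work inside a single recursive call}. The repeat loop does $O(\log n)$ iterations; each iteration computes one value $\widetilde\lambda(s,t)$ (a single max flow) and one cut threshold $ct(s,\phi)$, which by \Cref{lem:ct-time} costs $\tO(F(m,n))$ on the current (contracted) graph. Computing the earliest $T_1$-$T_2$ min cut in \Cref{line:ct-mincut} is one more max-flow call — after contracting $T_1$ to a source and $T_2$ to a sink, and extracting the earliest cut from the residual graph. The contractions forming $\widetilde G_1,\widetilde G_2$ and the merge in \Cref{line:merge} are linear in the subproblem size. So a single non-base call costs $\tO(F(m',n'))$ where $m',n'$ are the edge/vertex counts of that subproblem. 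For the base case ($|T|\le 16$), we enumerate all $2^{|T|-1}=O(1)$ bipartitions, computing one earliest min cut each ($O(1)$ max flows), then check all $O(1)$ pairs for crossing; this is $\tO(F(m',n'))$ as well.

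Third — and this is the step I expect to be the main obstacle — I would show the \emph{total size of all subproblems at any fixed recursion level is $\tO(m)$ edges and $\tO(n)$ vertices}, so that summing $F(\cdot,\cdot)$ over a level gives $\tO(F(m,n))$ by the (super)additivity/monotonicity of $F$, and then summing over $O(\log n)$ levels gives the claim. The subtlety is that the two subproblems at a node are \emph{not} on disjoint edge sets: $\widetilde G_1$ keeps all edges inside $S_2$ plus the cut edges, and $\widetilde G_2$ keeps all edges inside $S_1$ plus the cut edges, so edges can be duplicated. The standard fix (as in \cite{CenLP22a}) is to charge: each edge \emph{strictly inside} $S_1$ goes only to $\widetilde G_2$ and each edge strictly inside $S_2$ goes only to $\widetilde G_1$, so those are partitioned; the only shared edges are the $T_1$-$T_2$ min-cut edges, of which there are $\phi=\widetilde\lambda(s,t)$, and these get contracted to a single node in each child. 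One then argues that the contracted node's degree is exactly the cut size, and a global accounting shows the cut edges charged across a whole level telescope to $\tO(m)$ — intuitively because a min cut of value $\phi$ separating balanced terminal pieces cannot be reused too many times without blowing up some connectivity, and the balance condition ($|T_1|,|T_2|\ge\frac1{16}|T|$) together with the $O(\log n)$ depth caps the multiplicity. I would make this precise exactly as in the cited prior work, obtaining $\sum_{\text{level }\ell} (m_\ell + n_\ell) = \tO(m+n)$ per level. Combining the three ingredients, the total running time is $O(\log n)\cdot \tO(F(m,n)) = \tO(F(m,n))$, and since all the randomized subroutines (\Cref{lem:perturb}, \Cref{lem:ct-balance}, \Cref{lem:ct-time}) succeed with high probability and there are only $\poly(n)$ of them, a union bound gives the high-probability guarantee.
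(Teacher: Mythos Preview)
Your three-ingredient structure matches the paper exactly, and your treatment of recursion depth and per-call cost is essentially the paper's argument. The gap is in the third ingredient, bounding the total subproblem size at a fixed level.

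First, a factual slip: the number of duplicated (cut) edges at a split is the value of the earliest $T_1$--$T_2$ min cut, which is \emph{not} $\phi=\widetilde\lambda(s,t)$ in general (both $s$ and $t$ lie in $T_1$, so a $T_1$--$T_2$ cut need not separate them). More importantly, your ``cut edges telescope'' sketch is not an argument: you assert a global accounting via cut values and balance, but never say what quantity telescopes or why the balance condition bounds multiplicity. The prior work you cite for this step does not use a cut-value telescoping argument either.

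The paper's actual bound on per-level size is via contracted nodes, not cut values. Each recursive call introduces exactly one new contracted node into each child, so at depth $i$ every subproblem has at most $O(\log n)$ contracted nodes. Edges are then classified by endpoint type: (i) both endpoints uncontracted --- these are partitioned across subproblems at a level since uncontracted vertices are; (ii) one contracted, one uncontracted --- each original edge contributes at most twice at a level, once for each endpoint's unique uncontracted subproblem; (iii) both contracted --- at most $O(\log^2 n)$ weighted edges per subproblem. To finish (iii) one needs $k_i\le n$ subproblems per level, which the paper proves via a small potential argument (each subproblem has $\ge 3$ vertices, and the quantity $n_i-2k_i$ is nonincreasing in $i$). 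This yields $n_i=O(n\log n)$ and $m_i=O(m+n\log^2 n)$ per level. Your proposal is missing this contracted-node accounting and the $k_i\le n$ bound; once you supply those, the rest of your outline goes through.
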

\begin{proof}
The recursion depth is $O(\log n)$ because each recursive call reduces the size of $T$ by a constant factor, and the recursion terminates when $|T|\le 16$. Note that each recursive call only introduces one contracted node to a subproblem, so in each subproblem, there are $O(\log n)$ contracted nodes. 

Let $(m', n')$ be the size of a subproblem. Let $m_i$ and $n_i$ be the total edge and vertex size of the depth-$i$ recursion layer respectively. Let $k_i$ be the number of subproblems in the depth-$i$ recursion layer. We first prove that $m_i=O(m\log^2 n),\ n_i=O(n\log n)$ for every layer, and then bound the running time of a subproblem to be $\tO(F(m', n'))$. It follows that the total running time is $\tO(F(m,n))$.

To prove the bounds on $m_i$ and $n_i$, we begin by bounding $k_i\le n$. We first claim that each instance has at least 3 vertices. It holds for recursive case $|T|>16$. For base case $|T|\le 16$, consider its parent instance with terminal set $T'$. (Assuming the original graph is sufficiently large, so the root is not in base case.)
$|T'|>16$, so $|T'_1|,|T'_2|\ge \frac{1}{16}|T'|>1$, and $|T'_1|, |T'_2|\ge 2$ as integers. The recursive calls have terminal sets $T'_1\cup \{t_2\}$ or $T'_2\cup \{t_1\}$, which contain at least 3 vertices.

Let $\alpha_i=n_i-2k_i$. Initially $\alpha_0=n-2$. Consider the difference of $\alpha_{i+1}$ from $\alpha_i$. A base case contributes to $\alpha_i$ (positively by the claim) but not to $\alpha_{i+1}$. A recursive call creates two new contracted nodes, so $n_{i+1}$ increases by 2, but $k_{i+1}$ also increases by 1. In conclusion $\alpha_{i+1}\le \alpha_i$.
Therefore for every depth $i$, $\alpha_i=n_i-2k_i\le n$. $n_i\ge 3k_i$ by the claim, so $k_i\le n$.

The uncontracted nodes are disjointly partitioned in a recursion layer. For contracted nodes, each of the $O(n)$ subproblems in a layer contains $O(\log n)$ contracted nodes, so the total number is $O(n\log n)$. Therefore, $n_i=O(n\log n)$.

Edges between uncontracted nodes are disjointedly partitioned. Edges between contracted and uncontracted nodes are repeated at most twice in a recursion layer because the uncontracted endpoint only occurs in one subproblem. For edges between contracted nodes, there are $O(\log^2 n)$ in each subproblem and $O(n\log^2 n)$ in a layer. In conclusion, $m_i=O(m+n\log^2 n)$.

Next we bound the running time of a recursive call. In the base case, $|T|\le 16$. The algorithm runs $2^{|T|-1}=O(1)$ max flows to calculate the min cuts. In the checking step of \Cref{line:base-check}, the algorithm calculates the cut values of $O(1)$ cuts, which can be done by enumerating all edges. In conclusion, the running time is $\tO(F(m',n'))$.


In the recursive case, the bottleneck of running time is calculating cut threshold, and calculating earliest $T_1$-$T_2$ min cut. By \Cref{lem:ct-time}, each cut threshold call costs $\tO(F(m',n'))$. The condition of \Cref{line:repeat-condition} is satisfied with constant probability by \Cref{lem:ct-balance}, so with high probability the \textbf{repeat} loop terminates in $O(\log n)$ rounds for all $O(n\log n)$ recursive calls. In conclusion, the running time is $\tO(F(m', n'))$.
\end{proof}

\subsubsection{Phase 3: Post-processing}

In Phase 2, we get a laminar forest $\widetilde{L}$ containing all supreme sets of $\widetilde{G}$. This forest differs from our goal in two ways. First, $\widetilde{L}$ may contain non-supreme sets of $\widetilde{G}$, because the contraction in divide and conquer may generate new extreme sets; second, the supreme sets of $\widetilde{G}$ may differ from the supreme sets of $G$, because perturbation may distort the relation of equal-valued cuts and generate new extreme sets. In post-processing, we apply several post-order traversal on each tree in the forest to remove redundant sets, and only keep the weaker forest structure of terminals and cut values.

With some abuse of notation, we refer to a tree node and its corresponding set interchangeably. When we remove a node $R$ during the algorithm, we assign its children to its parent if $R$ is not a root, and we designate its children to be new roots of the forest if $R$ is a root. In this way the new forest represents the laminar family after removing set $R$.

The algorithm works as follows. Compute the cut values under $w$ as well as $\widetilde{w}$ of each set in $\widetilde{L}$. Record the number of terminals inside each set in $\widetilde{L}$. Then run three post-order traversals:
\begin{enumerate}
 \item In the first post-order traversal, when visiting a node of set $R$, compare its cut value under $\widetilde{w}$ to all its children. If a child $W$ has $\widetilde{d}(W)\le \widetilde{d}(R)$, remove $R$.
 \item In the second post-order traversal, when visiting a node of set $R$, compare its number of terminals to its parent $W$ (if exists). If the numbers are equal, i.e.\ all terminals in $W$ are contained in $R$, then remove $R$.
 \item In the third post-order traversal, when visiting a node of set $R$, compare its cut value under $w$ to all its children. If a child $W$ has $d(W)\le d(R)$, remove $R$.
\end{enumerate}
The algorithm only removes sets from the forest, so all queried cut values are recorded. It is not hard to combine the three steps into one post-order traversal; we keep them separated for clarity.

After the second post-order traversal, we get a laminar family of all supreme sets under $\widetilde{w}$ (\Cref{lem:traversal2-supreme}). This is very close to supreme sets under $w$, but not identical. We show in \Cref{lem:traversal3} that after the third post-order traversal, the laminar forest is the same as the family of all supreme sets of $G$ in terms of terminal sets and cut values. We discard the information about non-terminals, and only record the projection on terminals and the cut value of each set. The output is a laminar family of terminal sets, and the cut value $c(R)=d(\super(R))$ of each terminal set $R$.

\begin{fact}\label{fact:traversal1-order}
If $W$ is an ancestor of $U$ on the forest, and $\widetilde{d}(W)\ge \widetilde{d}(U)$, then $W$ will be removed in the first post-order traversal.
\end{fact}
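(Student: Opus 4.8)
\textbf{Proof proposal for Fact~\ref{fact:traversal1-order}.}

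The plan is to argue by induction on the depth of $U$ below $W$ in the forest, tracking the fact that post-order visits a node only after all its descendants. Fix the ancestor $W$ and the descendant $U$ with $\widetilde d(W)\ge\widetilde d(U)$. I want to show that at the moment the traversal visits $W$, some child of $W$ (in the \emph{current} forest, after earlier removals have re-parented nodes) has cut value $\le\widetilde d(W)$, which triggers the removal of $W$.

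First I would handle the base case where $U$ is a child of $W$ in the original forest. If $U$ has not been removed before $W$ is visited, then $U$ is still a child of $W$ when $W$ is visited (removals of \emph{other} nodes only move grandchildren up, never detach $U$ from $W$), and since $\widetilde d(U)\le\widetilde d(W)$ the rule fires and $W$ is removed. If $U$ \emph{was} removed earlier, it was removed because some child $U'$ of $U$ had $\widetilde d(U')\le\widetilde d(U)\le\widetilde d(W)$; upon removal of $U$, $U'$ becomes a child of $W$. Now $U'$ is a proper descendant of $W$ that is closer, so I can apply the induction hypothesis (or just iterate the argument): $U'$ is a child of $W$ with $\widetilde d(U')\le\widetilde d(W)$, so when $W$ is visited either $U'$ is still its child (and $W$ is removed) or $U'$ was itself removed earlier, pushing an even-smaller-valued node up to be a child of $W$. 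Since the chain of such replacements strictly descends in the original forest, it terminates, and at termination there is a surviving child of $W$ with cut value $\le\widetilde d(W)$ at the time $W$ is visited.

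For the general case where $U$ is a deeper descendant, I would first note that if every node on the $W$–$U$ path strictly between them survives until $W$ is visited, then by post-order $U$'s parent $P$ was visited before $W$; since $\widetilde d(P)\ge\widetilde d(U)$ would already make $P$ removed, we may instead directly observe that the child of $W$ on the path toward $U$, call it $C$, satisfies: either $\widetilde d(C)\le\widetilde d(W)$ (done, $W$ is removed), or $\widetilde d(C)>\widetilde d(W)\ge\widetilde d(U)$, in which case $C$ is an ancestor of $U$ with $\widetilde d(C)\ge\widetilde d(U)$ at strictly smaller depth-gap, and by the induction hypothesis $C$ is removed in the traversal — and it is removed \emph{before} $W$ since $C$ is a descendant of $W$. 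When $C$ is removed, its children become children of $W$, and one of them, the one still on the path to $U$ (or a substitute with even smaller cut value), is a child of $W$ with cut value $\le\widetilde d(C)$; iterating, the depth-gap strictly decreases, so eventually a surviving child of $W$ with cut value $\le\widetilde d(W)$ is exposed at the time $W$ is visited, forcing $W$'s removal.

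The main obstacle I expect is the bookkeeping around re-parenting: I must be careful that ``child of $W$'' always refers to the dynamically updated forest, that removals of a descendant of $W$ always happen before $W$ is visited (which follows from post-order, since a removed node is visited, hence its subtree was visited, all before $W$), and that the substitute nodes promoted to children of $W$ inherit a cut value no larger than the node they replace (immediate from the removal rule $\widetilde d(\text{child})\le\widetilde d(\text{removed node})$). Once these invariants are stated cleanly, the induction on depth-gap closes with no real computation.
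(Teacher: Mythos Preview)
Your approach is workable but considerably more intricate than the paper's. The paper argues \emph{after the fact}: look at the forest once the first traversal is complete and consider the path $(W_0=U,W_1,W_2,\ldots)$ from $U$ to the root in that final forest. Along this path the $\widetilde d$-values are strictly decreasing as you go up, because any violation $\widetilde d(W_k)\le\widetilde d(W_{k+1})$ between a surviving node and its surviving parent would have caused $W_{k+1}$ to be removed when it was visited (at that moment $W_k$ is already its child). Hence no surviving ancestor of $U$ can have $\widetilde d$-value $\ge\widetilde d(U)$, so $W$ was removed. This end-state argument sidesteps all the dynamic re-parenting bookkeeping you are tracking.

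Your inductive argument on depth-gap can be made rigorous, but the general case as written conflates two different children of $W$ after $C$ is removed: the child that triggered $C$'s removal (which has cut value $\le\widetilde d(C)$, but need not lie on the path to $U$, and need not have cut value $\le\widetilde d(W)$ since you are in the case $\widetilde d(C)>\widetilde d(W)$), versus the child on the path to $U$ (whose cut value you do not yet control). Your phrase ``the one still on the path to $U$ (or a substitute with even smaller cut value), is a child of $W$ with cut value $\le\widetilde d(C)$'' asserts both properties of a single node without justification. The clean fix is to look directly at the child $V$ of $W$ \emph{at the moment $W$ is visited} that is an ancestor of (the surviving substitute for) $U$: if $\widetilde d(V)\le\widetilde d(W)$ you are done, and if $\widetilde d(V)>\widetilde d(W)\ge\widetilde d(U)$ then the induction hypothesis applied to the pair $(V,U)$ with strictly smaller original depth-gap says $V$ is removed---contradicting that $V$ survives as a current child of $W$. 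This is still your idea, just with the ``iteration'' collapsed into a single well-placed use of the hypothesis.
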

\begin{proof}
Let the tree path from $U$ to root be $(W_0=U, W_1, W_2,\ldots)$ after the first post-order traversal. We claim that $\widetilde{d}(U)>\widetilde{d}(W_1)>\widetilde{d}(W_2)>\ldots$. Assume otherwise; then for some $k$, $\widetilde{d}(W_k)\le \widetilde{d}(W_{k+1})$. When visiting $W_{k+1}$, $W_k$ is a child, so $W_{k+1}$ should be removed according to the description of the first post-order traversal. This contradicts the assumption of tree path.

The claim implies that any ancestor $W$ with $\widetilde{d}(W)\ge \widetilde{d}(U)$ is removed.
\end{proof}
\begin{fact}\label{fact:traversal3-order}
If $W$ is an ancestor of $U$ on the forest after the second post-order traversal, and $d(W)\ge d(U)$, then $W$ will be removed in the third post-order traversal.
\end{fact}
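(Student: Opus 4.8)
The plan is to mirror the proof of \Cref{fact:traversal1-order} essentially verbatim, with $\widetilde d(\cdot)$ replaced by $d(\cdot)$, ``the first post-order traversal'' replaced by ``the third post-order traversal,'' and the input forest $\widetilde L$ replaced by the forest that remains after the second post-order traversal. The point is that by the time the third traversal runs, the first two traversals have finished, so the third traversal is simply executed on a fixed laminar forest, and nothing about \emph{why} that forest looks the way it does (in particular, the terminal-count rule of the second traversal) enters the argument. So I would treat the input to the third traversal as a black-box laminar forest and reproduce the structural argument.

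Concretely, I would first take a node $U$ that survives the third traversal and let $(W_0=U,W_1,W_2,\ldots)$ be the path from $U$ to the root of its tree in the forest obtained \emph{after} the third traversal. The core claim is that $d(W_0)>d(W_1)>d(W_2)>\cdots$. To prove it, suppose not and pick the smallest $k$ with $d(W_k)\le d(W_{k+1})$. Because the traversal is post-order, when it visits $W_{k+1}$ all of its strict descendants have already been processed; moreover, every node lying strictly between $W_k$ and $W_{k+1}$ on the original path has been removed, and removal reassigns a node's children to its parent, so at the moment $W_{k+1}$ is visited $W_k$ is one of its children. Since $d(W_k)\le d(W_{k+1})$, the rule of the third traversal forces $W_{k+1}$ to be removed, contradicting that $W_{k+1}$ survives on the final path. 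This establishes the strictly decreasing chain, and hence that every surviving ancestor of $U$ has strictly smaller $d$-value than $U$.

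It then remains to conclude the stated fact. Given an ancestor $W$ of $U$ in the forest after the second traversal with $d(W)\ge d(U)$: since the removal-and-reassign operation preserves ancestor relations among surviving nodes, if $W$ survived the third traversal it would still be an ancestor of $U$, i.e.\ $W=W_j$ for some $j\ge1$, forcing $d(W)=d(W_j)<d(W_0)=d(U)$, a contradiction; hence $W$ is removed. The main obstacle is the bookkeeping of the forest as it is modified mid-traversal — precisely the observation that ``$W_k$ is a child of $W_{k+1}$ at the time $W_{k+1}$ is visited'' — but this is exactly the situation already handled in \Cref{fact:traversal1-order}. The one extra wrinkle, not literally covered by that fact, is the case where $U$ itself is removed in the third traversal; this is dispatched by first replacing $U$ with a deepest \emph{surviving} descendant $U^{*}$ of $W$ with $d(U^{*})\le d(U)\le d(W)$ (such a $U^{*}$ exists because each removal of a node is triggered by a child of no larger $d$-value, so one can walk downward to a survivor) and then applying the chain argument to $U^{*}$.
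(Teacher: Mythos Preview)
Your proposal is correct and takes exactly the paper's approach: the paper's own proof of this fact is literally ``Identical to \Cref{fact:traversal1-order},'' and you have reproduced that argument with $d$ in place of $\widetilde d$ and the post-second-traversal forest in place of $\widetilde L$. Your final paragraph handling the case where $U$ itself is removed is in fact more careful than the paper, whose proof of \Cref{fact:traversal1-order} tacitly assumes $U$ survives (it begins ``Let the tree path from $U$ to root be $(W_0=U,\ldots)$ \emph{after} the first post-order traversal''); this assumption holds in every application in the paper but is not part of the stated hypothesis, so your extra step is a genuine, if minor, improvement.
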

\begin{proof}
Identical to \Cref{fact:traversal1-order}.
\end{proof}

\begin{lemma}\label{lem:traversal2-supreme}
After the second post-order traversal, the laminar forest represents the family of all supreme sets of $\widetilde{G}$.
\end{lemma}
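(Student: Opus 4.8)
The plan is to prove the two inclusions separately. Write $\widetilde{L}$ for the laminar forest produced by Phase~2 --- a forest of Steiner cuts of $\widetilde{G}$ that contains every supreme set of $\widetilde{G}$ by \Cref{lem:divide-conquer-correct} --- and write $\widetilde{L}_1,\widetilde{L}_2$ for the forests after the first and second post-order traversals. The tools are: supreme sets are extreme with $\rho(\widetilde{\super}(R))=R$ (\Cref{fact:supreme-extreme}), \Cref{fact:subset-of-extreme,fact:extreme-violator}, and the monotonicity of \Cref{fact:traversal1-order}; I would also record at the outset two routine facts used repeatedly --- that ``the parent of a node in $\widetilde{L}_1$'' is stable while traversal~2 runs (post-order visits a node before its ancestors), and that along any chain of traversal-1 survivors the cut value strictly increases from ancestor to descendant (immediate from \Cref{fact:traversal1-order}).

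\textbf{Supreme sets survive both traversals.} Let $S=\widetilde{\super}(R)$, so $S$ is extreme. In traversal~1 every current child of $S$ is a Steiner cut properly inside $S$, hence of strictly larger cut value, so the deletion test never fires and $S\in\widetilde{L}_1$. For traversal~2 it suffices to show the parent $W$ of $S$ in $\widetilde{L}_1$ (if $S$ is not a root there) has strictly more terminals. Suppose not: $S\subsetneq W$ and $\rho(W)=\rho(S)$. Then $W$ is not extreme, since any extreme set with projection $R$ lies inside $\widetilde{\super}(R)=S$; so \Cref{fact:extreme-violator} gives an extreme violator $Z$ of $W$ of minimum cut value. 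Also $\widetilde{d}(S)>\widetilde{d}(W)$ because $S$ is a surviving child of $W$ in $\widetilde{L}_1$ (otherwise $W$ is deleted in traversal~1). If $Z\subseteq S$ then $\widetilde{d}(Z)\ge\widetilde{d}(S)>\widetilde{d}(W)\ge\widetilde{d}(Z)$ by \Cref{fact:subset-of-extreme}, absurd. Otherwise pass to $Y^{\ast}=\widetilde{\super}(\rho(Z))\in\widetilde{L}$, which contains $Z$; using $\rho(W)=\rho(S)$ together with $Z\not\subseteq S$ one checks $Y^{\ast}$ is disjoint from $S$, and since $Z\subseteq Y^{\ast}\cap W$ this forces $Y^{\ast}\subsetneq W$ with $\widetilde{d}(Y^{\ast})\le\widetilde{d}(Z)\le\widetilde{d}(W)$; by \Cref{fact:traversal1-order}, $W$ would already have been deleted in traversal~1, a contradiction. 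Hence $S\in\widetilde{L}_2$.

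\textbf{Every survivor of both traversals is supreme.} Let $R\in\widetilde{L}_2$. I first claim $R$ is extreme. If not, take an extreme violator $Z$ of $R$ of minimum cut value and set $X^{\ast}=\widetilde{\super}(\rho(Z))\in\widetilde{L}$; since $\emptyset\ne Z\subseteq X^{\ast}\cap R$, $X^{\ast}$ and $R$ are comparable in the laminar forest $\widetilde{L}$. If $X^{\ast}=R$ then $R$ is supreme and hence extreme, contradicting the hypothesis. If $X^{\ast}\subsetneq R$ then $\widetilde{d}(X^{\ast})\le\widetilde{d}(Z)\le\widetilde{d}(R)$, so \Cref{fact:traversal1-order} deletes $R$ in traversal~1. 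If $R\subsetneq X^{\ast}$ then $\rho(R)=\rho(X^{\ast})$ (it is squeezed between $\rho(Z)\subseteq\rho(R)$ and $\rho(R)\subseteq\rho(X^{\ast})=\rho(Z)$), and then the parent of $R$ in $\widetilde{L}_1$ --- which lies between $R$ and the surviving ancestor $X^{\ast}$ --- has the same projection as $R$, so traversal~2 deletes $R$. Every case contradicts $R\in\widetilde{L}_2$, so $R$ is extreme. Then $R\subseteq\widetilde{\super}(\rho(R))$, and if this were proper the same parent-projection argument (with $X^{\ast}=\widetilde{\super}(\rho(R))$, which survives traversal~1) would delete $R$ in traversal~2. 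Therefore $R=\widetilde{\super}(\rho(R))$ is supreme.

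\textbf{Expected main obstacle.} The hardest step is the second half of the first direction: ruling out that a supreme set $S$ is matched in terminal projection by a strictly larger non-extreme survivor $W$ of traversal~1. This requires combining all three structural inputs --- laminarity of $\widetilde{L}$, the fact that $\widetilde{L}$ contains \emph{every} supreme set of $\widetilde{G}$, and the impossibility after traversal~1 of a surviving ancestor whose cut value is at most that of a descendant (\Cref{fact:traversal1-order}) --- to force the supreme container $Y^{\ast}$ of $W$'s cheapest violator to sit strictly below $W$ at no larger cut value. The small but load-bearing bookkeeping lemmas behind all the case analyses are that parents in $\widetilde{L}_1$ are stable during traversal~2 and that cut values are strictly monotone along chains of survivors.
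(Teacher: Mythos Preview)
Your proof is correct and follows essentially the same approach as the paper: both directions hinge on passing from a violator $Z$ to the supreme set $\widetilde{\super}(\rho(Z))$, which lives in $\widetilde{L}$ and survives traversal~1, and then invoking \Cref{fact:traversal1-order} or the terminal-count test. The one noteworthy difference is in the first direction (supreme sets survive traversal~2): the paper dispatches this in two lines by invoking \Cref{lem:supreme-mincut}, which gives $\widetilde{d}(S)\le\widetilde{d}(W)$ directly whenever $\rho(W)=\rho(S)$, so $W$ is removed by \Cref{fact:traversal1-order}. You instead argue that $W$ is non-extreme, chase an extreme violator $Z$ of $W$, and locate $Y^{\ast}=\widetilde{\super}(\rho(Z))$ strictly below $W$ with no larger cut value---correct, but avoidably laborious given that \Cref{lem:supreme-mincut} is already available.
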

\begin{proof}
We first prove that after the second post-order traversal, all supreme sets are in the forest. By \Cref{lem:divide-conquer-correct}, this holds before the post-processing, so only need to show that any supreme set $X=\super(R)$ cannot be removed. Because $X$ is extreme (\Cref{fact:supreme-extreme}) and all sets in the forest are Steiner cuts (\Cref{lem:divide-conquer-correct}), no descendant of $X$ can be a violator of $X$, so $X$ is not removed in the first post-order traversal. By \Cref{lem:supreme-mincut}, $X$ is a $R$-$(T\setminus R)$ min cut. If $\rho(W)=R$, $\widetilde{d}(X)\le \widetilde{d}(W)$. This implies $X$'s parent $W$ cannot have $\rho(W)=R$, otherwise $W$ would be removed in the first post-order traversal. So $X$ cannot be removed in the second post-order traversal.

It remains to prove that after the second post-order traversal, all sets in the laminar forest are supreme sets.
We prove by bottom-up induction that after visiting a node $U$, all nodes in the subtree of $U$ are supreme sets under $\widetilde{w}$. Consider any node $U$. By induction, before visiting $U$, all nodes except $U$ in the subtree are supreme sets. We only need to prove that if $U$ is not removed, it is supreme. 
Let $R=\rho(U)$.

We first prove that $U$ is extreme. Suppose not; then there exists an extreme violator $Z$. Let $Z'=\super(\rho(Z))$, which is defined because $Z$ is extreme. As a supreme set, $Z'$ is in the forest. If $Z'$ is in the subtree of $U$, then $Z'$ is a subset of $U$, and $\widetilde{d}(Z')\le \widetilde{d}(Z)\le \widetilde{d}(U)$, so $U$ would be removed in the first post-order traversal by \Cref{fact:traversal1-order}. If $Z'$ is outside the subtree of $U$, then $Z'$ is an ancestor of $U$ because $\rho(Z'\cap U) \ne\emptyset$, which means $Z'$ is a superset of $U$ and $\rho(Z')\supseteq R$. We also have  $\rho(Z')=\rho(Z)\subseteq R$, so $\rho(Z')=R$. This means all sets on the path from $U$ to $Z'$ have projection $R$, and $U$ will be removed in the second post-order traversal. In conclusion, both cases contradict the assumption that $U$ is not removed.

Now that $U$ is extreme, let $W=\super(R)$. As a supreme set, $W$ is in the forest. Because $W\supseteq U$, $W$ is an ancestor of $U$. Then all sets on the path from $U$ to $W$ have projection $R$. The only case when $U$ is not removed in the second post-order traversal is $U=W$.

In conclusion, after the second post-order traversal, the laminar forest is exactly the family of all supreme sets in $\widetilde{G}$.
\end{proof}

\begin{lemma}\label{lem:traversal3}
Let $\mathcal{R}$ be the family of supreme sets in $G$ projected onto terminals. After the third post-order traversal, the laminar forest is $\{R\in \mathcal{R}:\widetilde{\super}(R)\}$. Moreover, $d(\super(R))=d(\widetilde{\super}(R))$ for all $R\in\mathcal{R}$. Note that the family of supreme sets in $G$ is $\{R\in \mathcal{R}:\super(R)\}$.
\end{lemma}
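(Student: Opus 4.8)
The plan is to build on \Cref{lem:traversal2-supreme}: after the second traversal the forest is exactly the family of $\widetilde G$-supreme sets, with each node tagged by both its $\widetilde w$-value and its $w$-value, and by \Cref{fact:traversal3-order} the third traversal keeps precisely those $\widetilde G$-supreme sets $W$ whose $w$-value is strictly smaller than that of every $\widetilde G$-supreme set strictly contained in $W$. So I would split the claim into two pieces: (a) the ``moreover'' statement --- for every $R\in\mathcal{R}$ the set $\widetilde{\super}(R)$ is defined, is $\widetilde G$-supreme, and $d(\widetilde{\super}(R))=d(\super(R))$; and (b) the $w$-value criterion above singles out exactly the sets $\{\widetilde{\super}(R):R\in\mathcal{R}\}$. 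I would also record the elementary fact that $\mathcal{R}$ is precisely the set of terminal-projections of $G$-extreme sets, since every $G$-extreme $Z$ satisfies $Z\subseteq\super(\rho(Z))$ and $\super(\rho(Z))$ is $G$-supreme.

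I would establish (a) first, as it drives (b). Definedness and $\super(R)\subseteq\widetilde{\super}(R)$ follow from \Cref{fact:supreme-extreme,fact:perturb-extreme} (a $G$-supreme set is $G$-extreme, hence $\widetilde G$-extreme, with unchanged projection $R$). The inequality $d(\widetilde{\super}(R))\le d(\super(R))$ then follows from $\super(R)\subseteq\widetilde{\super}(R)$ using \Cref{fact:subset-of-extreme} in $\widetilde G$ (which gives $\widetilde d(\super(R))\ge \widetilde d(\widetilde{\super}(R))$) together with \Cref{fact:perturb-order}. For the reverse: if $\widetilde{\super}(R)$ happens to be $G$-extreme, it has projection $R$, so $\widetilde{\super}(R)\subseteq\super(R)$ and the two sets coincide; otherwise \Cref{fact:extreme-violator} gives a $G$-extreme violator $Z\subsetneqq\widetilde{\super}(R)$ with $d(Z)\le d(\widetilde{\super}(R))$. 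Here $\rho(Z)\subseteq R$; if $\rho(Z)\subsetneqq R$ then $Z\subseteq\super(\rho(Z))\subsetneqq\super(R)$ (laminarity of $G$-supreme sets, \Cref{lem:supreme-laminar}) forces $d(Z)>d(\super(R))\ge d(\widetilde{\super}(R))$, a contradiction, so $\rho(Z)=R$, whence $Z\subseteq\super(R)$ and $d(\widetilde{\super}(R))\ge d(Z)\ge d(\super(R))$.

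For (b), fix $R\in\mathcal{R}$ and let $U=\widetilde{\super}(R_U)$ be any $\widetilde G$-supreme set with $U\subsetneqq\widetilde{\super}(R)$; then $R_U\subsetneqq R$. If $R_U\in\mathcal{R}$, then $\super(R_U)\subsetneqq\super(R)$ by \Cref{lem:supreme-laminar}, so $d(U)=d(\super(R_U))>d(\super(R))=d(\widetilde{\super}(R))$ by (a) and extremeness in $G$. If $R_U\notin\mathcal{R}$, then $U$ is not $G$-extreme (by the projection observation), so it has a $G$-extreme violator $Z\subsetneqq U$ with $d(Z)\le d(U)$ and necessarily $\rho(Z)\subsetneqq R_U\subsetneqq R$; then $Z\subseteq\super(\rho(Z))\subsetneqq\super(R)$ yields $d(U)\ge d(Z)\ge d(\super(\rho(Z)))>d(\super(R))=d(\widetilde{\super}(R))$. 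Either way every $\widetilde G$-supreme set strictly inside $\widetilde{\super}(R)$ has strictly larger $w$-value, so $\widetilde{\super}(R)$ survives the third traversal. Conversely, suppose a surviving node $W=\widetilde{\super}(R_W)$ had $R_W\notin\mathcal{R}$; then $W$ is not $G$-extreme, so it has a $G$-extreme violator $Z\subsetneqq W$ with $d(Z)\le d(W)$ and $\rho(Z)\subsetneqq R_W$ (and $\rho(Z)\in\mathcal{R}$). Then $\widetilde{\super}(\rho(Z))$ is $\widetilde G$-supreme and, by laminarity of the $\widetilde G$-supreme sets together with $\rho(Z)\subsetneqq R_W$, is strictly contained in $W$, while $d(\widetilde{\super}(\rho(Z)))=d(\super(\rho(Z)))\le d(Z)\le d(W)$ by (a) and \Cref{fact:subset-of-extreme}; hence \Cref{fact:traversal3-order} would have removed $W$, a contradiction. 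So the surviving forest is exactly $\{\widetilde{\super}(R):R\in\mathcal{R}\}$, and by (a) its terminal-projections and recorded cut values are the family $\mathcal{R}$ with $c(R)=d(\super(R))$.

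The main obstacle --- and the reason the third traversal is there at all --- is that perturbation preserves only the order of distinct $w$-values and can break ties, so $\widetilde{\super}(R)$ may strictly contain $\super(R)$ and the $\widetilde G$-supreme family may contain ``spurious'' sets whose projections are not projections of any $G$-extreme set. The heart of the argument is therefore to show that every such spurious set is $w$-dominated by a genuine $\widetilde{\super}(R')$ strictly below it; the recurring device is to extract a $G$-extreme violator (\Cref{fact:extreme-violator}), observe that its projection must be a proper subset of the projection in question (hence lies in $\mathcal{R}$), and then compare $w$-values along the laminar family of $G$-supreme sets (\Cref{lem:supreme-laminar}). The rest is bookkeeping translating \Cref{fact:traversal3-order} into statements about survival and removal in the traversal.
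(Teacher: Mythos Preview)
Your proposal is correct and follows essentially the same route as the paper's proof: first establish $d(\super(R))=d(\widetilde{\super}(R))$ for $R\in\mathcal R$ via \Cref{fact:perturb-extreme}, \Cref{fact:perturb-order}, and a $G$-extreme violator argument; then show that $\widetilde G$-supreme sets with projection outside $\mathcal R$ are $w$-dominated by some $\widetilde{\super}(\rho(Z))$ strictly below (hence removed by \Cref{fact:traversal3-order}); and finally show that those with projection in $\mathcal R$ survive. Your forward direction in (b) is in fact slightly more explicit than the paper's, since you treat the case where a proper $\widetilde G$-supreme subset $U=\widetilde{\super}(R_U)$ has $R_U\notin\mathcal R$ directly, whereas the paper tacitly relies on its second paragraph (together with post-order) to rule this case out before invoking $d(W)=d(\super(P))$.
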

\begin{proof}
First we prove that $d(\widetilde{\super}(R))=d(\super(R))$ if $\super(R)$ is defined. By \Cref{fact:perturb-extreme}, perturbation cannot destroy extreme sets, so $\super(R)$ is also extreme under $\widetilde{w}$,  and $\super(R)\subseteq \widetilde{\super}(R)$. If $\widetilde{\super}(R)$ is extreme before perturbation, $\widetilde{\super}(R)\subseteq \super(R)$, so $\widetilde{\super}(R)=\super(R)$ and $d(\widetilde{\super}(R))=d(\super(R))$. The remaining case is that $\widetilde{\super}(R)$ is not extreme before perturbation,
so there exists an extreme violator $Z$ of $\widetilde{\super}(R)$ by \Cref{fact:extreme-violator}. Because $Z\subseteq \widetilde{\super}(R)$, $\rho(Z)\subseteq R$, and $Z\subseteq \super(\rho(Z))\subseteq \super(R)$ by laminarity of supreme sets (\Cref{lem:supreme-laminar}). So \[d(\super(R))\le d(Z)\le d(\widetilde{\super}(R))\]
(The first inequality uses \Cref{fact:subset-of-extreme} on $\super(R)$, and the second follows that $Z$ is a violator.)
Because $\widetilde{\super}(R)$ is extreme after perturbation, \[\widetilde{d}(\super(R))>\widetilde{d}(\widetilde{\super}(R)).\] By \Cref{fact:perturb-order}, perturbation does not change order of unequal cut values, so the two inequalities imply $d(\super(R))=d(\widetilde{\super}(R))$.

Next we prove that if $X=\widetilde{\super}(R)$ is defined but $\super(R)$ is not defined, $X$ will be removed during the third post-order traversal. Because $\super(R)$ is not defined, there is no extreme set with terminal set $R$, and $X$ is not extreme under $w$. So there exists an extreme violator $Z$ such that $Z\subsetneqq X$ and $d(Z)\le d(X)$. Because $X$ is extreme under $\widetilde{w}$, $\widetilde{d}(Z)>\widetilde{d}(X)$. This is impossible if $d(Z)<d(X)$ because perturbation does not change order of unequal cut values (\Cref{fact:perturb-order}), so $d(Z)=d(X)$. Because $\super(R)$ is not defined, $\rho(Z)\subsetneqq R$. Because $Z$ is extreme, $\super(\rho(Z))$ is defined and $Y=\widetilde{\super}(\rho(Z))$ is in the forest after the second post-order traversal. Because $X\setminus Y\ne\emptyset$, and $X, Y$ are in the laminar family, $Y\subsetneqq X$. We have $d(Y)\le d(Z)=d(X)$, so $X$ will be removed in the third post-order traversal by \Cref{fact:traversal3-order}.

Third we prove that for all $R\in\mathcal{R}$, $\widetilde{\super}(R)$ is in the family after the third post-order traversal. $\super(R)$ is defined because $R\in \mathcal{R}$. By \Cref{fact:perturb-extreme}, $\super(R)$ is extreme under $\widetilde{w}$, and $X=\widetilde{\super}(R)$ is defined. By \Cref{lem:traversal2-supreme}, $X$ is in the laminar forest before the third post-order traversal. It remains to show that $X$ is not removed in the third post-order traversal. Assume otherwise; then there exists a child $W\subsetneqq X$ such that $d(W)\le d(X)$.
Because $X$ is extreme under $\widetilde{w}$, $\widetilde{d}(W)>\widetilde{d}(X)$, so $d(W)=d(X)$.
By \Cref{lem:traversal2-supreme}, $W$ is another supreme set under $\widetilde{w}$. Let $W=\widetilde{\super}(P)$, then $P\subsetneqq R$. We proved $d(W)=d(\super(P)), d(X)=d(\super(R))$, so $d(\super(P))=d(\super(R))$. Because the supreme sets under $w$ are laminar, $\super(P)\subsetneqq \super(R)$, so $\super(P)$ violates the extreme property of $\super(X)$. This contradicts the fact that $\super(X)$ is extreme under $w$.
\end{proof}

\begin{lemma}
The post-processing takes $\tO(m)$ time.
\end{lemma}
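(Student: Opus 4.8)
The plan is to charge every step of Phase 3 either to a single linear pass over the $O(n)$ nodes of the laminar forest $\widetilde L$ or to one batched computation over the $m$ edges. By \Cref{lem:divide-conquer-correct}, $\widetilde L$ is a laminar family of subsets of $V$, so it has at most $2|V|-1=O(n)$ members; I represent it as a single rooted tree (joining the trees of the forest under a virtual root) in which, for each $v\in V$, the minimal set $\ell_v\in\widetilde L$ containing $v$ is recorded and $v$ is hung as a leaf child of the node $\ell_v$ (with $\ell_v$ the virtual root when no set of $\widetilde L$ contains $v$), the leaves for $v\in T$ being marked as terminal-leaves. This uses $O(n)$ space even though an individual set of $\widetilde L$ may have $\Theta(n)$ vertices, and it is either produced directly by Phase 2 or rebuilt from its output within the $\tO(F(m,n))$ time already charged to Phase 2. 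The final output is then read off this tree: a surviving node implicitly encodes its terminal projection $\rho(R)$ as its terminal-leaf subtree, and we attach to it the cut value $d(\super(R))$ recorded during the traversals.

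The only nontrivial subroutine is computing $d(R)$ and $\widetilde d(R)$ for all $R\in\widetilde L$ simultaneously, which done naively costs $\Theta(nm)$. Instead I use the standard laminar-family fact that an edge $e=(u,v)$ of weight $w_e$ contributes $w_e$ to $d(R)$ exactly for the sets $R$ that separate $u$ from $v$, and that in our tree these are precisely the nodes on the two root-ward paths from $\ell_u$ and from $\ell_v$ up to, but excluding, $c:=\mathrm{LCA}(\ell_u,\ell_v)$. So I initialize $\delta\equiv 0$ on the nodes, and for each edge $e$ set $\delta[\ell_u]\leftarrow\delta[\ell_u]+w_e$, $\delta[\ell_v]\leftarrow\delta[\ell_v]+w_e$, $\delta[c]\leftarrow\delta[c]-2w_e$; a single bottom-up accumulation $\mathrm{val}[R]=\delta[R]+\sum_{R'\ \text{child of}\ R}\mathrm{val}[R']$ then yields $\mathrm{val}[R]=d(R)$ for every $R$ (a one-line check tracking the contribution of a fixed edge through the three placement cases). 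With offline LCA this runs in $O(m+n)$ time; performing it once with weights $w$ and once with $\widetilde w$ gives all cut values needed by the traversals in $\tO(m)$ total. The terminal counts $|\rho(R)|$ are cheaper still: a terminal $t$ lies in $R$ iff $\ell_t$ is in the subtree of $R$, so the counts are subtree sums of terminal-leaves, computed by one $O(n)$-time bottom-up pass.

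Finally, each of the three post-order traversals visits the $O(n)$ nodes once, doing $O(1)$ work per child at a node $R$ (comparing the precomputed $\widetilde d$, $|\rho(\cdot)|$, or $d$ of $R$ against its children, or against its parent) and $O(\#\{\text{children of }R\})$ work to delete $R$ and re-hang its children under $R$'s current parent; summing over nodes, each traversal takes $O(\sum_R \deg R)=O(n)$ time, and every cut value or count it queries is one of the precomputed quantities. Adding the pieces, Phase 3 runs in $\tO(m)+O(n)=\tO(m)$ time. I expect the main obstacle to be exactly this batched cut-value computation — forcing it down to $\tO(m)$ rather than $\Theta(nm)$ via the LCA/path-accumulation trick — together with the bookkeeping needed to keep $\widetilde L$ in the implicit $O(n)$-space tree form, so that a potentially $\Theta(n)$-size set is never materialized; everything else reduces to routine forest traversals.
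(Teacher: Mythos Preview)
Your proof is correct and uses essentially the same key observation as the paper: an edge $(u,v)$ contributes to $d(R)$ precisely when $R$ lies on the tree path from $\ell_u$ to $\ell_v$ excluding their LCA. The only difference is in how this is turned into an algorithm. The paper uses a dynamic tree data structure and performs one ``add $w_e$ along a path'' operation per edge, each costing $O(\log n)$, for $\tO(m)$ total. You instead place endpoint/LCA markers and do a single bottom-up subtree sum, which is a static one-shot computation running in $O(m+n)$ after LCA preprocessing. Your variant is slightly more elementary (no link--cut trees needed) and even shaves a log factor on this step; the paper's dynamic-tree version would generalize more directly if the forest were changing, but here the forest is fixed, so your approach is a clean fit. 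You also make explicit the terminal-count computation needed for the second traversal, which the paper leaves implicit.
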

\begin{proof}
Given the cut values, the post-order traversal can be executed in $O(n)$ time. Next show that cut values can be computed in $\tO(m)$ time. We describe the process under $w$, and $\widetilde{w}$ is the same.

We use a dynamic tree data structure on the laminar forest. For the sake of analysis, add a dummy root representing $V$ and let all trees in the forest be its children.  
In this laminar tree, every vertex is contained in a path from some node to the root. Attach each vertex $u$ to the lowest node $l(u)$ of the path. For each edge $u, v$, add $w(u,v)$ to the path from the node $l(u)$ to $l(v)$, excluding the lowest common ancestor (LCA) of $l(u)$ and $l(v)$. This costs $\tO(m)$ time.

To see the correctness of cut values, we prove that for any set $X$ in the tree and any edge $(u,v)$, $(u, v)$ is in cut $X$ if and only if $X$ is on the path from $l(u)$ to $l(v)$ excluding the LCA. The edge $(u,v)$ is in cut $X$ iff $u\in X,\, v\notin X$ or $v\in X,\, u\notin X$, which is equivalent to that exactly one of $l(u)$ and $l(v)$ is in the subtree of $X$. This is further equivalent to $X$ being on the path from $l(u)$ to $l(v)$ excluding the LCA.
\end{proof}


\section{Algorithm for Steiner Connectivity Augmentation}
\label{sec:augmentation}
In this section, we give the algorithm for Steiner connectivity augmentation. We will assume throughout that the target connectivity $\tau \ge 2$. If not, an optimal solution is a spanning tree on the components containing terminals. This algorithm has three parts which we present in different subsections below:
\begin{enumerate}
    \item[(a)] An algorithm for external augmentation. (\Cref{sec:external})
    \item[(b)] An algorithm for adding augmentation chains to achieve Steiner connectivity of $\tau-1$. (\Cref{sec:chains})
    \item[(c)] An algorithm to augment the Steiner connectivity from $\tau-1$ to $\tau$ using random matchings. (\Cref{sec:augment-1})
\end{enumerate}

\subsection{External Augmentation}
\label{sec:external}
This section considers the external augmentation problem. That is, given an undirected graph $G=(V,E)$ with integer edge weights $w\ge 0$, terminal set $T\subseteq V$ and a connectivity target $\tau$, insert a new node $x$, and add edges connecting $x$ and vertices in $V$ with minimum total weight
(the weights of new edges are chosen by the algorithm), such that the Steiner connectivity of $T$ is increased to $\tau$.

Given the laminar forest $L$ of supreme sets on terminals, external augmentation can be solved by a simple greedy algorithm in \Cref{alg:external}. For a rooted forest containing a node $R$, define $ch(R)$ as the set of children of $R$.  Define the recursive demand $\rdem(R)$ of terminal set $R$ to be the minimum total weight of edges across the cut $\super(R)$ 
to satisfy all supreme sets in the subtree of $R$ in external augmentation. To satisfy $\super(R)$, the demand is $\tau-c(R)$; to satisfy other supreme sets in the subtree, take the sum of recursive demands of $R$'s children. Therefore $\rdem(R)=\max\{\tau-c(R), \sum_{P\in ch(R)} \rdem(P)\}$ and can be computed by tree DP.
The algorithm performs a post-order traversal on each tree of $L$. When visiting each node $R$, add edges from $x$ to any terminal in $R$ such that the new edges crossing $R$ reach total weight of $\rdem(R)$. That is, add a new edge of weight $\rdem(R)-\sum_{P\in ch(R)} \rdem(P)$.

\begin{algorithm2e}[t]
\caption{External augmentation.}
\label{alg:external}
\SetKwInOut{Input}{Input}
\SetKwInOut{Output}{Output}
\setcounter{AlgoLine}{0}
\Input{A laminar forest $L$ of supreme sets on terminals, a function $c(R)=d(\super(R))$ for every node $u\in L$.}
\smallskip
\ForEach{$u\in L$ in post-order traversal}{
 Calculate $\rdem(R)=\max\{\tau-c(R), \sum_{W\in ch(R)} \rdem(W)\}$.\\
 \label{line:rdem}
 Add an edge of weight $\rdem(R)-\sum_{W\in ch(R)} \rdem(W)$ from $x$ to any terminal in $R$.
}
 Output the new edges.
\end{algorithm2e}

\begin{lemma}
\label{lem:external-rdem}
For any node $R\in L$, at the time \Cref{alg:external} finishes visiting $R$, the algorithm adds edges of total weight $\rdem(R)$ across $R$.
\end{lemma}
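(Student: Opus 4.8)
The plan is to prove this by a bottom-up induction over the laminar forest $L$, after first pinning down which newly added edges can cross the cut $\super(R)$ at the moment the post-order traversal finishes $R$.

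\textbf{Bookkeeping step.} I would first observe that every edge the algorithm adds has the form $(x,v)$ with $v$ a terminal lying in the node $W$ being visited, and since $x\notin\super(R)$ such an edge crosses $\super(R)$ exactly when $v\in\super(R)$, i.e.\ when $v\in\rho(\super(R))=R$. The point to verify is: at the instant the traversal finishes $R$, the only such edges present are those added while visiting $R$ or a proper descendant of $R$. Indeed, if $W$ is in the subtree of $R$ then $W\subseteq R$ (the forest convention that ancestors are supersets) and post-order visits $W$ no later than $R$; if $W$ is elsewhere then laminarity of $L$ forces either $W\cap R=\emptyset$ (the edge does not cross $\super(R)$) or $W\supsetneq R$, making $W$ a strict ancestor of $R$ visited strictly after $R$ (the edge is not yet added). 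Nodes in other trees of $L$ are disjoint from $R$, so they fall in the first case. Since the algorithm never deletes edges, these counts are monotone in time, which lets me freely invoke the inductive hypothesis for a child at the earlier time it was finished.

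\textbf{Induction step.} With the bookkeeping in hand, the induction is routine. For a leaf $R$, $ch(R)=\emptyset$, so visiting $R$ adds a single edge of weight $\rdem(R)$ to a terminal of $R$, and this is the only new edge across $\super(R)$. For an internal node $R$ with children $W_1,\dots,W_k$, the inductive hypothesis (applied at the time each $W_i$ was finished) together with the bookkeeping step gives that the edges added inside the subtree of $W_i$ total $\rdem(W_i)$; since these subtrees together with $\{R\}$ partition the subtree of $R$, the edges added at proper descendants of $R$ total $\sum_i\rdem(W_i)$ and each crosses $\super(R)$. Visiting $R$ then adds one more edge, of weight $\rdem(R)-\sum_i\rdem(W_i)$, which is nonnegative because $\rdem(R)=\max\{\tau-c(R),\sum_i\rdem(W_i)\}\ge\sum_i\rdem(W_i)$, and it also crosses $\super(R)$; the bookkeeping step says there is nothing else, so the total is $\rdem(R)$.

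The only real obstacle is the bookkeeping step: one must be careful that, running post-order over the entire forest rather than a single tree, no edge crossing $\super(R)$ is ever created before $R$ is finished except within $R$'s own subtree. Once laminarity of $L$ and the ancestor-as-superset convention are used to rule that out, the rest is just unwinding the recurrence $\rdem(R)=\max\{\tau-c(R),\sum_{W\in ch(R)}\rdem(W)\}$ and the weight placed at each node.
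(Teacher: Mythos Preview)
Your proposal is correct and follows essentially the same approach as the paper: a bottom-up (post-order) induction over $L$, using that each child $W$ already has $\rdem(W)$ weight of new edges across it by the time its parent $R$ is visited, and that visiting $R$ adds the remaining $\rdem(R)-\sum_{W\in ch(R)}\rdem(W)$. Your bookkeeping step makes explicit a point the paper states in one line (``every child $P\in ch(R)$ does not get new edges after visiting it because the algorithm only adds edges to terminals in the visited node during post-order traversal''), and you additionally spell out why nodes outside the subtree of $R$ cannot contribute crossing edges before $R$ is finished; the paper leaves that implicit.
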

\begin{proof}
We prove by induction on decreasing depth of $R$ in the rooted forest. As a base case, when $R$ is a leaf, the sum on children is empty, so the algorithm adds an edge of weight $\rdem(R)$ across $R$. As an inductive step, consider an internal node $R$. When visiting $R$, every child $P\in ch(R)$ does not get new edges after visiting it because the algorithm only add edges to terminals in the visited node during post-order traversal. Therefore the new edges crossing each child $P$ has a total weight of $\rdem(P)$ when visiting $R$. We add edges of weight $\rdem(R)-\sum_{P\in ch(R)} \rdem(P)\ge 0$ across $R$, so the total weight of new edges across $R$ is $\rdem(R)$ after visiting $R$. 
\end{proof}

\begin{lemma}
\label{lem:external-feasible}
\Cref{alg:external} outputs a feasible solution to the external augmentation problem.
\end{lemma}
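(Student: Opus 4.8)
The plan is to show that after running \Cref{alg:external}, every supreme set receives enough external weight, and then argue that this suffices for \emph{all} Steiner cuts. The first part is essentially bookkeeping on top of \Cref{lem:external-rdem}: for a node $R\in L$, the total weight of external edges crossing $\super(R)$ after the algorithm finishes is exactly $\rdem(R)$ by \Cref{lem:external-rdem} (the later-visited ancestors only add edges to terminals in themselves, which lie outside $\super(R)$ since supreme sets are laminar and $R$ is a proper descendant, so they do not cross $\super(R)$; edges added inside the subtree of $R$ all attach to terminals in $\super(R)$ and hence do cross it). Since $\rdem(R)=\max\{\tau-c(R),\sum_{P\in ch(R)}\rdem(P)\}\ge \tau-c(R)$, the new degree of $\super(R)$ is at least $c(R)+(\tau-c(R))=\tau$. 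So every supreme set is satisfied.

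Next I would lift this from supreme sets to arbitrary Steiner cuts. Let $Y$ be any Steiner cut with $d_G(Y)<\tau$ in the original graph; I must show the external edges raise its cut value to at least $\tau$. Consider $R=\rho(Y)=Y\cap T$ and the supreme set $\super(R)$. If $Y$ is extreme then $Y\subseteq\super(R)$ by \Cref{def:supreme}, and $d(Y)\ge d(\super(R))=c(R)$ by \Cref{fact:subset-of-extreme}. If $Y$ is not extreme, by \Cref{fact:extreme-violator} it has an extreme violator $Z$ with $d(Z)\le d(Y)$ and $\rho(Z)\subseteq R$; then $Z\subseteq\super(\rho(Z))$, and... the cleaner route is: in all cases $d_G(Y)\ge c(\rho(Y))$ fails in general, so instead I would argue directly about where the external edges land. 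Every external edge has one endpoint at $x$ and the other at a terminal; an external edge attached to terminal $v\in R$ was added while visiting some node $R'\in L$ with $v\in R'$, and it crosses $Y$ iff $v\in R$, i.e.\ iff $v\in\rho(Y)$. So the external degree of $Y$ equals the total weight of external edges whose terminal endpoint lies in $\rho(Y)$. Grouping by the node $R'$ at which each edge was added, and using that the nodes $R'$ with $\rho(R')\subseteq\rho(Y)$ form a sub-forest of $L$, the external weight landing in $\rho(Y)$ is $\sum_{R'\text{ maximal with }\rho(R')\subseteq\rho(Y)}\rdem(R')$.

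The core inequality to establish is then: for every Steiner cut $Y$,
\[
d_G(Y)+\sum_{R'}\rdem(R')\ \ge\ \tau,
\]
where the sum is over the maximal nodes $R'\in L$ with $\rho(R')\subseteq\rho(Y)$. The proof is by induction on the laminar structure / on $|\rho(Y)|$, mirroring the DP recursion for $\rdem$. Base case: if $\rho(Y)$ contains no projection of any extreme set, then $Y$ itself cannot be extreme (its only candidate supreme set is undefined), so it has an extreme violator $Z$ with $d(Z)\le d(Y)$ and strictly smaller terminal projection; recurse on $Z$ (the relevant $R'$-sum for $Z$ is contained in that for $Y$). Inductive step: if $\rho(Y)$ equals some $\rho(\super(R))=R$ with $\super(R)$ a maximal such node inside $\rho(Y)$, peel it off — either $Y\subseteq\super(R)$, giving $d(Y)\ge c(R)$ and the single term $\rdem(R)\ge\tau-c(R)$ closes it; or uncross $Y$ against $\super(R)$ using posi-/submodularity (\Cref{fact:submodularity}) to replace $Y$ by a set with strictly smaller projection and no larger cut value, and apply induction, noting $\rdem(R)\ge\sum_{P\in ch(R)}\rdem(P)$ accounts for the children terms. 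I expect this inductive uncrossing to be the main obstacle: one has to track both the cut value and the terminal projection simultaneously through each submodular step and make sure the collection of $\rdem$ terms on the right decreases consistently with the recursion, which requires a careful case analysis exactly paralleling the definition $\rdem(R)=\max\{\tau-c(R),\sum_{P\in ch(R)}\rdem(P)\}$. Once this inequality holds for all Steiner cuts $Y$, feasibility is immediate: the min Steiner cut after augmentation is $\min_Y\big(d_G(Y)+(\text{external weight on }\rho(Y))\big)\ge\tau$.
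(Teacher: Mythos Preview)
Your proposal takes an unnecessary detour. You actually had the paper's argument in hand before abandoning it.

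The paper reduces to extreme sets in one line: if $Y$ is not extreme, take an extreme violator $Z\subsetneqq Y$ with $d(Z)\le d(Y)$ (\Cref{fact:extreme-violator}). Every external edge has one endpoint at $x$ and the other at a terminal, so an external edge crosses $Z$ iff its terminal endpoint lies in $Z$; since $Z\subseteq Y$ and $x\notin Y$, every such edge also crosses $Y$. Thus if $Z$ is augmented to $\tau$ (i.e.\ receives at least $\tau-d(Z)$ external weight), then $Y$ receives at least that much too, and its new cut value is $\ge d(Y)+(\tau-d(Z))\ge\tau$. So it suffices to handle extreme sets. For an extreme $X$ with $R=\rho(X)$: the $\rdem(R)$ external edges guaranteed by \Cref{lem:external-rdem} land on terminals in $R\subseteq X$ and hence all cross $X$; since $X\subseteq\super(R)$ gives $d(X)\ge c(R)$ by \Cref{fact:subset-of-extreme}, the new cut value of $X$ is at least $d(X)+\rdem(R)\ge c(R)+(\tau-c(R))=\tau$. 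That is the entire proof.

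You wrote down the violator $Z$ and then said ``the cleaner route is: in all cases $d_G(Y)\ge c(\rho(Y))$ fails in general'' and switched to an inductive uncrossing scheme. The point you missed is that you do \emph{not} need $d_G(Y)\ge c(\rho(Y))$ for non-extreme $Y$; you only need it for the extreme violator $Z$ (where it holds), and the containment $Z\subseteq Y$ transfers the external weight for free. Your proposed induction on $|\rho(Y)|$ with submodular uncrossing might be salvageable, but it is both unnecessary and, as you yourself acknowledge, only a sketch with the ``main obstacle'' left open.

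One minor inaccuracy in your first paragraph: you claim the external weight crossing $\super(R)$ after the algorithm finishes is \emph{exactly} $\rdem(R)$, arguing that ancestors add edges ``outside $\super(R)$''. That is not guaranteed---when visiting an ancestor $R''\supsetneqq R$, the algorithm adds an edge to \emph{any} terminal in $R''$, which may well lie in $R$. You only get $\ge\rdem(R)$, which is all that is needed.
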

\begin{proof}
Consider any Steiner cut $X$. If $X$ is not extreme, by \Cref{fact:extreme-violator} there exists an extreme violating set $Z\subsetneqq X$ with $d(Z)\le d(X)$. If $Z$ is augmented to $\tau$, so is $X$ because any new edge across $Z$ also crosses $X$. Therefore we only need to show that every extreme set $X$ is augmented to value $\tau$.

Let $R=\rho(X)$. $R\in L$. By \Cref{lem:external-rdem}, when visiting $R$, we add $\rdem(R)$ edges across $R$. The algorithm only adds edges, so the final cut value of $X$ is at least $d(X)+\rdem(R)\ge d(\super(R)) + \tau-c(R)=\tau$.
\end{proof}

\begin{lemma}
\label{lem:external-optimal}
\Cref{alg:external} outputs an optimal solution of external augmentation. The optimal value is $\sum_{R\in L_1}\rdem(R)$, where $L_1$ is the set of all roots in $L$.
\end{lemma}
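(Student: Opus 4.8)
The plan is to establish a matching lower bound on the optimal value of any feasible external augmentation solution, since feasibility and the value computation for Algorithm~\ref{alg:external} are already handled by \Cref{lem:external-feasible} and \Cref{lem:external-rdem}. First I would observe that for each root $R \in L_1$, the supreme sets in the subtree rooted at $R$ are pairwise disjoint across distinct roots (by laminarity, \Cref{lem:supreme-laminar}), so it suffices to prove that any feasible solution must place at least $\rdem(R)$ total edge weight from $x$ into $\super(R)$ (i.e., crossing the cut $\super(R)$ after the added edges are counted on the $x$-less side), for each root $R$; summing over the disjoint roots then gives the lower bound $\sum_{R \in L_1} \rdem(R)$.

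The heart of the argument is a bottom-up induction on the laminar forest establishing: for every node $R \in L$, any feasible external augmentation solution adds edges of total weight at least $\rdem(R)$ incident to terminals lying inside $\super(R)$. For the base case (or more precisely the ``local'' obligation at $R$), feasibility requires the cut $\super(R)$ to reach value $\tau$, and since $d(\super(R)) = c(R)$, at least $\tau - c(R)$ units of external-edge weight must be incident to vertices of $\super(R)$. For the inductive/combinatorial obligation, note that the sets $\{\super(P) : P \in ch(R)\}$ are disjoint subsets of $\super(R)$ (again by \Cref{lem:supreme-laminar}, since children of a laminar node are disjoint), so by the induction hypothesis the external edges into $\super(R)$ number at least $\sum_{P \in ch(R)} \rdem(P)$ simply because edges into distinct $\super(P)$'s are distinct. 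Taking the max of these two lower bounds yields exactly $\rdem(R) = \max\{\tau - c(R), \sum_{P \in ch(R)} \rdem(P)\}$, completing the induction.

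Finally, since the roots $R \in L_1$ have pairwise disjoint sets $\super(R)$ (each root is a maximal supreme set, so no two overlap), the external edges charged to different roots are distinct, giving total weight at least $\sum_{R \in L_1} \rdem(R)$. Combined with \Cref{lem:external-feasible} (feasibility) and \Cref{lem:external-rdem} applied at the roots (which shows Algorithm~\ref{alg:external} adds exactly $\sum_{R \in L_1}\rdem(R)$ total weight, since every added edge crosses exactly one root's supreme set, or none if it is in a tree whose root we must also account for --- here I would note every added edge is incident to a terminal, hence lies in some root's subtree), we conclude Algorithm~\ref{alg:external} is optimal with value $\sum_{R \in L_1} \rdem(R)$.

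The step I expect to be the main obstacle is making the induction's ``disjointness charging'' fully rigorous: I must be careful that the quantity being lower-bounded is the weight of external edges \emph{incident to $\super(R)$}, not merely \emph{crossing the cut $\super(R)$}, because an external edge from $x$ to a vertex outside $\super(R)$ does not help $\super(R)$ at all, whereas the recursive demand bookkeeping in \Cref{alg:external} tracks cut-crossing weight. Reconciling these two viewpoints --- and in particular checking that in a feasible solution every ``useful'' external edge for a supreme set $\super(R)$ indeed lands inside $\super(R)$ and that these are genuinely disjoint across incomparable nodes --- is the delicate part; the rest is routine tree DP accounting mirroring the proof of \Cref{lem:external-rdem}.
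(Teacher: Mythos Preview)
Your proposal is correct and follows essentially the same approach as the paper: an induction on the laminar forest showing that any feasible external solution must put at least $\rdem(R)$ units of weight across $\super(R)$, then summing over the disjoint roots and comparing with \Cref{lem:external-rdem} and \Cref{lem:external-feasible}.

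Two minor remarks. First, the ``main obstacle'' you flag is a non-issue: every external edge has the form $(x,v)$ with $x\notin V$, so it crosses $\super(R)$ if and only if $v\in\super(R)$; hence ``weight incident to $\super(R)$'' and ``weight crossing $\super(R)$'' coincide automatically for external edges, and disjointness of incomparable $\super(P)$'s immediately gives the additive charging. Second, in your inductive statement you should say ``vertices'' rather than ``terminals'' lying inside $\super(R)$, since an arbitrary feasible external solution may attach edges to nonterminals; the lower bound must apply to all feasible solutions, not just terminal-incident ones.
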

\begin{proof}
We claim that for each $R\in L$, we need to add at least $\rdem(R)$ edges across $\super(R)$. First, $\super(R)$ requires $\tau-c(R)$ edges because $c(R)=d(\super(R))$. Second, by induction, for each child $W\in ch(R)$, $\super(W)$ requires $\rdem(W)$ edges, so $\super(R)$ requires $\sum_{W\in ch(R)}\rdem(W)$ edges as a super set of all $\super(W)$'s.

The overall lower bound on the total number of new edges is $\sum_{R\in L_1}\rdem(R)$. \Cref{alg:external} adds $\rdem(R)$ edges across $R$ when visiting $R$ by \Cref{lem:external-rdem}, and does not add any new edge across $R$ after visiting $R$ if $R\in L_1$. Therefore the output of \Cref{alg:external} has size $\sum_{R\in L_1}\rdem(R)$. As a feasible solution by \Cref{lem:external-feasible}, the output is an optimal solution.
\end{proof}

\begin{lemma}
\Cref{alg:external} takes $O(n)$ time.
\end{lemma}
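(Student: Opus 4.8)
The plan is a direct running-time accounting against the size of the laminar forest $L$. First I would record the structural facts: $L$ is a laminar family of nonempty subsets of the terminal set $T$, so it has at most $2|T|-1=O(n)$ nodes and at most $O(n)$ parent--child pairs, and the ``private'' terminal lists of the nodes (the terminals in a node but in none of its children) have total size at most $|T|=O(n)$. Since all edge weights, all $c(R)$, and $\tau$ are polynomially bounded, each arithmetic operation is $O(1)$. A post-order traversal of every tree of $L$ then visits each node exactly once, so, setting aside the per-node work, it runs in $O(n)$ time.

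Next I would bound the work at a single node $R$. On \Cref{line:rdem}, forming $\sum_{W\in ch(R)}\rdem(W)$ costs $O(|ch(R)|)$; since every node is a child of at most one node, $\sum_{R\in L}|ch(R)|=O(n)$, so all these summations together cost $O(n)$. Given that child-sum and the input value $c(R)$, the quantities $\tau-c(R)$, the maximum defining $\rdem(R)$, and the new edge weight $\rdem(R)-\sum_{W\in ch(R)}\rdem(W)$ are each $O(1)$. The only step needing a sentence of care is producing ``any terminal in $R$'' in constant time: in the same post-order pass I would precompute a representative $\mathrm{rep}(R)\in R$, taking it to be an arbitrary private terminal of $R$ when one exists and otherwise (in which case $R$ must have a child, since $R$ is nonempty) inheriting $\mathrm{rep}$ from an arbitrary child; by the bound on the total size of the private lists and the $O(1)$ cost of each inheritance this is $O(n)$ overall, and the edge $(x,\mathrm{rep}(R))$ indeed crosses $\super(R)$ because $\mathrm{rep}(R)\in R\subseteq\super(R)$ while $x\notin\super(R)$.

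Putting the pieces together, the traversal, the $\rdem$ recursion, the representative precomputation, and the edge insertions each cost $O(n)$, and the algorithm outputs at most one edge per node, i.e.\ $O(n)$ edges; hence \Cref{alg:external} runs in $O(n)$ time. I do not expect a genuine obstacle here: the only subtlety is to amortize the per-node child-sums and the ``pick a terminal in $R$'' lookup against the $O(n)$ bounds on $|L|$ and $|T|$, rather than naively charging $O(n)$ to each node.
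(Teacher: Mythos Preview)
Your proof is correct and follows essentially the same approach as the paper: bound $|L|=O(n)$, amortize the child-sums in the $\rdem$ recursion to $O(n)$ total, and charge $O(1)$ per node for the edge insertion. The paper's own proof is considerably terser and simply asserts that inserting one edge per node is $O(n)$ without justifying constant-time access to ``any terminal in $R$''; your representative-precomputation argument fills in that detail explicitly, but the underlying strategy is the same.
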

\begin{proof}
The size of forest $L$ is $O(n)$. Calculating $\rdem(R)$ takes $|ch(R)|$ time, so \Cref{line:rdem} takes $O(n)$ time in total. When visiting each node, we only add edge to one terminal, so inserting edges takes $O(n)$ time in total.
\end{proof}

\subsection{Splitting off View}\label{sec:splitoffview}
The external augmentation problem is closely related to Steiner connectivity augmentation problem. In our algorithm for Steiner connectivity augmentation problem, assume the input is graph $G$, terminals $T$ and target $\tau$. Construct the corresponding external augmentation problem with the same graph $G$, terminals $T$ and target $\tau$. After finding the structure of supreme sets in \Cref{sec:extreme-sets}, \Cref{alg:external} produces the optimal external augmentation solution $F^{\text{ext}}$.

Assume the total weight of $F^{\text{ext}}$ is $k$. For the sake of analysis, regard the weighted edges as parallel unweighted edges. If $k$ is odd, add an arbitrary external edge to make the degree of $x$ even. By Mader's splitting off theorem \ref{thm:mader}, we can split $x$ off to get $\lceil\frac k2\rceil$ edges $F$ on $V$ which preserves the Steiner connectivity to be $\tau$. (There is no cut edge incident on $x$ because the Steiner connectivity is at least 2 after external augmentation and the degree of $x$ is always even during the splitting off.) This means $F$ is a feasible solution to the Steiner connectivity augmentation problem. On the other hand, $\lceil\frac k2\rceil$ is a lower bound on the value of any feasible solution by \Cref{lem:external-lower-bound}. Therefore, $F$ is an optimal solution to Steiner connectivity augmentation problem. This means we can get an optimal solution to augmentation problem by splitting off the external augmentation output.

In general, splitting off is not easier than augmentation. However, our external augmentation solution only consists of edges connecting $x$ and $T$, which makes splitting off easier. Going through this reduction also provides useful information. In our augmentation problem, we now know how an optimal solution looks like, in terms of the degree of every vertex in the new edges. We call it degree constraint of vertices, and say a terminal $u$ has vacant degree if the number of new edges incident to $u$ has not reached the degree constraint. Our algorithm will only add edges to terminals with vacant degree.

\begin{theorem}[Mader's theorem \cite{Mader78}]\label{thm:mader}
If a vertex $x$ is not incident to a cut edge and $d(x)\ne 3$, then there exist a pair of edges incident to $x$ such that splitting them preserves the $s$-$t$ minimum cut value for each pair $s,t\in V\setminus \{x\}$.
\end{theorem}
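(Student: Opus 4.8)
The plan is to prove this classical statement by the standard \emph{dangerous set} uncrossing argument. First I would reformulate admissibility of a pair of edges at $x$ combinatorially. Splitting off a pair $\{xu,xv\}$ (replacing the two edges by a single edge $uv$, discarding a loop if $u=v$) decreases $d(Z)$ by exactly $2$ for every $Z\subseteq V\setminus\{x\}$ with $\{u,v\}\subseteq Z$, and leaves $d(Z)$ unchanged for all other such $Z$; in particular no cut value ever increases. Define the requirement of a nonempty $Z\subseteq V\setminus\{x\}$ by $R(Z)=\max\{\lambda(s,t):s\in Z,\ t\in V\setminus(Z\cup\{x\})\}$; since $Z$ (with $x$ on the far side) is an $s$-$t$ cut for each such pair, $d(Z)\ge R(Z)$ holds in $G$. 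Then splitting $\{xu,xv\}$ preserves every $\lambda(s,t)$ with $s,t\ne x$ if and only if no set $Z$ with $\{u,v\}\subseteq Z\subseteq V\setminus\{x\}$ satisfies $d(Z)\le R(Z)+1$; call such a $Z$ \emph{dangerous}, and \emph{tight} if moreover $d(Z)=R(Z)$. So the goal reduces to exhibiting two edges $xu,xv$ (with endpoints $u,v$, possibly equal) that are not jointly contained in any dangerous set.

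The technical heart is an uncrossing lemma for dangerous sets: if $X,Y$ are dangerous, $X\cap Y\ne\emptyset$, and $X\cup Y\ne V\setminus\{x\}$, then either $X\cup Y$ is dangerous, or both $X\setminus Y$ and $Y\setminus X$ are dangerous. This is obtained by adding the submodularity inequality $d(X\cap Y)+d(X\cup Y)\le d(X)+d(Y)$, or the posi-modularity inequality $d(X\setminus Y)+d(Y\setminus X)\le d(X)+d(Y)$, from \Cref{fact:submodularity}, to the matching crossing-supermodular inequality for $R$, namely $R(X)+R(Y)\le\max\{R(X\cap Y)+R(X\cup Y),\,R(X\setminus Y)+R(Y\setminus X)\}$ — the standard fact that a requirement function built from a connectivity function is skew-supermodular, which uses only $\lambda(a,c)\ge\min\{\lambda(a,b),\lambda(b,c)\}$. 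Together with $d\ge R$ everywhere, one of the two candidate pairs of sets is forced to have $d-R\le 1$.

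With the uncrossing lemma I would then argue by contradiction. Assuming no admissible pair exists, every pair of edges at $x$ is covered by a dangerous set, i.e.\ their endpoints lie together in some dangerous set. Taking a maximal dangerous set $X$ covering a maximum number of edges from $x$ and repeatedly uncrossing, one reduces the covering family to $X$ together with at most one further maximal dangerous set $Y$, with $X$ and $Y$ tight and jointly covering all $d(x)$ edges at $x$. Writing $d(X)=d_x(X)+e(X,V\setminus X)$, where $d_x(X)$ is the number of edges from $x$ into $X$, the tightness of $X$ and $Y$ and the no-bridge hypothesis — which forces $\lambda(x,v)\ge2$ for every neighbor $v$ and so keeps the min-cuts witnessing $R(X),R(Y)$ from being too small — combine into a degree count on $x$ that is contradictory unless $d(x)=3$. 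Since $d(x)\ne3$ by hypothesis, no such covering family can exist, so an admissible pair exists, proving the theorem.

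I expect the last step — turning the uncrossing lemma into the final counting contradiction — to be the real obstacle. Unlike extreme sets in the global-connectivity setting, dangerous sets need not form a laminar family, and the branch of the uncrossing lemma producing dangerous $X\setminus Y$ and $Y\setminus X$ (rather than a dangerous union) genuinely arises; reducing an arbitrary covering family down to one or two maximal dangerous sets, and then extracting the sharp bound that singles out $d(x)=3$ as the one bad degree, requires a careful case analysis (this is exactly the longest part of Mader's original argument). Since in the present paper this theorem is needed only as a black box, rather than reproduce that analysis we would simply invoke it from~\cite{Mader78}.
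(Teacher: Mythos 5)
The paper states this result as a black-box citation to Mader~\cite{Mader78} and gives no proof, and your proposal ultimately does the same — after sketching the dangerous-set machinery you conclude, correctly, that the final counting argument is the substantial part of Mader's original proof and invoke it rather than reproduce it. Your outline (admissibility $\Leftrightarrow$ no dangerous set covers both edges; uncrossing of dangerous sets via submodularity/posi-modularity of $d$ paired with skew-supermodularity of the requirement function $R$; reduction of a covering family to one or two maximal dangerous sets and a degree count that fails only when $d(x)=3$) is a faithful sketch of the standard proof from the literature, so the two approaches agree in both route and resolution.
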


\begin{lemma}\label{lem:external-lower-bound}
If the optimal external augmentation solution has value $k$, then any feasible solution to the corresponding Steiner augmentation problem has value at least $\lceil\frac k2\rceil$.
\end{lemma}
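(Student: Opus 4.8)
The plan is to lower-bound any feasible Steiner augmentation solution by converting it into a feasible external augmentation solution at the cost of a factor $2$. Concretely, let $F$ be any set of nonnegative-integer-weighted edges on $V$ whose addition makes $T$ $\tau$-connected in $G+F$. Build an external solution $F'$ by replacing each edge $(u,v)\in F$ of weight $w$ with the two external edges $(x,u)$ and $(x,v)$, each of weight $w$. Then $w(F')=2w(F)$ and every edge of $F'$ is incident to the new vertex $x$, so $F'$ is a candidate external augmentation solution; it remains only to check that it is feasible.

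The feasibility check is a short cut-counting argument. We must show that in $G\cup\{x\}$ with $F'$ added, every cut $(Y,(V\cup\{x\})\setminus Y)$ that is Steiner with respect to $T$ has value at least $\tau$. Replacing $Y$ by its complement if necessary --- which changes neither the cut value nor the Steiner property, since $Y\cap T\ne\emptyset$ and $T\setminus Y\ne\emptyset$ --- we may assume $x\notin Y$, so $Y\subseteq V$ is a Steiner cut of $G$ in the original sense. The key observation is that for each $(u,v)\in F$: if $(u,v)$ crosses $Y$ then exactly one of $(x,u),(x,v)$ crosses $Y$ (namely the one whose $V$-endpoint lies in $Y$, paired with $x\notin Y$), and if both $u,v\in Y$ then both new edges cross $Y$. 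Hence $d_{F'}(Y)=d_F(Y)+2\sum_{(u,v)\in F:\,u,v\in Y} w(u,v)\ge d_F(Y)$, and therefore the value of $Y$ in the externally augmented graph is $d(Y)+d_{F'}(Y)\ge d(Y)+d_F(Y)=d_{G+F}(Y)\ge\tau$. The final inequality holds because $Y$ is an $s$-$t$ cut for some $s\in Y\cap T$ and $t\in T\setminus Y$, and $F$ is feasible, so $d_{G+F}(Y)\ge\lambda_{G+F}(s,t)\ge\tau$. Thus $F'$ is a feasible external augmentation solution.

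Finally, since $k$ is by definition the minimum weight of a feasible external augmentation solution, $k\le w(F')=2w(F)$, i.e.\ $w(F)\ge k/2$; as all added edge weights are nonnegative integers, $w(F)$ is an integer, so $w(F)\ge\lceil k/2\rceil$, which is exactly the claim. I do not anticipate a genuine obstacle; the only points needing a moment's care are the reduction to the case $x\notin Y$ (so that the checked cut is an honest Steiner cut of $G$) and the observation that feasibility of $F$ gives $d_{G+F}(Y)\ge\tau$ for \emph{every} Steiner cut $Y$, not merely for minimum ones --- both of which are immediate from the definition of $\tau$-connectivity of $T$.
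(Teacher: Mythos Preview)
Your proof is correct and follows essentially the same approach as the paper: replace each edge $(u,v)\in F$ by two external edges $(x,u),(x,v)$ of the same weight, observe this yields a feasible external solution of weight $2w(F)$, and conclude via integrality. The only difference is that you spell out the feasibility verification via a cut-by-cut argument showing $d_{F'}(Y)\ge d_F(Y)$, whereas the paper simply asserts feasibility without elaboration.
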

\begin{proof}
Let $F$ be any feasible solution to Steiner augmentation problem. Let $p$ be the total weight of $F$. Construct $F'$ by replacing every edge $(u,v)\in F$ by two edges $(u,x)$ and $(v,x)$ with the same weight. Then $F'$ is a feasible solution to external augmentation with total weight $2p$. Because $k$ is the optimal value of external augmentation, $k\le 2p$, which implies $p\ge \frac k2$. Because we work on integer weights, $p$ and $k$ are integers, and $p\ge \lceil\frac k2\rceil$.
\end{proof}

\subsection{Greedily Adding Augmentation Chains}
\label{sec:chains}
In this section, we split off the external augmentation output.

For a terminal set $R$ of some extreme set, define its demand $\dem(R)=\tau-d(\super(R))$, and the cut value will change as we add edges.
List the maximal supreme sets $(X_1,\ldots,X_r)$ with demands at least 2, such that $X_1$ and $X_r$ have the minimum cut value. Let their terminal sets be $(R_1, \ldots, R_r)$. Because a Steiner min cut of minimum size is an extreme set with no extreme superset, $X_1$ and $X_r$ are Steiner min cuts. Define an augmentation chain to be a set of edges $F=\{(a_i,b_{i+1})\}_{1\le i\le r-1}$ such that $a_i, b_i\in R_{i}$, and every vertex's degree does not exceed its degree in the external augmentation solution.

A basic algorithm is to repetitively add augmentation chains until all demands of extreme sets are reduced to at most 1. We solve the case of demand 1 in \Cref{sec:augment-1}. During the algorithm, the extreme sets structure may change and need to be maintained. Our key observation \Cref{thm:chain-no-new-extreme} guarantees that the algorithm does not create new extreme sets, so we only need to handle the original supreme sets in $L$. To maintain the value $c(R)=d(\super(R))$ for every $R\in L$, notice that the effect of adding an edge $(a, b)$ in augmentation chain is increasing the cut value of all supreme sets on the $L$ tree path from $a$ to $b$ excluding the LCA, so we can maintain $c(R)$ by a dynamic tree data structure. For a node $R\in L$, if $c(R)$ is increased to make $c(R)\ge c(P)$ for some child $P$ of $R$, then $R$ is no longer the terminal set of any extreme set and need to be removed from $L$.

The running time of the basic algorithm is $O(n\tau)$, that is  $O(n)$ for each augmentation chain. To speed it up, we lazily use the same edges to form the chain until they become invalid. We use a priority queue to maintain the time when each edge will become invalid, which is further maintained by efficient data structures. Then we can add the same chain while repeatedly replacing the first invalid edge. This speedup is identical to \cite{CenLP22a} and discussed in \Cref{sec:lazy}. 

\begin{theorem}\label{thm: aug-chain-runtime}
    Given the supreme sets forest and the cut values of the supreme sets, we can add all augmentation chains so that all supreme sets have demand at most 1 in time $\tO(n)$. 
\end{theorem}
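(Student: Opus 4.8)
The plan is to establish two things: (i) the basic algorithm that repeatedly adds augmentation chains is \emph{correct} --- it terminates with all supreme sets having demand at most~$1$, never creates new extreme sets (so that maintaining only the original forest $L$ suffices), and each chain is a legal set of edges respecting the degree constraints; and (ii) the \emph{lazy} implementation realizes the same effect in $\tO(n)$ total time. The structural heart is the claim referenced as \Cref{thm:chain-no-new-extreme}: adding an augmentation chain between the minimum-cut endpoints of the current maximal supreme sets does not create any new extreme set. Granting that, the forest $L$ (with cut values $c(R)$ maintained by a dynamic tree under path-increment operations) remains a valid description of the extreme-set family throughout, and a node $R$ is deleted precisely when its $c(R)$ rises to meet that of one of its children --- which is exactly when $R$ ceases to be the projection of an extreme set. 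Feasibility of each chain follows because we only ever touch terminals with vacant degree (guaranteed by the external-augmentation degree budget, which by \Cref{lem:external-optimal} upper-bounds an optimal augmentation and hence leaves enough room), and a potential/counting argument on $\sum_R \dem(R)$ shows each chain drops the total demand by a definite amount, bounding the number of chains by $O(\tau)$ and the naive running time by $O(n\tau)$.

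For the speedup, I would follow the template of \cite{CenLP22a} verbatim in spirit. The observation is that between consecutive chain insertions the \emph{same} pair of endpoints $(a_i,b_{i+1})$ on consecutive supreme sets $R_i,R_{i+1}$ often remains a valid chain edge: it only becomes ``invalid'' when a vertex exhausts its vacant degree, or when an intervening supreme set splits/merges so that the tree path no longer passes through the right sets, or when $a_i,b_{i+1}$ would need to be chosen from a set that has been removed. So rather than rebuilding the chain of $O(n)$ edges $\tau$ times, we maintain, for each currently-used chain edge, the future time step at which it will first become invalid, in a priority queue keyed by that time. At each step we extract the earliest-invalidating edge, replace only that one edge, and push its new invalidation time. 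The number of edge replacements is then bounded by (number of chains) plus (number of forced replacements), and each forced replacement is charged either to a vacant-degree exhaustion event (total $O(n)$, since total added weight is $O(n)$ after the reduction to parallel unweighted edges on a graph where the augmentation optimum is $O(n)$... more precisely charged to the edges actually added) or to a structural change in $L$ (total $O(n)$, since $L$ shrinks monotonically). Each replacement and each $c(R)$-update is an $\tO(1)$ dynamic-tree / priority-queue operation, giving $\tO(n)$ overall.

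Concretely the steps I would carry out, in order: (1) invoke \Cref{thm:chain-no-new-extreme} to argue $L$ stays a faithful extreme-set forest and that $c(R)=d(\super(R))$ can be maintained under path increments; (2) prove each chain insertion is feasible w.r.t.\ degree constraints and strictly decreases $\sum_R \max(\dem(R)-1,0)$ (or an equivalent potential), bounding the chain count by $O(\tau)$ and giving the $O(n\tau)$ basic bound; (3) define the invalidation time of a chain edge and show it can be computed/updated in $\tO(1)$ via the dynamic tree that already tracks the $c(R)$ values and the vacant degrees; (4) run the lazy loop with a priority queue over invalidation times, replacing one edge at a time; (5) amortize: charge each replacement to either a degree-saturation event or a node-removal event in $L$, each of which happens $O(n)$ times, so the total work is $\tO(n)$.

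The main obstacle I expect is step~(3)--(4): precisely characterizing when a lazily-reused chain edge goes ``invalid'' and showing this is detectable in polylogarithmic time using only the dynamic-tree state. An augmentation-chain edge $(a_i,b_{i+1})$ depends on the identity of the maximal supreme sets $X_1,\dots,X_r$ with demand $\ge 2$ and on which ones currently have minimum cut value; both the membership of this list and the min-value endpoints can shift as $c(\cdot)$ values rise and nodes are deleted from $L$. I would need a clean invariant --- likely that the chain is indexed along a fixed path or ordering that changes only at discrete, chargeable events --- so that a stale edge can be recognized and replaced locally rather than by recomputing the whole chain. Handling weighted edges (treated as parallel copies) without blowing up the count is a secondary technical point, resolved as in \cite{CenLP22a} by working with the weighted increments directly in the dynamic tree rather than simulating parallel edges.
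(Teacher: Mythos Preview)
Your proposal is correct and follows essentially the same approach as the paper: invoke \Cref{thm:chain-no-new-extreme} so that only the original forest $L$ needs maintenance, reuse chain edges lazily until an expiration event, keep a priority queue of expiration times, and charge each edge replacement to one of $O(n)$ events (a vertex exhausting its vacant degree, a root reaching demand $\le 1$, or a node being deleted because $c(R)\ge c(W)$ for some child $W$). The obstacle you flag about the min-value endpoints $X_1,X_r$ shifting is a non-issue: each chain raises $d(X_1),d(X_r)$ by $1$ and every other $d(X_i)$ by $2$, so $X_1,X_r$ remain the minimum-cut roots until one of the three listed events removes them, and the paper accordingly gives closed-form expiration times $t_F=\min(t_1,t_2,t_3)$ computable with a single dynamic-tree path query per edge update.
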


\subsubsection{Property of Augmentation Chains}
We now prove the key property of augmentation chains, stated below.
\begin{theorem}
\label{thm:chain-no-new-extreme}
Adding an augmentation chain does not create new extreme set.
\end{theorem}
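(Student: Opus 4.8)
The plan is to argue by contradiction: suppose adding an augmentation chain $F=\{(a_i,b_{i+1})\}_{1\le i\le r-1}$ creates a new extreme set, i.e.\ there is a set $X$ that is extreme in the new graph $G' = G + F$ but was not extreme in $G$. Let $d'$ denote cut values in $G'$, and note $d'(Z) = d(Z) + |\{\text{chain edges crossing } Z\}|$ for every $Z$. Since $X$ is not extreme in $G$, by \Cref{fact:extreme-violator} there is an extreme (in $G$) violator $Z \subsetneqq X$ with $d(Z) \le d(X)$. Because $X$ becomes extreme in $G'$, every such $G$-violator $Z$ must have gained strictly more chain edges on its boundary than $X$ did: $d'(Z) > d'(X)$ forces the number of chain edges crossing $Z$ to exceed the number crossing $X$ by more than $d(X)-d(Z) \ge 0$. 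I would pick $Z$ to be a minimal such extreme violator (or leverage the structure of supreme sets via \Cref{def:supreme}, replacing $\rho(Z)$'s violator by $\super(\rho(Z))$ as done in \Cref{lem:traversal2-supreme,lem:supreme-contract}), so that I can assume $Z = \super(\rho(Z))$ is one of the supreme sets in the forest $L$.

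The heart of the argument is a counting/uncrossing lemma about how chain edges interact with the supreme sets $X_1,\dots,X_r$ and with an arbitrary set $Y$. The chain is built so that $a_i,b_i \in R_i = \rho(X_i)$, and the $X_i$ are the \emph{maximal} supreme sets of demand $\ge 2$ with $X_1,X_r$ of minimum cut value. The key combinatorial fact I expect to need is: for any Steiner cut $Y$, the number of chain edges crossing $Y$ is controlled by how many of the terminal sets $R_i$ are "split" versus "swallowed" or "avoided" by $Y$ — and crucially, because consecutive chain edges share a common block $R_{i+1}$ (the edge $(a_i,b_{i+1})$ and the edge $(a_{i+1},b_{i+2})$ both touch $R_{i+1}$), a set $Y$ that properly contains some $X_j$ but not its chain-neighbors picks up exactly $0$ net new edges there, while a set that cuts through the middle of an $X_j$ cannot gain more edges than its own extremeness deficit allows. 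I would formalize this by a case analysis on the position of $Y$ (and of $X$ versus $Z$) relative to the laminar family $\{X_1,\dots,X_r\}$ of supreme sets: since the $X_i$ are laminar (by \Cref{lem:supreme-laminar}) and maximal, any set $X$ either contains some maximal $X_i$ entirely, is contained in the "complement region," or crosses an $X_i$ — and \Cref{lem: extreme-difference-terminal} plus \Cref{lem: extreme-intersect-terminal} restrict crossings severely. Then I would show that $d'(X) - d'(Z) \le d(X) - d(Z)$, contradicting $d'(X) < d'(Z)$ (wait: contradicting $d'(Z) \le d'(X)$ being violated), i.e.\ that $Z$ remains a violator of $X$ in $G'$.

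The main obstacle, I expect, will be handling sets $X$ that cross some maximal supreme set $X_i$, because then the chain edge incident to $X_i$ may or may not cross $X$ depending on which endpoint ($a_i$ or $b_i$) lands inside $X$, and there is no a priori control over that. The resolution should use submodularity in the style of \Cref{lem:extreme-union}: if $X$ crosses $X_i$ then by \Cref{lem: extreme-difference-terminal} one of $X\setminus X_i$, $X_i\setminus X$ has no terminal, so $X\cap X_i$ or $X\cup X_i$ has the same projection $R_i$ as $X_i$, and by maximality/supremacy of $X_i$ this forces $X_i \subseteq X$ (the case $X\cup X_i$) or forces $X\cap X_i$ to be a strict extreme subset of $X_i$ with $d(X\cap X_i) > d(X_i)$, which contradicts $X_i$ being a minimal-size Steiner min cut when $X_i \in \{X_1,X_r\}$, and more generally pins down exactly which chain edges can cross $X$. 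A secondary obstacle is bookkeeping the demand condition: the construction only chains together supreme sets of demand $\ge 2$, and I need this to guarantee that enough "room" exists so that the extremeness inequalities do not become tight in a way that lets $X$ sneak past $Z$; I would track this through the inequality $\dem(R_i) = \tau - c(R_i) \ge 2$ together with \Cref{fact:subset-of-extreme}. Once the crossing case is dispatched, the containment and disjointness cases should follow by the net-zero observation above plus routine submodular manipulation.
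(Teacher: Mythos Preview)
Your setup is right — assume a new extreme set $U$ in $G'=G+F$ with an old extreme violator $W\subsetneqq U$, and note $d_F(W)>d_F(U)$ with $d_F(W)\le 2$ — but the rest of the plan misses the actual mechanism of the proof and has a concrete error.

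First, the error: you propose to uncross $U$ with some $X_i$ by invoking \Cref{lem: extreme-difference-terminal}. That lemma applies to two sets extreme in the \emph{same} graph, but here $U$ is extreme only in $G'$ while $X_i$ is extreme in $G$, so the lemma does not apply directly. The paper proves a bespoke replacement (\Cref{lem:u-cross-x}): if the new extreme set $U$ crosses an old extreme set $X_i$ in terminals, then $d_F(X_i)=2$ and \emph{both} chain edges incident to $X_i$ lie inside $U$. This is what substitutes for the uncrossing you wanted, and it immediately rules out $U$ crossing $X_1$ or $X_r$ (each has $d_F=1$).

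Second, and more fundamental, the paper does not attempt to show that the specific violator $W$ survives. Instead it runs a ``walking'' argument along the chain. From the edge $(a_i,b_{i+1})$ inside $U$ (which must exist since $d_F(W)>d_F(U)$), \Cref{lem:u-walk} forces $b_i,a_{i+1}\in U$ as well; since $d_F(U)\le 1$, at most one chain edge leaves $U$, so you can keep walking in at least one direction until you reach $a_1$ or $b_r$. This is the step your plan lacks: you try to do a static case analysis on where $U$ sits relative to the $X_i$, but the proof is dynamic — it propagates containment of chain endpoints.

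Third, your plan underuses the hypothesis that $X_1,X_r$ are Steiner \emph{min} cuts. Once the walk reaches (say) $X_1$, the paper uses $d'(X_1\cap U)>d'(U)$ together with submodularity and the min-cut property of $X_1$ to force $d(X_1\cup U)=d(X_1)$, hence $X_1\cup U$ is also a Steiner min cut and $d_F(U)=0$. Now the walk extends in both directions, $U$ contains every chain endpoint, and by \Cref{lem:mincut-uncross-extreme} the min cut $X_1\cup U$ cannot cross any extreme set — so it swallows every $X_j$. But then there is a terminal outside $X_1\cup U$ sitting in an extreme set of demand $\ge 2$ disjoint from $U$, contradicting maximality of the list $(X_1,\dots,X_r)$. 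The contradiction is global, not ``$W$ stays a violator.''

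In short: replace your generic counting/uncrossing outline with (i) the tailored crossing lemma \Cref{lem:u-cross-x}, (ii) the chain-walking lemma \Cref{lem:u-walk}, and (iii) the Steiner-min-cut argument that pins $d_F(U)=0$ and produces the final contradiction.
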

\begin{proof}
Assume for contradiction that $U$ is a new extreme set after adding $F=\{(a_i,b_{i+1})\}_{1\le i\le r-1}$, and $W$ is an extreme violating set of $U$ before augmentation.

Let $d(\cdot)$ and $d'(\cdot)$ be the cut values before and after augmentation respectively. Then $d(W)\le d(U)$ but $d'(W)>d'(U)$. So $d_F(W)>d_F(U)$. Because we add at most 2 edges to any extreme set, $2\ge d_F(W)>d_F(U)\ge 0$. So $d_F(U)$ can only be 0 or 1. $d_F(W)>d_F(U)$ also implies that there must be a new edge crossing $W$ but not crossing $U$. Such an edge $e_0=(a_i,b_{i+1})$ must have $a_i, b_{i+1}\in U$. Also, no extreme set can contain all terminals in $U$ because $e_0$ crosses two maximal supreme sets.

Starting from $e_0=(a_i,b_{i+1})$, we have $b_i$ and $a_{i+1}$ also belong to $U$ by Lemma \ref{lem:u-walk}. Because $d_F(U)\le 1$, one of $a_{i-1}$ and $b_{i+1}$ is contained in $U$. By repeating this step, we can always walk along $F$ in one of the two directions, until reaching $a_1$ or $b_r$. By Lemma \ref{lem:u-cross-x}, $U$ cannot cross $X_1$ or $X_r$ in terminal because they only have one new edge. Therefore $\rho(X_1)\subsetneqq U$ or $\rho(X_r)\subsetneqq U$. Assume wlog $\rho(X_1)\subsetneqq U$.

$X_1\cap U$ is a proper subset of $U$ that contains terminals, so $d'(X_1\cap U)>d'(U)$. Then
\[d(X_1\cap U)\ge d'(X_1\cap U)-1\ge d'(U)=d(U)+d_F(U)\ge d(U)\]
By (\ref{eq:submodularity-3}) of \Cref{fact:submodularity}, $d(X_1\cup U)\le d(X_1)$. $X_1\cup U$ has the same terminal set as $U$, so it is a Steiner cut and $d(X_1\cup U)\ge d(X_1)$. ($X_1$ is a Steiner min cut.) Therefore $d(X_1\cup U)=d(X_1)$ and $X_1\cup U$ is also a Steiner min cut. Now the chain of inequality becomes equality, so $d_F(U)=0$.

When $d_F(U)=0$, we can continue the walking on both directions of $F$, so $U$ contains all edges of the chain $F$. By Lemma \ref{lem:mincut-uncross-extreme}, because $X_1\cup U$ is a Steiner min cut, it cannot cross any extreme set. Therefore $U$ must contain all terminals which is in an extreme set with demand at least 2. However, $X_1\cup U$ is a Steiner min cut, so there is terminal in $\overline{X_1\cup U}$ which has demand $\dem(X_1)\ge 2$, a contradiction.
\end{proof}

\begin{lemma}
\label{lem:u-cross-x}
Assume $U$ is a new extreme set after adding $F$. If $U$ crosses an extreme set $X$ in terminal, then $d_F(X)=2$, and the two new edges across $X$ are both contained in $U$.
\end{lemma}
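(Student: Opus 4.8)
The plan is to argue by submodularity applied to $U$ and $X$, exactly in the spirit of the preceding proofs. Suppose $U$ is a new extreme set after adding the chain $F = \{(a_i, b_{i+1})\}_{1\le i\le r-1}$, let $W$ be an extreme violator of $U$ before augmentation (so $d(W)\le d(U)$ but $d'(W) > d'(U)$, hence $d_F(W) > d_F(U)$), and suppose $U$ crosses some extreme set $X$ in terminal, i.e.\ $\rho(U\cap X)$, $\rho(U\setminus X)$, $\rho(X\setminus U)$ are all nonempty. Then both $U\cap X$ and $X\setminus U$ are Steiner cuts that are proper subsets of $X$, so by \Cref{def:extreme} applied to $X$ (using the post-augmentation cut values $d'$, noting $X$ remains a valid set to compare against---here I need to be slightly careful whether $X$ is extreme before or after; since $X$ is from the original supreme sets forest and $U\cap X, X\setminus U$ are proper subsets containing terminals, I would use that $X$ is extreme before augmentation together with the bound $d_F \le 2$ to transfer the inequality, or more cleanly use that $U\cap X$ and $X\setminus U$ are not violators of $X$). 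The cleanest route: since $X$ is extreme before adding $F$, $d(U\cap X) > d(X)$ and $d(X\setminus U) > d(X)$.

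Next I would combine this with posi-modularity. Adding $d(U\cap X) > d(X)$ to the fact that $X\setminus U$ is also a proper Steiner subset, and then invoking (\ref{eq:submodularity-2}) on $U$ and $X$ in the post-augmentation graph, I want to show the new edges are forced into $U$. Concretely, the key computation is: because $U$ crosses $X$ in terminal after augmentation as well (terminal projections don't change), $U\cap X$ and $X\setminus U$ are proper Steiner subsets of $X$. Since $U$ is extreme \emph{after} augmentation and $U\cap X$, $U\setminus X$ are proper Steiner subsets of $U$, we get $d'(U\cap X) > d'(U)$ and $d'(U\setminus X) > d'(U)$. Applying posi-modularity (\ref{eq:submodularity-2}) to $U$ and $X$ after augmentation: $d'(U\setminus X) + d'(X\setminus U) \le d'(U) + d'(X)$, so $d'(X\setminus U) < d'(X)$, meaning $X\setminus U$ is a violator of $X$ after augmentation. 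Since $X$ was not violated before (it was extreme), at least one new edge must cross $X\setminus U$ but not $X$ — such an edge has one endpoint in $X\setminus U$ and one endpoint in $U\cap X$ (both inside $X$): impossible, a new edge of $F$ goes between distinct maximal supreme sets $R_i, R_{i+1}$ and cannot lie entirely within $X$ unless\ldots. So I'd instead phrase it: the gap $d'(X) - d'(X\setminus U) \ge 1$ is contributed by new edges incident to $U\cap X$, but all new edges incident to $X$ must cross $X$ (they join two maximal supreme sets, and $X$ is inside one such set in the relevant configuration), forcing exactly $d_F(X) = 2$ with both chain-edge endpoints adjacent to $X$ landing inside $U$.

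I then do the analogous computation starting from $d'(U\cap X) > d'(U)$, use (\ref{eq:submodularity-4}) to get $d'(U\cup X) < d'(X)$, and track which new edges cross $X$ but not $U\cup X$; combined with the bound $d_F(\cdot)\le 2$ on any extreme set, this pins down that $X$ receives exactly two new edges and both of them have both endpoints inside $U$ (equivalently, both chain edges incident to $X$ are ``swallowed'' by $U$). The arithmetic is the routine part; the bookkeeping of which of the (at most two) new edges across $X$ can also be across $U$ is where the real content sits.

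The main obstacle I anticipate is the careful case analysis of how a single chain edge $e_0 = (a_i, b_{i+1})$ interacts simultaneously with $U$ and $X$: an edge can cross $X$ while having one or both endpoints in $U$, and I must rule out the configuration where a new edge crosses $X$ but is \emph{not} contained in $U$. This requires using the structure of the chain (consecutive endpoints $b_i, a_{i+1}$ lie in the same supreme set $R_i$, as in \Cref{lem:u-walk}) together with the fact that $d_F(U) \le 1$ (shown in the proof of \Cref{thm:chain-no-new-extreme}) and the extremality of $X$, to propagate membership of chain vertices into $U$. So the proof of this lemma will lean on \Cref{lem:u-walk} and the degree bound $d_F \le 2$ in a slightly delicate way, and getting that propagation argument airtight — rather than the submodular inequalities themselves — is the crux.
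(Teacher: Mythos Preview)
Your posi-modular computation in the post-augmentation graph is correct and does yield $d_F(X)=2$: from $d'(U\setminus X)>d'(U)$ and posi-modularity you get $d'(X\setminus U)<d'(X)$; combining with $d(X\setminus U)>d(X)$ (extremality of $X$ before) gives $d_F(X)-d_F(X\setminus U)\ge 2$, hence $d_F(X)=2$ and $d_F(X\setminus U)=0$. But this only shows that the $X$-side endpoints of the two new edges lie in $U\cap X$. It does \emph{not} show the other endpoints lie in $U\setminus X$ rather than in $V\setminus(U\cup X)$, which is exactly the conclusion ``both edges are contained in $U$.'' Your plan to close this gap by invoking \Cref{lem:u-walk} (and the bound $d_F(U)\le 1$ from the proof of \Cref{thm:chain-no-new-extreme}) is circular: \Cref{lem:u-walk} is proved \emph{using} \Cref{lem:u-cross-x}, and the theorem's proof uses both lemmas. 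So as written the proposal has a genuine gap at precisely the point you identify as the crux.

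The paper's argument sidesteps this by working on the $U\setminus X$ side rather than the $X\setminus U$ side. From $d(X\setminus U)>d(X)$ and posi-modularity in the \emph{pre}-augmentation graph one gets $d(U\setminus X)<d(U)$; combining with $d'(U\setminus X)>d'(U)$ gives $d_F(U\setminus X)\ge d_F(U)+2$. On the other hand, every $F$-edge crossing $U\setminus X$ either crosses $U$ or goes between $U\setminus X$ and $U\cap X$ (and the latter edges cross $X$), so $d_F(U\setminus X)\le d_F(U)+d_F(X)\le d_F(U)+2$. Equality throughout forces $d_F(X)=2$ \emph{and} that both edges crossing $X$ run between $U\cap X$ and $U\setminus X$, i.e.\ are contained in $U$. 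The key difference is that squeezing $d_F(U\setminus X)$ uses a decomposition whose tight case pins down where \emph{both} endpoints of each edge lie; squeezing $d_F(X\setminus U)$ against $0$ cannot do this.
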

\begin{proof}
Because $U$ crosses $X$ in terminal, $U\cap X, X\setminus U$ and $U\setminus X$ are Steiner cuts. Because $X$ is extreme, $d(X\setminus U)>d(X)$. By (\ref{eq:submodularity-4}) of \Cref{fact:submodularity}, $d(U\setminus X)<d(U)$. Because $U$ is a new extreme set, $d'(U\setminus X)>d'(U)$. Therefore
\[d(U\setminus X)+d_F(U\setminus X) = d'(U\setminus X) \ge d'(U)+1=d(U)+d_F(U)+1\ge d(U\setminus X)+d_F(U)+2\]
\[d_F(U\setminus X)\le d_F(U)+d_F(U\setminus X, U\cap X)\le d_F(U)+d_F(X)=d_F(U)+2\]
Comparing these two inequalities, we have $d_F(U\setminus X)=d_F(U)+2$, and all inequalities are equality. Therefore $d_F(X)=2$, and the two new edges incident to $X$ must both connect $U\setminus X$ and $U\cap X$.
\end{proof}

\begin{lemma}
\label{lem:u-walk}
Assume $U$ is a new extreme set after adding $F$. If $X_i$ has two new edges incident on $a_i$ and $b_i$, and $a_i\in U$, then $b_i\in U$ as well.
\end{lemma}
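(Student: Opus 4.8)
The plan is to argue that if $X_i$ contributes two new edges to the chain --- namely $(a_{i-1}, b_i)$ and $(a_i, b_{i+1})$ with the relevant endpoints in $R_i = \rho(X_i)$ --- and one of these chain-vertices, say $a_i$, lies in the new extreme set $U$, then the ``partner'' vertex $b_i$ of $X_i$ must also lie in $U$. The natural approach is a submodularity/uncrossing argument against $X_i \cap U$ and $X_i \setminus U$, exactly in the style of Lemma~\ref{lem:u-cross-x}, but now using the fact that $X_i$ carries precisely two new edges and tracking where their endpoints go.

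First I would set up the contradiction: assume $b_i \notin U$. Since $a_i \in U$ (and $a_i, b_i \in R_i \subseteq X_i$), the sets $X_i \cap U$ and $X_i \setminus U$ are both nonempty; and since $U$ is a Steiner cut while $X_i$ has a nonempty terminal projection, I'd check that $X_i \cap U$ contains a terminal (it contains $a_i$ if $a_i$ is a terminal, or more carefully, I'd invoke Corollary~\ref{cor:ct-uncross-terminal}-type reasoning or simply the structural setup that the chain endpoints are chosen inside the terminal sets $R_i$) so that both $X_i \cap U$ and $X_i \setminus U$ are Steiner cuts, hence proper Steiner subsets of the extreme set $X_i$. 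That gives $d(X_i \cap U) > d(X_i)$ and $d(X_i \setminus U) > d(X_i)$ by \Cref{def:extreme}. Then applying submodularity~\eqref{eq:submodularity-2} to the pre-augmentation cut values, together with the fact that $U$ is a \emph{new} extreme set (so $d'(U) < d'(Y)$ for every proper Steiner subset $Y$ of $U$, while $d(W) \le d(U)$ for its violator $W$), I would chase the same chain of inequalities as in Lemma~\ref{lem:u-cross-x}: comparing $d'(X_i \cap U)$ and $d'(X_i \setminus U)$ against $d'(U)$, and bounding the increments $d_F(\cdot)$ by the at-most-two new edges incident to $X_i$. The key numeric point is that if $a_i \in U$ but $b_i \notin U$, then at least one of the two new edges of $X_i$ (the one incident to $b_i$) crosses the partition $(U \cap X_i, X_i \setminus U)$ while possibly not crossing $U$ itself, which forces $d_F(X_i \setminus U)$ or $d_F(X_i \cap U)$ to exceed what the ``$U$ new extreme'' inequalities permit --- contradiction. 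Concretely I expect the arithmetic to collapse to ``$d_F$ of a proper Steiner subset of $U$ must be $\ge d_F(U)+$ something, but both are bounded by $2$,'' pinning down that $b_i$ cannot escape $U$.

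The main obstacle I anticipate is bookkeeping the new edges of $X_i$ relative to the three regions $U \cap X_i$, $X_i \setminus U$, and $V \setminus (U \cup X_i)$, and simultaneously relative to $U$ versus $V \setminus U$ --- in particular ruling out the degenerate possibility that $d_F(U) = 0$ (so that $U$ gains no new edges at all yet still becomes extreme), which would need to be handled as in the final paragraph of the proof of \Cref{thm:chain-no-new-extreme}. I would treat the $d_F(U) \in \{0,1\}$ cases in parallel (as in Lemma~\ref{lem:u-cross-x}), using that $U$'s violator $W$ satisfies $d(W) \le d(U)$ but $d'(W) > d'(U)$, hence $d_F(W) > d_F(U)$, to squeeze the increments. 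A secondary subtlety is confirming $X_i \cap U$ is genuinely a Steiner cut (not terminal-free); I'd lean on the chain construction, where $a_i$ and $b_i$ are terminals of $R_i$ by definition of an augmentation chain, so $a_i \in X_i \cap U$ is a terminal and the issue disappears.
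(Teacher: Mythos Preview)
Your approach is morally the same as the paper's, but you are working much harder than necessary. The paper's proof is three lines: assume $b_i\notin U$; observe that then $U$ crosses $X_i$ \emph{in terminal} (since $a_i\in U\cap X_i$ and $b_i\in X_i\setminus U$ are terminals, and $\rho(U\setminus X_i)\ne\emptyset$ because, as established earlier in the proof of \Cref{thm:chain-no-new-extreme}, no extreme set can contain all terminals of $U$); and then simply \emph{cite} \Cref{lem:u-cross-x}, whose conclusion is precisely that both new edges across $X_i$ have both endpoints in $U$, forcing $b_i\in U$. You already have \Cref{lem:u-cross-x} available, so there is no need to rerun its submodularity bookkeeping inline.

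One small misdirection in your plan: you worry about whether $X_i\cap U$ is a Steiner cut, but that is immediate from $a_i\in R_i$. The condition that actually needs a sentence of justification is $\rho(U\setminus X_i)\ne\emptyset$, which you do not address; this is exactly where the paper invokes the contextual fact (from the surrounding proof of \Cref{thm:chain-no-new-extreme}) that $U$ contains an edge $e_0$ spanning two maximal supreme sets, so $\rho(U)$ cannot be contained in any single $X_i$.
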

\begin{proof}
Assume for contradiction that $b_i\notin U$. Then $U$ crosses $X_i$. ($a_i, b_i$ are terminals in $U\cap X_i$ and $X_i\setminus U$. $U\setminus X_i$ is also a Steiner cut because no extreme set can contain all terminals in $U$.) By Lemma \ref{lem:u-cross-x}, $b_i\in U$, a contradiction.
\end{proof}

\begin{lemma}
\label{lem:mincut-uncross-extreme}
A Steiner min cut cannot cross any extreme set.
\end{lemma}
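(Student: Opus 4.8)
The plan is to argue by contradiction. Suppose a Steiner min cut $M$ --- so $d(M)=\lambda$, the minimum value of any Steiner cut, and $d(Z)\ge\lambda$ for every Steiner cut $Z$ --- crosses an extreme set $X$, meaning $M\cap X$, $M\setminus X$, and $X\setminus M$ are all nonempty. The entire argument takes place inside the submodularity/posimodularity inequalities of \Cref{fact:submodularity}; the only genuine work is to certify, for each auxiliary set that appears, whether it is a Steiner cut: a Steiner cut strictly contained in $X$ lets me invoke extremeness of $X$, while any Steiner cut at all lets me invoke $d(\cdot)\ge\lambda$. The right bookkeeping device is the projection $\rho$ onto terminals, since $Y$ is a Steiner cut exactly when $\emptyset\ne\rho(Y)\ne T$, and $\rho$ commutes with $\cap$, $\cup$, and $\setminus$.

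I would split on whether $X\setminus M$ contains a terminal. \textbf{Case 1: $X\setminus M$ contains a terminal.} Then $X\setminus M$ is a Steiner cut, and it is a proper subset of $X$ since $M\cap X\ne\emptyset$; extremeness gives $d(X\setminus M)>d(X)$, and posimodularity~(\ref{eq:submodularity-2}) then forces $d(M\setminus X)<d(M)=\lambda$. I next examine the nonempty set $M\setminus X$: if it is a Steiner cut, then $d(M\setminus X)\ge\lambda$ is an immediate contradiction; otherwise $M\setminus X$ contains no terminal (it cannot contain all of $T$, being a subset of $M$), so $\rho(M)=\rho(M\cap X)\subseteq\rho(X)$, whence $M\cap X$ is a Steiner cut properly inside $X$. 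Extremeness gives $d(M\cap X)>d(X)$, then (\ref{eq:submodularity-4}) gives $d(M\cup X)<d(M)=\lambda$, and $\rho(M\cup X)=\rho(X)$ is neither empty nor $T$, so $M\cup X$ is a Steiner cut --- contradicting $d(M\cup X)\ge\lambda$.

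\textbf{Case 2: $X\setminus M$ contains no terminal.} Then every terminal of $X$ lies in $M$, so $\rho(M\cap X)=\rho(X)$; hence $M\cap X$ is a Steiner cut, and it is a proper subset of $X$ since $X\setminus M\ne\emptyset$. Extremeness gives $d(M\cap X)>d(X)$, and (\ref{eq:submodularity-4}) gives $d(M\cup X)<d(M)=\lambda$. Since $\rho(X)\subseteq\rho(M)$ in this case, $\rho(M\cup X)=\rho(M)$ is neither empty nor $T$, so $M\cup X$ is a Steiner cut, forcing $d(M\cup X)\ge\lambda$ --- a contradiction. Both cases close, completing the proof.

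The step I expect to be fussiest is the projection bookkeeping in Case 1: deducing that $M\setminus X$ carries no terminal, and from that that $\rho(M\cup X)\ne T$, so that the set fed into the min-cut bound is genuinely a Steiner cut. Everything else is a direct application of \Cref{fact:submodularity} and \Cref{def:extreme}, and the proof is short.
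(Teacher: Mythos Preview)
Your proof is correct and takes essentially the same approach as the paper's: both use posimodularity when the two set differences each contain a terminal, and use extremeness of $X$ on $X\cap M$ plus submodularity~(\ref{eq:submodularity-4}) to bound $d(X\cup M)$ when one difference lacks a terminal. The only difference is organizational --- you split first on $\rho(X\setminus M)$ and then (in Case~1) on $\rho(M\setminus X)$, while the paper splits at once on whether \emph{both} differences contain terminals --- and the paper phrases the final contradiction as ``$d(X\cup M)<\lambda$ forces $\rho(X\cup M)=T$'' whereas you phrase it as the contrapositive ``$\rho(X\cup M)\ne T$ forces $d(X\cup M)\ge\lambda$''.
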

\begin{proof}
Assume for contradiction that a Steiner min cut $Y$ crosses an extreme set $X$.

If $\rho(X\setminus Y)\ne\emptyset$ and $\rho(Y\setminus X)\ne\emptyset$, we have $d(X\setminus Y)>d(X)$ by extreme property of $X$, and $d(Y\setminus X)\ge d(Y)$ because $Y$ is a Steiner min cut. Adding them contradicts (\ref{eq:submodularity-2}) of \Cref{fact:submodularity}.

The remaining case is that $X\setminus Y$ or $Y\setminus X$ contains no terminal. Because $X$ and $Y$ are Steiner cuts, $\rho(X\cap Y)\ne\emptyset$. $d(X\cap Y)>d(X)$ by extreme property of $X$. By submodularity $d(X\cup Y)<d(Y)$, so $X\cup Y$ is not a Steiner cut and $\rho(X\cup Y)=T$. Because $X$ and $Y$ are Steiner cuts, they cannot contain all the terminals. So $\rho(X\setminus Y)\ne\emptyset, \rho(Y\setminus X)\ne\emptyset$, which contradicts the assumption.
\end{proof}

\subsection{Augmenting Connectivity by One via Random Matchings}
\label{sec:augment-1}
After adding as many augmentation chains as possible, there is no extreme set with demand at least 2. If all demands are at most 0, the solution is complete. The remaining case is that there are extreme sets with demand 1, or equivalently cut value $\tau-1$. This means the Steiner minimum cut is now $\tau-1$. This section solves the subproblem that we augment Steiner connectivity by 1, from $\tau-1$ to $\tau$. The new edges need to augment or cross every Steiner mincut.

The optimal external augmentation is adding a star connecting $x$ and every demand-1 extreme sets. (These extreme sets must be maximal and disjoint.) Contract these extreme sets, and let $K$ be the set of contracted extreme sets. In the splitting off view, the optimal solution is a perfect matching of $K$. (If $k=|K|$ is odd, we match an arbitrary node twice.)

\subsubsection{Cactus Structure}

In global connectivity setting, previous algorithms construct the matching using the cactus structure of global minimum cuts. A cactus is a graph where any two simple cycles intersect in at most one node. A min cut cactus is a weighted cactus where the nodes correspond to a partition of the vertices of the original graph, and every min cut of the original graph is represented by a min cut with the same vertex partition in the cactus. The demand-1 extreme sets are minimal min cuts, so they are represented by all degree-2 nodes of the cactus. Given the min cut cactus, one can construct in $O(n)$ time a matching on all leaves to augment every min cut \cite{NaorGM97}.

In contrast, Steiner min cuts do not form such a neat structure. In the special case of $T=\{s, t\}$, the Steiner min cuts are just $s$-$t$ min cuts, which can only be represented by a directed acyclic graph (DAG). Fortunately, when projected onto terminals, the Steiner min cuts also admit a cactus structure. This is known as the `skeleton' part of connectivity carcass structure \cite{DinitzV94}.

Calculating the cactus structure of Steiner min cuts on terminals can be time-consuming. We circumvent the construction, and only exploit its existence to bound the success probability of our random matching algorithm. The key lemma is as follows.

\begin{lemma}
\label{lem:cyclic-order}
There exists a cyclic order of $K$ such that any Steiner min cut partitions the order into two intervals.
\end{lemma}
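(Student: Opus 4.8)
The plan is to establish the cyclic order via the cactus structure of Steiner min cuts on terminals, whose existence is asserted in the text (the ``skeleton'' of the connectivity carcass \cite{DinitzV94}). Recall that $K$ is the set of contracted demand-$1$ extreme sets, which are exactly the minimal Steiner min cuts of value $\tau-1$. Since these sets are minimal Steiner min cuts, by \Cref{lem:mincut-uncross-extreme} they do not cross any extreme set, hence in particular they are pairwise disjoint (two crossing minimal Steiner min cuts would, by submodularity/posi-modularity, produce a smaller Steiner min cut contradicting minimality). So contracting them yields a well-defined multigraph on node set $K$ together with some non-terminal vertices, and every terminal lies inside exactly one node of $K$.

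First I would invoke the skeleton theorem: the projections onto terminals of all Steiner min cuts of a graph are exactly the cuts of a cactus $H$ (a graph in which any two simple cycles share at most one vertex), where the cactus nodes partition $K$ (after also absorbing non-terminal mass appropriately), and every Steiner min cut corresponds bijectively (on the terminal side) to a min cut of $H$. Each minimal Steiner min cut corresponds to a minimal cut of $H$; in a cactus these are precisely given by the two sides obtained from removing a single tree edge or a single pair of cycle edges, and each minimal cut ``cuts off'' a single node or a contiguous arc of a cycle. The standard fact about cacti is that one can perform an Euler-tour-style traversal of $H$ that visits its nodes in a cyclic sequence such that every cut of the cactus separates this sequence into two contiguous intervals; restricting this sequence to the nodes that actually contain elements of $K$ gives the desired cyclic order.

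Concretely, the key steps in order are: (1) show the elements of $K$ are pairwise disjoint Steiner min cuts using \Cref{lem:mincut-uncross-extreme} and posi-modularity (\ref{eq:submodularity-2}); (2) state the skeleton/cactus representation of minimum Steiner cuts projected onto $T$, citing \cite{DinitzV94}; (3) recall the classical cyclic-interval property of cacti — fix a planar-like embedding / Euler tour of the cactus and order its nodes so that every cactus cut is an interval — and transfer it through the correspondence between Steiner min cuts and cactus cuts; (4) push the order down to $K$ by restricting to nodes containing members of $K$ and noting that a Steiner min cut's terminal projection, being a union of cactus nodes forming one side of a cactus cut, meets $K$ in a contiguous sub-interval. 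The conclusion is exactly the statement: there is a cyclic order of $K$ in which every Steiner min cut induces a bipartition into two arcs.

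The main obstacle is step (2)–(3): making the reduction to the cactus clean without reproving the connectivity carcass / skeleton theorem. The excerpt explicitly says we ``circumvent the construction, and only exploit its existence,'' so I would treat the skeleton cactus representation as a black-box citation and be careful only about the logical transfer — namely that (i) every member of $K$ lies in a distinct cactus node (equivalently, that no Steiner min cut separates parts of a single $K$-element, which holds because each element of $K$ is itself a Steiner min cut and, being minimal, is not crossed by any other Steiner min cut), and (ii) the cyclic-interval property of an abstract cactus is a purely combinatorial fact about its cycle/tree structure that I can cite or prove in a sentence via the Euler tour. A secondary subtlety is handling the case where some cactus node contains no element of $K$ (a ``Steiner'' node of the cactus); these are simply skipped in the restriction, and since every min cut's projection is a union of whole cactus nodes, skipping them preserves the interval property.
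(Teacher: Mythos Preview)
Your proposal is correct and follows essentially the same route as the paper: invoke the skeleton cactus of Steiner min cuts on terminals from \cite{DinitzV94}, take an Euler tour of the cactus, and restrict to the nodes carrying elements of $K$ (which the paper identifies as the degree-$2$ cactus nodes), so that any Steiner min cut---having value $2$ in the cactus---cuts the tour in two edges and yields two intervals. The paper's proof is just the terse three-sentence version of exactly this argument; your steps (1) and the disjointness discussion are redundant given the setup, but not wrong.
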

\begin{proof}
Take any Euler tour\footnote{An Euler tour is a (not necessarily simple) cycle that contains every edge of the graph exactly once. Here, the graph is the underlying cactus of the cactus structure.} of a cactus structure of Steiner min cuts on terminals. Take the subsequence of all degree-2 nodes. Any Steiner min cut has value $2$ so it cuts the Euler tour in two edges, breaking it into two intervals. Restricted to the degree-2 nodes, it also partitions the cyclic order into two intervals.
\end{proof}

\begin{algorithm2e}[t]
\caption{Augment Steiner connectivity by 1.}
\label{alg:augment-1}
\SetKwInOut{Input}{Input}
\SetKwInOut{Output}{Output}
\setcounter{AlgoLine}{0}
\Input{Graph $G$, terminal set $T$, demand-1 extreme sets family $K$ as a partition of $T$.}
\smallskip
Contract every set in $K$ to be a node. (When adding edge to a node, add to an arbitrary terminal in the node.)\\
\If{$|K|$ is odd}{Pick arbitrary $u,v\in K$, add edge $\{u, v\}$, and remove $u$ from $K$.}
\While{$|K|\ge 4$}{
\Repeat{$M$ is a feasible partial solution\label{line:feasible}}
{Randomly match $\lfloor \frac{|K|}{4}\rfloor$ pairs of nodes in $K$. Let $M$ be the partial matching.}
Remove endpoints in $M$ from $K$.
}
Match the last two nodes of $K$.
\end{algorithm2e}

\subsubsection{Algorithm Description and Correctness}
As presented in \Cref{alg:augment-1}, the algorithm runs in phases, each phase adding a partial matching. A matching of $K$ is feasible if it augments every Steiner min cut. Call a partial matching feasible if it is contained in a feasible matching of $K$. Because we always end a phase with a feasible partial solution, there always exists a feasible matching containing the current partial solution. This invariant holds when $|K|$ is reduced to 2, in which case matching the last two nodes is the only solution and must be a feasible solution.

The ignored detail about checking whether $M$ is a feasible partial solution in \Cref{line:feasible} is explained in \Cref{lem:matching-feasible}.

\begin{lemma}
\label{lem:matching-feasible}
Form a new graph $G'$ by adding a node $x$ and adding a star connecting $x$ and $K\setminus V(M)$. $M$ is a feasible partial solution if and only if the Steiner min cut of $G'$ is $\tau$.
\end{lemma}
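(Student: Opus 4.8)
The plan is to establish the two directions of the ``if and only if'' by relating the Steiner connectivity of $G'$ to feasibility of $M$ as a partial solution. First recall the setup: after the augmentation-chain phase, every Steiner min cut of $G$ has value $\tau-1$, and the demand-1 extreme sets (the members of $K$, after contraction) are exactly the minimal Steiner min cuts, which are maximal and pairwise disjoint. A matching $N$ of $K$ is \emph{feasible} if adding its edges raises the Steiner connectivity to $\tau$, i.e.\ raises every Steiner min cut value by at least $1$. The partial matching $M$ is a \emph{feasible partial solution} if $M$ extends to some feasible matching $N$ of $K$.

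For the forward direction, suppose $M$ is a feasible partial solution, so it extends to a feasible matching $N$. The edges of $N\setminus M$ are a perfect matching on $K\setminus V(M)$ (up to the parity fix). In $G'$ we have instead attached a star from $x$ to every node of $K\setminus V(M)$. The key observation is that splitting off $x$ in $G'$ can recover exactly the matching $N\setminus M$ on $K\setminus V(M)$: since $G'$ has Steiner connectivity at least $2$ (indeed the min Steiner cut of $G$ is $\tau-1\ge 1$ and each node of $K\setminus V(M)$ gains degree at least~$1$ to~$x$), $x$ is not incident to a cut edge, and by the parity/Mader argument used throughout \Cref{sec:splitoffview} we may split off $x$ along any desired pairing of its edges while preserving all pairwise (hence Steiner) connectivities. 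Splitting along $N\setminus M$ shows that the Steiner connectivity of $G'$ equals the Steiner connectivity of $G+M+(N\setminus M) = G+N$, which is $\tau$ by feasibility of $N$. (It cannot exceed $\tau$ because every Steiner min cut of $G$ has value $\tau-1$ and is crossed by at most one edge incident to $x$ per endpoint-pair in $K$; more simply, $d(x)$ parallels of a degree-$(\tau-1)$ cut raise it by at most the number of $x$-edges leaving it, and one checks a minimal Steiner min cut of $G$ still has an $x$-edge profile capping its new value at $\tau$ — I would spell this out using that the members of $K$ are disjoint.) Hence the Steiner min cut of $G'$ is $\tau$.

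For the reverse direction, suppose the Steiner min cut of $G'$ is $\tau$. Then $x$ is not incident to a cut edge in $G'$ and, after the usual parity adjustment, Mader's theorem (\Cref{thm:mader}) lets us split off $x$ completely, preserving all pairwise connectivities among $V\setminus\{x\}$; the resulting edge set on $K\setminus V(M)$ is a matching $N'$ (each node of $K\setminus V(M)$ received exactly one $x$-edge, so each is split exactly once), and $G+M+N'$ has the same Steiner connectivity as $G'$, namely $\tau$. Thus $N := M\cup N'$ is a feasible matching of $K$ extending $M$, so $M$ is a feasible partial solution.

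The main obstacle I anticipate is the bookkeeping in the ``$\le\tau$'' part of the forward direction: one must argue that attaching a star (rather than a matching) to $K\setminus V(M)$ does not overshoot — that the Steiner connectivity of $G'$ is \emph{exactly} $\tau$ and not more. This requires using that the nodes of $K$ are disjoint minimal Steiner min cuts of value $\tau-1$ in $G$, so each such cut $X$ receives exactly one $x$-edge (if $X\in K\setminus V(M)$) or zero (if $X\in V(M)$, in which case it is crossed only by $M$-edges, again at most the number of matched endpoints it contains), which caps $d_{G'}(X)\le\tau$; combined with the forward split-off argument giving $d_{G'}\ge\tau$ on every Steiner cut, equality follows. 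The rest is a direct application of Mader's splitting-off theorem exactly as in \Cref{sec:splitoffview}.
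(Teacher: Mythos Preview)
Your reverse direction (Steiner min cut of $G'$ equals $\tau$ $\Rightarrow$ $M$ feasible) is correct and is exactly the paper's argument: apply Mader's theorem to split off $x$ and obtain a matching $N'$ on $K\setminus V(M)$ such that $M\cup N'$ is feasible.

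Your forward direction, however, has a genuine gap. You write that ``we may split off $x$ along any desired pairing of its edges while preserving all pairwise (hence Steiner) connectivities.'' Mader's theorem says no such thing: it guarantees the \emph{existence} of a connectivity-preserving split, not that an arbitrary prescribed pairing (here $N\setminus M$) preserves connectivity. So as written, the step ``splitting along $N\setminus M$ shows the Steiner connectivity of $G'$ equals that of $G+N$'' is unjustified. The conclusion you want is still true, but for a much more elementary reason that does not involve Mader at all: $G'$ is obtained from $G+N$ by replacing each edge $(u,v)\in N\setminus M$ by the two edges $(x,u),(x,v)$, and for any Steiner cut $S$ with $x\notin S$ this replacement never decreases $d(S)$ (it adds $2$ if $u,v\in S$, and is neutral otherwise). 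Hence every Steiner cut of $G'$ has value at least its value in $G+N$, which is at least $\tau$. This is precisely the mechanism behind \Cref{lem:external-lower-bound}.

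For comparison, the paper proves the forward direction by the contrapositive, which is cleaner: if the Steiner min cut of $G'$ is not $\tau$, it must be $\tau-1$ (at least $\tau-1$ since adding edges cannot decrease cuts; at most $\tau$ by the degree cut of any node of $K$). Take a Steiner cut $S$ of value $\tau-1$ with $x\notin S$; since $d_G(S)\ge\tau-1$ already, no star edge crosses $S$, so $K\setminus V(M)\subseteq V\setminus S$. Then any matching on $K\setminus V(M)$ lies entirely outside $S$ and cannot augment it, so no extension of $M$ is feasible. This avoids both Mader and the un-splitting computation. Finally, the ``$\le\tau$'' part you flag as the main obstacle is in fact the trivial part: any $u\in K$ has $d_{G'}(u)=\tau$ (degree $\tau-1$ in $G$, plus exactly one new edge from $M$ or from the star), so the Steiner min cut of $G'$ is at most $\tau$.
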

\begin{proof}
If the Steiner min cut of $G'$ is $\tau$, by Mader's splitting off theorem \ref{thm:mader} we can split off $x$ to get a feasible matching of $K$.

If the Steiner min cut is not $\tau$, it must be $\tau-1$. (The Steiner min cut of $G'$ is at least $\tau-1$ because the Steiner min cut of $G$ is $\tau-1$. The Steiner min cut of $G'$ is at most $\tau$ because the degree cut of any $u\in K$ is $\tau$.) Let $(S,\Sbar)$ be a Steiner cut of value $\tau-1$. Without loss of generality, assume $x\in S$. Because $S$ has cut value at least $\tau-1$ in $G$, there can be no star edge across the cut. Therefore all unmatched nodes are in $S$, and the cut $S$ will not be augmented for any matching containing $M$. 
\end{proof}

\subsubsection{Running Time}

The algorithm runs in phases, each phase reduce $|K|$ from $2t$ to $2t-2\lfloor \frac{2t}{4}\rfloor \le t+1$ ($|K|$ is always even), so the algorithm runs $O(\log n)$ phases. Each phase makes several trials until getting a feasible partial solution. By \Cref{lem:matching-success-prob}, the success probability is at least $\frac 13$, so each phase ends in $O(\log n)$ trials whp. In each trial, sampling the random matching takes $O(n)$ time. Checking the feasibility calls a Steiner min cut according to \Cref{lem:matching-feasible}, which runs in poly-logarithmic many max flows \cite{LiP20deterministic}. In conclusion, the total running time is $\tO(F(m,n))$.

\begin{lemma}
\label{lem:matching-success-prob}
The success probability that $M$ is a feasible partial matching is at least $\frac 13$.
\end{lemma}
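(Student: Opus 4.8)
The plan is to exploit the cyclic order on $K$ from \Cref{lem:cyclic-order} and reduce ``$M$ is a feasible partial solution'' to a purely combinatorial condition on the cycle that is easy to analyze probabilistically. Fix a cyclic order $v_0,\dots,v_{k-1}$ of $K$ (with $k=|K|$ even and $k\ge 4$) in which every Steiner min cut, intersected with $K$, is an arc of the cycle. Call a partial matching $M'$ of $K$ \emph{decomposable} if some proper nonempty arc $A$ is a union of edges of $M'$, i.e.\ every vertex of $A$ is matched by $M'$ to another vertex of $A$. The crux is the claim that \emph{if $M$ is not decomposable, then $M$ is a feasible partial solution}. Granting this, $\Pr[M\text{ is a feasible partial solution}]\ge\Pr[M\text{ not decomposable}]$, so it suffices to bound the latter below by $\tfrac13$.

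To prove the claim I would exhibit an explicit feasible completion of $M$. Let $U=K\setminus V(M)$; since $M$ has $m:=\lfloor k/4\rfloor$ edges, $|U|=k-2m$ is even and at least $k/2\ge 2$. Write $U=\{u_0,\dots,u_{2\ell-1}\}$ in the induced cyclic order and set $M_U=\{\{u_i,u_{i+\ell}\}:0\le i<\ell\}$ (antipodal pairing, indices mod $2\ell$); let $M^\ast=M\cup M_U$. I claim $M^\ast$ crosses every proper nonempty arc $A$ of $K$, hence every Steiner min cut, so $M^\ast$ is a feasible matching containing $M$ (consistent with the certification of \Cref{lem:matching-feasible}). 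Since $A$ is an arc and $U\subseteq K$, the set $A\cap U$ is an arc of the $2\ell$-cycle $U$, and there are three cases. If $A\cap U$ is a nonempty proper arc of $U$, then replacing $A$ by its complement if necessary we may assume $|A\cap U|\le\ell$, so $A\cap U$ contains no antipodal pair and some edge of $M_U$ crosses $A$. If $A\cap U=\emptyset$, then every vertex of $A$ is $M$-matched, and since $M$ is not decomposable $A$ is not a union of $M$-edges (automatic when $|A|$ is odd), so some $M$-edge crosses $A$. If $A\cap U=U$, apply the second case to the complementary arc. This settles the claim.

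It remains to bound $\Pr[M\text{ not decomposable}]$. If $M$ is decomposable, some proper arc of even size $2j$ with $1\le j\le m$ is a union of $M$-edges; there are exactly $k$ arcs of each size $2j<k$, and a direct count of matchings shows that a fixed size-$2j$ arc is a union of $M$-edges with probability $\binom{m}{j}/\binom{k}{2j}$, so
\[
\Pr[M\text{ decomposable}]\ \le\ \sum_{j=1}^{m}\frac{k\binom{m}{j}}{\binom{k}{2j}}.
\]
The $j=1$ term equals $\tfrac{2m}{k-1}\le\tfrac{k}{2(k-1)}\le\tfrac23$ (with equality only at $k=4$), and the $j\ge2$ terms total $O(1/k)$; a calculation then shows this sum is at most $\tfrac23$ for every even $k\ge4$ other than $k=8$, while a direct enumeration for $k=8$ gives decomposability probability $\tfrac{18}{35}$. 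In all cases $\Pr[M\text{ not decomposable}]\ge\tfrac13$, tight at $k=4$ (there the only non-decomposable matchings consist of a single diameter of the $4$-cycle), which is what we want. I expect this last step to be the main obstacle: the union bound is essentially tight at $k=4$ and is in fact slightly too weak at $k=8$ (it evaluates to $\tfrac{24}{35}>\tfrac23$), so for the handful of small $k$ one must either enumerate the $O(1)$ obstructions directly or retain the first inclusion–exclusion correction to the $j=1$ term (which is exact when $m\le 2$), after which the crude tail estimate comfortably suffices for all larger $k$.
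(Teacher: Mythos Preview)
Your reduction to a cyclic combinatorial condition is correct and matches the paper's: both arguments rest on the implication ``$M$ infeasible $\Rightarrow$ some proper nonempty arc of $K$ is a union of $M$-edges'' (your ``decomposable''). Your proof of the implication, via the explicit antipodal completion $M_U$, is more constructive than the paper's, which instead invokes \Cref{lem:matching-feasible} (hence Mader's theorem) to produce the bad arc directly from infeasibility.

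Where the two proofs diverge is the probability estimate. The paper does \emph{not} use a union bound; it views the sampling of $M$ as a sequential process and lets $A_i$ be the event that the $i$-th pair is non-adjacent in the cycle on the currently unmatched vertices. Then $\Pr[A_i\mid A_1,\dots,A_{i-1}]=\frac{k-2i-1}{k-2i+1}$, the product telescopes to $\frac{k-2\lfloor k/4\rfloor-1}{k-1}\ge\tfrac13$, and one checks that if all $A_i$ occur then $M$ is not decomposable (the last pair matched inside any fully-matched arc would be adjacent at the time it is matched). This gives a clean, uniform bound with no case analysis. Your union bound $\sum_{j=1}^{m} k\binom{m}{j}/\binom{k}{2j}$ is correct and, as you note, fails only at $k=8$ (your direct count $18/35$ is right); this is fine but less elegant, and the ``a calculation then shows'' step still needs a short monotonicity/geometric-tail argument to cover all $k\ge 10$. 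In short: your feasibility direction is nicer, but the paper's telescoping trick for the probability is the cleaner move and avoids the small-$k$ patching entirely.
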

\begin{proof}
Generate $M$ by picking one uniformly random unmatched pair of nodes at each step. In each step, match the picked pair, and remove them from the cyclic order given by \Cref{lem:cyclic-order}. Let $|K|=k$ initially, then $k-2i$ unmatched nodes remain after the $i$-th step. 

Let $A_i$ be the event that in the $i$-th step, the picked pair $(u,v)$ is not adjacent in the cyclic order. $\Pr[A_i]=\frac{k-2i-1}{k-2i+1}$. Because each step is independent,
\[\Pr[A_1\land A_2\land\ldots \land A_{\lfloor k/4\rfloor}]= \frac{k-3}{k-1}\times \frac{k-5}{k-3}\times \ldots \times \frac{k-2\lfloor k/4\rfloor-1}{k-2\lfloor k/4\rfloor+1} = \frac{k-2\lfloor k/4\rfloor-1}{k-1}\ge \frac 13\]
when $k\ge 4$.

Finally, we prove that when no step matches adjacent nodes, the partial solution is feasible. Assume for contradiction that all events $A_i$ happen, but the output partial matching is not feasible. Then by \Cref{lem:matching-feasible}, there exists a Steiner min cut $(S,\Sbar)$ such that all leaves in its one side $S$ are matched. Consider the last matched leaf $u$ in $S$. Assume $u$ is matched with $v$. Because cut $(S,\Sbar)$ is not augmented, $v\in S$. $S$ forms an interval in the cyclic order by \Cref{lem:cyclic-order}. $u, v$ are the last two matched nodes in $S$, so they are adjacent in cyclic order when matched, which contradicts the assumption.
\end{proof}

\section{Algorithm for Steiner Splitting-Off}
\label{sec:splitting}
In this section, we generalize our Steiner augmentation algorithm to solve the Steiner splitting-off problem.
We assume there is no cut edge incident to $x$ and the degree of $x$ is even, which a standard assumption for splitting-off theorems including Lov\'asz's theorem and Mader's theorem.

The splitting-off problem reduces to a degree-constrained external augmentation problem by removing the external vertex $x$ and setting the degree constraint $\beta(u)$ to be the weight of removed $(u,x)$ edge for each vertex $u$. The connectivity target $\tau$ of the degree-constrained problem is the Steiner connectivity in the original graph. 

The algorithm follows the same framework as \Cref{sec:augmentation}. However, degree constraints introduce a significant new complication: that we now need to add edges incident to nonterminals. (Recall that in the Steiner connectivity augmentation algorithm, all edges added to the graph were incident to terminals only.)  As before, we have three parts:
\begin{enumerate}
    \item[(a)] For external augmentation, we use maximum flow calls on suitably defined auxiliary graphs for each heavy path in a heavy-light decomposition of the trees of the supreme sets forest. The external edges are tied to the heavy paths, and this allows us to track the contribution of edges connecting nonterminals. (\Cref{sec:deg-external})
    \item[(b)] Next, we adapt the definition of augmentation chains to incorporate degree constraints in a way that retains its key properties and allows us to design a greedy algorithm. (\Cref{sec:deg-splitting})
    \item[(c)] Finally, for augmenting Steiner connectivity from $\tau-1$ to $\tau$, we incorporate the notion of {\em surrogates} in our random matching algorithm. (\Cref{sec:deg-augment-1})
\end{enumerate}

\subsection{Degree-Constrained External Augmentation}
\label{sec:deg-external}
In degree-constrained augmentation, we have a degree constraint $\beta(u)$ on each vertex $u$, and the aim is to augment Steiner connectivity to $\tau$ while maintaining that the degree of $u$ in the new edges is at most $\beta(u)$. Although our unconstrained augmentation algorithm only adds edges to terminals, in degree-constrained setting we may have to add to non-terminals as well because of the terminals' degree constraints. This makes the problem much harder because now a new edge does not augment all extreme sets in a supreme set, and there may be exponentially many crossing extreme sets in a supreme set.

In degree-constrained setting, we also define external augmentation, that is inserting a new vertex $x$ and adding edges from $V$ to $x$, such that each vertex $u\in V$ has at most $\beta(u)$ new edges, and the Steiner connectivity of $T$ is augmented to at least $\tau$.
A degree-constrained external augmentation instance is feasible if there exists such an edge set. Equivalently, the instance is feasible if adding a $(u,x)$ edge of capacity $\beta(u)$ for each vertex $u$ augments the Steiner connectivity to at least $\tau$.
Because our degree constraints are generated from the splitting-off problem, we can assume the problem is feasible. The rest of the section constructs an optimal solution with minimum total weight.




Recall that in \Cref{sec:external}, for external augmentation without degree constraint, the optimal total weight is $\sum_{R\in L_1}\rdem(R)$, where the sets in $L_1$ are the terminal projection of maximal supreme sets. This is a lower bound on the value of degree-constrained external augmentation solution, because any feasible degree-constrained solution is also feasible for the unconstrained problem. Surprisingly, \Cref{lem:deg-external-optimal} shows that the optimal degree-constrained solution matches this lower bound regardless of the constraints, as long as the problem is feasible.

\eat{
\begin{lemma}
If the degree constraint augmentation problem is feasible, then there exists a feasible solution of value $\sum_{u\in L_1}\rdem(u)$. \alert{implied by \Cref{lem:deg-external-optimal}?}
\end{lemma}
\begin{proof}
Similar to \Cref{alg:external}, run a post-order traversal on the supreme sets tree $L$. When visiting a node $R\in L$, repeat the following step $\Delta_R=\rdem(R)-\sum_{w\in ch(R)}\rdem(W)$ times. Let the current $R$-$(T\setminus R)$ earliest min cut be $S$, $R\subseteq S$. Pick any vertex $u\in S$ with vacant degree and add a new edge $(u, x)$.
We add the same value of edges as \Cref{alg:external}, which is $\sum_{R\in L_1}\rdem(R)$.

First we prove that such a vacant $u$ always exists. Note that we only add edges when $\Delta_R>0$, or equivalently $\rdem(R)=\tau-c(R)$. Then $\super(R)$ has cut value $\tau-\rdem(R)$, but we only add $\rdem(R)$ edges into $\super(R)$ after visiting $R$, so before each step at $R$, $\super(R)$ has cut value less than $\tau$. $\super(R)$ is an $R$-$(T\setminus R)$ cut, so the min cut also has value less than $\tau$. Because the problem is feasible, any Steiner cut with cut value less than $\tau$ must have a vacant degree. In particular, there exists a vacant $u$ in the earliest $R$-$(T\setminus R)$ min cut.

Next we prove that each step when visiting $R$ increases the $R$-$(T\setminus R)$ min cut value by 1. Let $\lambda$ be the $R$-$(T\setminus R)$ min cut value before the step. For each $R$-$(T\setminus R)$ min cut before the step, its $R$-side contains $u$ because $u$ is in the $R$-side of the earliest one, so its cut value is increased to $\lambda+1$. For other $R$-$(T\setminus R)$ cuts, there cut value is at least $\lambda+1$ before the step, and adding an edge will not decrease cut value. Therefore the $R$-$(T\setminus R)$ min cut value is $\lambda+1$ after the step.

Notice that for a descendent $W$ of $R$,  $W\subsetneqq R$, so the $W$-side of $W$-$(T\setminus W)$ earliest min cut is a subset of the $R$-side of $R$-$(T\setminus R)$ earliest min cut. Therefore a same argument gives a stronger claim that when visiting a descendent $W$ of $R$, each new edge also increases the $R$-$(T\setminus R)$ min cut value by 1. It follows that after visiting $R$, the $R$-$(T\setminus R)$ min cut value is increased by $\rdem(R)$ in total.

Finally prove that all Steiner cuts are augmented to at least $\tau$, so the solution is feasible. We only need to prove for extreme sets; for other Steiner cuts consider their extreme violators. For any extreme set $X$ with terminal set $R$, $R$ is in the tree $L$, and the $R$-$(T\setminus R)$ min cut value is increased to $c(R)+\rdem(R)\ge \tau$ after visiting $R$. As a $R$-$(T\setminus R)$ cut, $X$ is also augmented to at least $\tau$.
\end{proof}
}

We restate the related notations here. A node $R\in L$ represents the terminal set of a supreme set. $\super(R)$ is the supreme set of terminal set $R$. The recursive demand of $R\in L$ is
$\rdem(R)=\max\{\tau-d(\super(R)), \sum_{W\in ch(R)} \rdem(W)\}$.

If some $R\in L$ has $d(\super(R))\ge\tau$, then all extreme sets with terminal set $R$ are already satisfied by extreme property of $\super(R)$, and $R$ can be contracted to its parent in $L$. After this step, we always have $\rdem(R)>0$ for $R\in L$.
Call a node $R$ {\it critical} if $$\rdem(R)=\tau-d(\super(R))>\sum_{W\in ch(R)}\rdem(W).$$ In particular, leaves are critical because $\rdem(R)>0$.

Next we describe the external augmentation algorithm. Apply heavy-light decomposition separately on each tree of the forest $L$. This partitions the forest into a vertex-disjoint 
collection of paths (called heavy paths). Define the depth of a heavy path $P$ to be the number of heavy paths intersecting the path from $P$ to the root of the tree containing $P$.
A key property of heavy-light decomposition is that the maximum depth of heavy paths is $O(\log n)$.
Process the heavy paths in a bottom-up manner by indexing the paths in non-increasing 
order of depth, and treating the paths one by one in that order. 

Let $G_\ell$ be the external augmentation graph after processing the first $\ell$ heavy paths. Initially $G_0=(V\cup\{x\}, E)$ is the original graph after adding an external vertex $x$.
In the $\ell$-th iteration, the algorithm adds external edges into $G_{\ell -1}$ to form a graph $G_\ell$, and the aim is to satisfy all extreme sets with terminal sets on the $\ell$-th heavy path $P_\ell$. To determine these new external edges, we run a max flow on an auxiliary graph $H_\ell$. The algorithm only adds edges, so all extreme sets will be satisfied after processing all heavy paths.

We construct a directed graph $H_\ell$ based on $G_{\ell -1}$ and run $R_1$-$x$ max flow on $H_\ell$, where $R_1$ is the leaf of the $\ell$-th heavy path $P_\ell$. In the construction, all extreme sets with terminal sets on $P_\ell$ are $R_1$-$x$ cuts, so their cut values on $H_\ell$ are at least the $R_1$-$x$ max flow value, which turns out to be $\tau$ by \Cref{lem:deg-external-flow-value}. We construct $G_\ell$ from $H_\ell$ such that these extreme sets have the same cut values in $G_\ell$ as their flow values in $H_\ell$ for the $R_1$-$x$ max flow, so these extreme sets are augmented to at least $\tau$. Let $P_\ell = (R_1^{(\ell)},R_2^{(\ell)},\ldots,R_{k_\ell}^{(\ell)})$, $R_1^{(\ell)}\subsetneqq R_2^{(\ell)}\subsetneqq\ldots\subsetneqq R_{k_\ell}^{(\ell)}$. For simplicity, we will drop the superscripts of $R_i^{(\ell)}$ in the rest of the section, as long as the path is clear. The following are the detailed steps for constructing $H_\ell$ from $G_{\ell -1}$, illustrated in \Cref{fig:deg-external}.
\begin{enumerate}
    \item Replace every undirected edge $(u,v)$ by two directed edges $(u,v)$ and $(v, u)$.
    \item Add an external vertex $y$.
    \item Add a directed edge from $u$ to $y$ with capacity $\beta(u)-w_{G_{\ell -1}}(u,x)$ for every $u\in V$, where $w_{G_{\ell -1}}$ is the edge capacity in $G_{\ell -1}$.
    \item  For each $1\le i\le k$, contract $R_i\setminus R_{i-1}$. Also contract $V\cup\{x\}\setminus \tilde{\super}(R_k)$. ($\tilde{\super}(\cdot)$ is the supreme sets in $\tilde{G}$, the graph after perturbation.) For the sake of analysis, we explain contraction as adding a cycle of infinite capacity connecting the contracted set, so that the vertices set of $H_\ell$ is $V\cup\{x,y\}$.
    \item  Add a directed edge of infinite capacity from $s_{i+1}$ to $s_i$ for each $1\le i\le k-1$, where $s_i$ is an arbitrary vertex in the contracted $R_i\setminus R_{i-1}$.
    \item  Add a directed edge from $y$ to $x$ of capacity $\Delta_\ell=\rdem(R_k)-\sum_{W\in B(R_k)} \rdem(W)$, where $B(R_k)$ is the set of roots of heavy paths in the subtree of $R_k$ except $R_k$ itself.
\end{enumerate} 

Run an $R_1$-$x$ max flow on $H_\ell$ to get an integer max flow $f_\ell$. 
Form $G_\ell$ by adding the flow values on $(u, y)$ edges to the capacity of $(u, x)$ edges in $G_{\ell -1}$ for each $u\in V$. That is, $w_{G_\ell}(u,x)=w_{G_{\ell -1}}(u,x)+f_\ell(u,y)$ for each $u\in V$.
\begin{figure}[t]
    \centering
    \includegraphics[width=.4\textwidth]{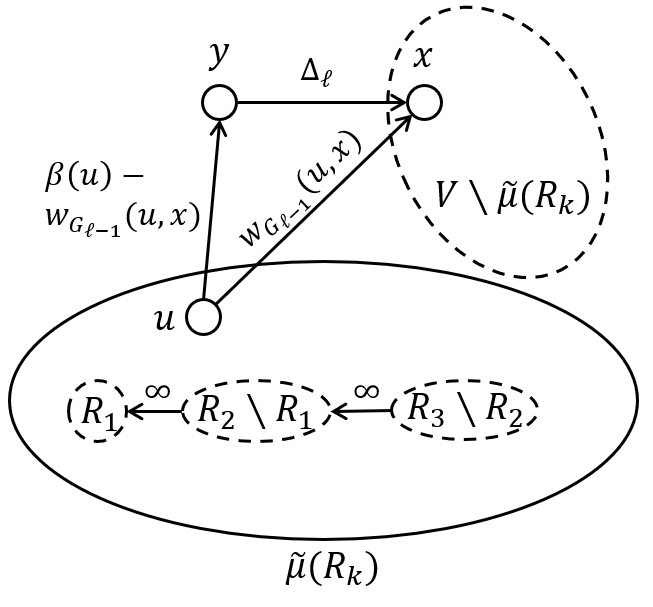}
    \caption{Construction of the flow network $H_\ell$. Dashed ovals are contracted sets.}
    \label{fig:deg-external}
\end{figure}

After processing all $p$ heavy paths, the resulting graph $G_p$ consists of the original graph $G$ and external edges connecting $V$ and $x$. Return these external edges as the output. We prove in \Cref{lem:deg-external-optimal} that this output is feasible and optimal. The proof is inductive and based on analysis of the flow $f_\ell$.

\begin{lemma}\label{lem:deg-external-flow-value}
Assume the external augmentation is feasible. Then in each $H_\ell$, the $R_1^{(\ell)}$-$x$ min cut value is $\tau$.
\end{lemma}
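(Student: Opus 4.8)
The plan is to prove the two inequalities ``min cut $\le \tau$'' and ``min cut $\ge \tau$'' separately, by induction on the processing order $\ell$, carrying along the natural inductive invariant on $G_{\ell-1}$: the external edges present in $G_{\ell-1}$ are exactly those added while processing the heavy paths lying strictly inside the subtrees of $R_1,\dots,R_k$, a heavy path rooted at $W$ having contributed total weight $\rdem(W)$ of external edges, all incident to $\widetilde{\super}(W)$; and the induced degree-constrained instance on $G_{\ell-1}$ is still feasible (in particular $w_{G_{\ell-1}}(u,x)\le\beta(u)$ for all $u$). Throughout I interpret each contraction as an infinite-capacity cycle on the contracted set, as in the construction, so that any finite $R_1$-$x$ cut $(S,\bar S)$ in $H_\ell$ keeps every contracted set on one side; the infinite edges $s_{i+1}\to s_i$ then force $S\cap\{s_1,\dots,s_k\}$ to be a prefix $\{s_1,\dots,s_j\}$ for some $1\le j\le k$, so $S\cap V$ is a Steiner cut whose terminal projection is $R_j$, sandwiched between $R_j$ and $\widetilde{\super}(R_k)\setminus(R_k\setminus R_j)$. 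Write $d_{H_\ell}(S)$ for the capacity of the directed cut leaving $S$.

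For the upper bound it suffices to exhibit one $R_1$-$x$ cut of value $\tau$. Unrolling the recursion for $\rdem$ down the heavy path gives $\rdem(R_k)=\max_{1\le i\le k}\bigl[\tau-c(R_i)+\sum_{j>i}\sum_{W\text{ an off-path child of }R_j}\rdem(W)\bigr]$; let $i^\star$ attain this maximum. I would take $S^\star=\widetilde{\super}(R_{i^\star})\cup\{y\}$. Since the supreme sets are laminar (\Cref{lem:supreme-laminar}) and $\widetilde{\super}(R_{i^\star})\cap T=R_{i^\star}$, the set $\widetilde{\super}(R_{i^\star})$ respects every contraction of $H_\ell$, so $S^\star$ is a legitimate cut side with $R_1\subseteq S^\star$ and $x\notin S^\star$. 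Because $y\in S^\star$, no edge $u\to y$ leaves $S^\star$, hence $d_{H_\ell}(S^\star)=d_{G_{\ell-1}}(\widetilde{\super}(R_{i^\star}))+\Delta_\ell$. By the inductive description of the external edges of $G_{\ell-1}$ we have $d_{G_{\ell-1}}(\widetilde{\super}(R_{i^\star}))=c(R_{i^\star})+\sum_{j\le i^\star}\sum_{W\text{ off-path child of }R_j}\rdem(W)$, and combining with the definitions of $\Delta_\ell$ and of $i^\star$ the sums telescope to give $d_{H_\ell}(S^\star)=\tau$ exactly.

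For the lower bound, take any finite $R_1$-$x$ cut $(S,\bar S)$ and the associated index $j$, and split on whether $y\in S$. If $y\notin S$, then the edges leaving $S$ are precisely the $G_{\ell-1}$-graph edges out of $S\cap V$ together with every edge $u\to y$ for $u\in S\cap V$; summing, the terms $w_{G_{\ell-1}}(u,x)$ cancel and $d_{H_\ell}(S)=d_{G}(S\cap V)+\sum_{u\in S\cap V}\beta(u)$, which is exactly the value of the Steiner cut $S\cap V$ in the graph obtained from $G$ by attaching a $(u,x)$-edge of capacity $\beta(u)$ to every vertex $u$; feasibility of the external-augmentation instance yields $d_{H_\ell}(S)\ge\tau$. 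If $y\in S$, then $d_{H_\ell}(S)=d_{G_{\ell-1}}(S\cap V)+\Delta_\ell$; here I would lower-bound $d_{G_{\ell-1}}(S\cap V)$ by passing, via submodularity (\Cref{fact:submodularity}) and \Cref{fact:extreme-violator}, to an extreme set related to $\widetilde{\super}(R_j)$, use that $\widetilde{\super}(R_j)$ is the earliest $R_j$-$(T\setminus R_j)$ min cut (\Cref{lem:supreme-mincut}) together with the inductive description of which external edges of $G_{\ell-1}$ cross it, and then combine with $\Delta_\ell$ through the $\rdem$ recursion to conclude $d_{H_\ell}(S)\ge\tau$. Finally one checks the invariant propagates to $G_\ell$ (degree budgets are respected because the $u\to y$ capacities encode the remaining budget, and the newly placed external weight inside $\widetilde{\super}(R_k)$ is exactly $\Delta_\ell$).

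The main obstacle is the last case of the lower bound, the cut that uses $y$. Unlike the $y\notin S$ case we cannot invoke feasibility directly, because the cut value is no longer the value of a genuine Steiner cut in the fully-budgeted graph; instead we must control precisely the weight of external edges already present in $G_{\ell-1}$ that cross an \emph{arbitrary} Steiner cut with projection $R_j$ (not merely the canonical representative $\widetilde{\super}(R_j)$), and then argue that this quantity plus $\Delta_\ell$ is at least $\tau$. Making this rigorous requires both the inductive invariant on the earlier iterations and repeated uncrossing with submodularity; by contrast the upper-bound cut, the $y\notin S$ case, and the propagation of the invariant are routine bookkeeping.
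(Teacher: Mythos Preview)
Your overall architecture matches the paper's: induction on $\ell$, exhibit a concrete cut of value $\tau$ for the upper bound, and for the lower bound split on whether $y$ lies in the cut side. Your upper-bound witness $\widetilde\super(R_{i^\star})\cup\{y\}$ is exactly the paper's cut (the paper phrases $i^\star$ as ``the highest critical node $R_m$'', which is where your max is attained; it uses $\super$ rather than $\widetilde\super$, but $d(\super(R))=d(\widetilde\super(R))$ so this is immaterial). The $y\notin S$ case is also identical to the paper's.

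The gap is precisely where you flag it: the $y\in S$ lower bound. Your proposed route --- ``pass via submodularity and \Cref{fact:extreme-violator} to an extreme set related to $\widetilde\super(R_j)$, then use \Cref{lem:supreme-mincut}'' --- does not close on its own. The obstruction is that \Cref{lem:supreme-mincut} only tells you $\super(R_j)$ is the earliest $R_j$--$(T\setminus R_j)$ min cut in $G$, not in $G_{\ell-1}$; an arbitrary $R_j$--$(T\setminus R_j)$ cut $S'=S\cap V$ may miss some of the external edges already placed inside $\super(R_j)$ (they sit at nonterminals), so you cannot simply add ``external edges crossing $\widetilde\super(R_j)$'' to $d_G(S')\ge c(R_j)$ and be done. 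Invoking an extreme violator does not help either, since extremity is with respect to $G$, not $G_{\ell-1}$.

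The paper resolves this by \emph{strengthening the inductive hypothesis}: it proves \Cref{lem:deg-external-flow-value} simultaneously with \Cref{lem:deg-external-mincut} (the specific cut $\super(R_m)\cup\{y\}$ is a min cut), and carries along the corollaries \Cref{lem:deg-external-new-edges} (all edges added in round $j$ land inside $\super(R_m^{(j)})$ --- this is your invariant) and, crucially, \Cref{lem:deg-external-cut-increase} (for every $R\supseteq R_{k_j}^{(j)}$, the $R$--$(T\setminus R)$ \emph{min cut value} jumps by exactly $\Delta_j$ in round $j$). With that last statement available for all $j<\ell$, the $y\in S$ case is one line: $d_{G_{\ell-1}}(S')$ is at least the $R_i$--$(T\setminus R_i)$ min cut value in $G_{\ell-1}$, which equals $d(\super(R_i))+\sum_{W\in B(R_i)}\rdem(W)$; combine with $\Delta_\ell\ge\rdem(R_i)-\sum_{W\in B(R_i)}\rdem(W)\ge\tau-d(\super(R_i))-\sum_{W\in B(R_i)}\rdem(W)$ (your telescoping identity, stated in the paper as \Cref{fact:yt-capacity}). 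The proof of \Cref{lem:deg-external-cut-increase} is a short submodularity uncrossing of the \emph{latest} $R$--$(T\setminus R)$ min cut in $G_\ell$ against $\super(R_m)$ --- that is where submodularity actually enters, not via extreme violators. So your instinct that submodularity is the tool is right, but the object to uncross with is the latest min cut, and the statement to carry inductively is the min-cut-value increment, not merely the location of the placed edges.
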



We defer the proof of \Cref{lem:deg-external-flow-value} and use it to show the optimality of the external augmentation algorithm.
Use $\delta_{H_\ell}(\cdot)$ to denote the outgoing cut value in $H_\ell$, emphasizing that $H_\ell$ is directed. 

\begin{lemma}\label{lem:deg-external-cut-value}
If $S\subseteq \tilde{\super}(R_{k_\ell}^{(\ell)})$ and $\rho(S)=R_i^{(\ell)}$ is a node in path $P_\ell$, then
\begin{itemize}
 \item $\delta_{H_\ell}(S)=d(S)+\beta(S)$, where $\beta(S)=\sum_{u\in S}\beta(u)$.
 \item $\delta_{H_\ell}(S\cup\{y\})=d_{G_{\ell -1}}(S)+\Delta_\ell$.
\end{itemize}
\end{lemma}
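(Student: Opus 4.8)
The plan is to evaluate both outgoing cuts directly from the construction of $H_\ell$, after first pinning down the shape of $S$. Fix the iteration index $\ell$ and abbreviate $R_j:=R_j^{(\ell)}$, so that $P_\ell=(R_1,\ldots,R_k)$ with $R_1\subsetneq\cdots\subsetneq R_k$ and $R_1$ the leaf; write $\Delta:=\Delta_\ell$. Recall that $G_{\ell-1}$ has vertex set $V\cup\{x\}$ and is the original graph $G$ together with external edges to $x$ only; since $x\notin S$ for every $S\subseteq V$, this gives $d_{G_{\ell-1}}(S)=d(S)+w_{G_{\ell-1}}(S,x)$, where $d(\cdot)$ is the cut value in $G$ and $w_{G_{\ell-1}}(S,x):=\sum_{u\in S}w_{G_{\ell-1}}(u,x)$.

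The first step is to check that $S$ — and therefore $S\cup\{y\}$ — does not split any contracted block of $H_\ell$, so that both outgoing cuts are finite and can be computed edge type by edge type. Each contracted terminal block $R_j\setminus R_{j-1}$ ($1\le j\le k$) consists purely of terminals, so since $\rho(S)=R_i$ it lies entirely inside $S$ when $j\le i$ and is disjoint from $S$ when $j>i$; and the block $V\cup\{x\}\setminus\tilde{\super}(R_k)$ is disjoint from $\tilde{\super}(R_k)\supseteq S$ and contains $x\notin S$, so it too is disjoint from $S$. Hence, if $s_j$ denotes the representative of block $R_j\setminus R_{j-1}$, then $s_j\in S$ if and only if $j\le i$, and in particular no infinite chain edge $s_{j+1}\to s_j$ can leave $S$ (that would require $j+1\le i$ and $j>i$); the same reasoning applies to $S\cup\{y\}$.

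Next I would sum the finite-capacity edges leaving $S$ by type. (i) The two directed copies of each edge of $G_{\ell-1}$ together contribute $d_{G_{\ell-1}}(S)=d(S)+w_{G_{\ell-1}}(S,x)$. (ii) For $u\in S$ the edge $u\to y$ leaves $S$ and has capacity $\beta(u)-w_{G_{\ell-1}}(u,x)$, while for $u\notin S$ the edge $u\to y$ does not leave $S$; so these contribute $\beta(S)-w_{G_{\ell-1}}(S,x)$. (iii) The infinite chain edges, and the infinite cycles inside blocks, contribute $0$ by the previous step. (iv) The edge $y\to x$ has both endpoints outside $S$, contributing $0$. Adding (i)--(iv), the two copies of $w_{G_{\ell-1}}(S,x)$ cancel and $\delta_{H_\ell}(S)=d(S)+\beta(S)$. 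For $S\cup\{y\}$ the only changes are around $y$: in (i), $y$ is incident to no edge of $G_{\ell-1}$ and $x$ is still outside the set, so these edges again contribute $d_{G_{\ell-1}}(S)$; in (ii), every edge $u\to y$ now has its head inside the set and so never leaves it, contributing $0$; (iii) is unchanged; in (iv), $y\to x$ now has its tail inside and head outside, contributing $\Delta$. Hence $\delta_{H_\ell}(S\cup\{y\})=d_{G_{\ell-1}}(S)+\Delta_\ell$.

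I expect no genuine difficulty beyond this bookkeeping. The one point that needs care is the orientation of the infinite chain edges in step (iii): they run from the higher-index block to the lower-index block, which is precisely why they never leave $S$ (whose block content is exactly $\{1,\dots,i\}$); and, together with this, the cancellation of the $w_{G_{\ell-1}}(S,x)$ terms between (i) and (ii) is what collapses $d_{G_{\ell-1}}(S)$ into $d(S)$ in the first bullet.
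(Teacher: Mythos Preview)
Your proof is correct and follows essentially the same approach as the paper: first verify that $S$ respects all contracted blocks (so infinite-capacity edges contribute nothing), then sum the remaining edge types. Your grouping differs cosmetically---you bundle the $(u,x)$ edges with the rest of $G_{\ell-1}$ and cancel $w_{G_{\ell-1}}(S,x)$ against the $(u,y)$ capacities, whereas the paper pairs each $(u,x)$ with its $(u,y)$ directly---but the computation is the same.
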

\begin{proof}
Because $S\subseteq \tilde{\super}(R_{k_\ell})$ and $\rho(S)=R_i$, $S$ does not cross the contracted $R_j\setminus R_{j-1}$ and $V\cup \{x\}\setminus \tilde{\super}(R_k)$, 
and the infinite capacity edges in step 5 do not contribute to $\delta_{H_\ell}(S)$ or $\delta_{H_\ell}(S\cup \{y\})$.

In $\delta_{H_\ell}(S)$, the edges on $V$ contribute $d(S)$. For each $u\in V$, $(u,x)$ and $(u,y)$ contribute $w_{G_{\ell -1}}(u,x)+\beta(u)-w_{G_{\ell -1}}(u,x)=\beta(u)$ by step 3. So $\delta_{H_\ell}(S)=d(S)+\beta(S)$.

In $\delta_{H_\ell}(S\cup \{y\})$, the edges on $V$ and $(u,x)$ edges contribute $d_{G_{\ell -1}}(S)$ in total. The $(y,t)$ edge contributes $\Delta_\ell$. The $(u,y)$ edges do not cross the cut. So $\delta_{H_\ell}(S\cup\{y\})=d_{G_{\ell -1}}(S)+\Delta_\ell$.
\end{proof}

\begin{lemma}\label{lem:deg-external-optimal}
The degree-constraint external augmentation algorithm outputs a feasible solution of optimal value $\sum_{R\in L_1}\rdem(R)$.
\end{lemma}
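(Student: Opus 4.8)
The plan is to prove two things: that the algorithm's output is a feasible degree-constrained external augmentation solution, and that its total weight equals $\sum_{R\in L_1}\rdem(R)$. The latter doubles as a matching lower bound, since any feasible degree-constrained solution is a fortiori feasible for the unconstrained external augmentation problem, whose optimum is $\sum_{R\in L_1}\rdem(R)$ by \Cref{lem:external-optimal}. So it suffices to show (i) feasibility, and (ii) that the output has weight at most $\sum_{R\in L_1}\rdem(R)$; optimality and the exact value then follow.

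For feasibility, the degree constraints are respected because in $H_\ell$ the edge $(u,y)$ has capacity $\beta(u)-w_{G_{\ell-1}}(u,x)$, so $w_{G_\ell}(u,x)=w_{G_{\ell-1}}(u,x)+f_\ell(u,y)\le\beta(u)$; as $G_0$ has no external edges this invariant propagates to $G_p$. For the connectivity requirement, by \Cref{fact:extreme-violator} it is enough to show every extreme set $X$ of $G$ has $d_{G_p}(X)\ge\tau$, and since edges are only added it suffices to check this right after the iteration $\ell$ that processes the heavy path $P_\ell$ containing $\rho(X)$. By the layered contraction in the construction of $H_\ell$ (contracting the layers $\tilde\super(R_i^{(\ell)})\setminus\tilde\super(R_{i-1}^{(\ell)})$ and $V\cup\{x\}\setminus\tilde\super(R_{k_\ell}^{(\ell)})$), such an $X$ is a genuine $R_1^{(\ell)}$-$x$ cut in $H_\ell$, so the value-$\tau$ flow $f_\ell$ of \Cref{lem:deg-external-flow-value} sends exactly $\tau$ units of net flow out of $X$. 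Bounding each original/external forward edge leaving $X$ by its capacity and each $(u,y)$-edge leaving $X$ by its flow value $f_\ell(u,y)$ — which is precisely the amount added to $w(u,x)$ in $G_\ell$ — gives $\tau \le d_{G_{\ell-1}}(X)+\sum_{u\in X}f_\ell(u,y)=d_{G_\ell}(X)$, using \Cref{lem:deg-external-cut-value} for the bookkeeping. Hence $d_{G_\ell}(X)\ge\tau$.

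For the weight bound, the total weight added is $\sum_\ell\sum_{u\in V}f_\ell(u,y)$, and flow conservation at $y$ gives $\sum_{u}f_\ell(u,y)=f_\ell(y,x)\le\Delta_\ell=\rdem(R_{k_\ell}^{(\ell)})-\sum_{W\in B(R_{k_\ell}^{(\ell)})}\rdem(W)$. Summing over all heavy paths, each heavy path $P'$ contributes $+\rdem(\text{top of }P')$ once (when $\ell$ is $P'$) and $-\rdem(\text{top of }P')$ once for the unique ancestor heavy path whose $B(\cdot)$ contains it, unless $P'$ is a root heavy path; the sum therefore telescopes to $\sum_{R\in L_1}\rdem(R)$, the roots of the root heavy paths being exactly the roots of $L$. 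Combined with feasibility and the lower bound, the output is optimal of value $\sum_{R\in L_1}\rdem(R)$.

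The main obstacle is the feasibility step, and inside it the claim that an \emph{arbitrary} extreme set $X$ with $\rho(X)$ on $P_\ell$ is a valid $R_1^{(\ell)}$-$x$ cut of $H_\ell$, i.e.\ that it crosses none of the contracted layers and contains the contracted source node; this is exactly where the structure of supreme sets (\Cref{lem:supreme-laminar} and \Cref{lem:supreme-mincut}) and the layered contraction scheme are used, and it needs care because, unlike supreme sets, extreme sets can cross each other. The secondary delicate point is purely combinatorial: verifying that the $\Delta_\ell$'s, defined through the heavy-light decomposition, telescope to $\sum_{R\in L_1}\rdem(R)$ with no leftover terms. Given \Cref{lem:deg-external-flow-value} and \Cref{lem:deg-external-cut-value}, the flow/cut inequality and the degree accounting are routine.
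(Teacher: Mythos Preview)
Your plan matches the paper's proof almost exactly: feasibility via extreme sets and the flow-across-cut argument from \Cref{lem:deg-external-flow-value}, degree bounds by induction on $\ell$, and the telescoping of the $\Delta_\ell$'s for the weight. One correction: in step~4 of the construction of $H_\ell$ the contracted layers are $R_i\setminus R_{i-1}$ (pure terminal sets), not $\tilde\super(R_i)\setminus\tilde\super(R_{i-1})$. With the correct layers the ``main obstacle'' you flag largely dissolves: since $\rho(X)=R_i$, the extreme set $X$ contains all of $R_j\setminus R_{j-1}$ for $j\le i$ and none for $j>i$, so it crosses no inner contraction; for the outer one, $X\subseteq\super(\rho(X))\subseteq\super(R_{k_\ell})\subseteq\tilde\super(R_{k_\ell})$, using the definition of supreme sets, \Cref{lem:supreme-laminar}, and \Cref{fact:perturb-extreme}. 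No appeal to \Cref{lem:supreme-mincut} is needed, and the identity $d_{G_{\ell-1}}(X)+\sum_{u\in X}f_\ell(u,y)=d_{G_\ell}(X)$ is just the definition of $G_\ell$, not \Cref{lem:deg-external-cut-value}.
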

\begin{proof}
First we prove that every Steiner cut is augmented to at least $\tau$.
We only need to prove this for extreme sets. For other Steiner cuts $U$, consider their extreme violators $W$ in \Cref{fact:extreme-violator}. If $W$ is satisfied, we add $\tau-d(W)\ge \tau-d(U)$ edges connecting $W$ and $x$, which also cross $U$ because $W\subseteq U$. So satisfying all extreme sets implies satisfying all Steiner cuts.

For any extreme set $X$ with terminal set $R$, suppose $R$ is processed in the $\ell$-th heavy path $P_\ell=(R_1,R_2,\ldots,R_{k_\ell})$ and $R=R_i$. We have $R_1\subseteq R\subseteq X$. By definition of supreme sets, $X\subseteq \super(R)\subseteq \super(R_k)$. By \Cref{fact:perturb-extreme}, $\super(R_k)\subseteq \tilde{\super}(R_k)$. So $R_1\subseteq X\subseteq \tilde{\super}(R_k)$ and $X$ is an $R_1$-$x$ cut that is not contributed by infinite capacity edges in $H_\ell$.
Let $\lambda_\ell$ be the flow value of $f_\ell$. Then the flow value on cut $X$ is 
\[\lambda_\ell=\sum_{u\in X, v\in V\cup\{x,y\}\setminus X}f_\ell(u,v)-f_\ell(v,u).\] In this flow value, the flow in edges on $V$ and $(u,x)$ edges is upper bounded by their total capacity $d_{G_{\ell -1}}(X)$. Combined with the flow in $(u,y)$ edges, \[\lambda_\ell\le \sum_{u\in X, v\in V\cup\{x\}\setminus X}f_\ell(u,v)+\sum_{u\in X}f_\ell(u,y)\le d_{G_{\ell -1}}(X)+\sum_{u\in X}f_\ell(u,y).\]
Because we define $w_{G_\ell}(u,x)=w_{G_{\ell -1}}(u,x)+f_\ell(u,y)$, we have $d_{G_\ell}(X)=d_{G_{\ell -1}}(X)+\sum_{u\in X}f_\ell(u,y)\ge \lambda_\ell$.
Since
$d_{G_\ell}(X)\ge \lambda_\ell= \tau$ by \Cref{lem:deg-external-flow-value}, we conclude that $X$ is satisfied.

Next we prove that the output satisfies degree constraints, so it is a feasible solution. Fix any vertex $u\in V$.
For each $\ell$, $w_{G_\ell}(u,x) = w_{G_{\ell -1}}(u,x)+f_\ell(u,y) \le w_{G_{\ell -1}}(u,x)+\beta(u)-w_{G_{\ell -1}}(u,x)=\beta(u)$. The output uses $w_{G_p}(u,x)$ degrees after processing the last heavy path $p$. By induction, $w_{G_p}(u,x)\le \beta(u)$, so the edge weight does not exceed $\beta(u)$.

By \Cref{lem:external-optimal}, the output value is lower bounded by $\sum_{R\in L_1}\rdem(R)$, which is the optimal value if we remove the degree constraint.
Because the $(y,x)$ edge in $H_\ell$ has capacity $\Delta_\ell$ and we only add new edge for flows through $y$, each heavy path adds at most $\Delta_\ell=\rdem(R_k)-\sum_{W\in B(R_k)}\rdem(W)$. So the total value of output is upper bounded by the sum $\sum_{R\in L_1}\rdem(R)$. ($L_1$ nodes have rdem added once, and other nodes have rdem added once and subtracted once.) In conclusion the output value is $\sum_{R\in L_1}\rdem(R)$ and is optimal.
\end{proof}


Instead of proving \Cref{lem:deg-external-flow-value} by itself, we prove it in combination with the following statement. 

\begin{lemma}\label{lem:deg-external-mincut}
For each $\ell$, let $R_m^{(\ell)}$ be the highest critical node on the $\ell$-th heavy path. Then $\super(R_m^{(\ell)})\cup\{y\}$ is an $R_1$-$x$ min cut in $H_\ell$.
\end{lemma}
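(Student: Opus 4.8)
The plan is to prove \Cref{lem:deg-external-flow-value} and \Cref{lem:deg-external-mincut} together, by induction on $\ell$ in the bottom-up processing order of the heavy paths. When $P_\ell=(R_1\subsetneq\cdots\subsetneq R_k)$ is processed, every heavy path contained in the subtree of $R_k$ other than $P_\ell$ has strictly larger depth and has therefore already been processed, so the induction hypothesis applies to all of them. Besides the two lemma statements, the hypothesis I would carry is the strengthened claim that processing a heavy path $P$ with top $W$ adds \emph{exactly} $\Delta_P$ external edges, all of them incident to $\super(W)$ — so that, cumulatively, exactly $\rdem(W)$ external edges are incident to $\super(W)$ once $P$ has been processed. (These ``exact'' and ``incident to $\super(W)$'' clauses upgrade the ``adds at most $\Delta$'' remark: once $\super(R_m^{(P)})\cup\{y\}$ is known to be a min cut of value $\tau$, any integral max flow saturates the edge $y\to x$, and flow can enter $y$ only from vertices on the source side of that min cut, namely from $\super(R_m^{(P)})\subseteq\super(W)$.)

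First I would pin down the $R_1$-$x$ cuts of $H_\ell$. Because of the contractions in step~4 and the infinite-capacity edges $s_{i+1}\to s_i$ in step~5, the source side of a finite $R_1$-$x$ cut is either $S$ or $S\cup\{y\}$, where $S\subseteq\tilde{\super}(R_k)$ and $\rho(S)=R_i$ for some $1\le i\le k$ (the infinite edges force the terminal projection to be a prefix of the chain, and the source must contain $R_1$). By \Cref{lem:deg-external-cut-value}, the two values are $d(S)+\beta(S)$ and $d_{G_{\ell-1}}(S)+\Delta_\ell$ respectively.

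For the lower bound — every $R_1$-$x$ cut has value at least $\tau$ — a cut of the first kind is handled directly by feasibility of the instance, since $S$ is a Steiner cut of $G$ with projection $R_i$. For a cut of the second kind, $S$ is an $R_i$-$(T\setminus R_i)$ cut, so $d(S)\ge d(\super(R_i))=c(R_i)$ by \Cref{lem:supreme-mincut}; then $d_{G_{\ell-1}}(S)$ also counts the external edges already lying inside $S$, and using the induction hypothesis (those edges come from the heavy paths hanging below $R_1,\dots,R_i$, whose contributions telescope to $\sum_{j\le i}\sum_{C\text{ a light child of }R_j}\rdem(C)$), the definition of $\Delta_\ell$, the monotonicity of $\rdem$ along $P_\ell$, and $\rdem(R_i)\ge\tau-c(R_i)$, one gets $d_{G_{\ell-1}}(S)+\Delta_\ell\ge\tau$; an uncrossing argument with $\super(R_i)$ (via submodularity in $G_{\ell-1}$, together with feasibility of the fully-augmented instance) reduces an arbitrary $S$ to this canonical case. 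For the upper bound, let $R_m$ be the highest critical node on $P_\ell$. Since $\rho(\super(R_m))=R_m$ lies on $P_\ell$ and $\super(R_m)\subseteq\tilde{\super}(R_m)\subseteq\tilde{\super}(R_k)$, \Cref{lem:deg-external-cut-value} gives $\delta_{H_\ell}(\super(R_m)\cup\{y\})=d_{G_{\ell-1}}(\super(R_m))+\Delta_\ell$. By the induction hypothesis the external edges already inside $\super(R_m)$ are exactly those contributed by the heavy paths hanging off $R_1,\dots,R_m$, and their count telescopes to $\rdem(R_m)-\Delta_\ell$; since $R_m$ is critical, $\rdem(R_m)=\tau-c(R_m)$, so $d_{G_{\ell-1}}(\super(R_m))+\Delta_\ell=c(R_m)+(\rdem(R_m)-\Delta_\ell)+\Delta_\ell=\tau$. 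The two bounds together establish \Cref{lem:deg-external-flow-value} and identify $\super(R_m)\cup\{y\}$ as a min cut (\Cref{lem:deg-external-mincut}); the resulting saturation of $y\to x$ then re-establishes the strengthened hypothesis for $P_\ell$, closing the induction.

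I expect the main obstacle to be exactly this bookkeeping: tracking, through the induction hypothesis, how many previously-added external edges sit inside a prescribed supreme set, and matching that count against the telescoping of recursive demands packaged into $\Delta_\ell$ and $B(R_k)$. One must argue carefully that each external edge contributed by a hanging heavy path $P$ with top $W$ genuinely lands inside $\super(W)$ (which comes from the min-cut structure established inductively for $P$) and hence inside $\super(R_m)$, and inside no supreme set incomparable to it (laminarity and disjointness of incomparable supreme sets); and one must keep straight that the contractions defining $H_\ell$ use the \emph{perturbed} supreme sets $\tilde{\super}(\cdot)$ while the cut values that must reach $\tau$ live in the original graph $G$ — bridging the two via \Cref{lem:traversal3} ($d(\super(R))=d(\tilde{\super}(R))$) and the observation that the added external edges stay inside the unperturbed supreme sets $\super(W)$.
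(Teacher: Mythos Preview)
Your plan is essentially the paper's proof: both prove \Cref{lem:deg-external-flow-value} and \Cref{lem:deg-external-mincut} simultaneously by induction on $\ell$, classify finite $R_1$-$x$ cuts as $S$ or $S\cup\{y\}$ with $\rho(S)=R_i$ via \Cref{lem:deg-external-cut-value}, handle the $y\notin S$ case by feasibility, and for the upper bound compute $\delta_{H_\ell}(\super(R_m)\cup\{y\})=\tau$ directly from criticality of $R_m$ and the telescoping identity (your $\rdem(R_m)-\Delta_\ell=\sum_{W\in B(R_m)}\rdem(W)$ is exactly \Cref{cor:highest-black}).

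The one place where the paper is organized differently is the lower bound for cuts $S\cup\{y\}$. You propose to count external edges inside $S$ and then invoke an ``uncrossing argument with $\super(R_i)$'' to reduce an arbitrary $S$ to the canonical one. The paper avoids this reduction: it simply observes that $S$ is an $R_i$-$(T\setminus R_i)$ cut, so $d_{G_{\ell-1}}(S)$ is at least the $R_i$-$(T\setminus R_i)$ min-cut value in $G_{\ell-1}$, and then bounds that min-cut value via \Cref{lem:deg-external-cut-increase} (applied inductively to every already-processed path in the subtree of $R_i$), together with \Cref{fact:yt-capacity} to compare $\Delta_\ell$ with $\rdem(R_i)-\sum_{W\in B(R_i)}\rdem(W)$. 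That lemma is precisely your missing uncrossing step, packaged cleanly: its proof is a submodularity argument showing the latest $R$-$(T\setminus R)$ min cut must contain $\super(R_m^{(j)})$, hence absorbs all $\Delta_j$ new edges. Your strengthened hypothesis that the edges land inside $\super(W)$ is the paper's \Cref{lem:deg-external-new-edges}, and it is used the same way. So the content is identical; if you unwind your uncrossing sketch you will recover \Cref{lem:deg-external-cut-increase}.
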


We give corollaries of the statements of \Cref{lem:deg-external-flow-value} and \Cref{lem:deg-external-mincut} to facilitate the inductive proof.

\begin{lemma}\label{lem:deg-external-new-edges}
If the statement of \Cref{lem:deg-external-mincut} holds for $\ell$, then $G_\ell$ is constructed from $G_{\ell -1}$ by adding $\Delta_\ell$ edges from $\super(R_m^{(\ell)})$ to $x$.
\end{lemma}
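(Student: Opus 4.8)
The plan is to directly compute which $(u,x)$ edges are updated when we build $G_\ell$ from $G_{\ell-1}$, using the characterization of the $R_1$-$x$ min cut in $H_\ell$ given by the hypothesis. Recall that $G_\ell$ is obtained from $G_{\ell-1}$ by setting $w_{G_\ell}(u,x)=w_{G_{\ell-1}}(u,x)+f_\ell(u,y)$ for every $u\in V$, where $f_\ell$ is the integral $R_1$-$x$ max flow on $H_\ell$. So the claim amounts to showing two things: first, that $\sum_{u\in\super(R_m^{(\ell)})} f_\ell(u,y)=\Delta_\ell$ and $f_\ell(u,y)=0$ for every $u\notin\super(R_m^{(\ell)})$; i.e. all flow through $y$ enters from inside $\super(R_m^{(\ell)})$, and the total amount is exactly $\Delta_\ell$.

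First I would use \Cref{lem:deg-external-flow-value} (whose statement is available for $\ell$ in the inductive setup) to fix the max-flow value at $\tau$, and use the hypothesis of \Cref{lem:deg-external-mincut} that $C:=\super(R_m^{(\ell)})\cup\{y\}$ is an $R_1$-$x$ min cut. Since $f_\ell$ is a max flow and $C$ is a min cut, every edge of $H_\ell$ leaving $C$ is saturated and every edge entering $C$ carries zero flow. The only edge leaving $C$ in the $y$-direction is $(y,x)$, which has capacity $\Delta_\ell$; so $f_\ell(y,x)=\Delta_\ell$. By flow conservation at $y$, the net flow into $y$ equals $f_\ell(y,x)=\Delta_\ell$; and since the only edges incident to $y$ are the $(u,y)$ edges (into $y$) and $(y,x)$ (out of $y$), we get $\sum_{u\in V} f_\ell(u,y)=\Delta_\ell$.

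Next I would argue that the $(u,y)$ edges carrying flow all originate in $\super(R_m^{(\ell)})$. An edge $(u,y)$ with $u\notin\super(R_m^{(\ell)})$ is an edge entering the min cut $C$ (its head $y$ is in $C$, its tail $u$ is not), hence carries zero flow as noted above. Therefore $f_\ell(u,y)=0$ for all $u\notin\super(R_m^{(\ell)})$, and combining with the previous paragraph, $\sum_{u\in\super(R_m^{(\ell)})} f_\ell(u,y)=\Delta_\ell$. Consequently $G_\ell$ differs from $G_{\ell-1}$ only in the weights $w(u,x)$ for $u\in\super(R_m^{(\ell)})$, and the total added weight across these edges is exactly $\Delta_\ell$, which is the statement. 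One should also note $\Delta_\ell\ge 0$ (so these are genuine edge additions, not deletions): this follows from the definition $\Delta_\ell=\rdem(R_k)-\sum_{W\in B(R_k)}\rdem(W)$ together with the recursive-demand recurrence, or simply from integrality of $f_\ell\ge 0$.

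I do not expect a serious obstacle here; the lemma is essentially a bookkeeping consequence of complementary slackness for the min cut $C$. The one point requiring a little care is making sure we have correctly identified which edges cross $C$ and in which direction — in particular that the cycle edges of infinite capacity used to model the contractions in step 4 do not cross $C$ (they don't, since $\super(R_m^{(\ell)})$ is a union of the contracted blocks $R_j\setminus R_{j-1}$ for $j\le m$ together with lower pieces, and does not split any block), and that the only $y$-incident edges are the ones listed in steps 3 and 6 of the construction. Once those are pinned down, the argument is the short chain of equalities above.
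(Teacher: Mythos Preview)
Your proposal is correct and follows essentially the same approach as the paper: use that $\super(R_m^{(\ell)})\cup\{y\}$ is an $R_1$-$x$ min cut, apply complementary slackness to see that $(y,x)$ is saturated (giving total new weight $\Delta_\ell$ via flow conservation at $y$) and that $(u,y)$ edges with $u\notin\super(R_m^{(\ell)})$ carry zero flow. Your write-up is in fact more detailed than the paper's; the invocation of \Cref{lem:deg-external-flow-value} is unnecessary (max-flow equals min-cut automatically), but harmless.
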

\begin{proof}
By the statement, $\super(R_m)\cup\{y\}$ is an $R_1$-$x$ min cut in $H_\ell$. Also, $f_\ell(y,x)=\Delta_\ell$ because $(y,x)$ is a min cut edge. By flow conservation, $f_\ell(y,x)=\sum_{u\in V}f_\ell(u,y)$. The algorithm adds the integer flow on $(u,y)$ edges as new edges to $G_\ell$.

In an $R_1$-$x$ max flow, no flow goes from sink side to source side of an $R_1$-$x$ min cut. Because $y$ is in the source side, all flow through $y$ are from the source side, so all new edges are from $\super(R_m)$.
\end{proof}

\begin{lemma}\label{lem:deg-external-cut-increase}
If the statement of \Cref{lem:deg-external-flow-value} and \Cref{lem:deg-external-mincut} hold for $\ell$, then for any $R\subsetneqq T$ such that $R\supseteq R_k^{(\ell)}$, the $R$-$(T\setminus R)$ min cut value increases by $\Delta_\ell$ from $G_{\ell -1}$ to $G_\ell$.
\end{lemma}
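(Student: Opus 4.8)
The plan is to prove the statement by combining the two facts assumed for $\ell$ (the $R_1$-$x$ min cut value in $H_\ell$ is $\tau$, and $\mu(R_m^{(\ell)})\cup\{y\}$ is an $R_1$-$x$ min cut in $H_\ell$) with the structural description of the new edges given by \Cref{lem:deg-external-new-edges}. First I would invoke \Cref{lem:deg-external-new-edges}: under the stated hypotheses, $G_\ell$ is obtained from $G_{\ell-1}$ by adding exactly $\Delta_\ell$ edges from $\mu(R_m^{(\ell)})$ to $x$. Since $R \supseteq R_k^{(\ell)}$ and $R \subsetneqq T$, and since $\mu(R_m^{(\ell)}) \subseteq \mu(R_k^{(\ell)})$ (as $R_m^{(\ell)} \subseteq R_k^{(\ell)}$ and supreme sets are laminar, with $\mu(R_m) \subsetneqq \mu(R_k)$ or equal), the whole set $\mu(R_m^{(\ell)})$ sits inside $\mu(R_k^{(\ell)}) \subseteq \mu(R)$ (using $R \supseteq R_k$ and laminarity of supreme sets again, so $\mu(R_k) \subseteq \mu(R)$), while $x \notin \mu(R)$. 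Hence every one of the $\Delta_\ell$ new edges crosses the cut $(\mu(R), \text{rest})$ and, more importantly, crosses \emph{every} $R$-$(T\setminus R)$ cut whose $R$-side contains $\mu(R_m^{(\ell)})$.

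The core of the argument is then a two-sided inequality on the min cut value. For the upper bound: take any $R$-$(T\setminus R)$ min cut of $G_{\ell-1}$; I claim we can choose it (or argue directly about the earliest one) so that its $R$-side contains $\mu(R_m^{(\ell)})$ — this should follow because $\mu(R_m^{(\ell)})$ is itself (essentially) a low-value set on the $R$-side, so uncrossing an $R$-$(T\setminus R)$ min cut with $\mu(R_m^{(\ell)})$ using submodularity (\ref{eq:submodularity-1}) keeps the value minimal while enlarging the $R$-side to include $\mu(R_m^{(\ell)})$; then adding the $\Delta_\ell$ new edges raises this particular cut's value by exactly $\Delta_\ell$, giving $\lambda_{G_\ell}(R, T\setminus R) \le \lambda_{G_{\ell-1}}(R,T\setminus R) + \Delta_\ell$. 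For the lower bound: every $R$-$(T\setminus R)$ cut in $G_\ell$ has value at least its $G_{\ell-1}$ value, and if its $R$-side contains $\mu(R_m^{(\ell)})$ it gains $\Delta_\ell$; the remaining case — an $R$-$(T\setminus R)$ cut whose $R$-side does \emph{not} contain all of $\mu(R_m^{(\ell)})$ — must be handled by showing such a cut already had value at least $\lambda_{G_{\ell-1}} + \Delta_\ell$ in $G_{\ell-1}$, using the min cut / supreme set structure (e.g.\ that $\mu(R_m^{(\ell)})$ is the earliest $R_m$-$(T\setminus R_m)$ min cut by \Cref{lem:supreme-mincut}, and comparing with $H_\ell$'s min cut $\mu(R_m)\cup\{y\}$ of value $\tau$ via \Cref{lem:deg-external-cut-value}).

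The main obstacle I anticipate is precisely that last case in the lower bound: controlling $R$-$(T\setminus R)$ cuts in $G_\ell$ whose $R$-side only partially overlaps $\mu(R_m^{(\ell)})$, since these need not pick up all $\Delta_\ell$ new edges. The resolution should route through the auxiliary graph $H_\ell$: such a cut, lifted to $H_\ell$ together with $y$, is an $R_1$-$x$ cut (as $R_1 \subseteq R_m \subseteq \mu(R_m)$ and the $R$-side contains $R_1$), so by \Cref{lem:deg-external-flow-value} its $H_\ell$-value is $\ge \tau$, and translating this back through \Cref{lem:deg-external-cut-value} (which relates $\delta_{H_\ell}(S\cup\{y\})$ to $d_{G_{\ell-1}}(S) + \Delta_\ell$) yields $d_{G_{\ell-1}}(S) \ge \tau - \Delta_\ell$; combined with $\lambda_{G_{\ell-1}}(R,T\setminus R) = \tau - \rdem(R_k^{(\ell)}) = \tau - \Delta_\ell - (\text{contribution already present})$ — here I would need the inductive bookkeeping that the min cut separating $R$ from $T\setminus R$ in $G_{\ell-1}$ has value exactly $\tau - \Delta_\ell'$ for the appropriate cumulative $\Delta$ — this closes the gap. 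I would carry out the bookkeeping carefully, likely folding \Cref{lem:deg-external-flow-value}, \Cref{lem:deg-external-mincut}, and \Cref{lem:deg-external-cut-increase} into a single simultaneous induction on $\ell$, since their proofs are mutually dependent.
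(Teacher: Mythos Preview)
Your overall framing is right—the upper bound is trivial (only $\Delta_\ell$ edges are added, so no cut value can rise by more than $\Delta_\ell$; no uncrossing is needed), and the real content is the lower bound, which must be pushed through $H_\ell$. But your proposed resolution of the ``bad'' case contains a genuine gap.

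You want to take an $R$-$(T\setminus R)$ cut $S$ in $G_\ell$ whose $R$-side fails to contain $\mu(R_m^{(\ell)})$, lift it to $S\cup\{y\}$ in $H_\ell$, and invoke \Cref{lem:deg-external-flow-value} to get $\delta_{H_\ell}(S\cup\{y\})\ge\tau$, then translate back via \Cref{lem:deg-external-cut-value}. This does not work when $R\supsetneqq R_k^{(\ell)}$: the terminals in $R\setminus R_k^{(\ell)}$ lie outside $\tilde\mu(R_k^{(\ell)})$, so $S$ properly intersects the contracted set $V\cup\{x\}\setminus\tilde\mu(R_k^{(\ell)})$ (it contains those terminals but not $x$). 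The infinite-capacity contraction edges then cross $S\cup\{y\}$ and $\delta_{H_\ell}(S\cup\{y\})=\infty$, which gives you nothing. The hypothesis of \Cref{lem:deg-external-cut-value} ($S\subseteq\tilde\mu(R_k)$ and $\rho(S)=R_i$ for some node on the path) is violated, and there is no way to repair this by further bookkeeping on $\lambda_{\ell-1}$.

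The paper sidesteps this by never lifting $S$ itself. Instead it takes the \emph{latest} $R$-$(T\setminus R)$ min cut $U$ in $G_\ell$ and proves $U\supseteq\mu(R_m)$ by contradiction: if $\mu(R_m)\setminus U\ne\emptyset$, then latestness forces $d_{G_\ell}(U\cup\mu(R_m))>d_{G_\ell}(U)$, so submodularity gives $d_{G_\ell}(U\cap\mu(R_m))<d_{G_\ell}(\mu(R_m))=\tau$ (the last equality from \Cref{lem:deg-external-mincut} and \Cref{lem:deg-external-cut-value}). But $U\cap\mu(R_m)$ \emph{does} satisfy the hypotheses needed in $H_\ell$—it sits inside $\tilde\mu(R_k)$ with projection $R_m$—so the flow bound yields $d_{G_\ell}(U\cap\mu(R_m))\ge\tau$, a contradiction. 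Once $U\supseteq\mu(R_m)$, all $\Delta_\ell$ new edges cross $U$, so $\lambda_\ell=d_{G_\ell}(U)=d_{G_{\ell-1}}(U)+\Delta_\ell\ge\lambda_{\ell-1}+\Delta_\ell$. The trick is to intersect with $\mu(R_m)$ \emph{before} going to $H_\ell$, not to lift the whole cut.
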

\begin{proof}
Let the $R$-$(T\setminus R)$ min cut values in $G_{\ell -1}$ and $G_\ell$ be $\lambda_{\ell -1}$ and $\lambda_\ell$ respectively. Let $U$ be the $R$-side of the latest $R$-$(T\setminus R)$ min cut in $G_\ell$. By \Cref{lem:deg-external-new-edges}, the algorithm adds $\Delta_\ell$ edges to $\super(R_m)$ in $G_\ell$. We claim that $U\supseteq \super(R_m)$, which implies $\lambda_\ell=d_{G_\ell}(U)=d_{G_{\ell -1}}(U)+\Delta_\ell\ge \lambda_{\ell -1}+\Delta_\ell$. The increment is exactly $\Delta_\ell$ because iteration $\ell$ only adds $\Delta_\ell$ units of edges.

To prove the claim, assume for contradiction that $\super(R_m)\setminus U\ne\emptyset$. Because $U$ is the latest $R$-$(T\setminus R)$ min cut and $R_m\subseteq R_k\subseteq R$, $d_{G_\ell}(U\cup \super(R_m))>d_{G_\ell}(U)$. By submodularity,
\begin{equation}\label{eq:deg-external-cut-increase-1}
    d_{G_\ell}(U\cap \super(R_m))<d_{G_\ell}(\super(R_m))
\end{equation}
By the statement of \Cref{lem:deg-external-mincut} for $\ell$, $\super(R_m)\cup\{y\}$ is an $R_1$-$x$ min cut in $H_\ell$, so the statement of \Cref{lem:deg-external-flow-value} for $\ell$ gives
\begin{equation}\label{eq:deg-external-cut-increase-2}
    \tau = \delta_{H_\ell}(\super(R_m)\cup\{y\})=d_{G_{\ell -1}}(\super(R_m))+\Delta_\ell
 = d_{G_\ell}(\super(R_m)) 
\end{equation}
(The second step uses \Cref{lem:deg-external-cut-value}, and the third step uses \Cref{lem:deg-external-new-edges}.)

Because $R_m\subseteq U\cap \super(R_m)\subseteq \super(R_m)$, $U\cap \super(R_m)$ is an $R_1$-$x$ cut that is not contributed by infinite capacity edges in $H_\ell$. We have $d_{G_\ell}(U\cap \super(R_m))=d_{G_{\ell-1}}(U\cap \super(R_m))+\sum_{u\in U\cap \super(R_m)}f_\ell(u,y)$, which is at least the flow value of $f_\ell$ on cut $U\cap \super(R_m)$. The $R_1$-$x$ max flow value on $H_\ell$ is $\tau$ by the statement of \Cref{lem:deg-external-flow-value} for $\ell$, so $d_{G_\ell}(U\cap \super(R_m))\ge \tau$. But (\ref{eq:deg-external-cut-increase-1}) and (\ref{eq:deg-external-cut-increase-2}) gives $d_{G_\ell}(U\cap \super(R_m))<\tau$, a contradiction.
\end{proof}

Before proving \Cref{lem:deg-external-flow-value}, we first introduce two helper statements below.

\begin{fact}\label{fact:yt-capacity}
In any path $P_\ell$, for each $i>1$,
\[\rdem(R_i)-\sum_{W\in B(R_i)}\rdem(W)\ge \rdem(R_{i-1})-\sum_{W\in B(R_{i-1})}\rdem(W)\]
Moreover, if $R_i$ is not critical, then the inequality can be replaced by equality.
\end{fact}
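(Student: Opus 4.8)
The plan is to unfold both sides of the inequality using the recursive definition $\rdem(R)=\max\{\tau-d(\super(R)),\ \sum_{W\in ch(R)}\rdem(W)\}$ together with one structural fact about how $B(\cdot)$ behaves along a heavy path. Writing $\mathrm{Light}(R_i):=ch(R_i)\setminus\{R_{i-1}\}$ for the light children of $R_i$ in $L$, the fact I would establish first is
\[
B(R_i)=B(R_{i-1})\cup\mathrm{Light}(R_i)\qquad\text{(a disjoint union),}\qquad\text{equivalently }B(R_i)=\bigcup_{j\le i}\mathrm{Light}(R_j).
\]
This holds because, on the heavy path $P_\ell=(R_1\subsetneqq\dots\subsetneqq R_{k_\ell})$, each $R_{i-1}$ is precisely the heavy child of $R_i$, so the roots of the heavy paths hanging off the bottom portion $(R_1,\dots,R_i)$ of $P_\ell$ are exactly those hanging off $(R_1,\dots,R_{i-1})$ together with the light children of $R_i$ (each of which is the head of its own heavy path); since these are all distinct nodes of $L$, the union is disjoint and $\sum_{W\in B(R_i)}\rdem(W)=\sum_{W\in B(R_{i-1})}\rdem(W)+\sum_{W\in\mathrm{Light}(R_i)}\rdem(W)$.

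Given this, the computation is short. Using $ch(R_i)=\{R_{i-1}\}\cup\mathrm{Light}(R_i)$, the definition of $\rdem(R_i)$ rewrites as
\begin{align*}
\rdem(R_i)-\sum_{W\in\mathrm{Light}(R_i)}\rdem(W)
&=\max\Bigl\{\ \tau-d(\super(R_i))-\sum_{W\in\mathrm{Light}(R_i)}\rdem(W),\ \ \rdem(R_{i-1})\ \Bigr\}\\
&\ge\ \rdem(R_{i-1}).
\end{align*}
Subtracting $\sum_{W\in B(R_{i-1})}\rdem(W)$ from both sides and applying the disjoint decomposition of $B(R_i)$ turns the left-hand side into $\rdem(R_i)-\sum_{W\in B(R_i)}\rdem(W)$ and the right-hand side into $\rdem(R_{i-1})-\sum_{W\in B(R_{i-1})}\rdem(W)$, which is exactly the claimed inequality.

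For the ``moreover'' part, equality in the displayed line holds precisely when the maximum is attained by its second argument, i.e.\ when $\tau-d(\super(R_i))\le\rdem(R_{i-1})+\sum_{W\in\mathrm{Light}(R_i)}\rdem(W)=\sum_{W\in ch(R_i)}\rdem(W)$. By the definition of a critical node, $R_i$ is critical exactly when $\rdem(R_i)=\tau-d(\super(R_i))>\sum_{W\in ch(R_i)}\rdem(W)$; hence $R_i$ being non-critical is equivalent to $\tau-d(\super(R_i))\le\sum_{W\in ch(R_i)}\rdem(W)$, which is exactly the equality condition above. So whenever $R_i$ is not critical the inequality becomes an equality. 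The algebra here is entirely routine; the only real content, and the step I would treat most carefully, is the structural identity $B(R_i)=B(R_{i-1})\cup\mathrm{Light}(R_i)$ — i.e.\ reading off correctly from the heavy-light decomposition how the set $B(\cdot)$ accrues light-child heads as one moves up a heavy path.
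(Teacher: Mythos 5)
Your proof is correct and takes essentially the same route as the paper's: both hinge on the structural identity $ch(R_i)\setminus\{R_{i-1}\}=B(R_i)\setminus B(R_{i-1})$ (which you phrase as $B(R_i)=B(R_{i-1})\sqcup\mathrm{Light}(R_i)$) and on unpacking the $\max$ in the definition of $\rdem$ together with the definition of a critical node. You spell out the $B(\cdot)$ decomposition and its justification more explicitly than the paper does, but the argument is the same.
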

\begin{proof}
By definition of rdem,
\[\rdem(R_i)\ge \sum_{W\in ch(R_i)}\rdem(W) = \rdem(R_{i-1})+\sum_{W\in ch(R_i)\setminus \{R_{i-1}\}}\rdem(W).\]
When $R_i$ is not critical, the inequality can be replaced by equality.
Note that $ch(R_i)\setminus \{R_{i-1}\}=B(R_i)\setminus B(R_{i-1})$, so the statement holds.
\end{proof}
\begin{corollary}\label{cor:highest-black}
Let $R_m$ be the highest critical node in a path $P_\ell$. Then
\[\rdem(R_{k_\ell})-\sum_{W\in B(R_{k_\ell})}\rdem(W)=\rdem(R_m)-\sum_{W\in B(R_m)}\rdem(W).\]
\end{corollary}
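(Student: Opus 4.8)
\textbf{Proof proposal for Corollary~\ref{cor:highest-black}.}
The plan is to invoke Fact~\ref{fact:yt-capacity} and telescope. Write the heavy path as $P_\ell=(R_1,\dots,R_{k_\ell})$ with $R_m$ the highest critical node on it. The only structural observation I need is that \emph{every} node $R_i$ with $m<i\le k_\ell$ is non-critical: this is immediate from the definition of $R_m$ as the highest critical node on $P_\ell$. Consequently, for each such index $i$ the equality branch of Fact~\ref{fact:yt-capacity} applies, giving
\[
\rdem(R_i)-\sum_{W\in B(R_i)}\rdem(W)=\rdem(R_{i-1})-\sum_{W\in B(R_{i-1})}\rdem(W).
\]

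Chaining these equalities for $i=m+1,m+2,\dots,k_\ell$ collapses the intermediate terms and yields
\[
\rdem(R_{k_\ell})-\sum_{W\in B(R_{k_\ell})}\rdem(W)=\rdem(R_m)-\sum_{W\in B(R_m)}\rdem(W),
\]
which is exactly the claimed identity. (If $m=k_\ell$, i.e.\ the top of the path is itself critical, the statement is trivial.)

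There is essentially no obstacle here: the content is entirely carried by Fact~\ref{fact:yt-capacity}, and the only thing to be careful about is articulating why the non-critical hypothesis holds for all nodes strictly above $R_m$ on the path, so that we are allowed to use the equality (rather than merely the inequality) part of that fact. Everything else is a one-line telescoping sum.
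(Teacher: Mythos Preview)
Your proof is correct and is exactly the argument the paper intends: the corollary is stated immediately after Fact~\ref{fact:yt-capacity} with no separate proof, and your telescoping of the equality case of that fact over the non-critical nodes $R_{m+1},\dots,R_{k_\ell}$ is precisely the derivation implied.
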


Now we are prepared to prove \Cref{lem:deg-external-flow-value} in combination with \Cref{lem:deg-external-mincut}.

\begin{proof}[Proof of \Cref{lem:deg-external-flow-value} and \Cref{lem:deg-external-mincut}]
Because every heavy path stops at a leaf, 
$R_1$ is critical. So the highest critical node $R_m$ exists.

We prove by induction on $\ell$. Fix the path $P_\ell$, and assume the statement holds for all paths $P_j$ with $j<\ell$.

Let $S$ be the $R_1$-side of the latest $R_1$-$x$ min cut in $H_\ell$. Because of the infinite capacity edges, $S$ contains a prefix of $(R_1,R_2\setminus R_1,\ldots, R_{k_\ell}\setminus R_{k_\ell-1})$, and $S\subseteq \tilde{\super}(R_{k_\ell})$. That is, $\rho(S)=R_i$ for some node $R_i$ in $P_\ell$. 

First we prove that $\delta_{H_\ell}(S)\ge \tau$. If $y\notin S$, $\delta_{H_\ell}(S)=d(S)+\beta(S)$ by \Cref{lem:deg-external-cut-value}, which is at least $\tau$ because the problem is feasible. The remaining case is $y\in S$. Let $S'=S\setminus\{y\}$. Now $\delta_{H_\ell}(S)=d_{G_{\ell -1}}(S')+\Delta_\ell$ by \Cref{lem:deg-external-cut-value}.

$S'$ is an $R_i$-$(T\setminus R_i)$ cut. Initially, the $R_i$-$(T\setminus R_i)$ min cut value is $d(\super(R_i))$ by \Cref{lem:supreme-mincut}. For every other heavy path $P_j$ in the subtree of $R_i$, iteration $j$ adds $\Delta_j$ to the min cut value by \Cref{lem:deg-external-cut-increase}. Before iteration $\ell$, all other paths in the subtree of $R_i$ are already processed, adding a total of $\sum_{W\in B(R_i)}\rdem(W)$ to the min cut value. So $d_{G_{\ell -1}}(S')\ge d(\super(R_i))+\sum_{W\in B(R_i)}\rdem(W)$. Also, by applying \Cref{fact:yt-capacity} iteratively from $k_\ell$ to $i$, $\Delta_\ell = \rdem(R_{k_\ell}) - \sum_{W\in B(R_{k_\ell})} \rdem(W)\ge \rdem(R_i) - \sum_{W\in B(R_i)} \rdem(W)$. Therefore,
\begin{align*}
    \delta_{H_\ell}(S) &= d_{G_{\ell -1}}(S')+\Delta_\ell\\
    &\ge d(\super(R_i))+\sum_{W\in B(R_i)}\rdem(W) + \rdem(R_i) - \sum_{W\in B(R_i)} \rdem(W)\\
    &\ge d(\super(R_i)) + \tau-d(\super(R_i)) = \tau
\end{align*}

Next we prove that $\delta_{H_\ell}(\super(R_m)\cup\{y\})=\tau$, so it is an $R_1$-$x$ min cut.
\begin{align*}
    \delta_{H_\ell}(\super(R_m)\cup\{y\}) &= d_{G_{\ell -1}}(\super(R_m))+\Delta_\ell\\
    &= d_{G_{\ell -1}}(\super(R_m))+ \rdem(R_m) - \sum_{W\in B(R_m)} \rdem(W)\\
    &= d(\super(R_m))+\sum_{W\in B(R_m)}\rdem(W) + \rdem(R_m) - \sum_{W\in B(R_m)} \rdem(W)\\
    &= d(\super(R_m)) + \tau-d(\super(R_m)) = \tau
\end{align*}
The second equality uses \Cref{cor:highest-black}. The third equality is because each path $P_j$ in the subtree of $R_m$ adds $\Delta_j$ new edges (in $G_{\ell -1}$) to a supreme set in path $P_j$ by \Cref{lem:deg-external-new-edges}, the supreme sets are contained in $\super(R_m)$, and the total value is $\sum_{W\in B(R_m)}\rdem(W)$. The 
fourth equality is because $R_m$ is critical.
\end{proof}

\eat{
\begin{proof}

By inductive assumption, for each heavy path $P$ with index $p$ in the subtree of $R_i$ except $(R_1,\ldots,R_k)$, there exists some supreme set $M(P)$ in path $P$ that has cut value $\tau$ in $G_{\ell -1}$. 
Because $P$ is a path in the subtree of $R_i$, its terminal set is a subset of $R_i$. $S\cap M(P)$ is a subset of $M(P)$ containing all of its terminals, so it is an $s_1(P)$-$t(P)$ cut in $H_p$, and $d_{p}(S\cap M(P))\ge d_{p}(M(P))=\tau$ by inductive hypothesis. The algorithm processes heavy path in a bottom-up manner, so the algorithm does not add edges to a path after processing it, and $d_{\ell -1}(S\cap M(P))\ge d_{\ell -1}(M(P))$. Then \[\delta_H(S\cap (M(P)\cup\{y\}))\ge \delta_H(M(P)\cup\{y\})\] because the contribution of $(y,t)$ edge cancels out. Using submodularity, $\delta_H(S\cup M(P))\le \delta_H(S)$, which implies $M(P)\subseteq S$ because $S$ is the latest $s_1$-$t$ min cut.
By inductive assumption, all new edges added for path $P$ are in $M(P)$. Therefore all previous new edges of other paths in the subtree of $R_i$ are contained in $S$, and 
$d_{\ell -1}(S')=d(S')+\sum_{W\in B(R_i)} \rdem(W)$.

\begin{align*}
    \delta_H(S) &= d_{\ell -1}(S')+\rdem(R_k)-\sum_{W\in B(R_k)}\rdem(W)\\
    &\ge d(S')+\sum_{W\in B(R_i)}\rdem(W) + \rdem(R_i) - \sum_{W\in B(R_i)} \rdem(W)\\
    &\ge d(\super(R_i)) + \tau-d(\super(R_i)) = \tau
\end{align*}
The first step uses \Cref{fact:yt-capacity}. The second step uses \Cref{lem:supreme-mincut}.


Next we prove that $\delta_H(\super(R_m)\cup\{y\})=\tau$, so it is an $s_1$-$t$ min cut.
\begin{align*}
    \delta_H(\super(R_m)\cup\{x\}) &= d_{\ell -1}(\super(R_m))+\left(\rdem(R_k)-\sum_{W\in B(R_k)}\rdem(W)\right)\\
    &= d_{\ell -1}(\super(R_m))+ \left(\rdem(R_m) - \sum_{W\in B(R_m)} \rdem(W)\right)\\
    &= d(\super(R_m))+\sum_{W\in B(R_m)}\rdem(W) + \rdem(R_m) - \sum_{W\in B(R_m)} \rdem(W)\\
    &= d(\super(R_m)) + \tau-d(\super(R_m)) = \tau
\end{align*}
The first step uses \Cref{cor:highest-black}. The second step is because each path $U$ in the subtree adds a total value of $\rdem(U)-\sum_{W\in B(U)}\rdem(U)$, and the total value is $\sum_{W\in B(R_m)}\rdem(W)$. The third step is because $R_m$ is critical.

All new edges of this path are added to vertices in $\super(R_m)$ because a max flow has no flow from sink side to source side (which contains $y$).
\end{proof}
}

\begin{lemma}
The running time of degree-constrained external augmentation algorithm is $\tO(F(m,n))$ for all heavy paths.
\end{lemma}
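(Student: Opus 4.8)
The plan is to reduce the claimed bound to a size estimate. The algorithm runs one iteration per heavy path of $L$, and $L$ is a laminar forest on terminal sets, hence has $O(n)$ heavy paths; so it suffices to bound, summed over heavy paths $\ell$, the cost of building $H_\ell$, running one $R_1$-$x$ max flow on $H_\ell$, and updating $G_{\ell-1}$ to $G_\ell$. Building $H_\ell$ from $G_{\ell-1}$ and reading off $G_\ell$ from the integral max flow both take time linear in $|E(H_\ell)|$, and all the $\rdem$, $\Delta_\ell$, and $B(\cdot)$ values needed are computed once by a linear-time tree DP on $L$; so everything reduces to (i) bounding $\sum_\ell\big(|V(H_\ell)|+|E(H_\ell)|\big)$, and (ii) supplying, for each $\ell$, the one ingredient of the construction of $H_\ell$ not already available from $G_{\ell-1}$ and the heavy-light decomposition — the perturbed supreme set $\widetilde{\super}(R_{k_\ell})$ that Step 4 contracts against. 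For (ii) I would retain from \Cref{sec:extreme-sets} the laminar forest of supreme sets of $\widetilde G$ as \emph{vertex} sets (exactly the forest available after the second post-order traversal of post-processing, before it is projected onto terminals), stored compactly by listing at each node only the vertices it contains but its children do not; then $\widetilde{\super}(R_{k_\ell})$ is the union of these lists over the subtree rooted at $R_{k_\ell}$.

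The crux is the size bound, and the key combinatorial fact I would prove is that \emph{each vertex $v\in V$ lies in $\widetilde{\super}(R_{k_\ell})$ for only $O(\log n)$ heavy paths $\ell$}: since supreme sets of $\widetilde G$ are laminar (\Cref{lem:supreme-laminar}), those among $\{\widetilde{\super}(R_{k_\ell})\}_\ell$ that contain $v$ form a chain, so their terminal projections $R_{k_{\ell_1}}\subsetneq R_{k_{\ell_2}}\subsetneq\cdots$ lie on one root-to-node path of $L$, and such a path meets only $O(\log n)$ heavy-path top nodes by the defining property of the heavy-light decomposition. From the construction of $H_\ell$, after contracting $V\cup\{x\}\setminus\widetilde{\super}(R_{k_\ell})$ and each block $R_i\setminus R_{i-1}$, the only uncontracted vertices are the non-terminals inside $\widetilde{\super}(R_{k_\ell})$ (plus $y$, the $O(k_\ell)$ block nodes, and the outside blob), so $|V(H_\ell)|=O(|\widetilde{\super}(R_{k_\ell})|)$, and $|E(H_\ell)|=O\big(m_\ell+|\widetilde{\super}(R_{k_\ell})|\big)$ where $m_\ell$ counts the edges of $G$ with an endpoint in $\widetilde{\super}(R_{k_\ell})$ (edges between two contracted blocks collapse; the added $(u,x)$, $(u,y)$, and infinite-capacity edges, as well as the at-most-one external $(u,x)$ edge per vertex carried in $G_{\ell-1}$, number $O(|\widetilde{\super}(R_{k_\ell})|)$). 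Charging via the fact above then gives $\sum_\ell|\widetilde{\super}(R_{k_\ell})|=O(n\log n)$ (each vertex charged $O(\log n)$ times) and $\sum_\ell m_\ell=O(m\log n)$ (each edge charged $O(\log n)$ times through each endpoint), so $\sum_\ell\big(|V(H_\ell)|+|E(H_\ell)|\big)=\tO(m)$; the same charging bounds the total cost of the subtree-unions in (ii) by $\tO(n)$.

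To finish, I would invoke the standard property that $F(m,n)/(m+n)$ is nondecreasing (true for all known maximum-flow algorithms, in particular for $F(m,n)=m^{1+o(1)}$), which yields $\sum_\ell \tO\!\big(F(|E(H_\ell)|,|V(H_\ell)|)\big)\le \tO\!\big(F(\sum_\ell|E(H_\ell)|,\sum_\ell|V(H_\ell)|)\big)=\tO(F(m,n))$, absorbing also the per-iteration linear bookkeeping and the $\tO(n)$ preprocessing of part (ii). The one genuinely non-routine point — the main obstacle — is part (ii): recomputing each $\widetilde{\super}(R_{k_\ell})$ from scratch as an earliest min cut would cost a full max flow per heavy path and blow the budget up to $\tO(n\cdot F(m,n))$, so it is essential to carry the vertex-set laminar forest from \Cref{sec:extreme-sets} and to use the $O(\log n)$-per-vertex bound to extract exactly the heavy-path tops in $\tO(n)$ total time.
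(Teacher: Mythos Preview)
Your argument is correct and rests on the same observation as the paper: laminarity of the $\widetilde{\super}(R_{k_\ell})$ together with the $O(\log n)$ depth of the heavy--light decomposition gives total instance size $\tO(m)$ across all heavy paths. The paper phrases it as ``heavy paths of the same depth have disjoint $\widetilde{\super}(R_{k_\ell})$, so the instances at one depth level have total size $O(m)$, and there are $O(\log n)$ levels''; you phrase it as the equivalent charging argument ``each vertex lies in $O(\log n)$ of the $\widetilde{\super}(R_{k_\ell})$.'' The one minor difference is in the aggregation step: the paper batches all max-flow calls at a given depth into a single call on the disjoint union (avoiding any assumption on $F$), whereas you invoke the monotonicity of $F(m,n)/(m+n)$. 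Both are standard; the paper's batching is slightly cleaner since it needs no side assumption on the flow subroutine. Your handling of (ii) --- retaining the vertex-set supreme forest from the second post-order traversal --- is exactly what the paper implicitly relies on when it cites \Cref{lem:traversal2-supreme}.
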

\begin{proof}
If two heavy paths $P_1, P_2$ have the same depth, then they are disjoint, and all supreme sets of the nodes on $P_1$ are disjoint from all supreme sets on $P_2$. By \Cref{lem:traversal2-supreme}, we computed the supreme sets in the perturbed graph $\tilde{G}$. The supreme sets in $\tilde{G}$ form a laminar family, so the heavy paths $P_\ell$ of the same depth have disjoint $\tilde{\super}(R_{k_\ell}^{(\ell)})$. We contract $V\setminus \tilde{\super}(R_{k_\ell}^{(\ell)})$ when constructing $H_\ell$, so all max flow calls for heavy paths of the same depth can run in parallel with total graph size $O(m)$. The maximum depth in heavy-light decomposition is $O(\log n)$, so the total running time is $O(F(m,n)\log n)$.
\end{proof}

\subsection{Degree-Constrained Augmentation}
\label{sec:deg-splitting}
Given an external augmentation solution, we split it off to get new edges on $V$ such that the Steiner connectivity is preserved to be at least $\tau$. They form a degree-constrained augmentation solution. Equivalently, we can first remove all external edges, use the degrees in the external edges to be degree constraints, and then add new edges under these degree constraints. Note that these degree constraints generated by external augmentation solution are tighter then the original degree constraints. With some abuse of notation, we forget about the original degree constraints and refer to these tight degree constraints as $\beta(\cdot)$.


We slightly modify the definition of augmentation chains in degree-constrained setting.
List all maximal supreme sets $Q=(X_1,\ldots,X_r)$ with demands at least 2, such that $X_1$ and $X_r$ have the minimum cut value. (Supreme sets are defined in the graph before adding chain.)
Define an augmentation chain as a set of edges $F=\{(a_i,b_{i+1})\}_{1\le i\le r-1}, a_i, b_i\in X_i$,
such that adding these edges satisfies the degree constraint. In contrast to the unconstrained setting, now we allow the endpoints to be non-terminals.

\subsubsection{Property of Augmentation Chains}

We prove in \Cref{lem:deg-chain-no-new-extreme} that adding an augmentation chain does not create new extreme sets, even if we allow nonterminal endpoints. This is a stronger version of \Cref{thm:chain-no-new-extreme}. It follows that splitting off by augmentation chains preserves the Steiner min cut value to be at least $\tau$.

\begin{fact}
If $Q$ is nonempty, then $r\ge 2$, and $X_1$ and $X_r$ are Steiner min cuts.
\end{fact}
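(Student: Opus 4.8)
The plan is to show that the members of $Q$ of minimum cut value are exactly the inclusion-minimal Steiner min cuts of the current graph, and that there are at least two of those.

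First I would write $\lambda=\min_{s,t\in T}\lambda(s,t)$ for the Steiner min cut value; this is well defined since $Q\neq\emptyset$ forces a supreme set to exist, hence $|T|\ge 2$. Since some set in $Q$ is a maximal supreme set of demand at least $2$, i.e.\ of cut value at most $\tau-2$, and $\lambda$ lower-bounds the cut value of every Steiner cut (a Steiner cut $X$ separates some $s\in X\cap T$ from some $t\in T\setminus X$), we get $\lambda\le\tau-2$. The core claim is then: an inclusion-minimal Steiner cut $M$ with $d(M)=\lambda$ is a maximal supreme set. It is extreme because every Steiner cut $Y\subsetneq M$ has $d(Y)\ge\lambda$ (definition of $\lambda$) and $d(Y)\neq\lambda$ (inclusion-minimality of $M$), so $d(Y)>d(M)$, matching \Cref{def:extreme}. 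It has no extreme proper superset: an extreme Steiner cut $Z\supsetneq M$ would give $d(M)>d(Z)$ via \Cref{fact:subset-of-extreme}, contradicting $d(Z)\ge\lambda=d(M)$. Hence $\mu(\rho(M))$, a well-defined extreme set containing $M$ by \Cref{fact:supreme-extreme} and \Cref{def:supreme}, must equal $M$, and $M$ has no supreme proper superset either, so $M$ is a maximal supreme set with demand $\tau-\lambda\ge 2$; therefore $M\in Q$. Conversely, any $X\in Q$ is extreme, so $d(Y)>d(X)$ for every Steiner cut $Y\subsetneq X$; if in addition $d(X)=\lambda$, then $X$ is an inclusion-minimal Steiner min cut. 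Since $Q$ contains such an $M$ with $d(M)=\lambda$ and no set in $Q$ has cut value below $\lambda$, the minimum cut value over $Q$ is $\lambda$, so $X_1$ and $X_r$ — chosen to realize it — are Steiner min cuts (indeed inclusion-minimal ones).

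It remains to produce two distinct inclusion-minimal Steiner min cuts, which simultaneously gives $r=|Q|\ge 2$ and lets us take $X_1\neq X_r$. Given one such cut $M$, its complement $V\setminus M$ is again a Steiner cut ($T\setminus M\neq\emptyset$ since $T\not\subseteq M$, and $M\cap T\neq\emptyset$ so $T\not\subseteq V\setminus M$) with $d(V\setminus M)=d(M)=\lambda$. Let $M'$ be an inclusion-minimal Steiner cut contained in $V\setminus M$ with $d(M')=\lambda$ (one exists, since $V\setminus M$ itself qualifies); by the same minimality argument $M'$ is an inclusion-minimal Steiner min cut, and $\emptyset\neq M'\subseteq V\setminus M$ forces $M'\neq M$. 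Hence $|Q|\ge 2$.

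The only delicate point — and the crux of the argument — is the identification of the minimum-cut-value members of $Q$ with inclusion-minimal Steiner min cuts; this is exactly where extremeness of supreme sets enters, through \Cref{fact:subset-of-extreme}. Everything else (the complementation producing a second minimal cut, and the standard ``minimal Steiner min cut is a maximal supreme set'' fact) is routine.
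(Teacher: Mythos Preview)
Your proof is correct and follows essentially the same route as the paper's: both produce two disjoint maximal supreme sets of cut value $\lambda$ by looking at the two sides of a Steiner min cut, and both observe that the minimum cut value attained in $Q$ is exactly $\lambda$. The only cosmetic difference is that the paper invokes \Cref{fact:extreme-violator} on each side of a Steiner min cut to extract an extreme set of value $\lambda$, whereas you start directly from an inclusion-minimal Steiner min cut (which is extreme by construction) and then pass to an inclusion-minimal Steiner min cut inside its complement; the logic is the same.
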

\begin{proof}
Let $(S_1, S_2)$ be the two sides of a Steiner min cut and $\lambda$ be the Steiner min cut value. If $S_1$ is extreme, let $U_1=S_1$, otherwise let $U_1$ be the extreme violator of $S_1$ in \Cref{fact:extreme-violator}. Then $U_1$ is extreme and $d(U_1)\le d(S_1)=\lambda$. Also $d(U_1)\ge \lambda$ because $U_1$ is a Steiner cut. So $d(U_1)=\lambda$. If an extreme set $U'\supsetneqq U_1$, then $d(U')<d(U_1)$, which is impossible. So $U_1$ is a maximal extreme set. Applying the same argument to $S_2$ gives another maximal extreme set $U_2$ the is a Steiner min cut.

There exists at least two maximal supreme sets that are Steiner min cuts. When $Q$ is nonempty, they have minimum cut value $\lambda\le \tau-2$ among supreme sets in $Q$. So we choose $X_1$ and $X_r$ to be Steiner min cuts.
\end{proof}

\begin{lemma}\label{lem:deg-chain-intersection-Steiner}
Suppose $U$ is a new extreme set after adding an augmentation chain $F$ and $U$ crosses some $X_i\in Q$. If $d_F(X_i)-d_F(X_i\setminus U)\le 1$, then $\rho(U\cap X_i)\ne\emptyset$.
\end{lemma}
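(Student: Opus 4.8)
The plan is to argue by contradiction. Suppose $\rho(U\cap X_i)=\emptyset$. Since $U$ and $X_i$ are both Steiner cuts, all terminals of $U$ (resp.\ of $X_i$) then lie outside $X_i$ (resp.\ $U$), so $\rho(U\setminus X_i)=\rho(U)$ and $\rho(X_i\setminus U)=\rho(X_i)$; each of these projections is nonempty and a proper subset of $T$, so $U\setminus X_i$ and $X_i\setminus U$ are Steiner cuts, and because $U$ crosses $X_i$ they are proper subsets of $U$ and $X_i$ respectively. This is exactly the step where the assumption $\rho(U\cap X_i)=\emptyset$ is used, so that the extremality inequalities of \Cref{def:extreme} become applicable to these two difference sets.

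Write $d(\cdot)$ and $d'(\cdot)$ for the cut values before and after adding the chain $F$, and $d_F=d'-d$ for the cut function of the edge set $F$ alone. First I would use extremality of $X_i$ before the augmentation: since $X_i\setminus U\subsetneqq X_i$ is a Steiner cut, $d(X_i\setminus U)>d(X_i)$; combining this with posi-modularity (\ref{eq:submodularity-2}) applied to $U$ and $X_i$ under the cut function $d$ gives $d(U\setminus X_i)<d(U)$, i.e.\ $d(U\setminus X_i)\le d(U)-1$ as cut values are integers. Next I would use extremality of $U$ after the augmentation (recall $U$ is a new extreme set): since $U\setminus X_i\subsetneqq U$ is a Steiner cut, $d'(U\setminus X_i)>d'(U)$, i.e.\ $d'(U\setminus X_i)\ge d'(U)+1$. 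Combining these two bounds yields $d_F(U\setminus X_i)=d'(U\setminus X_i)-d(U\setminus X_i)\ge d_F(U)+2$.

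Finally, applying posi-modularity (\ref{eq:submodularity-2}) to $U$ and $X_i$ under the cut function $d_F$ — which is legitimate since $d_F$ is just the cut function of the nonnegatively-weighted graph $(V,F)$ — gives
\[
d_F(U)+2+d_F(X_i\setminus U)\le d_F(U\setminus X_i)+d_F(X_i\setminus U)\le d_F(U)+d_F(X_i),
\]
hence $d_F(X_i)-d_F(X_i\setminus U)\ge 2$, contradicting the hypothesis $d_F(X_i)-d_F(X_i\setminus U)\le 1$. I do not expect a genuine obstacle here: the only points that need care are verifying that $U\setminus X_i$ and $X_i\setminus U$ really are Steiner cuts that are proper subsets (so \Cref{def:extreme} applies) and observing that posi-modularity holds equally for $d$, $d'$, and $d_F$ since each is the cut function of an honest nonnegatively-weighted graph.
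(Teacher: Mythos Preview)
Your proof is correct and follows essentially the same approach as the paper: assume $\rho(U\cap X_i)=\emptyset$, use extremality of $X_i$ (before) and of $U$ (after) on the two difference sets, and combine with posi-modularity for both $d$ and $d_F$ to reach a contradiction. The only difference is the order in which these four ingredients are chained: the paper first bounds $d_F(U\setminus X_i)-d_F(U)\le 1$ via posi-modularity of $d_F$ and the hypothesis, then derives $d(U\setminus X_i)\ge d(U)$ and contradicts posi-modularity of $d$; you instead first derive $d(U\setminus X_i)\le d(U)-1$ via posi-modularity of $d$, deduce $d_F(U\setminus X_i)\ge d_F(U)+2$, and contradict the hypothesis via posi-modularity of $d_F$.
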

\begin{proof}
Assume for contradiction that $\rho(U\cap X_i)=\emptyset$, then $\rho(U\setminus X_i)=\rho(U)\ne\emptyset, \rho(X_i\setminus U)=\rho(X_i)\ne\emptyset$. Also $U\cap X_i\ne\emptyset$ because $U$ and $X_i$ cross, and the two difference sets are proper subsets of $U$ and $X_i$ respectively. Then because $U$ is extreme after adding $F$, and $X_i$ is extreme,
\begin{equation}\label{eq:deg-chain-intersection-Steiner-1}
d(U\setminus X_i)+d_F(U\setminus X_i)>d(U)+d_F(U)
\end{equation}
\begin{equation}\label{eq:deg-chain-intersection-Steiner-2}
d(X_i\setminus U)>d(X_i)
\end{equation}
By posi-modularity (\ref{eq:submodularity-2}),
\[d(U\setminus X_i)+d(X_i\setminus U)\le d(X_i)+d(U)\]
\begin{equation}\label{eq:deg-chain-intersection-Steiner-3}
 d_F(U\setminus X_i)+d_F(X_i\setminus U)\le d_F(X_i)+d_F(U)
\end{equation}
(\ref{eq:deg-chain-intersection-Steiner-3}) implies $d_F(U\setminus X_i)-d_F(U)\le d_F(X_i)-d_F(X_i\setminus U)\le 1$. So (\ref{eq:deg-chain-intersection-Steiner-1}) implies
\begin{equation}\label{eq:deg-chain-intersection-Steiner-4}
d(U\setminus X_i)\ge d(U)
\end{equation}
Together, (\ref{eq:deg-chain-intersection-Steiner-2})~and~(\ref{eq:deg-chain-intersection-Steiner-4}) violate posi-modularity (\ref{eq:submodularity-2}).
\end{proof}

\begin{lemma}\label{lem:deg-chain-mincut}
If $U$ is a new extreme set created by an augmentation chain $F$, then $U\cap X_1=U\cap X_r=\emptyset$.
\end{lemma}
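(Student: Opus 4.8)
I would argue by contradiction, following the skeleton of the proof of \Cref{thm:chain-no-new-extreme} but with \Cref{lem:deg-chain-intersection-Steiner} taking over the role that \Cref{lem:u-cross-x} and \Cref{lem:u-walk} play there. Reversing the order of $X_1,\dots,X_r$ sends the chain $F$ to a chain for the reversed sequence and swaps the roles of $X_1$ and $X_r$, so it suffices to derive a contradiction from the assumption that the new extreme set $U$ satisfies $U\cap X_1\neq\emptyset$.

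The first step is to find a chain edge lying entirely inside $U$. By \Cref{fact:extreme-violator} there is an extreme set $W\subsetneqq U$ with $d(W)\le d(U)$; since $U$ is extreme after adding $F$, we get $d'(W)>d'(U)$, i.e.\ $d_F(W)>d_F(U)$. Every chain edge crossing $W$ has an endpoint in $W\subseteq U$, and at most $d_F(U)$ of them also cross $U$, so at least $d_F(W)-d_F(U)\ge 1$ chain edges have both endpoints in $U$; fix one, say $(a_j,b_{j+1})$, so $a_j\in X_j$ and $b_{j+1}\in X_{j+1}$ both lie in $U$. This immediately rules out $U\subseteq X_1$, since that would give $X_j\cap X_1\neq\emptyset$ and $X_{j+1}\cap X_1\neq\emptyset$, forcing $X_j=X_1=X_{j+1}$ by the pairwise disjointness of the maximal supreme sets in $Q$. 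Hence from now on $U\cap X_1\neq\emptyset$ and $U\setminus X_1\neq\emptyset$: $U$ either crosses $X_1$ or properly contains it.

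For the remaining case I would walk along the chain from $(a_j,b_{j+1})$ toward both ends. The engine of the walk is this: at an interior set $X_i$ both chain edges are incident to $a_i$ and $b_i$, so $d_F(X_i)=2$, and if $a_i\in U$, $b_i\notin U$ and $U\setminus X_i\neq\emptyset$ then $d_F(X_i)-d_F(X_i\setminus U)=1$, so \Cref{lem:deg-chain-intersection-Steiner} yields $\rho(U\cap X_i)\neq\emptyset$; the extremity of $X_i$ (so $d(U\cap X_i)>d(X_i)$), submodularity, and the extremity of $U$ after $F$ then force $U\cup X_i$ (respectively $U\cup X_1$ at the first set, where $d_F(X_1)=1$ makes \Cref{lem:deg-chain-intersection-Steiner} apply unconditionally) to be a Steiner \emph{min} cut, which by \Cref{lem:mincut-uncross-extreme} cannot cross any extreme set and so cannot cross the neighbouring $X_{i\pm1}$ — which pins down how $U$ meets it. Propagating this to both ends, $U$ must contain $X_1$ and $X_r$; once $U$ contains the whole chain, no chain edge crosses $U$, so $d_F(U)=0$ and $d'(U)=d(U)$, and then $U\cup X_1$ is a Steiner min cut of value $d(X_1)\le\tau-2$ that cannot cross the extreme set $X_r\subseteq U$ nor swallow all terminals of an extreme set of demand at least $2$ — the same configuration that produced the final contradiction in \Cref{thm:chain-no-new-extreme}. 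I expect the main obstacle to be keeping the walk alive when $U$ crosses some $X_i$ only at non-terminals: there \Cref{lem:deg-chain-intersection-Steiner} gives only $\rho(U\cap X_i)\neq\emptyset$, not the terminal-level uncrossing of \Cref{lem:u-cross-x}, so each step must be closed by hand with submodularity/posi-modularity estimates rather than by projecting onto terminals.
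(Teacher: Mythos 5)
Your plan diverges substantially from the paper's proof, and I think the divergence introduces a genuine gap rather than an alternative route.

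The paper's proof is entirely \emph{local} to $X_1$. It never walks along the chain. After noting (as you also do, and the paper leaves implicit) that $U$ cannot be contained in any extreme set, the remaining possibilities are $U\supsetneqq X_1$ or $U$ crosses $X_1$, and each is killed in place using only two facts about $X_1$: (i) $X_1$ is a Steiner \emph{min} cut, so $d(X_1)=\lambda$ and, more importantly, any Steiner cut $Y\supseteq X_1$ satisfies $d(Y)\ge\lambda=d(X_1)$; and (ii) $d_F(X_1)=1$, which makes $d_F(U\cap X_1)\le 1$ and lets \Cref{lem:deg-chain-intersection-Steiner} fire unconditionally. From $d(U\cap X_1)\ge d(U)+d_F(U)$ and submodularity, $d(U\cup X_1)\le d(X_1)-d_F(U)\le\lambda$, so if $U\cup X_1$ is a Steiner cut it is forced to be a Steiner min cut with $d_F(U)=0$ (handing off to the $U\supsetneqq X_1$ case), and if not, posi-modularity on $X_1\setminus U$, $U\setminus X_1$ closes it.

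Your plan instead tries to \emph{walk} from the interior edge $(a_j,b_{j+1})$ out to $X_1$ and $X_r$. The propagation step you describe — ``$U\cup X_i$ is a Steiner min cut, hence by \Cref{lem:mincut-uncross-extreme} it doesn't cross $X_{i\pm1}$'' — relies on concluding $U\cup X_i$ is a Steiner min cut. But the inequality you actually get from $d(U\cap X_i)\ge d(U)$ and submodularity is only $d(U\cup X_i)\le d(X_i)$, and for \emph{interior} $X_i$ there is no reason that $d(X_i)=\lambda$: the only $X_i$'s guaranteed to be Steiner min cuts are $X_1$ and $X_r$. So the step that is supposed to carry the walk through the interior sets fails for exactly the sets you need to cross, and you correctly suspect this is where the approach gets stuck (``keeping the walk alive when $U$ crosses some $X_i$ only at non-terminals''). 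A second, structural problem is that the walk would have to reach $X_1$ starting from $(a_j,b_{j+1})$ before you can extract a contradiction, but you have no control over $j$ and no analogue of \Cref{lem:u-walk} to guarantee that successive $X_i$'s are swallowed; if a chain edge crosses $U$ the walk simply stops. The clean fix is to abandon the walk and argue directly about $X_1$ as the paper does. Your opening observation that a chain edge lies entirely inside $U$, and that this rules out $U\subseteq X_1$, is sound and is a nice way to make the paper's unexplained ``$U$ is not contained in any extreme set'' explicit, so that part is worth keeping.
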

\begin{proof}
$X_1$ is a Steiner min cut with one new edge endpoint $a_1$. Let $\lambda=d(X_1)$ be the Steiner min cut value.
Assume for contradiction that $U\cap X_1\ne\emptyset$.
$U\ne X_1$ because $U$ is not extreme before augmentation.

First we rule out $U\supsetneqq X_1$. If that happens, because $U$ is extreme after augmentation,
$$d(X_1)+d_F(X_1)>d(U)+d_F(U)$$
$$\lambda+1=d(X_1)+d_F(X_1)\ge d(U)+d_F(U)+1\ge \lambda+1$$
Therefore $U$ is a Steiner min cut and $d_F(U)=0$. A Steiner min cut cannot cross any supreme set by \Cref{lem:mincut-uncross-extreme}. $U$ also does not cross any edge in $F$ because $d_F(U)=0$, so $U$ contains all supreme sets in $Q$. But $U$ is a Steiner cut, so there is a terminal outside $U$. $U$'s demand is at least 2 because $U$ is a Steiner min cut and $F\ne\emptyset$, so $V\setminus U$ has demand at least 2, and there exists a supreme set outside $U$ with demand at least 2, which contradicts the definition of $Q$.

Next we rule out the case that $U$ and $X_1$ cross.
$d_F(X_1)-d_F(X_1\setminus U)\le d_F(X_1)=1$, so $U\cap X_1$ is a Steiner cut by \Cref{lem:deg-chain-intersection-Steiner}.
Because $U$ is extreme after augmentation, \[d(X_1\cap U)+d_F(X_1\cap U)>d(U)+d_F(U).\] Notice that $d_F(X_1\cap U)\le d_F(X_1)=1$. So $d(X_1\cap U)\ge d(U)+d_F(U)$. By submodularity (\ref{eq:submodularity-1}), $d(X_1\cup U)\le d(X_1)-d_F(U)$. If $X_1\cup U$ is a Steiner cut, then it is a Steiner min cut and $d_F(U)=0$, so we can use the same argument as the former case. If $X_1\cup U$ is not a Steiner cut, it must be $\rho(X_1\cup U)=T$. Because $X_1$ and $U$ are Steiner cuts, $\rho(X_1\setminus U)\ne\emptyset, \rho(U\setminus X_1)\ne\emptyset$. Because $X_1$ is extreme, $d(X_1\setminus U)>d(X_1)$. Because $U$ is extreme after adding $F$, $d(U\setminus X_1)+d_F(U\setminus X_1)>d(U)+d_F(U)$. Along with $d_F(U\setminus X_1)\le d_F(U)+d_F(X_1)=d_F(U)+1$, we obtain $d(U\setminus X_1)\ge d(U)$. Adding this inequality with $d(X_1\setminus U)>d(X_1)$ contradicts posi-modularity (\ref{eq:submodularity-2}).

Because $U$ is not contained in any extreme set, the only remaining case is that $U\cap X_1=\emptyset$. $U\cap X_r=\emptyset$ follows the same argument.
\end{proof}

\begin{lemma}\label{lem:deg-chain-no-new-extreme}
In degree constraint setting, adding an augmentation chain $F$ does not create any new extreme set.
\end{lemma}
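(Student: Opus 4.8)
The plan is to argue by contradiction along the same lines as the proof of \Cref{thm:chain-no-new-extreme}, replacing its two ingredients \Cref{lem:u-walk} and \Cref{lem:u-cross-x} (which exploited that chain endpoints are terminals) by the degree-constrained versions \Cref{lem:deg-chain-intersection-Steiner} and \Cref{lem:deg-chain-mincut}. So suppose that $U$ is an extreme set created by adding the augmentation chain $F=\{(a_i,b_{i+1})\}_{1\le i\le r-1}$ on $Q=(X_1,\dots,X_r)$, and let $W\subsetneqq U$ be an extreme violator of $U$ in the graph before $F$ is added (\Cref{fact:extreme-violator}); thus $d(W)\le d(U)$ but $d'(W)>d'(U)$ after adding $F$, so $d_F(W)>d_F(U)\ge 0$. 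Exactly as in the unconstrained case, the chain meets each maximal supreme set $X_j$ in at most its two endpoints $a_j,b_j$, and an extreme set cannot weave in and out of the $X_j$'s (using \Cref{lem:mincut-uncross-extreme} together with posi-modularity), so $d_F(W)\le 2$ and hence $d_F(U)\le 1$. Since strictly more chain edges cross $W$ than cross $U$, some chain edge $e_0=(a_i,b_{i+1})$ crosses $W$ but not $U$; because $W\subsetneqq U$, this forces $a_i,b_{i+1}\in U$. By \Cref{lem:deg-chain-mincut}, $U\cap X_1=U\cap X_r=\emptyset$, and since $a_i\in X_i\cap U$ and $b_{i+1}\in X_{i+1}\cap U$ we get $2\le i\le r-2$, i.e. $e_0$ sits strictly in the interior of the chain.

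The heart of the argument is a walk along the chain. I would maintain an interval of slots $[j_L,j_R]\ni i$ with $X_j\cap U\ne\emptyset$ for every $j$ in it, and show it can always be enlarged on at least one side. If the chain edge bordering $X_{j_L}$ on the outside does not cross $U$, then one checks (using that $X_{j_L}$ carries only its two chain edges) that $d_F(X_{j_L})-d_F(X_{j_L}\setminus U)\le 1$, so \Cref{lem:deg-chain-intersection-Steiner} makes $X_{j_L}\cap U$ a Steiner cut; being a proper Steiner subset of the new extreme set $U$, it satisfies $d'(X_{j_L}\cap U)>d'(U)$, and submodularity (\Cref{eq:submodularity-3} of \Cref{fact:submodularity}) then propagates non-emptiness of the intersection to $X_{j_L-1}$; symmetrically on the right. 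Because $d_F(U)\le 1$, at most one of the two border edges crosses $U$, so the interval is obstructed on at most one side and must eventually reach slot $1$ or slot $r$, giving $X_1\cap U\ne\emptyset$ or $X_r\cap U\ne\emptyset$ and contradicting \Cref{lem:deg-chain-mincut}. The boundary case $d_F(U)=0$ is handled similarly: no border edge crosses $U$ so the walk reaches both ends (again contradicting \Cref{lem:deg-chain-mincut}); equivalently, in that case $X_1\cup U$ is a Steiner min cut, which by \Cref{lem:mincut-uncross-extreme} cannot cross the extreme set $X_1$, forcing $U$ to contain every terminal lying in an extreme set of demand at least $2$, whereas $X_1$ is itself a Steiner min cut with $\dem(X_1)\ge 2$ and hence contributes such a terminal outside $U$.

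The step I expect to be the main obstacle is the walk itself. Unlike \Cref{lem:u-walk}, \Cref{lem:deg-chain-intersection-Steiner} only certifies that $X_j\cap U$ contains \emph{some} terminal, not a named chain endpoint, so I must carry along just enough side information to (i) re-verify the hypothesis $d_F(X_j)-d_F(X_j\setminus U)\le 1$ at every step of the propagation, and (ii) keep the walk alive through the $d_F(U)=0$ boundary case, where the Steiner-min-cut structure of $Q$ and \Cref{lem:mincut-uncross-extreme} must be invoked to finish. Getting the invariant of the walk exactly right — so that it both extends and yields a clean contradiction with \Cref{lem:deg-chain-mincut} — is the crux of the proof.
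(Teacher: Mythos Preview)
Your overall setup (violator $W$, the inequalities $0\le d_F(U)<d_F(W)\le 2$, a chain edge $(a_i,b_{i+1})$ with both endpoints in $U$, and the appeal to \Cref{lem:deg-chain-mincut}) matches the paper. The divergence, and the gap, is the walk.

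The propagation step does not go through. Knowing only $X_{j_L}\cap U\ne\emptyset$ and that the outside edge $(a_{j_L-1},b_{j_L})$ does not cross $U$ is not enough to conclude $X_{j_L-1}\cap U\ne\emptyset$: that conclusion requires $b_{j_L}\in U$, whereas \Cref{lem:deg-chain-intersection-Steiner} only guarantees that $X_{j_L}\cap U$ contains \emph{some terminal}, and in the degree-constrained setting $b_{j_L}$ need not be a terminal. Your submodularity sentence (``propagates non-emptiness of the intersection to $X_{j_L-1}$'') hides exactly this: from $d'(X_{j_L}\cap U)>d'(U)$ you get $d'(X_{j_L}\cup U)<d'(X_{j_L})$, which says nothing about $X_{j_L-1}$. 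You flag this yourself as the main obstacle, and it is not a matter of bookkeeping; the invariant you want (tracking the actual endpoints $a_j,b_j$ inside $U$) is precisely what is unavailable once endpoints can be nonterminals. Also, your claim that $d_F(X_{j_L})-d_F(X_{j_L}\setminus U)\le 1$ follows from the outside edge not crossing $U$ is not right as stated: that difference equals $\mathbf 1[a_{j_L}\in U]+\mathbf 1[b_{j_L}\in U]$, which is governed by membership of the endpoints in $U$, not by whether the outside edge crosses $U$.

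The paper does not attempt a walk. Instead it works with $Q_U=\{X_j:\rho(X_j\cap U)\ne\emptyset\}$, picks $X_t=\arg\min_{Y\in Q_U}d(Y)$, and from the new-extreme inequality plus submodularity gets $d(U\cup X_t)<d(X_t)+\theta_1+\theta_2$ where $\theta_1=\mathbf 1[b_t\in U]-d_{F_1}(U)$ and $\theta_2=\mathbf 1[a_t\in U]-d_{F_2}(U)$ for the two halves $F_1,F_2$ of the chain split at $t$. If $\theta_1=1$ (so $b_t\in U$ and $d_{F_1}(U)=0$), a single sign-change argument along $F_1$ together with $a_1\notin U$ (from \Cref{lem:deg-chain-mincut}) produces one ``partner'' $X_p$ with $a_p\in U$, $b_p\notin U$; \Cref{lem:deg-chain-intersection-Steiner} puts $X_p\in Q_U$ and submodularity knocks the bound down by $1$. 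Symmetrically one finds $X_q$ on the other side. This yields $d(U_3)<d(X_t)$ for $U_3=U\cup X_t\cup X_p\cup X_q$. Finally, any extreme violator $W'$ of $U_3$ lies in some $X_{w'}\in Q_U$, so $d(W')\ge d(X_{w'})\ge d(X_t)$, contradicting $d(U_3)<d(X_t)$. The point is that the argument never needs to propagate all the way to $X_1$ or $X_r$; it only needs at most one partner on each side, found by a parity (sign-change) observation rather than an inductive walk.
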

\begin{proof}
In this proof, the cut value $d(\cdot)$ and supreme sets are defined in the graph before adding the augmentation chain $F$. Let $d_F(\cdot)$ be the cut value on the edges of $F$. Then after adding $F$, the cut value of $U$ is $d(U)+d_F(U)$.

Assume for contradiction that $U$ is a new extreme set. Because $U$ is not extreme before augmentation, let $W\subsetneqq U$ be its extreme violator in \Cref{fact:extreme-violator} with $d(W)\le d(U)$. Because $U$ is extreme after adding $F$, $d(W)+d_F(W)>d(U)+d_F(U)$.
Noticing that we add at most 2 edges to any extreme set, the above inequalities imply $$0\le d_F(U)<d_F(W)\le 2.$$ 
$d_F(W)>d_F(U)$ implies that there exists a new edge crossing $W$ but not crossing $U$. Suppose $(a_i, b_{i+1})\in F$ is such an edge, then $a_i, b_{i+1}\in U$. So $U$ intersects two maximal supreme sets $X_i, X_{i+1}$, and $U$ is not contained in any extreme set.

Let $Q_U$ be the sub-family of $Q$ that intersects $\rho(U)$. Because $d_F(W)>0$, $W$ is contained in a maximal supreme set $X_w$ in $Q$. Because $X_w\cap U\supseteq W$ is a Steiner cut, $X_w\in Q_U$, and $Q_U$ is nonempty.
Let $X_t=\arg\min_{Y\in Q_U}d(Y)$ be the set with minimum cut value in $Q_U$. Because $X_t\in Q_U$, $U\cap X_t$ is a Steiner cut.
Because $U$ is extreme after augmentation and not contained in $X_t$,
\[d(X_t\cap U)+d_F(X_t\cap U)>d(U)+d_F(U)\]
Using submodularity $d(X_t\cap U)+d(X_t\cup U)\le d(U)+d(X_t)$,
\begin{equation}\label{eq:deg-chain-no-new-extreme-1}
 d(U\cup X_t)< d(X_t)+d_F(X_t\cap U)-d_F(U)
\end{equation}

By \Cref{lem:deg-chain-mincut}, $1<t<r$. 
Divide $F$ into two halves $F_1=\{(a_i, b_{i+1})\}_{1\le i\le t-1}$ and $F_2=\{(a_i, b_{i+1})\}_{t\le i\le r-1}$. Then the extra term $d_F(X_t\cap U)-d_F(U)$ is
\[d_F(X_t\cap U)-d_F(U)=1[b_t\in U]+1[a_t\in U]-d_{F_1}(U)-d_{F_2}(U)\]
where $1[\cdot]$ is the indicator function that takes 1 if the event happens and 0 otherwise. Let $\theta_1=1[b_t\in U]-d_{F_1}(U)$, $\theta_2=1[a_t\in U]-d_{F_2}(U)$, so that $\theta_1,\theta_2\le 1$. Let $U_1=U\cup X_t$. (\ref{eq:deg-chain-no-new-extreme-1}) can be rewritten as $$d(U_1)<d(X_t)+\theta_1+\theta_2.$$

When $\theta_1\le 0$, let $U_2=U_1$. When $\theta_1=1$, we find a partner $X_p$ below, and let $U_2=U_1\cup X_p$. Because $\theta_1=1$, $b_t\in U$ and $d_{F_1}(U)=0$. Then for each $1\le i\le t-1$, either $a_i, b_{i+1}\in U$ or $a_i, b_{i+1}\notin U$. 
The two ends are $a_1\notin U$ by \Cref{lem:deg-chain-mincut} and $b_t\in U$, so there exists some $2\le p\le t-1$ such that $b_p\notin U, a_p\in U$. 
Then $d_F(X_p\setminus U)=1, d_F(X_p)=2$, so $U\cap X_p$ is a Steiner cut by \Cref{lem:deg-chain-intersection-Steiner} and $X_p\in Q_U$. 
Because $X_p$ is extreme, \[d(U\cap X_p)>d(X_p).\]
Applying submodularity on $U_1$ and $X_p$ ($U_1\cap X_p=U\cap X_p$  because $X_t$ is disjoint from $X_p$),
\[d(U_1\cup X_p)<d(U_1)\]
In both cases ($\theta_1\le0$ and $\theta_1=1$), we have
$$d(U_2)\le d(U_1)-\theta_1<d(X_t)+\theta_2.$$

The other direction is symmetric, except we start with $U_2$, not $U_1$. When $\theta_2\le 0$, let $U_3=U_2$. When $\theta_2=1$, we can find a partner $X_q$ such that $t\le q\le r-1$ and $b_q\in U$ but $a_q\notin U$. Let $U_3=U_2\cup X_q$. Using the same argument, we have
\begin{equation}\label{eq:deg-chain-no-new-extreme-2}
    d(U_3)\le d(U_2)-\theta_2<d(X_t)
\end{equation}

$U_3$ is the union of $U, X_t$ and $X_p, X_q$ if they exist. Recall that $X_t$ is the set with minimum cut value in $Q_U$, and $X_p, X_q\in Q_U$ if they exist. $U_3$ is a Steiner cut because $X_1$ is outside by \Cref{lem:deg-chain-mincut}.
$U_3$ is not extreme because it contains $U$, which is not contained in any extreme set. So there exists an extreme violator $W'$ of $U_3$ with $d(W')\le d(U_3)$ by \Cref{fact:extreme-violator}. Because $W'$ is extreme, it is contained in a  maximal supreme set $X_{w'}$. Because $X_{w'}\cap U_3\supseteq W'$ is a Steiner cut, either $X_{w'}\cap U$ is a Steiner cut, or $X_{w'}$ is one of $X_t, X_p, X_q$ if they exist. So $X_{w'}\in Q_U$, and $d(X_{w'})\ge d(X_t)$. Because $W'\subseteq X_{w'}$ and $X_{w'}$ is extreme, $d(W')\ge d(X_{w'})$. In conclusion
$$d(U_3)\ge d(W')\ge d(X_{w'})\ge d(X_t)$$
which contradicts (\ref{eq:deg-chain-no-new-extreme-2}).
\end{proof}

The algorithm is to simply keep adding augmentation chains. By \Cref{lem:deg-split-chain-feasible}, the partial solution is always feasible. Moreover, because we 
split off the optimal external augmentation solution, the partial solution is contained in an optimal solution. The algorithm stops when there is no supreme sets with demand at least 2, or equivalently, when the Steiner connectivity reaches at least $\tau-1$. This situation is solved in \Cref{sec:deg-augment-1}.

Let $G_\ell$ be the graph after adding $\ell$ chains, and $G^{ext}_\ell$ be the external graph after splitting off $\ell$ chains. (The definition of $G_\ell$ differs from \Cref{sec:deg-external}.) Initially $G_0=G$ is the original graph, and $G^{ext}_0=(V\cup\{x\}, E\cup F^{ext})$ is the original graph after adding the external augmentation solution. Let $\beta_\ell(u)$ be the remaining external degrees of $u$ in $G_\ell$, or equivalently the capacity of $(u,x)$ edge in $G^{ext}_\ell$. Let $\beta_\ell(S) = \sum_{u\in S} \beta_\ell(u)$. Then for any vertex set $S$, $d_{G^{ext}_\ell}(S)=d_{G_\ell}(S)+\beta_\ell(S)$. 

\begin{lemma}\label{lem:deg-split-chain-feasible}
The external graph $G^{ext}_\ell$ always has Steiner connectivity at least $\tau$.
\end{lemma}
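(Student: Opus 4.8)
The plan is an induction on $\ell$. For the base case $\ell=0$, the graph $G^{ext}_0$ is the original graph together with the optimal external augmentation solution $F^{ext}$, which is feasible by \Cref{lem:deg-external-optimal}, so its Steiner connectivity is already at least $\tau$. For the inductive step I would assume $G^{ext}_{\ell-1}$ has Steiner connectivity at least $\tau$ and deduce the same for $G^{ext}_\ell$; here $F$ denotes the $\ell$-th augmentation chain and $X_1,\ldots,X_r\in Q$ the maximal supreme sets of $G_{\ell-1}$ used to define it.

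I would first establish two facts. (i) A bookkeeping identity: writing $\bar{d}_F(S)$ for the number of edges of $F$ with \emph{both} endpoints in a set $S\subseteq V$, we have $d_{G^{ext}_\ell}(S)=d_{G^{ext}_{\ell-1}}(S)-2\,\bar{d}_F(S)$, because $d_{G_\ell}(S)=d_{G_{\ell-1}}(S)+d_F(S)$ while the remaining external budget $\beta_\ell(S)$ drops below $\beta_{\ell-1}(S)$ by the number of (chain-edge, endpoint-in-$S$) incidences, which is $d_F(S)+2\,\bar{d}_F(S)$, and the $d_F(S)$ terms cancel; this uses $\beta_\ell\ge 0$, which holds because an augmentation chain respects the degree constraints. (ii) The Steiner connectivity of $G^{ext}_\ell$ is attained at a set that is extreme in $G_\ell$ and avoids $x$: take a minimum Steiner cut of $G^{ext}_\ell$ and, replacing it by its complement if necessary, let $S$ be the side with $x\notin S$; then $S\subseteq V$ is a Steiner cut of $G_\ell$. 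As long as $S$ has a proper Steiner subset $Y$ with $d_{G_\ell}(Y)\le d_{G_\ell}(S)$, then $d_{G^{ext}_\ell}(Y)=d_{G_\ell}(Y)+\beta_\ell(Y)\le d_{G_\ell}(S)+\beta_\ell(S)=d_{G^{ext}_\ell}(S)$ since $\beta_\ell(Y)\le\beta_\ell(S)$, so $Y$ is again a minimum Steiner cut; iterating strictly shrinks the set, hence terminates at a set $X$ that is extreme in $G_\ell$ and still realizes the minimum.

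The crux is then the claim that $\bar{d}_F(X)=0$ for \emph{every} set $X$ extreme in $G_\ell$. By \Cref{lem:deg-chain-no-new-extreme}, $X$ is already extreme in $G_{\ell-1}$, so $X\subseteq\super(\rho(X))$, a supreme set of $G_{\ell-1}$. If $X$ contained both endpoints $a_i\in X_i$ and $b_{i+1}\in X_{i+1}$ of some chain edge, then $\super(\rho(X))$ meets $X_i$; since the supreme sets of $G_{\ell-1}$ are laminar (\Cref{lem:supreme-laminar}) and $X_i$ is a maximal supreme set, this forces $\super(\rho(X))\subseteq X_i$, hence $X\subseteq X_i$, and symmetrically $X\subseteq X_{i+1}$; but distinct maximal members of a laminar family are disjoint, so $X_i\cap X_{i+1}=\emptyset$, contradicting $X\ne\emptyset$. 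Thus $\bar{d}_F(X)=0$, and by the identity in (i) the Steiner connectivity of $G^{ext}_\ell$ equals $d_{G^{ext}_\ell}(X)=d_{G^{ext}_{\ell-1}}(X)$ for this extreme set $X$; as $X$ is also a Steiner cut of $G^{ext}_{\ell-1}$, the inductive hypothesis gives $d_{G^{ext}_{\ell-1}}(X)\ge\tau$, which closes the induction.

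The main obstacle I anticipate is the temptation to uncross a minimum (or extreme) Steiner cut of $G^{ext}_\ell$ directly against the sets $X_j$: because a degree-constrained augmentation chain may use non-terminal endpoints, $S\cap X_j$ need not be a Steiner cut, so the extremality of $X_j$ cannot be invoked on it. The move that sidesteps this is exactly the $\beta_\ell\ge 0$ descent in (ii), which transfers the question to extreme sets of the \emph{internal} graph $G_\ell$, where \Cref{lem:deg-chain-no-new-extreme} and the laminarity of supreme sets apply with no projection subtlety. The degenerate cases ($|T|\le 1$, or $Q=\emptyset$ so that no chain is added) are trivial, and the no-cut-edge / even-degree conditions on $x$ are standing assumptions.
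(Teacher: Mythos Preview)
Your proof is correct and follows essentially the same approach as the paper's: induct on $\ell$, reduce to an extreme set of $G_\ell$ on the $x$-free side, and use \Cref{lem:deg-chain-no-new-extreme} together with laminarity and maximality of the $X_i$ to show no chain edge can lie entirely inside such a set. The paper phrases the descent as taking a minimal Steiner min cut of $G^{ext}_{\ell+1}$ and deriving a contradiction from a violator, while you make the descent explicit and package the splitting-off effect via the identity $d_{G^{ext}_\ell}(S)=d_{G^{ext}_{\ell-1}}(S)-2\bar d_F(S)$; these are the same argument in different wrappings.
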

\begin{proof}
We prove by induction on $\ell$ that the Steiner connectivity in $G^{ext}_\ell$ is at least $\tau$. The base case $\ell=0$ holds because the external augmentation solution is feasible by \Cref{lem:deg-external-optimal}.

Assume the statement holds up to $\ell$. Consider a minimal Steiner min cut $X$ in $G^{ext}_{\ell+1}$, taking the side $x\notin X$. By inductive hypothesis, $d_{G^{ext}_\ell}(X)\ge \tau$. If no edges of chain $\ell+1$ are inside $X$, then the splitting off does not decrease the cut value of $X$, and $d_{G^{ext}_{\ell+1}}(X)\ge d_{G^{ext}_\ell}(X)\ge \tau$. The remaining case is that there exists a chain edge inside $X$. A chain edge connects two maximal supreme sets (defined in $G_\ell$), so $X$ is not extreme in $G_\ell$ and hence not extreme in $G_{\ell+1}$ by \Cref{lem:deg-chain-no-new-extreme}, and there exists an extreme violator $W$ of $X$ in $G_{\ell+1}$, $d_{G_{\ell+1}}(W)\le d_{G_{\ell+1}}(X)$. Because $W\subsetneqq X$, $\beta_{\ell+1}(W)\le \beta_{\ell+1}(X)$. So $$d_{G^{ext}_{\ell+1}}(W)=d_{G_{\ell+1}}(W)+\beta_{\ell+1}(W)\le d_{G_{\ell+1}}(X)+\beta_{\ell+1}(X)= d_{G^{ext}_{\ell+1}}(X)$$ which contradicts the assumption that $X$ is a minimal Steiner min cut in $G^{ext}_{\ell+1}$. So this case is impossible, and the Steiner connectivity is at least $\tau$.
\end{proof}

\subsubsection{Finding Augmentation Chains}
Next we describe the detailed steps to find an augmentation chain efficiently. We first establish the algorithm, with a slight modification in the heavy-light decomposition of external augmentation, then prove it always finds augmentation chains in \Cref{lem:deg-find-chains}.

Let $L_2$ be the collection of $R\in L$ such that $R$ is critical and all ancestors of $R$ are not critical. Let $L_{high}$ be the union of paths from nodes in $L_2$ to the roots, that is a sub-forest of $L$ with leaves $L_2$. In the external augmentation algorithm, modify the heavy-light decomposition step such that the heavy paths stop at nodes in $L_2$. More precisely, apply heavy-light decomposition on $L_{high}$, then separately apply heavy-light decomposition on each subtree rooted at a node in $L_2$, and then merge the heavy-light decompositions. This way, a leaf to root path still passes $O(\log n)$ heavy paths, so all proofs in \Cref{sec:deg-external} work for the new definition. But now, each heavy path stops at a leaf or a node in $L_2$, which is critical. We say that the heavy-light decomposition on $L_{high}$ is \emph{upper level}, and the ones on the subtrees rooted at $L_2$ are \emph{lower level}.

We maintain the forest $L_{high}$. For lower level heavy paths, regard them as contracted into their ancestors in $L_2$.
For each $R\in L_{high}$, maintain the cut value $c(R)$, which is initialized to $d(\super(R))$. 
We assign an external edge to a heavy path if the edge is added by the external augmentation algorithm when processing the heavy path. In each iteration, we list all roots of $L_{high}$, and pick the roots $R_i$ with $c(R_i)\le \tau-2$ to form a list $Q_T=(R_1,\ldots, R_r)$, such that $c(R_1)$ and $c(R_r)$ are the minimum two in $Q_T$. Pick a chain $F=\{(a_i, b_{i+1})\}_{1\le i\le r-1}$, such that $a_i, b_i$ are vacant vertices (having unused external edges) assigned to heavy paths in the tree rooted at $R_i$. (When $a_i=b_i$, it needs two vacant degrees.) We use the external edges assigned to lower level paths first. The external edges assigned to upper level paths are used only when all lower level degrees in the tree are used.

After picking a chain, we add it to the solution, and update the vacant degrees. 
Next we update $c(R)$ values. For each endpoint $u$ in the chain, define its \emph{representative leaf} $U\in L_2$ as follows. If $u$ is assigned to an upper level path, let $U$ be the leaf of the heavy path to which $u$ is assigned. If $u$ is assigned to a lower level path, let $U$ be the node to which that path is contracted. Add 1 to the values $c(R)$ for each $R$ on the path from $U$ to the root.
Next we update the structure of $L_{high}$. If some node $R\in L_{high}\setminus L_2$ and its child $W$ have $c(R)\ge c(W)$, delete $R$ in $L_{high}$ and assign its children to its parent (when $R$ is a root, its children become new roots). 

For efficiency, we use the same chain edges until they become invalid. This occurs when either some endpoint uses up its vacant degree, or adding the chain causes some forest node $R$ to be deleted from $Q_T$. This second event occurs either when the supreme set corresponding to that forest node $R$ has demand at most 1 or when one of its children $W$ satisfies $c(R)\geq c(W)$. The $c(R)$ values are maintained by a dynamic tree data structure. We can use priority queues to maintain the expiration time of each endpoint and each forest node. This speedup is identical to \cite{CenLP22a} and discussed in \Cref{sec:lazy}. 

\begin{theorem}\label{thm: runtime-deg-constr-aug-chain}
Given the supreme sets forest and the cut values of the supreme sets, we can add all augmentation chains so that all supreme sets have demand at most 1 in time $\tO(n)$. 
\end{theorem}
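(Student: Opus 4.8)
The plan is to mirror the lazy-data-structure framework used for the unconstrained case (\Cref{thm: aug-chain-runtime}) and in \cite{CenLP22a}, adapting it to the degree-constrained chains and to the forest $L_{high}$ with representative leaves. First, I would settle correctness: the algorithm repeatedly picks the list $Q_T = (R_1,\ldots,R_r)$ of roots of $L_{high}$ with $c(R_i)\le \tau-2$ (ordered so that $c(R_1), c(R_r)$ are the two smallest), finds an augmentation chain $F=\{(a_i,b_{i+1})\}_{1\le i\le r-1}$ with $a_i,b_i$ vacant vertices assigned to heavy paths in the tree rooted at $R_i$, adds it, and updates the structure. That such a chain always exists while $Q_T$ is nonempty is \Cref{lem:deg-find-chains}; that adding it creates no new extreme set — so the supreme-set family stays exactly the original one and it suffices to track the nodes of $L_{high}$ with their cut values $c(R)=d(\super(R))$ — is \Cref{lem:deg-chain-no-new-extreme}; and that the partial solution stays feasible is \Cref{lem:deg-split-chain-feasible}. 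Since every added chain edge only \emph{increases} cut values, demands are monotone non-increasing, so once a node leaves $Q_T$ (either its demand drops to at most $1$, or it is deleted from $L_{high}$ when some child's $c$-value catches up) it never returns, and the process terminates with all supreme sets of demand at most $1$.

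Next I would explain why the naive implementation is too slow and how the lazy speedup fixes it. A single chain has $r-1 = O(n)$ edges, so recomputing a fresh chain for each added unit of weight costs $O(n)$ per unit, and the total added weight can be $\Theta(\tau)$, giving only $O(n\tau)$. Instead, keep the \emph{same} set of chain edges fixed and raise the weight on all of them simultaneously until one edge becomes \emph{invalid}; the edge $(a_i,b_{i+1})$ is invalid once (a) $a_i$ or $b_{i+1}$ has exhausted its vacant external degree $\beta(\cdot)$, or (b) $R_i$ or $R_{i+1}$ leaves $Q_T$. By the monotonicity above, each forest node leaves $Q_T$ at most once, and each vertex's vacant degree is consumed monotonically and exhausted at most once; hence there are only $O(n)$ invalidation events over the entire run. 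Each event triggers a \emph{local} repair — replacing the offending edge with a fresh vacant pair in the appropriate subtrees (using the lower-level external degrees before the upper-level ones), or re-stitching $Q_T$ when a node drops out — which can be done in $\polylog(n)$ time.

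For the data structures I would maintain the $c(R)$ values over $L_{high}$ with a dynamic (link–cut) tree supporting path-add and path-minimum queries; when a chain edge with endpoint $u$ (possibly a \emph{non-terminal}) is added we route the $+1$ update along the path from $u$'s representative leaf in $L_2$ to the root, as defined in \Cref{sec:deg-splitting}, which correctly updates $c(R)$ for exactly the supreme sets that the new edge crosses. We also keep priority queues storing, for each current chain edge, its expiration weight — the minimum of the remaining vacant degrees at its two endpoints and of the amounts that would cause $R_i$ or $R_{i+1}$ to leave $Q_T$ — so that the next event is found in $O(\log n)$ and the post-event bookkeeping is $\polylog(n)$. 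This is exactly the lazy implementation of \cite{CenLP22a}, spelled out in \Cref{sec:lazy}. Combining, the total running time is $O(n)\cdot\polylog(n) = \tO(n)$, proving \Cref{thm: runtime-deg-constr-aug-chain}.

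The main obstacle is the amortization argument: proving that the number of chain reconfigurations is $O(n)$ rather than proportional to the total weight added. This rests on the monotonicity of cut values (demands never increase, so a node never re-enters $Q_T$) together with each vertex's vacant degree being consumed monotonically, and on correctly detecting — through the dynamic tree and the representative-leaf routing — the precise moment a node must be removed from $Q_T$ or deleted from $L_{high}$. The representative-leaf routing needed because chain endpoints can be non-terminals is the genuinely new ingredient relative to the unconstrained analysis of \Cref{thm: aug-chain-runtime}, and getting it to interact cleanly with the dynamic-tree updates is the delicate part.
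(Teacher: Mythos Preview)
Your proposal is correct and takes essentially the same approach as the paper: reduce to the unconstrained lazy framework of \Cref{thm: aug-chain-runtime} and \cite{CenLP22a}, work on $L_{high}$ instead of $L$, route updates via representative leaves in $L_2$, and bound the number of invalidation events by $O(n)$. The paper's own proof is terser and organizes the argument around three explicit adaptation points, citing \Cref{lem:deg-vacant-in-current-supreme} to justify the representative-leaf routing and \Cref{lem:deg-current-supreme-defined,lem:deg-delete-non-extreme} to justify that the $c(R)\ge c(W)$ test correctly detects node deletions; you invoke these facts implicitly but would do well to name them.
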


The rest of this section establishes correctness by proving the chain found by the algorithm is an augmentation chain. 
One difficulty of the degree-constrained setting is that the chain edges may connect nonterminals, so for a fixed terminal set $R$, the supreme set whose projection is $R$ may change as we add augmentation chains. This does not happen in the unconstrained setting where we only add edges to terminals.

In the proofs, $\super(\cdot)$ is the supreme set defined in the original graph, and $\super_\ell(\cdot)$ is the supreme set defined in $G_\ell$. That is, for any $R\subseteq T$, if $R$ is the terminal projection of an extreme set in $G_\ell$, then $\super_\ell(R)$ is defined to be the supreme set of terminals $R$ in $G_\ell$; otherwise, $\super_\ell(R)$ is undefined. In particular, $\super_0(\cdot)=\super(\cdot)$ because $G_0=G$. Recall that $\beta_\ell(u)$ is the remaining external degree after adding $\ell$ augmentation chain, so the cut value in $G^{ext}_\ell$ can be expressed as $d_{G^{ext}_\ell}(U)=d_{G_\ell}(U)+\beta_\ell(U)$, where $\beta_\ell(U)=\sum_{u\in U}\beta_\ell(u)$.

\begin{lemma}\label{lem:deg-maintained-cut-value}
For $R\in L_{high}$, the $c(R)$ value maintained by the algorithm is equal to $d_{G_\ell}(\super(R))$ at every time $\ell$.
\end{lemma}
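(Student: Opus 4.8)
The plan is to prove the statement by induction on $\ell$, where $G_\ell$ is the graph after adding the first $\ell$ augmentation chains. The invariant I will maintain is: immediately after the $\ell$-th chain is added and $c(\cdot)$ and the forest $L_{high}$ are updated, $c(R)=d_{G_\ell}(\super(R))$ for every $R$ still present in $L_{high}$. The base case $\ell=0$ is immediate, since each $c(R)$ is initialized to $d(\super(R))=d_{G_0}(\super(R))$; moreover the bookkeeping deletions of nodes from $L_{high}$ never change any stored $c$-value and preserve the ancestor relation among surviving nodes, so they are harmless.

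For the inductive step, fix the chain $F=\{(a_i,b_{i+1})\}_{1\le i\le r-1}$ added at step $\ell+1$, with associated list $Q_T=(R_1,\dots,R_r)$ of roots of $L_{high}$. Since the external-augmentation solution places no edge on $V$, the set $\super(R)$ computed in $G_0$ is still a vertex set of $G_\ell$, and $d_{G_{\ell+1}}(\super(R))=d_{G_\ell}(\super(R))+d_F(\super(R))$; so by the inductive hypothesis it suffices to show the update adds exactly $d_F(\super(R))$ to $c(R)$ for each surviving $R$. I will first establish the key structural claim: \emph{for every chain endpoint $u$ (one of the $a_i$'s or $b_{i+1}$'s), with representative leaf $U_u\in L_2$, and every $R\in L_{high}$, we have $u\in\super(R)$ iff $U_u$ lies in the subtree of $R$ in $L_{high}$} (equivalently, $R$ lies on the path from $U_u$ to the root). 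To prove it, note $u$ is an endpoint of an external edge assigned to some heavy path $P$; by \Cref{lem:deg-external-new-edges} that edge emanates from $\super(R_m(P))$, where $R_m(P)$ is the highest critical node of $P$, so $u\in\super(R_m(P))$. If $P$ is an upper-level heavy path, its bottom node is a leaf of $L_{high}$, i.e.\ a node of $L_2$, and since every proper ancestor of an $L_2$-node is non-critical, this bottom node is the unique critical node of $P$, hence equals $R_m(P)$; it is also $U_u$, so $u\in\super(U_u)$. If $P$ is a lower-level heavy path, it lies in the subtree of $L$ rooted at the $L_2$-node $U_u$ into which it is contracted, so $R_m(P)$ is a descendant (inclusive) of $U_u$, and again $\super(R_m(P))\subseteq\super(U_u)$, so $u\in\super(U_u)$. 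Finally, for $R\in L_{high}$: if $R$ is an ancestor of $U_u$ then $\rho(R)\supseteq\rho(U_u)$, so $\super(R)\supseteq\super(U_u)\ni u$; and if not, then since $U_u$ is a leaf of $L_{high}$ the node $R$ is not a descendant of $U_u$ either, so $R$ and $U_u$ are incomparable in $L$, whence $\rho(R)\cap\rho(U_u)=\emptyset$ and $\super(R)\cap\super(U_u)=\emptyset$ by laminarity of $L$ and of the supreme-sets forest, so $u\notin\super(R)$.

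With the claim in hand I finish by a counting argument. The endpoints $a_i$ and $b_{i+1}$ of the chain edge $(a_i,b_{i+1})$ sit in the trees of $L_{high}$ rooted at $R_i$ and $R_{i+1}$ respectively, and the roots $R_1,\dots,R_r$ have pairwise disjoint supreme sets (laminarity again). Hence for $R$ in the tree rooted at $R_i$ we have $\super(R)\subseteq\super(R_i)$, which is disjoint from $\super(R_j)$ for every $j\neq i$; so the only chain edges that can cross $\super(R)$ are $(a_{i-1},b_i)$ and $(a_i,b_{i+1})$ (keeping only the relevant one when $i=1$ or $i=r$), and such an edge crosses $\super(R)$ iff its endpoint $b_i$ (resp.\ $a_i$) lies in $\super(R)$, i.e.\ iff $U_{b_i}$ (resp.\ $U_{a_i}$) lies in the subtree of $R$. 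For $R$ outside all trees rooted at the $R_j$'s, no chain edge crosses $\super(R)$. Summing, $d_F(\super(R))$ equals the number of chain endpoints $u$ with $U_u$ in the subtree of $R$ — which is exactly the amount the algorithm adds to $c(R)$, since it adds $1$ to $c(R)$ for every chain endpoint $u$ whose representative leaf's root-path passes through $R$. Thus $c(R)=d_{G_{\ell+1}}(\super(R))$, closing the induction.

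I expect the main obstacle to be the key structural claim, and within it the precise identification of $U_u$ relative to the node $R_m(P)$ from \Cref{lem:deg-external-new-edges}: one must use that $L_2$ consists of the topmost critical nodes (so an upper-level heavy path's bottom $L_2$-node is its unique critical node, hence $R_m(P)=U_u$) and that lower-level heavy paths are confined to subtrees rooted at $L_2$-nodes (so $R_m(P)$ sits at or below $U_u$). Once $u\in\super(U_u)$ is pinned down, the remainder is laminarity bookkeeping on $L$ and on the supreme-sets forest.
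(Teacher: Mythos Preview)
The proposal is correct and takes essentially the same approach as the paper. The paper's proof is terser but argues identically: it shows each chain endpoint $u$ lies in $\super(U_u)$ via \Cref{lem:deg-external-new-edges} (your explicit verification that $R_m(P)=U_u$ for upper-level paths is implicit there), observes the other endpoint of every chain edge lies in a different tree and hence outside $\super(R)$ for each $R$ on the $U_u$-to-root path, and concludes that the algorithm's path increments exactly track $d_F(\super(R))$.
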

\begin{proof}
Consider the contribution of an edge $(u, w)$ in the $\ell$-th augmentation chain. Let $U$ be the representative leaf of $u$. If $u$ is assigned to a upper level heavy path, $U$ is the leaf of the path, and by \Cref{lem:deg-external-new-edges} $u\in \super(U)$. If $u$ is assigned to a lower level heavy path, $U$ is the parent of the path, and we also have $u\in \super(R_m)\subseteq \super(U)$ for some node $R_m$ in the path. Therefore $u$ is in $\super(R)$ for every $R$ on the path from $U$ to the root. Also these supreme sets do not contain $w$, because $w$ is assigned to a different tree from $u$, and $(u,w)$ connects two maximal supreme sets. 
In the algorithm, $c(R)$ is increased for $R$ on the path from $U$ to the root, which correctly maintains $d_{G_\ell}(\mu(R))$.
\end{proof}

\begin{fact}\label{fact:supreme-shrink-after-chain}
If $\super_i(R)$ is defined, then for any $0\le j<i$, $\super_j(R)$ exists and $\super_i(R)\subseteq \super_j(R)$.
\end{fact}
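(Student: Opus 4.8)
The plan is to reduce the general statement to the single-step case $j = i-1$ and then iterate. So first I would prove: if $\super_\ell(R)$ is defined, then $\super_{\ell-1}(R)$ is defined and $\super_\ell(R) \subseteq \super_{\ell-1}(R)$. Chaining this inequality from $\ell = i$ down to $\ell = j+1$ gives the claim, since each step preserves definedness and the containments compose. Throughout, recall that $G_\ell$ is obtained from $G_{\ell-1}$ by adding a single augmentation chain $F$, and by \Cref{lem:deg-chain-no-new-extreme} adding $F$ creates no new extreme set; also $d_{G_\ell}(\cdot) = d_{G_{\ell-1}}(\cdot) + d_F(\cdot) \ge d_{G_{\ell-1}}(\cdot)$ since the chain only adds edges.

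For the single-step case, suppose $X = \super_\ell(R)$ is defined, so $X$ is extreme in $G_\ell$ with $\rho(X) = R$. Since \Cref{lem:deg-chain-no-new-extreme} says the chain creates no new extreme set, $X$ must already have been extreme in $G_{\ell-1}$; in particular $R$ is the projection of an extreme set in $G_{\ell-1}$, so $\super_{\ell-1}(R)$ is defined. Let $Y = \super_{\ell-1}(R)$. By \Cref{def:supreme}, $Y$ is the union of all extreme sets of $G_{\ell-1}$ with projection $R$, so $X \subseteq Y$. It remains to upgrade this to $X \subseteq \super_\ell(R)$ in the intended direction — but that is automatic since $X = \super_\ell(R)$ by assumption, so what we actually want is $\super_\ell(R) = X \subseteq Y = \super_{\ell-1}(R)$, which is exactly $X \subseteq Y$ that we just established. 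Hence the single-step statement holds.

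The only subtlety, and the step I expect to need the most care, is the appeal to \Cref{lem:deg-chain-no-new-extreme}: that lemma is stated for the specific augmentation chain $F$ picked when some maximal supreme sets still have demand $\ge 2$, so I should confirm that every chain added in the degree-constrained algorithm is of this form (it is, by construction in \Cref{sec:deg-splitting}), and that "no new extreme set" is exactly the statement that the family of extreme sets of $G_\ell$ is contained in that of $G_{\ell-1}$. Given that, the induction is routine: formally, for fixed $i$ with $\super_i(R)$ defined, one shows by downward induction on $\ell$ from $i$ to $j+1$ that $\super_\ell(R)$ is defined and $\super_i(R) \subseteq \super_\ell(R) \subseteq \super_{\ell-1}(R)$, using the single-step result at each stage, and then reads off $\super_i(R) \subseteq \super_j(R)$.
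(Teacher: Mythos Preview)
Your proposal is correct and takes essentially the same approach as the paper: both argue that $\super_i(R)$, being extreme in $G_i$, must already be extreme in every earlier $G_j$ because \Cref{lem:deg-chain-no-new-extreme} says adding a chain creates no new extreme sets, whence $\super_j(R)$ is defined and contains $\super_i(R)$ by \Cref{def:supreme}. The paper compresses this into two sentences (applying the lemma ``repeatedly'' rather than via an explicit single-step induction), but the content is identical.
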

\begin{proof}
$\super_j(R)$ is extreme in $G_i$, so it is also extreme in $G_j$ by repeatedly applying \Cref{lem:deg-chain-no-new-extreme}. The statement then follows the definition of $\super_j(R)$.
\end{proof}

\begin{lemma}\label{lem:L2-tight}
For any $R\in L_2$, $d_{G^{ext}_\ell}(\super(R))=\tau$ holds for every $\ell$.
\end{lemma}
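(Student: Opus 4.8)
The plan is to prove $d_{G^{ext}_\ell}(\super(R))=\tau$ by induction on $\ell$. In both the base case and the inductive step the lower bound is free: $\super(R)$ is a Steiner cut (its terminal projection is $R$, and $\emptyset\ne R\ne T$), so by \Cref{lem:deg-split-chain-feasible} we have $d_{G^{ext}_\ell}(\super(R))\ge\tau$ for every $\ell$. Thus it suffices to establish $d_{G^{ext}_\ell}(\super(R))\le\tau$.

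For the base case ($G^{ext}_0$, the external augmentation output) I would invoke the analysis of \Cref{sec:deg-external}, run with the modified heavy-light decomposition in which every heavy path stops at a node of $L_2$. Since $R\in L_2$, the node $R$ is critical while every strict ancestor of $R$ in $L$ is non-critical, so on the heavy path $P$ containing $R$ the highest critical node is $R$ itself, i.e.\ $R=R_m^{(P)}$. By the proof of \Cref{lem:deg-external-flow-value,lem:deg-external-mincut}, $\super(R)\cup\{y\}$ is an $R_1^{(P)}$-$x$ minimum cut of $H_P$ of value $\tau$; combining this with the second identity of \Cref{lem:deg-external-cut-value} applied to $S=\super(R)$ (legitimate since $\super(R)\subseteq\tilde{\super}(R_{k_P})$ and $\rho(\super(R))=R$ is a node of $P$) shows that, in the external graph just before $P$ is processed, the cut value of $\super(R)$ equals $\tau-\Delta_P$. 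By \Cref{lem:deg-external-new-edges}, processing $P$ adds exactly $\Delta_P$ edges from $\super(R_m^{(P)})=\super(R)$ to $x$, raising the cut value of $\super(R)$ to $\tau$. Finally, no heavy path processed after $P$ puts a new edge inside $\super(R)$: a heavy path whose added edges lie inside $\super(R)$ has its $R_m$ in the subtree of $R$ in $L$ (by laminarity of supreme sets; the case $\super(R_m)\supsetneq\super(R)$ cannot occur because no strict ancestor of $R$ is critical), hence is either $P$ or lies strictly below $P$, and in the latter case it has strictly larger heavy-path depth and is therefore processed before $P$. Hence $d_{G^{ext}_0}(\super(R))=\tau$.

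For the inductive step, assume $d_{G^{ext}_\ell}(\super(R))=\tau$ and let $F$ be the $(\ell+1)$-th augmentation chain. Chain edges are the only edges added on $V$, so $d_{G_{\ell+1}}(\super(R))=d_{G_\ell}(\super(R))+d_F(\super(R))$, where $d_F(\super(R))$ counts the chain edges crossing $\super(R)$; and the remaining external degree at each vertex drops by the number of chain endpoints of $F$ at that vertex, so $\beta_{\ell+1}(\super(R))=\beta_\ell(\super(R))-u_F(\super(R))$, where $u_F(\super(R))$ denotes the total number of chain endpoints of $F$ lying in $\super(R)$. Using $d_{G^{ext}_j}(S)=d_{G_j}(S)+\beta_j(S)$ I get
\[
 d_{G^{ext}_{\ell+1}}(\super(R))=d_{G^{ext}_\ell}(\super(R))+d_F(\super(R))-u_F(\super(R)).
\]
Every chain edge crossing $\super(R)$ has exactly one endpoint in $\super(R)$, and distinct crossing edges use distinct chain endpoints (the edge $(a_i,b_{i+1})$ uses only the endpoints $a_i$ and $b_{i+1}$), so $d_F(\super(R))\le u_F(\super(R))$. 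Therefore $d_{G^{ext}_{\ell+1}}(\super(R))\le d_{G^{ext}_\ell}(\super(R))=\tau$, and with the lower bound this forces equality, completing the induction.

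I expect the base case to be the main obstacle: it requires correctly threading together the minimum-cut structure of the flow networks $H_\ell$ (\Cref{lem:deg-external-flow-value,lem:deg-external-mincut}), the cut-value identities (\Cref{lem:deg-external-cut-value}), and the claim that the only heavy paths contributing edges inside $\super(R)$ are $P$ and the ones strictly below it, which in turn hinges on $R$ being the topmost critical node in its part of $L$ so that $\super(R_m^{(\cdot)})\subseteq\super(R)$ forces the path to coincide with or lie below $P$. The inductive step, by contrast, is a short consequence of the $d_{G^{ext}}=d_G+\beta$ identity together with the injectivity observation about chain endpoints.
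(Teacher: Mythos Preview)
Your proof is correct, and the inductive step is essentially the paper's: both of you observe that splitting off a chain can only decrease the cut value of $\super(R)$ (you make the endpoint-counting explicit, the paper just cites this as the standard fact that splitting off never increases cut values of sets not containing $x$), and then combine with \Cref{lem:deg-split-chain-feasible} for the lower bound.

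The base case, however, is handled very differently. You trace through the flow machinery of \Cref{sec:deg-external}: identify $R$ as $R_m$ on its upper-level heavy path $P$, invoke \Cref{lem:deg-external-mincut,lem:deg-external-cut-value,lem:deg-external-new-edges} to show that processing $P$ brings $d(\super(R))$ up to exactly $\tau$, and then argue via laminarity and criticality that no later path adds edges inside $\super(R)$. This works, but it is considerably more intricate than necessary. The paper instead uses a short global counting argument: feasibility of the external augmentation gives $\beta(\super(R))\ge\tau-d(\super(R))$ for every $R\in L_2$; optimality (\Cref{lem:deg-external-optimal}) says the total external degree is $\sum_{R\in L_1}\rdem(R)$, which equals $\sum_{R\in L_2}(\tau-d(\super(R)))$ because every node of $L_{high}\setminus L_2$ is non-critical; since the sets $\{\super(R):R\in L_2\}$ are pairwise disjoint, the per-set lower bounds summing to the total force equality for each $R$, whence $d_{G^{ext}_0}(\super(R))=d(\super(R))+\beta(\super(R))=\tau$. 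This bypasses the flow analysis entirely and makes the base case a three-line tightness argument. Your route has the virtue of being constructive (it pinpoints exactly when $\super(R)$ reaches value $\tau$), but the paper's is shorter and uses only the \emph{statement} of \Cref{lem:deg-external-optimal} rather than the internals of its proof.
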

\begin{proof}
By \Cref{lem:deg-external-optimal}, the external augmentation solution is feasible, so for each $R\in L_2$, $\beta(\super(R))\ge \tau-d(\super(R))$. Moreover, \Cref{lem:deg-external-optimal} says the total external degree is $\sum_{R\in L_1}\rdem(R)$, which equals $\sum_{R\in L_2}\tau-d(\super(R))$ because all internal nodes in $L_{high}\setminus L_2$ are not critical. Therefore $\beta(\super(R))=\tau-d(\super(R))$ for each $R\in L_2$. 

Initially, $d_{G^{ext}_0}(\super(R))=\beta(\super(R))+d(\super(R))=\tau$.
$G^{ext}_{\ell+1}$ is formed by splitting off edges for an augmentation chain from $G^{ext}_\ell$, so $d_{G^{ext}_{\ell+1}}(\super(R))\le d_{G^{ext}_\ell}(\super(R))$, and $d_{G^{ext}_\ell}(\super(R))\le \tau$ for each $\ell$ by induction. Also $d_{G^{ext}_\ell}(\super(R))\ge \tau$ by \Cref{lem:deg-split-chain-feasible}, so the statement holds.
\end{proof}

\begin{lemma}\label{lem:focus-on-Lhigh}
For any $R\in L_2$, if $\super(R)$ has vacant degree after adding the $\ell$-th chain, then $\super_\ell(R)$ is defined, and $d_{G_\ell}(\super(R))=d_{G_\ell}(\super_\ell(R))$.
\end{lemma}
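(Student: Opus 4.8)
The statement has two parts: (a) that $\super_\ell(R)$ is defined, and (b) the cut-value equality. Part (b) follows quickly from (a). By \Cref{fact:supreme-shrink-after-chain}, $\super_\ell(R)\subseteq\super_0(R)=\super(R)$, so $\beta_\ell(\super_\ell(R))\le\beta_\ell(\super(R))$; since $\super_\ell(R)$ is a Steiner cut, \Cref{lem:deg-split-chain-feasible} gives $d_{G^{ext}_\ell}(\super_\ell(R))\ge\tau$, hence
\[d_{G_\ell}(\super_\ell(R))=d_{G^{ext}_\ell}(\super_\ell(R))-\beta_\ell(\super_\ell(R))\ \ge\ \tau-\beta_\ell(\super(R))\ =\ d_{G_\ell}(\super(R)),\]
the last step using \Cref{lem:L2-tight} (so that $\beta_\ell(\super(R))=\tau-d_{G_\ell}(\super(R))$). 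Conversely, by \Cref{lem:supreme-mincut} applied in $G_\ell$, $\super_\ell(R)$ is the earliest $R$-$(T\setminus R)$ min cut of $G_\ell$, while $\super(R)$ is an $R$-$(T\setminus R)$ cut of $G_\ell$, so $d_{G_\ell}(\super_\ell(R))\le d_{G_\ell}(\super(R))$. The two bounds give (b).

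For (a): $\super_\ell(R)$ is defined iff some set that is extreme in $G_\ell$ has projection exactly $R$, which — using \Cref{def:supreme}, \Cref{lem:supreme-mincut} in $G_\ell$, and \Cref{fact:supreme-shrink-after-chain} — is equivalent to: $\super(R)$ is extreme in $G_\ell$, or $\super(R)$ has an extreme-in-$G_\ell$ violator whose projection is exactly $R$. So suppose for contradiction that $\super(R)$ is not extreme in $G_\ell$ and every extreme-in-$G_\ell$ violator of $\super(R)$ has projection a proper subset of $R$. Pick such a violator $Z$; since $Z$ is extreme in $G_\ell$ it is extreme in $G_0$ (iterate \Cref{lem:deg-chain-no-new-extreme}), so $P:=\rho(Z)\in L$ with $P\subsetneq R$, and $\super_\ell(P)\subseteq\super_0(P)\subsetneq\super_0(R)=\super(R)$ by \Cref{lem:supreme-laminar} and \Cref{fact:supreme-shrink-after-chain}. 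Replacing $Z$ with $\super_\ell(P)\supseteq Z$ (still a violator of $\super(R)$, by \Cref{fact:subset-of-extreme}), we may assume $Z=\super_\ell(P)$.

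Now the degree bookkeeping kicks in. Since $Z\subseteq\super(R)$ we have $\beta_\ell(Z)\le\beta_\ell(\super(R))$; feasibility (\Cref{lem:deg-split-chain-feasible}) gives $d_{G^{ext}_\ell}(Z)\ge\tau$, so $d_{G_\ell}(Z)\ge\tau-\beta_\ell(Z)\ge\tau-\beta_\ell(\super(R))=d_{G_\ell}(\super(R))$ by \Cref{lem:L2-tight}; but $Z$ is a violator, so $d_{G_\ell}(Z)\le d_{G_\ell}(\super(R))$. Writing $\lambda:=d_{G_\ell}(\super(R))$, all of these are therefore equalities: $d_{G_\ell}(Z)=\lambda$, $\beta_\ell(Z)=\beta_\ell(\super(R))$ — so $\super(R)\setminus Z$ has \emph{no} vacant degree — and $\lambda\le\tau-1$ since $\super(R)$ has vacant degree. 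As $\super(R)\setminus Z$ is a Steiner cut with no vacant degree, feasibility forces $d_{G_\ell}(\super(R)\setminus Z)=d_{G^{ext}_\ell}(\super(R)\setminus Z)\ge\tau$. Writing $e(X,Y)$ for the total weight of $G_\ell$-edges between $X$ and $Y$ and using $d_{G_\ell}(\super(R))=d_{G_\ell}(Z)=\lambda$, an edge count gives $d_{G_\ell}(\super(R)\setminus Z)=2\,e(Z,\super(R)\setminus Z)$, whence $e(Z,\super(R)\setminus Z)\ge\tau/2$ and $e(Z,V\setminus\super(R))=\lambda-e(Z,\super(R)\setminus Z)\le\lambda-\tau/2$ — a contradiction when $\lambda<\tau/2$.

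The remaining regime $\tau/2\le\lambda\le\tau-1$ is the genuine obstacle and the place I expect to spend the most effort: cut submodularity alone is not enough, and one must use that the external edges consumed across $\super(R)$ during chain-adding cannot all have had their $\super(R)$-endpoint inside the small set $Z$. This is controlled by how the external-augmentation algorithm distributes edges along the heavy path through $R$ (\Cref{lem:deg-external-new-edges}) and by the fact that chain edges join \emph{distinct} maximal supreme sets. I expect the clean way to handle it is to prove the whole lemma by induction on $\ell$: assuming it for $\ell-1$ (so the $\ell$-th chain was built from correct supreme-set data), show that adding that chain keeps $\super_{\cdot}(R)$ defined and — the subtle point — that no chain edge separates $\super(R)$ from the possibly-smaller current supreme set $\super_\ell(R)$ as long as $\super(R)$ still has vacant degree, so that the value $c(R)=d_{G_\ell}(\super(R))$ maintained by the algorithm (\Cref{lem:deg-maintained-cut-value}) indeed equals $d_{G_\ell}(\super_\ell(R))$.
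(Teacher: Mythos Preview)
Your argument for part (b) is correct and is exactly the paper's argument.

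Part (a), however, is genuinely incomplete --- you say so yourself. Your approach, using only $\beta_\ell(Z)\le\beta_\ell(\super(R))$, can at best yield the \emph{equality} $\beta_\ell(Z)=\beta_\ell(\super(R))$, never a contradiction; your subsequent edge-count salvages only the regime $\lambda<\tau/2$. The missing idea is precisely the one you gesture at in your last paragraph but do not execute: you must exploit the algorithm's ``lower-level external edges first'' policy to obtain a \emph{strict} inequality.

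Concretely, the paper introduces a \emph{partial} degree function $\beta'$: only the external edges assigned to heavy paths up through the lower-level path $P_j$ containing $\rho(X)$ (so $\beta'\le\beta$). The external-augmentation analysis (\Cref{lem:deg-external-optimal}) shows $X$ is already augmented to $\tau$ by these edges alone; since chain edges leave $\super(R)$, splitting preserves $\beta'_\ell(X)+d_{G_\ell}(X)\ge\tau$. On the other hand, because $R\in L_2$ is critical, the upper-level heavy path through $R$ contributes a positive number of external edges into $\super(R)$, and by the lower-level-first rule these are used last. Hence as long as $\super(R)$ still has vacant degree, at least one upper-level edge remains, giving the strict inequality $\beta_\ell(\super(R))>\beta'_\ell(\super(R))\ge\beta'_\ell(X)$. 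Combined with \Cref{lem:L2-tight} and the violator inequality $d_{G_\ell}(X)\le d_{G_\ell}(\super(R))$, this yields $\tau>\beta'_\ell(X)+d_{G_\ell}(X)$, the desired contradiction. Your attempt using the full $\beta_\ell$ cannot produce this strictness; the distinction between $\beta'$ and $\beta$ is the crux.
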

\begin{proof}

Fix some $R\in L_2$. Assume first that $\super_\ell(R)$ is defined. By \Cref{fact:supreme-shrink-after-chain}, $\super_\ell(R)\subseteq \super(R)$, so
\begin{equation}\label{eq:focus-on-Lhigh-3}
    \beta_\ell(\super_\ell(R))\le \beta_\ell(\super(R))
\end{equation}
By \Cref{lem:L2-tight},
$\tau=d_{G^{ext}_\ell}(\super(R))=\beta_\ell(\super(R))+d_{G_\ell}(\super(R))$. Combined with (\ref{eq:focus-on-Lhigh-3}), we have
\begin{equation}\label{eq:focus-on-Lhigh-4}
    \beta_\ell(\super_\ell(R))\le \tau-d_{G_\ell}(\super(R))
\end{equation}
By \Cref{lem:deg-split-chain-feasible}, $d_{G^{ext}_\ell}(\super_\ell(R))\ge \tau$. Combined with (\ref{eq:focus-on-Lhigh-4}), we have
$$\tau\le d_{G^{ext}_\ell}(\super_\ell(R))=\beta_\ell(\super_\ell(R))+d_{G_\ell}(\super_\ell(R))\le \tau-d_{G_\ell}(\super(R))+d_{G_\ell}(\super_\ell(R))$$
So $d_{G_\ell}(\super(R))\le d_{G_\ell}(\super_\ell(R))$. By \Cref{lem:supreme-mincut}, $\super_\ell(R)$ is an $R$-$(T\setminus R)$ min cut in $G_\ell$, so
$d_{G_\ell}(\super(R))= d_{G_\ell}(\super_\ell(R))$. 

It remains to prove that $\super_\ell(R)$ is defined.
We prove by induction on $\ell$. For the base case $\ell=0$, $\super(R)$ is defined because $R\in L$. Next consider the inductive case $\ell>0$. 

Assume for contradiction that there is no extreme set with terminal projection $R$ in $G_\ell$.
Let $X$ be an extreme violator of $\super(R)$ in $G_\ell$; then $X\subsetneqq \super(R)$ and \begin{equation}\label{eq:focus-on-Lhigh-1}
    d_{G_\ell}(X)\le d_{G_\ell}(\super(R))
\end{equation}
Because $X$ is extreme, $\rho(X)\ne R$, and $\rho(X)\subsetneqq R$ is a node in the subtree of $R$ in $L$. In the proof of \Cref{lem:deg-external-optimal}, we proved that for an extreme set $X$ with terminal set $\rho(X)\in P_j$ for some heavy path $P_j$, the cut value of $X$ is augmented to $\tau$ just after processing heavy path $P_j$. Because $X$ is in the subtree of $R$, and we stop the heavy paths at nodes in $L_2$, external edges assigned to $R$ are added after path $P_j$. Let $\beta'(\cdot)$ be the external degree in the external edges added for heavy paths up to $P_j$, and $\beta'_\ell(\cdot)$ be the remaining $\beta'$ degree in $G_\ell$, so that 
\begin{equation}\label{eq:focus-on-Lhigh-2}
    \beta'_\ell(X)+d_{G_\ell}(X)\ge \tau
\end{equation}
The algorithm uses lower level degrees first, and uses upper level external edges assigned to $R$ only after all lower level degrees in the subtree of $R$ are used. As long as $\super(R)$ has vacant degree, we have $\beta_\ell(\super(R))>\beta'_\ell(\super(R))\ge \beta'_\ell(X)$. Combined with \Cref{lem:L2-tight} and (\ref{eq:focus-on-Lhigh-1}),
$$\tau=\beta_\ell(\super(R))+d_{G_\ell}(\super(R)) >\beta'_\ell(X)+d_{G_\ell}(X),$$
which contradicts (\ref{eq:focus-on-Lhigh-2}).
\end{proof}

\begin{lemma}\label{lem:deg-vacant-in-current-supreme}
After adding the $\ell$-th chain, let $u$ be a vertex with vacant degree and $U\in L_2$ be its representative leaf, then $u\in \super_\ell(U)$.
\end{lemma}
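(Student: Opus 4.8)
The plan is to first pin $u$ inside the \emph{original} supreme set $\super(U)$, and then upgrade this to the \emph{current} supreme set $\super_\ell(U)$ by a short degree‑counting argument that exploits the tightness of the external solution on the leaves $L_2$.

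\emph{Step 1: $u\in\super(U)$.} Since $u$ has vacant degree after the $\ell$-th chain, it carries an unused external edge $(u,x)$ of $F^{ext}$; this edge was created by the degree‑constrained external‑augmentation algorithm while processing the heavy path $P$ to which it is assigned, and $U$ is by definition the representative leaf of $P$. By \Cref{lem:deg-external-new-edges} (together with \Cref{lem:deg-external-mincut}), every external edge added while processing $P$ has its $V$-endpoint in $\super(R_m)$, where $R_m$ is the highest critical node of $P$. If $P$ is an upper‑level path, its bottom node is its representative leaf $U\in L_2$; as $U$ is critical while every proper ancestor of $U$ — in particular every higher node of $P$ — is non‑critical, we get $R_m=U$ and hence $u\in\super(U)$. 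If $P$ is a lower‑level path, $U$ is the $L_2$-node $v$ into which $P$ is contracted, so $R_m\subseteq v$ as terminal sets, and laminarity of supreme sets (\Cref{lem:supreme-laminar}) gives $\super(R_m)\subseteq\super(v)=\super(U)$, so again $u\in\super(U)$. (This repeats the reasoning used in the proof of \Cref{lem:deg-maintained-cut-value}.)

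\emph{Step 2: $\super(U)$ and $\super_\ell(U)$ have the same remaining external degree.} Because $u\in\super(U)$ is vacant, $\super(U)$ has vacant degree after the $\ell$-th chain, so \Cref{lem:focus-on-Lhigh} (applied with $R=U\in L_2$) shows $\super_\ell(U)$ is defined and $d_{G_\ell}(\super(U))=d_{G_\ell}(\super_\ell(U))$. By \Cref{lem:L2-tight}, $d_{G_\ell}(\super(U))+\beta_\ell(\super(U))=d_{G^{ext}_\ell}(\super(U))=\tau$, hence $\beta_\ell(\super(U))=\tau-d_{G_\ell}(\super_\ell(U))$. On the other hand $\super_\ell(U)$ is a Steiner cut of $G^{ext}_\ell$ (with $x$ on the other side), so \Cref{lem:deg-split-chain-feasible} gives $d_{G_\ell}(\super_\ell(U))+\beta_\ell(\super_\ell(U))=d_{G^{ext}_\ell}(\super_\ell(U))\ge\tau$, i.e.\ $\beta_\ell(\super_\ell(U))\ge\tau-d_{G_\ell}(\super_\ell(U))=\beta_\ell(\super(U))$. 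Since $\super_\ell(U)\subseteq\super(U)$ by \Cref{fact:supreme-shrink-after-chain} and every $\beta_\ell(\cdot)$ is nonnegative, also $\beta_\ell(\super_\ell(U))\le\beta_\ell(\super(U))$, so the two are equal.

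\emph{Step 3: conclusion.} From $\beta_\ell(\super_\ell(U))=\beta_\ell(\super(U))$ and $\super_\ell(U)\subseteq\super(U)$ we get $\beta_\ell(v)=0$ for every $v\in\super(U)\setminus\super_\ell(U)$; that is, no vertex in $\super(U)\setminus\super_\ell(U)$ is vacant. Since $u\in\super(U)$ is vacant, it must lie in $\super_\ell(U)$, which is the claim. The only real subtlety is the bookkeeping in Step 1 that ties $u$'s unused external edge to the heavy path whose representative leaf is $U$ and lets us invoke \Cref{lem:deg-external-new-edges}; once that is in place, the rest is a one‑line degree comparison driven by \Cref{lem:focus-on-Lhigh,lem:L2-tight,lem:deg-split-chain-feasible}.
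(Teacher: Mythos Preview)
Your proof is correct and follows essentially the same approach as the paper: locate $u$ in $\super(U)$ via \Cref{lem:deg-external-new-edges}, then compare $\beta_\ell(\super(U))$ with $\beta_\ell(\super_\ell(U))$ using \Cref{lem:L2-tight} and \Cref{lem:deg-split-chain-feasible}. The only difference is stylistic: the paper argues by contradiction and re-derives $d_{G_\ell}(\super(U))\ge d_{G_\ell}(\super_\ell(U))$ via an extreme violator, whereas you invoke the full equality conclusion of \Cref{lem:focus-on-Lhigh} directly, which is slightly cleaner.
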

\begin{proof}
$u\in \super(U)$ by \Cref{lem:deg-external-new-edges} and the definition of representative leaf. $\super(U)$ has a vacant degree at $u$ in $G_\ell$, so $\super_\ell(U)$ is defined by \Cref{lem:focus-on-Lhigh}.  Assume for contradiction that $u\notin \super_\ell(U)$. Then by \Cref{fact:supreme-shrink-after-chain}, $\super_\ell(U)\subsetneqq \super(U)$. Because one of the vacant degrees in $\beta_\ell(\super(U))$ is at $u\notin \super_\ell(U)$, $\beta_\ell(\super(U))>\beta_\ell(\super_\ell(U))$.

$\super(U)$ is not extreme in $G_\ell$, so there exists an extreme violator $W$. $\rho(W)\subseteq U$, so $W\subseteq \super_\ell(U)$, and $d_{G_\ell}(\super(U))\ge  d_{G_\ell}(W)\ge d_{G_\ell}(\super_\ell(U))$. 
By \Cref{lem:L2-tight}, $d_{G_\ell}(\super(U))+\beta_\ell(\super(U))=\tau$.
Therefore $$d_{G^{ext}_\ell}(\super_\ell(U))=  d_{G_\ell}(\super_\ell(U))+\beta_\ell(\super_\ell(U))<d_{G_\ell}(\super(U))+\beta_\ell(\super(U))=\tau,$$ which contradicts \Cref{lem:deg-split-chain-feasible}.
\end{proof}

\begin{lemma}\label{lem:supreme-equal-after-chain}
If $\super_\ell(R)$ is defined for some $R\in L_{high}\setminus L_2$, then $d_{G_\ell}(\super_\ell(R))= d_{G_\ell}(\super_{\ell-1}(R))$.
\end{lemma}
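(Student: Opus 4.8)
The plan is to show that for $R \in L_{high}\setminus L_2$ with $\super_\ell(R)$ defined, the cut value of $\super_\ell(R)$ in $G_\ell$ equals that of $\super_{\ell-1}(R)$ in $G_\ell$. First note one direction is easy: by \Cref{fact:supreme-shrink-after-chain} applied between times $\ell-1$ and $\ell$, $\super_\ell(R)$ is extreme in $G_{\ell-1}$, hence a subset of $\super_{\ell-1}(R)$, so $d_{G_\ell}(\super_{\ell-1}(R)) \ge d_{G_\ell}(\super_\ell(R))$ would follow if $\super_{\ell-1}(R)$ is also extreme in $G_\ell$ --- but that is not automatic. Instead I would argue via the min-cut characterization: by \Cref{lem:supreme-mincut} applied in $G_\ell$, $\super_\ell(R)$ is the earliest $R$-$(T\setminus R)$ min cut in $G_\ell$, so $d_{G_\ell}(\super_\ell(R)) \le d_{G_\ell}(Y)$ for \emph{any} $R$-$(T\setminus R)$ cut $Y$ in $G_\ell$, in particular $Y = \super_{\ell-1}(R)$ (which has terminal projection $R$ since $\rho$ is preserved under adding edges). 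Hence $d_{G_\ell}(\super_\ell(R)) \le d_{G_\ell}(\super_{\ell-1}(R))$.

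For the reverse inequality, the key is to rule out that the $(\ell)$-th augmentation chain increases $d(\super_{\ell-1}(R))$ strictly more than it increases $d(\super_\ell(R))$; equivalently, I want to show the chain edge(s) crossing $\super_{\ell-1}(R)$ also cross $\super_\ell(R)$, or more precisely that $d_{F}(\super_{\ell-1}(R)) \le d_{F}(\super_\ell(R))$ where $F$ is the $\ell$-th chain. The approach is to use \Cref{lem:deg-maintained-cut-value}: the algorithm maintains $c(R) = d_{G_\ell}(\super(R))$ (the \emph{original} supreme set), and since $\super_{\ell-1}(R), \super_\ell(R) \subseteq \super(R)$, I relate all three cut values. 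The plan is: since $\super_\ell(R)$ is an extreme set in $G_\ell$ with projection $R$, and $\super_{\ell-1}(R)$ is a Steiner cut in $G_\ell$ with projection $R$ that is a subset of $\super(R)$ --- if $\super_{\ell-1}(R)$ is \emph{not} extreme in $G_\ell$, extract an extreme violator $Z$ with $\rho(Z) \subseteq R$; by laminarity of supreme sets $Z \subseteq \super_\ell(\rho(Z))$, and combining with \Cref{lem:supreme-laminar}-type uncrossing between $\super_\ell(R)$ and $\super_\ell(\rho(Z))$ (the latter sitting below the former when $\rho(Z) \subsetneqq R$) one derives $d_{G_\ell}(Z) \ge d_{G_\ell}(\super_\ell(R))$, so $d_{G_\ell}(\super_{\ell-1}(R)) \ge d_{G_\ell}(Z) \ge d_{G_\ell}(\super_\ell(R))$. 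Together with the first direction this forces equality.

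The main obstacle I anticipate is the case $\rho(Z) = R$ in the above extraction: then $Z$ is an $R$-$(T\setminus R)$ cut strictly inside $\super_{\ell-1}(R)$, but $\super_{\ell-1}(R)$ need not be the earliest such cut in $G_\ell$ (it was earliest in $G_{\ell-1}$, not $G_\ell$), so I cannot immediately conclude $d_{G_\ell}(Z) > d_{G_\ell}(\super_{\ell-1}(R))$. I would handle this by instead comparing against $\super_\ell(R)$ directly: $Z$ being extreme in $G_\ell$ with $\rho(Z) = R$ gives $Z \subseteq \super_\ell(R)$ by definition of the supreme set, so $d_{G_\ell}(Z) \ge d_{G_\ell}(\super_\ell(R))$ by \Cref{fact:subset-of-extreme}, which is exactly what is needed; the subtlety is only that one must argue $\super_\ell(R)$ really is the union of \emph{all} extreme sets with projection $R$ in $G_\ell$, which is \Cref{def:supreme}. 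A secondary subtlety is ensuring $\super_{\ell-1}(R)$ really has projection exactly $R$ as a subset of $\super(R)$ --- this follows because $\super_{\ell-1}(R) \supseteq$ (some extreme set with projection $R$) and $\super_{\ell-1}(R) \subseteq \super(R)$ with $\rho(\super(R)) = R$ by \Cref{fact:supreme-extreme}, squeezing $\rho(\super_{\ell-1}(R)) = R$. I expect the write-up to be a short uncrossing argument once these pieces are lined up.
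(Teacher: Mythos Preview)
Your proposal is correct and takes a genuinely different route from the paper's proof. Your argument is purely structural: one direction is the min-cut characterization (\Cref{lem:supreme-mincut}), and for the other you observe that any extreme violator $Z$ of $\super_{\ell-1}(R)$ in $G_\ell$ must satisfy $d_{G_\ell}(Z)\ge d_{G_\ell}(\super_\ell(R))$ via laminarity of supreme sets in $G_\ell$ (whether $\rho(Z)=R$ or $\rho(Z)\subsetneq R$). This uses only \Cref{def:supreme}, \Cref{fact:subset-of-extreme}, \Cref{fact:extreme-violator}, \Cref{lem:supreme-laminar}, \Cref{lem:supreme-mincut}, and \Cref{fact:supreme-shrink-after-chain}; in particular it never touches the augmentation-chain structure and does not need the hypothesis $R\in L_{high}\setminus L_2$. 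By contrast, the paper proves the reverse inequality by a case analysis on how many chain edges land in $\super_{\ell-1}(R)\setminus\super_\ell(R)$ versus $\super_\ell(R)$, invoking \Cref{lem:deg-vacant-in-current-supreme}, \Cref{lem:focus-on-Lhigh}, and \Cref{lem:L2-tight} to locate endpoints in specific $\super_{\ell-1}(U_i)$ with $U_i\in L_2$, and then applies submodularity separately in the cases $U_1\neq U_2$ and $U_1=U_2$. Your approach is shorter, more robust, and in fact proves the stronger statement that the conclusion holds for any $R$ with $\super_\ell(R)$ defined; the paper's approach, while heavier, is perhaps more ``hands-on'' in that it tracks exactly where the chain edges go, which is the viewpoint driving the surrounding algorithmic lemmas.
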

\begin{proof}
If $\super_\ell(R)=\super_{\ell-1}(R)$, the statement is trivial. By \Cref{fact:supreme-shrink-after-chain}, the remaining case is $\super_\ell(R)\subsetneqq \super_{\ell-1}(R)$.
$d_{G_{\ell-1}}(\super_\ell(R))> d_{G_{\ell-1}}(\super_{\ell-1}(R))$ since $\super_{\ell-1}(R)$ is extreme in $G_{\ell-1}$, but $d_{G_\ell}(\super_\ell(R))\le  d_{G_\ell}(\super_{\ell-1}(R))$ by \Cref{lem:supreme-mincut}. 
It remains to prove that $d_{G_\ell}(\super_\ell(R))\ge  d_{G_\ell}(\super_{\ell-1}(R))$.
An augmentation chain adds at most 2 edges to any supreme set. 
So the only nontrivial case is that the $\ell$-th chain adds 2 edges to $\super_{\ell-1}(R)\setminus \super_\ell(R)$ and no edge to $\super_{\ell}(R)$.

For each endpoint $u$ of the $\ell$-th chain, $u\in \super_{\ell-1}(U)$ by \Cref{lem:deg-vacant-in-current-supreme}, where $U\in L_2$ is $u$'s representative leaf. Assume the two edges are added to $X_1=\super_{\ell-1}(U_1)$ and $X_2=\super_{\ell-1}(U_2)$, $U_1, U_2\in L_2$. $U_1, U_2\subsetneqq R$ because $R\notin L_2$ and the two edges are added to $\super_{\ell-1}(R)$. 
Also the two endpoints are not in $\super_\ell(R)$, so $\super_{\ell}(R)$ crosses $X_1$ and $X_2$.

Case 1: $U_1\ne U_2$. Since $X_1$ and $X_2$ are extreme in $G_{\ell-1}$, 
\[d_{G_{\ell-1}}(X_1\cap \super_{\ell}(R))> d_{G_{\ell-1}}(X_1), d_{G_{\ell-1}}(X_2\cap \super_{\ell}(R))> d_{G_{\ell-1}}(X_2)\]
By submodularity (\ref{eq:submodularity-4}), $d_{G_{\ell-1}}(\super_{\ell}(R)\cup X_1)< d_{G_{\ell-1}}(\super_{\ell}(R))$, $d_{G_{\ell-1}}(\super_{\ell}(R)\cup X_1\cup X_2)<d_{G_{\ell-1}}(\super_{\ell}(R)\cup X_1)$ ($X_1$ and $X_2$ are disjoint by laminarity). So \[d_{G_\ell}(\super_\ell(R))=d_{G_{\ell-1}}(\super_{\ell}(R))\ge d_{G_{\ell-1}}(\super_{\ell}(R)\cup X_1\cup X_2)+2\ge d_{G_{\ell-1}}(\super_{\ell-1}(R))+2=d_{G_\ell}(\super_{\ell-1}(R))\]
where the equations use the assumption that the $\ell$-th chain adds 2 edges to $\super_{\ell-1}(R)$ but no edge to $\super_{\ell}(R)$.

Case 2: $U_1=U_2$. By \Cref{lem:focus-on-Lhigh}, either $\super_\ell(U_1)$ exists and $d_{G_\ell}(\super_\ell(U_1))=d_{G_\ell}(\super(U_1))$, or $\super(U_1)$ has no vacant degree at time $\ell$. Notice that $\rho(X_1\cap \super_\ell(R))=U_1$. In the former case, 
\[d_{G_{\ell-1}}(X_1\cap \super_\ell(R)) =d_{G_\ell}(X_1\cap \super_\ell(R))\ge  d_{G_\ell}(\super_\ell(U_1))=d_{G_\ell}(\super(U_1))=d_{G_{\ell-1}}(\super(U_1))+2=d_{G_{\ell-1}}(X_1)+2\]
In the latter case, $\super(U_1)$ has no vacant degree in $G_\ell$, so $d_{G_{\ell-1}}(X_1\cap \super_\ell(R)) =d_{G_\ell}(X_1\cap \super_\ell(R))\ge \tau$.
By \Cref{lem:L2-tight}, $\tau=d_{G_{\ell-1}}(\super(U_1))+\beta_{\ell-1}(\super(U_1))\ge d_{G_{\ell-1}}(\super(U_1))+2=d_{G_{\ell-1}}(X_1)+2$, so we also have $d_{G_{\ell-1}}(X_1\cap \super_\ell(R))\ge d_{G_{\ell-1}}(X_1)+2$.
By submodularity,
\[d_{G_{\ell-1}}(\super_{\ell}(R)\cup X_1)+2\le d_{G_{\ell-1}}(\super_{\ell}(R))\]
$\super_\ell(R)\subseteq \super_{\ell-1}(R)$ by \Cref{fact:supreme-shrink-after-chain}, and $X_1=\super_{\ell-1}(U_1)\subseteq \super_{\ell-1}(R)$ by laminarity.
Therefore, since $\super_{\ell-1}(R)$ is extreme,
\[d_{G_\ell}(\super_\ell(R))=d_{G_{\ell-1}}(\super_{\ell}(R))\ge d_{G_{\ell-1}}(\super_{\ell}(R)\cup X_1)+2\ge d_{G_{\ell-1}}(\super_{\ell-1}(R))+2=d_{G_\ell}(\super_{\ell-1}(R))\]

In conclusion we have $d_{G_\ell}(\super_\ell(R))\ge d_{G_\ell}(\super_{\ell-1}(R))$ in both cases, so $d_{G_\ell}(\super_\ell(R))= d_{G_\ell}(\super_{\ell-1}(R))$.
\end{proof}

\begin{lemma}\label{lem:deg-current-supreme-defined}
After processing the $\ell$-th chain, if a node $R$ in $L_{high}$ is not deleted, then either $\super_\ell(R)$ is defined and $d_{G_\ell}(\super(R))=d_{G_\ell}(\super_\ell(R))$, or $R\in L_2$ and $\super(R)$ has no vacant degree in $G_\ell$.
\end{lemma}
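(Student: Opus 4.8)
The plan is to prove the statement by induction on $\ell$, the number of augmentation chains added so far. The base case $\ell=0$ is trivial: $G_0=G$ and $\super_0=\super$, so for every $R\in L_{high}\subseteq L$ the set $\super_0(R)=\super(R)$ is defined and $d_{G_0}(\super(R))=d_{G_0}(\super_0(R))$, and no node has been deleted yet. For the inductive step, fix $R\in L_{high}$ not deleted after chain $\ell$; since deletions are permanent, $R$ was also not deleted after chain $\ell-1$, so the induction hypothesis applies to $R$ at time $\ell-1$. If $R\in L_2$ we are done immediately by \Cref{lem:focus-on-Lhigh}: if $\super(R)$ has vacant degree in $G_\ell$ then $\super_\ell(R)$ is defined with $d_{G_\ell}(\super(R))=d_{G_\ell}(\super_\ell(R))$ (first alternative), and otherwise $\super(R)$ has no vacant degree in $G_\ell$ (second alternative). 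So the content is the case $R\in L_{high}\setminus L_2$: such an $R$ is a proper ancestor of some $L_2$ node, hence non-critical, and the induction hypothesis at time $\ell-1$ (which cannot be the second alternative, as $R\notin L_2$) gives that $\super_{\ell-1}(R)$ is defined and $d_{G_{\ell-1}}(\super(R))=d_{G_{\ell-1}}(\super_{\ell-1}(R))$.

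The argument for this case has three steps. \textbf{Step 1: $\super_\ell(R)$ is defined.} Suppose not. Then $\super_{\ell-1}(R)$, which has projection $R$, is not extreme in $G_\ell$, so by \Cref{fact:extreme-violator} there is a set $Z\subsetneqq\super_{\ell-1}(R)$ that is extreme in $G_\ell$ with $d_{G_\ell}(Z)\le d_{G_\ell}(\super_{\ell-1}(R))$; by repeated application of \Cref{lem:deg-chain-no-new-extreme}, $Z$ is extreme in $G_0$ as well, so $\rho(Z)\in L$, and $\rho(Z)\subsetneqq R$ strictly (otherwise $Z$ would be a $G_\ell$-extreme set with projection $R$, making $\super_\ell(R)$ defined). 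I expect this step to be the crux. It should be handled by the same feasibility/external-degree bookkeeping used in \Cref{lem:focus-on-Lhigh}, now applied to the heavy path containing $\rho(Z)$: the cut value of $Z$ was already raised to $\tau$ when that heavy path was processed, external edges attached to the $L_2$-descendants of $R$ are added only afterwards (heavy paths terminate at $L_2$ nodes), and the rule that lower-level external degrees are consumed before upper-level ones---combined with \Cref{lem:L2-tight}, \Cref{lem:deg-maintained-cut-value}, the non-criticality of $R$, and the hypothesis that $R$ is not deleted, i.e.\ $c(R)=d_{G_\ell}(\super(R))<d_{G_\ell}(\super(W))$ for every current child $W$ of $R$---should force $d_{G_\ell}(Z)$ to be strictly larger than $d_{G_\ell}(\super_{\ell-1}(R))$, a contradiction.

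\textbf{Step 2.} Since $\super_\ell(R)$ is defined and $R\in L_{high}\setminus L_2$, \Cref{lem:supreme-equal-after-chain} gives $d_{G_\ell}(\super_\ell(R))=d_{G_\ell}(\super_{\ell-1}(R))$. \textbf{Step 3: $d_{G_\ell}(\super(R))=d_{G_\ell}(\super_{\ell-1}(R))$.} Write $d_F(S)$ for the number of edges of the $\ell$-th chain $F$ crossing $S$, so $d_{G_\ell}(S)=d_{G_{\ell-1}}(S)+d_F(S)$. Since $\super_{\ell-1}(R)\subseteq\super(R)$ by \Cref{fact:supreme-shrink-after-chain} and $d_{G_{\ell-1}}(\super(R))=d_{G_{\ell-1}}(\super_{\ell-1}(R))$, it suffices to show $d_F(\super(R))=d_F(\super_{\ell-1}(R))$. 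Each edge of $F$ joins a vertex of $\super(R_i)$ to a vertex of $\super(R_{i+1})$ for two distinct current $L_{high}$-roots $R_i\ne R_{i+1}$ (by \Cref{lem:deg-external-new-edges}, the definition of representative leaves, and the chain-selection rule), and distinct $L_{high}$-roots have disjoint supreme sets by \Cref{lem:supreme-laminar}; since $\super(R)$ (and likewise $\super_{\ell-1}(R)$) is contained in $\super(R_i)$ for at most one $i$, each $F$-edge has at most one endpoint in $\super(R)$ and at most one in $\super_{\ell-1}(R)$, so for either set the number of crossing $F$-edges equals the number of $F$-endpoints it contains. Finally, by \Cref{lem:deg-vacant-in-current-supreme} every such endpoint $u$ lies in $\super_{\ell-1}(U_u)$ for its representative leaf $U_u\in L_2$ (defined via \Cref{lem:focus-on-Lhigh}, since $u$ has vacant degree in $G_{\ell-1}$ and $u\in\super(U_u)$ by \Cref{lem:deg-external-new-edges}); laminarity of the supreme sets of $G_{\ell-1}$ and of $L$, together with the fact that $L_2$ nodes are pairwise incomparable and $R$ is a proper ancestor of one of them, yields $u\in\super(R)\iff U_u$ is a descendant of $R$ in $L\iff u\in\super_{\ell-1}(R)$. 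Hence the two endpoint counts agree, so $d_F(\super(R))=d_F(\super_{\ell-1}(R))$ and therefore $d_{G_\ell}(\super(R))=d_{G_\ell}(\super_{\ell-1}(R))=d_{G_\ell}(\super_\ell(R))$ by Step 2, completing the induction.

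The main obstacle is Step 1: showing that $\super_\ell(R)$ stays defined after the $\ell$-th chain for interior nodes of $L_{high}$. This requires re-running the external-degree accounting of \Cref{lem:focus-on-Lhigh} carefully in the presence of the two-level heavy-light decomposition and the lazy deletion of forest nodes, which is where the interaction between non-critical nodes, the ordering of consumed external degrees, and the non-deletion condition must be tracked precisely. Steps 2 and 3 are then essentially bookkeeping that rides on \Cref{lem:supreme-equal-after-chain} and the structural lemmas about representative leaves.
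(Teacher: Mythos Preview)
Your overall plan is right and matches the paper: induction on $\ell$, the base case $R\in L_2$ handled directly by \Cref{lem:focus-on-Lhigh}, and for interior nodes $R\in L_{high}\setminus L_2$ the combination of Step~2 (\Cref{lem:supreme-equal-after-chain}) and Step~3 (the endpoint-counting argument via representative leaves) to get $d_{G_\ell}(\super(R))=d_{G_\ell}(\super_\ell(R))$ once $\super_\ell(R)$ is known to be defined. Your Step~3 is essentially the paper's argument, and the paper in fact proves Step~3 first and then uses it inside Step~1.

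The gap is exactly where you flag it: Step~1. You propose to re-run the external-degree accounting of \Cref{lem:focus-on-Lhigh} for the violator $Z$, but this does not close on its own. The paper's resolution is different and cleaner: it upgrades the induction to a \emph{double} induction, first on $\ell$ and then bottom-up on the forest $L_{high}$. With this in hand, take an extreme violator $X\subsetneqq\super(R)$ in $G_\ell$ and let $W=\rho(X)\subsetneqq R$. Since $R$ is not deleted, the maintained values satisfy $d_{G_\ell}(\super(R))<d_{G_\ell}(\super(W'))$ for every current child $W'$ of $R$, and iterating down the (undeleted) path gives $d_{G_\ell}(\super(R))<d_{G_\ell}(\super(W))$; together with $d_{G_\ell}(X)\le d_{G_\ell}(\super(R))$ this yields $d_{G_\ell}(X)<d_{G_\ell}(\super(W))$. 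Now apply the \emph{inner} induction hypothesis to $W$ at the same time $\ell$: either $\super_\ell(W)$ is defined with $d_{G_\ell}(\super_\ell(W))=d_{G_\ell}(\super(W))$, in which case $X\subseteq\super_\ell(W)$ forces $d_{G_\ell}(X)\ge d_{G_\ell}(\super_\ell(W))$, a contradiction; or $W\in L_2$ and $\super(W)$ has no vacant degree, in which case $X\subseteq\super(W)$ and \Cref{lem:L2-tight} together with \Cref{lem:deg-split-chain-feasible} force $d_{G_\ell}(X)\ge\tau>d_{G_\ell}(\super(W))$, again a contradiction. So the missing idea is precisely the bottom-up inner induction; your external-degree sketch, if you tried to complete it, would end up needing exactly this statement about $W$ at time $\ell$.
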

\begin{proof}
We prove by induction first on time $\ell$, then bottom-up on the forest. The base case $\ell=0$ holds trivially, and the base case $R\in L_2$ is handled by \Cref{lem:focus-on-Lhigh}. Next consider the inductive case $\ell>0$ and $R$ is an internal node in $L_{high}\setminus L_2$.

By inductive hypothesis, $\super_{\ell-1}(R)$ is defined in $G_{\ell-1}$, and $d_{G_{\ell-1}}(\super(R))=d_{G_{\ell-1}}(\super_{\ell-1}(R))$. 
$\super(R)\supseteq \super_{\ell-1}(R)$ by \Cref{fact:supreme-shrink-after-chain}. If chain $\ell$ does not add edges to $\super(R)$, then $d_{G_\ell}(\super(R))=d_{G_\ell}(\super_{\ell-1}(R))$. Next assume chain $\ell$ adds edges to $\super(R)$.  Because $R$ is a forest node, for each chain edge $(u,w)$, at most one endpoint $u$ is assigned to a leaf $U$ in the subtree of $R$. $u\in \super_{\ell-1}(U)\subseteq\super_{\ell-1}(R)$ by \Cref{lem:deg-vacant-in-current-supreme} and laminarity of supreme sets in $G_{\ell-1}$. The other endpoint $w$ is assigned to leaves outside the tree of $R$, so $w\notin\super(R)$. Therefore we also have $d_{G_\ell}(\super(R))=d_{G_\ell}(\super_{\ell-1}(R))$ because they increase by the same amount from $G_{\ell-1}$. In both cases we have
$d_{G_\ell}(\super(R))=d_{G_\ell}(\super_{\ell-1}(R))$.
By \Cref{lem:supreme-equal-after-chain}, if $\super_\ell(R)$ is defined, then $d_{G_\ell}(\super(R))=d_{G_\ell}(\super_{\ell-1}(R))=d_{G_\ell}(\super_\ell(R))$. It remains to prove that $\super_\ell(R)$ is defined.

Assume for contradiction that there is no extreme set with terminal set $R$ in $G_\ell$.
Let $X\subsetneqq \super(R)$ be an extreme violator of $\super(R)$ in $G_\ell$ with
$d_{G_\ell}(X)\le d_{G_\ell}(\super(R))$.
Let $W=\rho(X)$, then $W\subsetneqq R$ because $X$ is extreme.
$d_{G_\ell}(\super(R))<d_{G_\ell}(\super(W))$ because $R$ is not deleted, so 
\[d_{G_\ell}(X)<d_{G_\ell}(\super(W))\]
By inductive assumption, either  $\super_\ell(W)$ exists and $d_{G_\ell}(\super_\ell(W))=d_{G_\ell}(\super(W))$, or $\super(W)$ has no vacant degree.
In the former case,  $d_{G_\ell}(X)<d_{G_\ell}(\super_\ell(W))$, which contradicts the fact that $\super_\ell(W)$ is extreme. 
In the latter case, $X\subseteq \super(W)$ because $X$ is extreme, so $X$ has no vacant degree. 
But by \Cref{lem:L2-tight},
\[\tau=d_{G^{ext}_{\ell}}(\super(W))\ge d_{G_\ell}(\super(W))>d_{G_\ell}(X)\]
So $X$ should have vacant degree in a feasible external solution, a contradiction.
\end{proof}

\begin{lemma}\label{lem:deg-delete-non-extreme}
If a node $R$ is deleted after processing the $\ell$-th chain, then $\super_\ell(R)$ is not defined in $G_\ell$.
\end{lemma}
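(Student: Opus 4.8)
The plan is to argue by contradiction. Suppose $\super_\ell(R)$ is defined and write $Y=\super_\ell(R)$. Since $R$ is deleted while processing the $\ell$-th chain, at the moment of deletion $R$ has a child $W$ in $L_{high}$ with $c(R)\ge c(W)$; by \Cref{lem:deg-maintained-cut-value} this reads $d_{G_\ell}(\super(R))\ge d_{G_\ell}(\super(W))$, where $\super(\cdot)$ denotes the supreme set of the \emph{original} graph $G$. As $W$ is a proper descendant of $R$ in $L$, we have $W=\rho(\super(W))\subsetneqq R=\rho(\super(R))$ and hence $\super(W)\subsetneqq\super(R)$ by \Cref{lem:supreme-laminar}. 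The goal is to derive $d_{G_\ell}(\super(R))<d_{G_\ell}(\super(W))$, contradicting $c(R)\ge c(W)$.

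The first and main step is to show $d_{G_\ell}(\super(R))=d_{G_\ell}(Y)$. Only nodes of $L_{high}\setminus L_2$ are ever deleted, and $R$ is not deleted before step $\ell$, so \Cref{lem:deg-current-supreme-defined} applied at step $\ell-1$ gives that $\super_{\ell-1}(R)$ is defined and $d_{G_{\ell-1}}(\super(R))=d_{G_{\ell-1}}(\super_{\ell-1}(R))$ (the second alternative of that lemma is vacuous since $R\notin L_2$). I would then re-run the bookkeeping from the proof of \Cref{lem:deg-current-supreme-defined}: $R$ is still a forest node when the $\ell$-th chain is added, so each chain edge $(u,w)$ has at most one endpoint assigned to a representative leaf inside the subtree of $R$; such an endpoint lies in $\super_{\ell-1}(R)\subseteq\super(R)$ by \Cref{lem:deg-vacant-in-current-supreme} and laminarity, while the other endpoint lies outside $\super(R)$. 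Hence the $\ell$-th chain changes $d(\super(R))$ and $d(\super_{\ell-1}(R))$ by exactly the same amount, so $d_{G_\ell}(\super(R))=d_{G_\ell}(\super_{\ell-1}(R))$, and then \Cref{lem:supreme-equal-after-chain} gives $d_{G_\ell}(\super_{\ell-1}(R))=d_{G_\ell}(\super_\ell(R))=d_{G_\ell}(Y)$.

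The second step is a routine uncrossing. Since $W\subseteq R$, $\rho(Y\cap\super(W))=W$, a nonempty proper subset of $T$, so $Y\cap\super(W)$ is a Steiner cut; it is a proper subset of $Y$, since $Y\subseteq\super(W)$ would force $R=\rho(Y)\subseteq W\subsetneqq R$, absurd. As $Y$ is extreme in $G_\ell$, $d_{G_\ell}(Y\cap\super(W))>d_{G_\ell}(Y)$, and (\ref{eq:submodularity-4}) of \Cref{fact:submodularity} yields $d_{G_\ell}(Y\cup\super(W))<d_{G_\ell}(\super(W))$. But $\rho(Y\cup\super(W))=R$, so $Y\cup\super(W)$ is an $R$-$(T\setminus R)$ cut, whereas $Y=\super_\ell(R)$ is an $R$-$(T\setminus R)$ min cut in $G_\ell$ by \Cref{lem:supreme-mincut}; hence $d_{G_\ell}(Y)\le d_{G_\ell}(Y\cup\super(W))<d_{G_\ell}(\super(W))$. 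Combining with the first step, $d_{G_\ell}(\super(W))\le d_{G_\ell}(\super(R))=d_{G_\ell}(Y)<d_{G_\ell}(\super(W))$, the desired contradiction.

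The delicate point is the first step: one cannot simply invoke \Cref{lem:deg-current-supreme-defined} at step $\ell$, because $R$ is exactly the node being deleted there, so the equality $d_{G_\ell}(\super(R))=d_{G_\ell}(\super_\ell(R))$ must be recovered by descending one step and re-tracing how the $\ell$-th chain acts on $\super(R)$ and $\super_{\ell-1}(R)$ simultaneously; everything else is the same style of submodular uncrossing used throughout \Cref{sec:deg-splitting}. In the write-up one should also confirm the ordering of operations (chain edges added, then $c$-values updated, then the deletion sweep) so that $R$ is indeed a forest node during the chain addition, and handle the trivial base case $\ell=1$ where $\ell-1=0$ and $\super_0=\super$.
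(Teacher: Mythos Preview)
Your proposal is correct and follows essentially the same approach as the paper: first establish $d_{G_\ell}(\super(R))=d_{G_\ell}(\super_\ell(R))$ by invoking \Cref{lem:deg-current-supreme-defined} at step $\ell-1$, tracing the $\ell$-th chain's identical effect on $\super(R)$ and $\super_{\ell-1}(R)$, and then applying \Cref{lem:supreme-equal-after-chain}; second, uncross $\super_\ell(R)$ and $\super(W)$ using extremeness and the min-cut property from \Cref{lem:supreme-mincut} to contradict $c(R)\ge c(W)$. The only cosmetic difference is that the paper adds the two inequalities and contradicts submodularity directly, whereas you chain them through (\ref{eq:submodularity-4}) to a strict inequality $d_{G_\ell}(Y)<d_{G_\ell}(\super(W))$; both are the same uncrossing move.
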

\begin{proof}
The algorithm only deletes internal nodes of $L_{high}$, so $R\in L_{high}\setminus L_2$.
Because $R$ is not deleted at time $\ell-1$, by \Cref{lem:deg-current-supreme-defined}, $d_{\ell-1}(\super(R))=d_{\ell-1}(\super_{\ell-1}(R))$. If $(u,w)$ is a chain edge and $u\in \super(R)$, then $w$ is not assigned to the tree of $R$ and $w\notin \super(R)$. By \Cref{lem:deg-vacant-in-current-supreme} and laminarity of supreme sets in $G_{\ell-1}$, all endpoints in $\super(R)$ in the $\ell$-th chain are in $\super_{\ell-1}(R)$. Therefore the cut values of $\super(R)$ and $\super_{\ell-1}(R)$ increase by the same amount from $G_{\ell-1}$ to $G_\ell$, and $d_{G_\ell}(\super(R))=d_{G_\ell}(\super_{\ell-1}(R))$.

Assume for contradiction that $\super_\ell(R)$ is defined. Then by \Cref{lem:supreme-equal-after-chain}, $d_{G_\ell}(\super_{\ell-1}(R))=d_{G_\ell}(\super_{\ell}(R))$, so $d_{G_\ell}(\super(R))=d_{G_\ell}(\super_{\ell}(R))$.
Because $R$ is deleted at time $\ell$, $d_{G_\ell}(\super(R))\ge d_{G_\ell}(\super(W))$ for some $W\in ch(R)$. 
By \Cref{lem:supreme-mincut}, $\super_\ell(R)$ is an $R$-$(T\setminus R)$ min cut in $G_\ell$, so $$d_{G_\ell}(\super_\ell(R)\cup\super(W))\ge d_{G_\ell}(\super_\ell(R))=d_{G_\ell}(\super(R))\ge d_{G_\ell}(\super(W))$$
Since $\super_\ell(R)$ is extreme in $G_\ell$ ($\rho(\super_\ell(R)\cap\super(W))=W\subsetneqq R$), $$d_{G_\ell}(\super_\ell(R)\cap\super(W))>d_{G_\ell}(\super_\ell(R))$$ Adding these two inequalities contradicts submodularity (\ref{eq:submodularity-1}).
\end{proof}




\begin{lemma}\label{lem:deg-find-chains}
The chains picked by the algorithm are augmentation chains.
\end{lemma}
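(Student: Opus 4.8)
Fix a chain $F=\{(a_i,b_{i+1})\}_{1\le i\le r-1}$ that the algorithm commits to while the current graph is $G_\ell$; by the lazy implementation the nodes $R_1,\dots,R_r$ form exactly the list $Q_T$ at that moment. Writing $X_i:=\super_{\ell}(R_i)$, I will verify the three defining properties of an augmentation chain for $G_\ell$: (1) $(X_1,\dots,X_r)$ is precisely the list of all maximal supreme sets of $G_\ell$ of demand at least $2$, ordered so that $X_1$ and $X_r$ have minimum cut value; (2) $a_i,b_i\in X_i$; and (3) adding $F$ respects the degree budget $\beta(\cdot)$.

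The core is the bijection in (1). \emph{Each root gives a maximal supreme set:} for a (non-deleted) root $R_i$ of the maintained $L_{high}$ with $c(R_i)\le\tau-2$, \Cref{lem:deg-current-supreme-defined} gives that either $\super_{\ell}(R_i)$ is defined with $d_{G_\ell}(\super(R_i))=d_{G_\ell}(\super_{\ell}(R_i))$, or $R_i\in L_2$ and $\super(R_i)$ has no vacant degree; the latter forces $c(R_i)=d_{G_\ell}(\super(R_i))=d_{G^{ext}_{\ell}}(\super(R_i))\ge\tau$ by \Cref{lem:deg-split-chain-feasible}, contradicting $c(R_i)\le\tau-2$. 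Hence $X_i$ is defined with $d_{G_\ell}(X_i)=c(R_i)\le\tau-2$ (using \Cref{lem:deg-maintained-cut-value}), i.e.\ $X_i$ has demand at least $2$. It is maximal: a supreme set of $G_\ell$ properly containing $X_i$ has projection some $R'\supsetneq R_i$; since extremeness in $G_\ell$ persists to $G$ (\Cref{lem:deg-chain-no-new-extreme}), $R'\in L$, so $R'$ is a proper $L$-ancestor of $R_i$ and hence lies in $L_{high}$ (closed under $L$-ancestors); $R'$ is non-deleted by \Cref{lem:deg-delete-non-extreme}; and since a deletion reparents children upward, $R'$ is still an ancestor of $R_i$ in the maintained forest, contradicting that $R_i$ is a root. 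The correspondence $R_i\mapsto X_i$ is injective since $\rho(X_i)=R_i$.

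\emph{Each maximal supreme set comes from a root:} let $X=\super_{\ell}(R)$ be maximal with $d_{G_\ell}(X)\le\tau-2$; then $R\in L$ (\Cref{lem:deg-chain-no-new-extreme}). If $R\notin L_{high}$, then $R$'s subtree in $L$ contains no $L_2$ node but contains a leaf $\ell_0$ of $L$, which is critical, so the topmost critical node $Z_0$ on $\ell_0$'s root-path lies in $L_2$ and is a \emph{proper} ancestor of $R$. By \Cref{lem:focus-on-Lhigh} there are two cases. If $\super_{\ell}(Z_0)$ is defined, it is a supreme set of $G_\ell$ with projection $Z_0\supsetneq R$, so it properly contains $X$ by laminarity, contradicting maximality. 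Otherwise $\super(Z_0)$ has no vacant degree in $G_\ell$, so $\beta_{\ell}(\super(Z_0))=0$ by \Cref{lem:L2-tight}; since $\super_{\ell}(R)\subseteq\super(R)\subseteq\super(Z_0)$ (\Cref{fact:supreme-shrink-after-chain} and laminarity of the supreme sets of $G$) we get $\beta_{\ell}(\super_{\ell}(R))=0$, hence $d_{G^{ext}_{\ell}}(\super_{\ell}(R))=d_{G_\ell}(\super_{\ell}(R))\le\tau-2<\tau$, contradicting \Cref{lem:deg-split-chain-feasible}. Thus $R\in L_{high}$; it is non-deleted by \Cref{lem:deg-delete-non-extreme}; and it is a root, for otherwise its parent $W$ in the maintained forest satisfies $W\notin L_2$, so $\super_{\ell}(W)$ is defined by \Cref{lem:deg-current-supreme-defined} and properly contains $X$. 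Finally $c(R)=d_{G_\ell}(\super(R))=d_{G_\ell}(\super_{\ell}(R))=d_{G_\ell}(X)\le\tau-2$, so $R$ is one of the $R_i$. The resulting bijection gives $r=|Q_T|=|Q|\ge 2$ (recall $Q$ has at least two members whenever it is nonempty, so $F$ is well-defined), and since the algorithm picks $R_1,R_r$ with the two smallest $c$-values, the ordering matches the definition. I expect this converse inclusion --- in particular the ``no vacant degree'' subcase --- to be the main obstacle; it is resolved by combining monotonicity of $\beta_{\ell}(\cdot)$ under set inclusion with the feasibility invariant of \Cref{lem:deg-split-chain-feasible}.

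For (2): the algorithm takes $a_i,b_i$ to be vacant vertices assigned to heavy paths lying in the $L$-subtree rooted at $R_i$, so the representative leaf $U\in L_2$ of such a vertex $u$ is a descendant of (or equal to) $R_i$ in $L$; by \Cref{lem:deg-vacant-in-current-supreme} $u\in\super_{\ell}(U)$, and since $\super_{\ell}(U)$ and $X_i=\super_{\ell}(R_i)$ are defined supreme sets of $G_\ell$ sharing $U$, laminarity forces $u\in\super_{\ell}(U)\subseteq X_i$. For (3): the $X_i$ are pairwise disjoint (distinct maximal members of the laminar family of supreme sets of $G_\ell$, \Cref{lem:supreme-laminar}), so within $F$ a vertex occurs only among $\{a_i,b_i\}$ for a single $i$; the algorithm spends only vacant degree and reserves two units whenever $a_i=b_i$, so $F$ never exceeds $\beta(\cdot)$. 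Together, (1)--(3) show $F$ is an augmentation chain for $G_\ell$.
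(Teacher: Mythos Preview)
Your proof is correct and follows essentially the same approach as the paper: use \Cref{lem:deg-maintained-cut-value}, \Cref{lem:deg-current-supreme-defined}, \Cref{lem:deg-delete-non-extreme}, and \Cref{lem:deg-vacant-in-current-supreme} to identify the roots of the maintained $L_{high}$ with $c(R_i)\le\tau-2$ bijectively with the maximal supreme sets of $G_\ell$ of demand at least $2$, and then place the chosen endpoints inside those sets via laminarity. Your writeup is in fact more explicit than the paper's on one point the paper leaves implicit---namely, ruling out the possibility that the projection $R$ of a maximal supreme set in $G_\ell$ lies in $L\setminus L_{high}$ (your argument via the $L_2$ ancestor $Z_0$, combining \Cref{lem:focus-on-Lhigh}, \Cref{lem:L2-tight}/\Cref{lem:deg-split-chain-feasible}, and monotonicity of $\beta_\ell$, is the right way to close this).
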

\begin{proof}
By \Cref{lem:deg-maintained-cut-value}, the $c(R)$ value maintained by the algorithm equals $d_{G_\ell}(\super(R))$ for all $R\in L_{high}$ at each time $\ell$.
By \Cref{lem:deg-current-supreme-defined} and \Cref{lem:deg-delete-non-extreme}, the algorithm keeps a node $R$ in $L_{high}$ if and only if $R$ remains a projection of an extreme set, or $R\in L_2$ and $\super(R)$ has no vacant degree.
For the latter case, $c(R)=d_{G_\ell}(\super(R))=\tau$ at time $\ell$ by \Cref{lem:L2-tight}.
Therefore the roots $R_i$ with $c(R_i)\le \tau-2$ picked by the algorithm are exactly the terminal sets of all maximal supreme sets with demand at least 2. 

The algorithm picks endpoints with vacant degrees. By \Cref{lem:deg-vacant-in-current-supreme}, the endpoints are in the current supreme sets of their representative leaves, which are in corresponding maximal supreme sets by laminarity. In conclusion the algorithm picks an augmentation chain.
\end{proof}



\eat{
\begin{lemma}
The algorithm's output augments the Steiner connectivity to at least $\tau-1$.
\end{lemma}
\begin{proof}
The algorithm satisfies degree constraint. Assume for contradiction that the Steiner min cut after augmentation is less than $\tau-1$. Let $X$ be a minimal Steiner min cut at the end. Then $X$ is extreme. By \Cref{lem:deg-chain-no-new-extreme}, $X$ is extreme in the original graph, so $R=\rho(X)$ is a node on $L_{high}$. By \Cref{lem:deg-delete-non-extreme}, $R$ is not deleted at the end of the algorithm.

Because the external augmentation solution is feasible, there are at least $\tau-d(X)$ external degrees in $X$. Because $R$ is always a node in $L_{high}$ during the algorithm, all edges incident on $X\subseteq \super(R)$ connects other maximal supreme sets. So all used external degrees contribute to the cut value of $X$, and $d(X)$ is augmented to at least $\tau-1$ because the algorithm only stops when the requirement is less than 2.
\end{proof}

\begin{lemma}\label{lem:deg-supreme-mincut}
During the algorithm, as long as a node $R\in L_{high}$ is not deleted, $\super(R)$ (defined in the original graph) is an $R$-$(T\setminus R)$ min cut.
\end{lemma}
\begin{proof}
Prove by induction on depth. In the base case, $R$ is a leaf, and $R$ is critical
and $\rdem(R)=\tau-d(\super(R))$. By \Cref{lem:deg-external-new-edges}, the external augmentation solution adds exactly $\rdem(R)$ edges into $\super(R)$, and all other external edges are in the supreme sets of other nodes, which are disjoint from $\super(R)$. So after $t$ steps, the remaining external degree in $\super(R)$ is $\tau-d_t(\super(R))$.
Consider any $R$-$(T\setminus R)$ cut $X$, $\rho(X)=R$. $d(X)\ge d(\super(R))$ by \Cref{lem:supreme-mincut}.
If $X\supseteq \super(R)$, then all external edges added inside $\super(R)$ are in $X$, and $d_t(X)\ge d_t(\super(R))$. Otherwise, $X\cap \super(R)\subsetneqq \super(R)$.
 then after adding all external edges, $X$ will have cut value less than $\tau$, which is impossible.

In the inductive case, $R$ is an internal node and not critical. Assume for contradiction that at some time $t$ there exists a cut $X$ such that $\rho(X)=R$ and $d_t(X)<d_t(\super(R))$. Initially $d(X)\ge d(\super(R))$ because $\super(R)$ is extreme. Because $R$ is not critical, all external edges added inside $\super(R)$ are in $\super(W)$ for some child $W$ of $R$. If $X\supseteq \cup_{W\in ch(R)}\super(W)$, then $X$ contains all external edges added for $\super(R)$, and $d_t(X)\ge d_t(\super(R))$, contradiction. The remaining case is that $X$ crosses $\super(W)$ for some $W\in ch(R)$. $\rho(X\cap \super(W))=W$, so by inductive assumption, $d_t(X\cap \super(W))\ge d_t(\super(W))$. By submodularity, $d_t(X\cup \super(W))\le d_t(X)$. Repeating this step reduces to the former case.
\end{proof}

\begin{lemma}
The algorithm runs in $\tO(n^2)$ time.
\end{lemma}
\begin{proof}
The length of augmentation chain is $O(n)$ because we add an edge for each maximal supreme set.

The algorithm adds $O(n)$ different augmentation chains because we use the same augmentation chain until it become invalid. The chain becomes invalid if the following three events happen. (1) A vertex uses up its degree. (2) A node is removed from the forest $L_{high}$. (3) $X_1$ and $X_r$ are not the set in $Q$ with minimum cut value and need to be re-selected. Event (1) and (2) happens at most $n$ times. Event (3) cannot happen because in each iteration, $d(\super(X_1))$ and $d(\super(X_r))$ increase by 1 while other maximal supreme sets increase by 2, so $X_1$ and $X_r$ remain to be terminal sets of Steiner min cuts if they are not removed.

Each chain edge costs $\tO(1)$ for dynamic tree operations. The total running time is $\tO(n^2)$.
\end{proof}
}

\eat{
\begin{lemma}
Let $R\in L_2$. In the algorithm, use all external edges added for paths in the subtree of $R$ (excluding $R$) before using the external edges added for the path containing $R$. Then $R$ remains the terminal set of some extreme set before we use up all external edges in the subtree of $R$ and the path containing $R$.
\end{lemma}
\begin{proof}
Because $R$ is black,
\[\rdem(R)=\tau-c(R) > \sum_{W\in ch(R)}\rdem(W) \ge \sum_{W\in ch(R)}\tau-c(W)\]
so $R$ has demand greater than the sum of demands of its children. Using an external edge added for paths in the subtree of $R$ decreases the demand of $R$ as well as the demand of a child, so $R$'s demand keeps greater than the sum of children's if we only use external edges in the subtree. After using all external edges in the subtree, all children are satisfied and have demand at most 0, but $R$ is not satisfied before using up all external edges in the path because $R$ is the only black node in the path. So there exists a set with terminal set $R$ that has positive demand and no violator, which means it is extreme.
\end{proof}
}

\subsection{Augment Connectivity by One in the Degree-Constrained Setting}
\label{sec:deg-augment-1}
Similar to the unconstrained degree case, we are left with the problem of augmenting Steiner connectivity by 1 (from $\tau-1$ to $\tau$) after augmentation chains have been added. As input to this problem, we are given the graph $G$ with Steiner connectivity $\tau-1$, degree constraint $\beta$, and the family $K$ of the terminal sets of all supreme sets.
We solve this problem by a reduction to the unconstrained setting and a modification of \Cref{alg:augment-1}.

The sets in $K$ with demand $1$ correspond to the terminal sets of minimal Steiner minimum cuts in $G$. These are also the nodes in the skeleton structure described in \Cref{sec:augment-1}. 
Let $K'$ be the sub-family of $K$ of sets with demand 1. That is, each $k\in K'$ is the terminal set of a minimal Steiner min cut. We do not know the supreme sets of an arbitrary terminal projection, but in this special case we can find the supreme sets efficiently for terminal sets in $K'$ using isolating cuts as follows. Given a graph with terminals, isolating cuts framework finds the minimum cuts separating each terminal from all other terminals in $O(\log n)$ max flow calls \cite{LiP20deterministic, AbboudKT21}.

Contract each set $k\in K'$, and also contract $T\setminus K'$. Run isolating cuts with these contracted nodes as terminals. By \Cref{lem:supreme-mincut}, for each $k\in K'$, $\super(k)$ is the earliest $k$-$(T\setminus k)$ min cut. Therefore, if we find minimal isolating cuts, we get the supreme sets $\super(k)$ for each $k\in K'$.


\begin{lemma}\label{lem:deg-aug-1-surrogate}
If the degree-constrained augmenting connectivity by 1 problem is feasible, then for each $k\in K'$, there exists $v_k\in \super(k)$ such that $\beta(v_k)\geq 1$.
\end{lemma}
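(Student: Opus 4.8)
The plan is to read off the claim directly from the feasibility hypothesis together with \Cref{lem:supreme-mincut}, so the argument is essentially a one-line cut computation in the external graph. The only modelling point that needs care is which graph we are working in: here $G$ denotes the graph obtained after all augmentation chains have been added, and $\beta$ is the residual degree constraint, so $\super(k)$ should be interpreted as the supreme set of terminal set $k$ in this current graph $G$ (which is exactly the set produced by the isolating-cuts computation described just before the lemma).

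First I would pin down the value of the cut $\super(k)$. Since $k\in K'$, the set $k$ is the terminal projection of a minimal Steiner min cut of $G$, whose value equals the current Steiner connectivity $\tau-1$; hence $\lambda(k,T\setminus k)=\tau-1$. By \Cref{lem:supreme-mincut}, $\super(k)$ is the earliest $k$-$(T\setminus k)$ min cut, so it is a Steiner cut with $x\notin\super(k)$ and $d_G(\super(k))=\tau-1$. Next I would invoke feasibility: by the definition of feasibility of the degree-constrained problem (equivalently, by \Cref{lem:deg-split-chain-feasible}), the external graph $G^{ext}$ formed from $G$ by adding a vertex $x$ and, for every $u\in V$, an edge $(u,x)$ of capacity $\beta(u)$ has Steiner connectivity at least $\tau$. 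Since the only external edges crossing the Steiner cut $\super(k)$ are the edges $(u,x)$ with $u\in\super(k)$, this gives
\[
\tau \;\le\; d_{G^{ext}}(\super(k)) \;=\; d_G(\super(k)) + \sum_{u\in\super(k)}\beta(u) \;=\; (\tau-1) + \beta(\super(k)),
\]
so $\beta(\super(k))=\sum_{u\in\super(k)}\beta(u)\ge 1$, and therefore some vertex $v_k\in\super(k)$ satisfies $\beta(v_k)\ge 1$.

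There is no real obstacle here; the proof is short. The subtle step to get right is the bookkeeping on which $\super(\cdot)$ and which $\beta(\cdot)$ are meant (the post-chain graph and the residual constraint, not the originals), and the observation that $d_G(\super(k))$ is exactly $\tau-1$ because $k\in K'$ forces $\super(k)$ to be a Steiner min cut of the current graph. I would also note explicitly that if $K'=\emptyset$ the statement is vacuous, which covers the corner case where the Steiner connectivity is already $\tau$ after adding the chains.
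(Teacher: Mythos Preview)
Your proof is correct and follows essentially the same argument as the paper: feasibility gives $d(\super(k))+\beta(\super(k))\ge\tau$, and since $k\in K'$ forces $d(\super(k))=\tau-1$, one obtains $\beta(\super(k))\ge 1$ and hence some $v_k$ with $\beta(v_k)\ge 1$. The paper's proof is terser (it just says ``by definition of $K'$'' for the value of $d(\super(k))$), but the content is identical; your added care about which graph and which $\beta$ are meant is helpful exposition rather than a different approach.
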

\begin{proof}
Because the problem is feasible and $\super(k)$ is a Steiner cut, $d(\super(k))+\beta(\super(k))\ge \tau$, where $\beta(\super(k))=\sum_{u\in \super(k)}\beta(u)$. $d(\super(k))=\tau-1$ by definition of $K'$, so there exists a $v_k\in \super(k)$ with $\beta(v_k)\ge 1$.
\end{proof}

Recall that in \Cref{alg:augment-1}, we randomly match a pair of demand-1 extreme sets $\super(k), \super(k')$, and add a matching edge connecting arbitrary terminals in $k$ and $k'$. Consider modifying the algorithm in the following way. For each $k\in K'$, choose a surrogate $v_k\in \super(k)$ according to \Cref{lem:deg-aug-1-surrogate}. When matching a pair of extreme sets $\super(k)$ and $\super(k')$, instead of connecting arbitrary terminals, we add an edge $(v_k, v_{k'})$ connecting the surrogates. \Cref{lem:surrogate-feasible} shows that this new matching on surrogates is an optimal solution.

\begin{lemma}\label{lem:surrogate-feasible}
Given an instance of degree-constrained augmentation by 1 problem, remove the degree constraint and run \Cref{alg:augment-1} to get an output $F$. If we replace every edge in $F$ by its surrogate edge to form a new matching $F'$, then $F'$ is an optimal solution to the degree-constrained augmentation by 1 problem.
\end{lemma}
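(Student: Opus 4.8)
The plan is to establish two facts about $F'$: that it is a \emph{feasible} solution to the degree-constrained augment-by-one problem (it respects $\beta$ and raises the Steiner connectivity to at least $\tau$), and that its total weight $\lceil|K'|/2\rceil$ matches a universal lower bound, so it is optimal. The feasibility part is the real content, and it reduces to a single structural observation: replacing a matched pair of terminals by the corresponding pair of surrogates does not change which Steiner minimum cuts the matching edge crosses.

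First I would prove that structural claim. Recall that each $k\in K'$ is the terminal projection of a minimal Steiner min cut, and by \Cref{lem:supreme-mincut} that minimal Steiner min cut is exactly $\super(k)$; in particular $\super(k)$ is an extreme set. I claim that for every Steiner min cut $(S,\Sbar)$ of $G$ and every $k\in K'$, either $\super(k)\subseteq S$ or $\super(k)\subseteq\Sbar$. Indeed, by \Cref{lem:mincut-uncross-extreme} the Steiner min cut $(S,\Sbar)$ does not cross the extreme set $\super(k)$; the only alternative to containment on one side is that $S$ or $\Sbar$ is a proper subset of $\super(k)$, but such a set would be a Steiner cut of value $\tau-1=d(\super(k))$ lying strictly inside $\super(k)$, contradicting that $\super(k)$ is a \emph{minimal} Steiner min cut. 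Since $k=\rho(\super(k))\subseteq\super(k)$ and the surrogate $v_k$ lies in $\super(k)$ by \Cref{lem:deg-aug-1-surrogate}, the arbitrary terminal $a_k\in k$ chosen by \Cref{alg:augment-1} and $v_k$ always sit on the same side of $(S,\Sbar)$ (the side containing $\super(k)$). Hence, for each matched pair, the edge $(a_k,a_{k'})$ of $F$ crosses $(S,\Sbar)$ exactly when the surrogate edge $(v_k,v_{k'})$ of $F'$ does. Since $F$ is feasible in the unconstrained setting it crosses every Steiner min cut of $G$ (this is what \Cref{alg:augment-1} guarantees, cf.\ \Cref{lem:matching-feasible} and \Cref{lem:cyclic-order}), so $F'$ crosses every Steiner min cut of $G$ as well; adding $F'$ therefore raises every cut of value $\tau-1$ to at least $\tau$ and creates no new cut of smaller value, i.e.\ it augments the Steiner connectivity to at least $\tau$.

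Next I would check degree feasibility. Each $k\in K'$ has demand $1$, so in the matching $F$ the set $\super(k)$ is incident to exactly one edge, and $\beta(v_k)\ge 1$ by \Cref{lem:deg-aug-1-surrogate}; as the $\super(k)$ are pairwise disjoint, the increments stay within $\beta$. The one exception is when $|K'|$ is odd and \Cref{alg:augment-1} matches one node twice: I would choose that doubled node to be some $k_0$ with $\beta(\super(k_0))\ge 2$ — such $k_0$ exists because $\sum_u\beta(u)$ is even (even at the start of external augmentation, and each augmentation chain removes an even amount of residual external degree) while $\beta(\super(k))\ge 1$ on each of the $|K'|$ disjoint sets and $|K'|$ is odd, forcing $\sum_u\beta(u)\ge|K'|+1$ — and route its two incident surrogate edges to two (possibly equal) vertices of positive residual budget in $\super(k_0)$; the cut argument above is unchanged since both endpoints still lie inside $\super(k_0)$. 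Thus $F'$ is feasible. Finally, $|F'|=|F|=\lceil|K'|/2\rceil$, and this is a lower bound even without degree constraints: any feasible solution must augment all $|K'|$ pairwise disjoint Steiner min cuts $\{\super(k)\}_{k\in K'}$, an added edge of weight $w$ raises the value of at most two of these disjoint cuts by $w$, so the total weight is at least $|K'|/2$, hence at least $\lceil|K'|/2\rceil$ by integrality; therefore $F'$ is optimal.

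The step I expect to be the main obstacle is the structural claim together with the odd-$|K'|$ bookkeeping: one must be careful that \Cref{lem:mincut-uncross-extreme} applies to \emph{every} Steiner min cut (not only minimal ones) and that the minimality of $\super(k)$ is exactly what rules out the degenerate containment case, and that the surrogate substitution never accidentally turns a crossing edge into a non-crossing one or overspends the degree budget. Once the principle ``a Steiner min cut cannot see inside $\super(k)$'' is in hand, the rest is a direct transfer of the unconstrained correctness argument.
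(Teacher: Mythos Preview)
Your approach matches the paper's: both hinge on \Cref{lem:mincut-uncross-extreme} to argue that any Steiner min cut $S$ has each $\super(k)$ entirely on one side, so the terminal chosen by \Cref{alg:augment-1} and the surrogate $v_k$ are never separated, and hence $F'$ crosses exactly the Steiner min cuts that $F$ does. Your containment sub-case (ruling out $S\subsetneq\super(k)$) is handled in the paper implicitly by extremeness of $\super(k)$ rather than minimality, but your version is equally valid. For optimality the paper takes a slightly cleaner route---it just observes that dropping the degree constraint cannot increase the optimum, so $|F'|=|F|$ being optimal unconstrained already makes it optimal constrained---whereas you re-derive the lower bound from the disjoint family $\{\super(k)\}_{k\in K'}$; both work.

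One caveat on your odd-$|K'|$ bookkeeping: the inference ``$\sum_u\beta(u)\ge|K'|+1$, hence some $\beta(\super(k_0))\ge 2$'' does not follow as stated, because residual budget can sit outside $\bigcup_{k\in K'}\super(k)$; and the parity premise itself is delicate, since the tight $\beta$ coming out of \Cref{sec:deg-external} (total $\sum_{R\in L_1}\rdem(R)$) need not have even sum. The paper's proof does not address the odd case explicitly either (it only invokes \Cref{lem:deg-aug-1-surrogate}, which gives $\beta(v_k)\ge 1$, not $\ge 2$), so this is a shared loose end rather than a defect specific to your write-up; in the full algorithm it is absorbed by the surrounding parity conventions (cf.\ the extra external edge added in \Cref{sec:splitoffview}).
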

\begin{proof}
For any Steiner min cut $S$, we prove that it is augmented by $F'$, so that $F'$ is a feasible solution.
Because $S$ is augmented by $F$, there exists an edge $(u, v)\in F$ across $S$, for some $u\in \super(k), v\in \super(k')$, where $\super(k), \super(k')$ are demand-1 supreme sets. Then $k,k'\in K'$. By \Cref{lem:mincut-uncross-extreme}, $S$ does not cross $\super(k)$ and $\super(k')$. So for surrogates $v_k\in \super(k)$ and $v_{k'}\in \super(k')$, edge $(v_k, v_{k'})\in F'$ also augments $S$. The surrogates have vacant degree by \Cref{lem:deg-aug-1-surrogate}. Therefore $F'$ is a feasible solution.

Notice that removing the degree constraint cannot decrease the optimal value, because an optimal solution to degree-constrained augmentation by 1 problem is also feasible to the corresponding unconstrained problem. Because $F$ is an optimal solution to the unconstrained problem, and $F'$ has the same value as $F$, $F'$ is an optimal solution.
\end{proof}

\eat{
\begin{lemma}\label{lem: surrogates}
Let $v_k\in R_k$. For any Steiner cut $S$ of $G$ of value $\tau-1$, the vertices $v_k$ and $k$ are on the same side of $S$. \textcolor{blue}{Is $k$ a terminal or supreme set?}
\end{lemma}
\begin{proof}
Assume otherwise that there is a Steiner min cut $S$ separating $v_k$ and $k$. That is, $d(S)=\tau-1$ and $v_k\in S$, $k\not\in S$. Let $S$ be minimal. Let $k\in S$. Let $v_k\in R_k$. We claim that for any such $S$ and $k$, we must have $v_k\in S$. Now we contradict minimality of $R_k$ as a minimum $k$-isolating cut in $G$, since 
\begin{align*}
    d(S\cap R_k)&\leq d(S)+d(R_k)-d(S\cup R_k)\\
    &\leq 2\tau-2-d(S\cup R_k)\\
    &\leq \tau-1,
\end{align*}
and $S\cap R_k\subsetneq R_k$. 
\end{proof}

Fix $G$ with Steiner connectivity $\tau-1$. Now let $G'$ be the graph formed by contracting $v_k$ and $k$ together for each $k\in K$. For some set of new edges $F'=\{(k_1,k_2'),\dots,(k_t,k_t')\}$ in $G'$, let $F$ be the set of new edges in $G$ given by $F=\{(v_{k_1},v_{k_1'}),\dots,(v_{k_t}v_{k_t'})\}$.
\begin{lemma}
Fix a set of edges $F'=\{(k_1,k_2'),\dots,(k_t,k_t')\}$. The Steiner connectivity of $G\cup F$ is equal to $\tau$ if and only if the Steiner connectivity of $G'\cup F'$ is equal to $\tau$.
\end{lemma}\label{lem: augment G'}
\begin{proof}
Since $G'\cup F'$ is obtained by contractions of edges in $G\cup F$, the Steiner connectivity of $G'\cup F'$ is at least the Steiner connectivity of $G\cup F$. This proves one direction. For the other direction, assume that the Steiner connectivity of $G'\cup F'$ is $\tau$. Assume for sake of contradiction that there exists a Steiner cut $S$ in $G\cup F$ with value $\tau-1$. If for all $k\in K$, the vertices $k$ and $v_k$ are not separated by $S$, then $S$ gives a Steiner cut in $G'\cup F'$ with value $\tau-1$, a contradiction. Therefore, there exists some $k$ such that wlog $k\in S$ and $v_k\in \overline S$, contradicting Lemma \ref{lem: surrogates}.
\end{proof}
}

Next we analyze the running time.
Finding the surrogates $v_k$ takes time $\tO(F(m,n))$ for isolating cuts and $O(n)$ for picking the surrogates. Running \Cref{alg:augment-1} takes $\tO(F(m,n))$. So the total running time is $\tO(F(m,n))$.

\eat{
Let $F'$ be the set of edges in the augmentation in $G'$. Let $G\cup F$ have vertex set $V(G)$ and edge set $E(G)\cup F$, where $F$ is obtained from $F'$ replacing each $(w,\{v_k,k\})\in F'$ with $(w,v_k)$. By \Cref{lem: augment G'}, the Steiner connectivity of $G$ is $\tau$ if and only if the Steiner connectivity of $G'$ is $\tau$. Therefore, it suffices to solve the following problem in time $\tO(F(m,n))$.

\begin{problem}
Given a graph $G$ with terminal set $T$, augment the Steiner connectivity of $G$ by 1, where each minimal Steiner min cut in $G$ has degree constraint 1.
\end{problem}

To solve this problem, we use a slight variation of Algorithm \ref{alg:augment-1}, given in Algorithm \ref{alg:augment-1 degree const}. The difference is that the vertices in $G'$ on which we are finding matchings may have available degree at most 1. Therefore, we are only allowed to match each vertex once, and we have moved the removal of endpoints in the matching into the loop that runs until the matching is a feasible partial solution.

\begin{algorithm}[t]
\caption{Augment Steiner connectivity by 1, degree constrained.}
\label{alg:augment-1 degree const}
\SetKwInOut{Input}{Input}
\SetKwInOut{Output}{Output}
\setcounter{AlgoLine}{0}
\Input{Graph $G$, terminal set $T$, demand-1 extreme sets family $K$ as a partition of $T$.}
\smallskip
Contract every set in $K$ to be a single vertex.\\
\If{$|K|$ is odd}{Pick arbitrary $u,v\in K$, add edge $\{u, v\}$, and remove $u$ from $K$.}
\While{$|K|\ge 4$}{
\Repeat{$M$ is a feasible partial solution\label{line:feasible degree const}}
{Randomly match $\lfloor \frac{|K|}{4}\rfloor$ pairs of nodes in $K$. Let $M$ be the partial matching.

Remove endpoints in $M$ from $K$}
}
Match the last two nodes of $K$.
\end{algorithm}

We now proceed with proving the correctness and running time of Algorithm \ref{alg:augment-1 degree const}. We restate the following lemma about the structure of the Steiner min cuts of $G'$.

\begin{lemma}
\label{lem:cyclic-order restated}
There exists a cyclic order of $K$ such that any Steiner min cut partitions the order into two intervals.
\end{lemma}
\subsection{Correctness}
As presented in \Cref{alg:augment-1 degree const}, the algorithm runs in phases, each phase adding a partial matching. A matching of $K$ is feasible if it augments every Steiner min cut. Call a partial matching feasible if it is contained in an feasible matching of $K$. Because we always end a phase with a feasible partial solution, there always exists a feasible matching containing the current partial solution. This invariant holds when $|K|$ is reduced to 2, in which case matching the last two nodes is the only solution and must be a feasible solution.

The ignored detail about checking whether $M$ is a feasible partial solution in \Cref{line:feasible degree const} is explained in \Cref{lem:matching-feasible degree const}.

\begin{lemma}
\label{lem:matching-feasible degree const}
Form a new graph $G^*$ by adding a node $x$ and adding a star connecting $x$ and $K\setminus V(M)$. $M$ is a feasible partial solution if and only if the Steiner min cut of $G^*$ is $\tau$.
\end{lemma}

\subsection{Running Time}

The algorithm runs in phases, each phase reduce $|K|$ from $2t$ to $2t-2\lfloor \frac{2t}{4}\rfloor \le t+1$ ($|K|$ is always even), so the algorithm runs $O(\log n)$ phases. Each phase makes several trials until getting a feasible partial solution. By \Cref{lem:matching-success-prob degree const}, the success probability is at least $\frac 13$, so each phase ends in $O(\log n)$ trials whp. In each trial, sample the random matching takes $O(n)$ time; Checking the feasibility calls a Steiner min cut according to \Cref{lem:matching-feasible degree const}, which runs poly-log max flow. In conclusion, the total running time is $\tO(F(m,n))$.

\alert{The following is the same lemma and proof from Section 5. While the algorithm in Section 5 does not have the adjustment of removing endpoints after matching (to accomodate degree constraints of 1 on the $k$), the new algorithm does. However, the analysis of the previous algorithm actually analyzed the new algorithm.}

\begin{lemma}
\label{lem:matching-success-prob degree const}
In each phase, the probability that $M$ is a feasible partial matching is at least $\frac 13$.
\end{lemma}
\begin{proof}
Generate $M$ by picking one uniformly random unmatched pair of nodes at each step. In each step, match the picked pair, and remove them from the cyclic order. Let $|K|=k$ initially, then $k-2i$ unmatched nodes remain after $i$-th step. 

Let $A_i$ be the event that in the $i$-th step, the picked pair $(u,v)$ is not adjacent in the cyclic order. $\Pr[A_i]=\frac{k-2i-1}{k-2i+1}$. Because each step is independent,
\[\Pr[A_1\land A_2\land\ldots \land A_{\lfloor k/4\rfloor}]= \frac{k-3}{k-1}\times \frac{k-5}{k-3}\times \ldots \times \frac{k-2\lfloor k/4\rfloor-1}{k-2\lfloor k/4\rfloor+1} = \frac{k-2\lfloor k/4\rfloor-1}{k-1}\ge \frac 13\]
when $k\ge 4$.

Finally, we prove that when no step matches adjacent nodes, the partial solution is feasible. Assume for contradiction that all events $A_i$ happen, but the output partial matching is not feasible. Then by \Cref{lem:matching-feasible}, there exists a Steiner min cut $(S,\Sbar)$ such that all leaves in its one side $S$ are matched. Consider the last matched leaf $u$ in $S$. Assume $u$ is matched with $v$. Because cut $(S,\Sbar)$ is not augmented, $v\in S$. $S$ forms an interval in the cyclic order by \Cref{lem:cyclic-order}. $u, v$ are the last two matched nodes in $S$, so they are adjacent in cyclic order when matched, which contradicts the assumption.
\end{proof}

\eat{
\begin{lemma}
If $U$ is a new extreme set created by an augmentation chain $F$, and $U\cap X_i\ne\emptyset$ for some supreme set $X_i$ with demand at least 2, then $U\supseteq \rho(X_i)$.
\end{lemma}
\begin{proof}
Assume for contradiction that $\rho(X_i)\setminus U\ne\emptyset$. Then $X_i\setminus U$ is a Steiner cut and 
\begin{equation}
    d(X_i\setminus U)>d(X_i).
\end{equation}
Case I: There is a terminal in $U\setminus X$. Then $d'(U\setminus X_i)>d'(U)$,
\[d'(X_i\setminus U)+d'(U\setminus X_i)-d_F(X_i\setminus U)>d'(U)+d'(X_i)-d_F(X_i)+1\]
Using submodularity $d'(X_i\setminus U)+d'(U\setminus X_i)\le d'(U)+d'(X_i)$, we have \[d_F(X_i)>d_F(X_i\setminus U)+1\]
which means $d_F(X_i)=2, d_F(X_i\setminus U)=0$.
\end{proof}}}


\bibliographystyle{plain}

\bibliography{refs}

\appendix
\section{Example of Graph with Exponentially Many Extreme Sets}\label{sec:exponential}

The example is given in \Cref{fig:exponential}. All vertex sets of the form $\{s\}\cup \{v_i: i\in S\}$ for any $S\subseteq [n-2]$ are extreme sets. This is because their cut size is $n-|S|+1$ whereas any strict subset containing terminals must be of the form $\{s\} \cup \{v_i: i\in S'\}$ for some $S'\subset S$. The latter vertex set has cut value $n-|S'|+1$, which is strictly larger than $n-|S|+1$.

\begin{figure}[t]
    \centering
    \includegraphics[width=.4\textwidth]{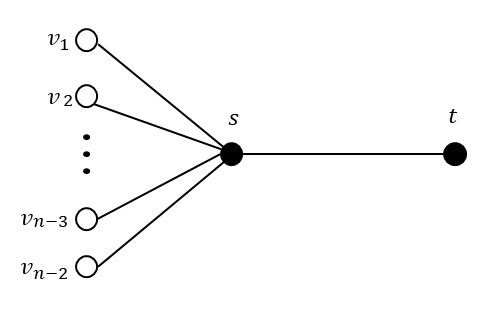}
    \caption{A graph with exponentially many extreme sets. The solid vertices $s,t$ are terminals and the empty vertices $v_1, v_2, \ldots, v_{n-2}$ are nonterminals.}
    \label{fig:exponential}
\end{figure}

\section{Augmenting Connectivity by Matching: Illustrative Example of a Cycle}

\begin{figure}
    \centering
    \begin{subfigure}[b] {\textwidth}
    \centering
    \includegraphics[width=0.6\textwidth]{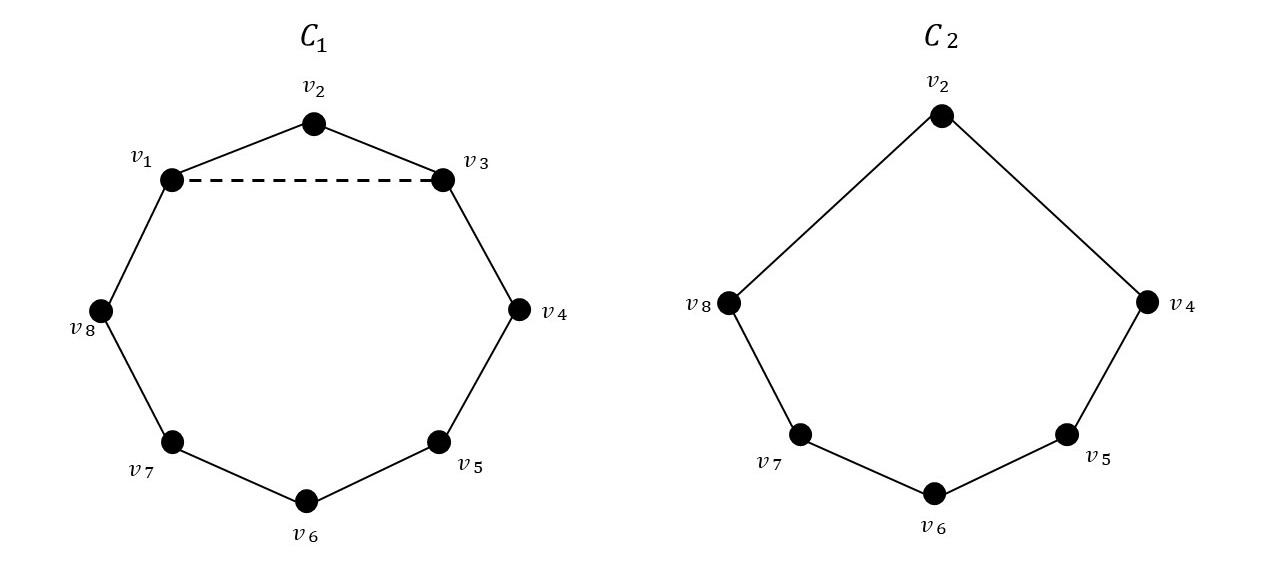}
        \caption{Here $C_1$ is augmented by one edge $v_1-v_3$. After this edge is added, augmenting $C_2$ to connectivity 3 gives an augmentation of $C_1$ to connectivity 3, so we have reduced our problem to an augmentation on a smaller graph.}
        \label{fig:augment-cycle-1}
    \end{subfigure}
    \begin{subfigure}[b]  {\textwidth}     \centering     \includegraphics[width=0.6\textwidth]{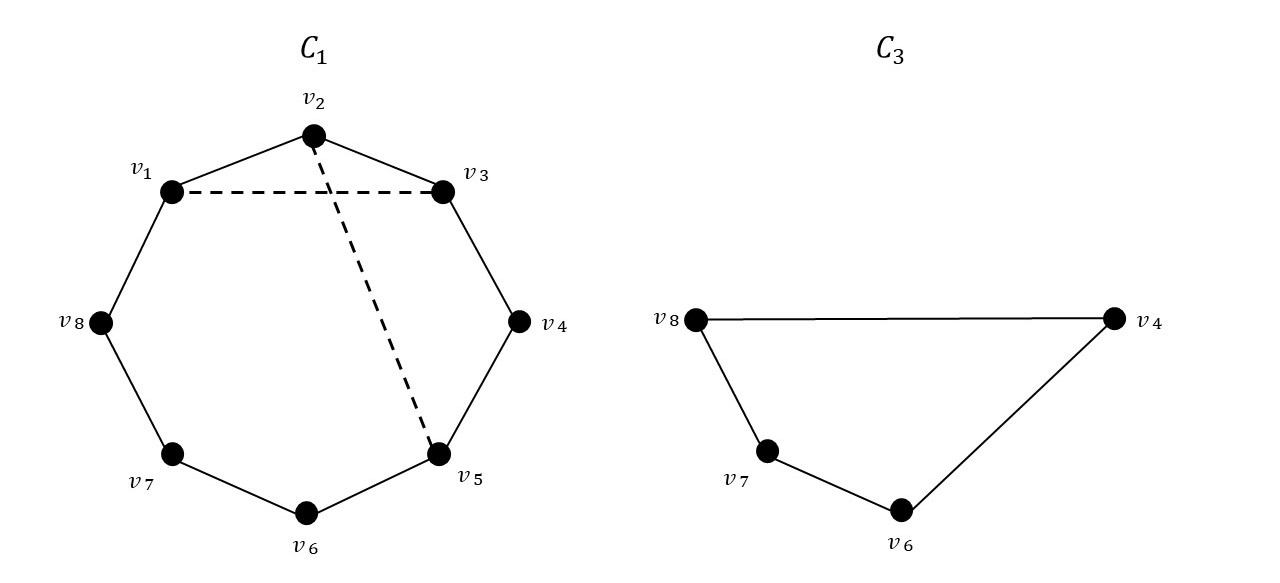}
        \caption{First add $v_1-v_3$ to $C_1$ to result in a reduction of augmenting $C_1 $ to connectivity 3 to augmenting $C_2$ as in \Cref{fig:augment-cycle-1}. Then adding the edge $v_2-v_5$ to $C_2$ reduces the augmentation of $C_1$ to connectivity 3 further to augmenting $C_3$ to connectivity 3.}
        \label{fig:augment-cycle-2}
    \end{subfigure}
    \begin{subfigure}[b]{\textwidth}\centering
    \includegraphics[width=0.6\textwidth]{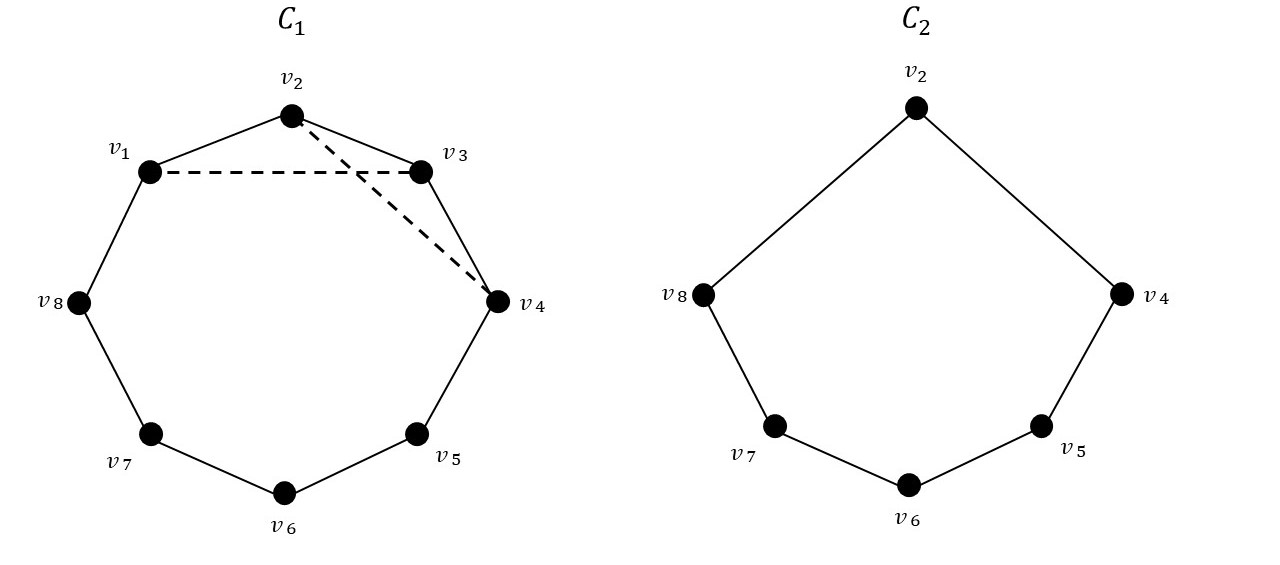}
        \caption{After adding the edge $v_1-v_3$ we reduce augmenting $C_1$ to augmenting $C_2$. But now, the edge $v_2-v_4$ does not result in a good reduction of augmentation of $C_2$ to a smaller problem because $v_2$ and $v_4$ are adjacent.}
        \label{fig:augment-cycle-3}
    \end{subfigure}
    \caption{Augmenting the connectivity of a cycle by 1}
    \label{fig:augment-cycle}
\end{figure}

We are interested in how to augment the connectivity of a cycle from $2$ to $3$. Our running example will be the cycle graph on 8 vertices, which we call $C_1$. \Cref{fig:augment-cycle-1} shows a case in which we add one edge to a cycle graph as a partial solution. Then we claim that after adding the edge $(v_1,v_3)$, in order to augment the connectivity of $C_1$ to 3, it suffices to augment the connectivity of $C_2$ to 3. This is because the presence of the $(v_1,v_3)$ edge allows us to shortcut the edges $(v_2,v_8)$ and $(v_2,v_4)$, since all cuts of value 2 in $C_1\cup \{(v_1,v_3)\}$ are accounted for as cuts of value 2 in $C_2$. 

The same reasoning applies to any edge $(v_i,v_j)$ between non-adjacent vertices in the cycle. However, notice that since $v_i$ and $v_{i+1}$ are adjacent in $C_1$, adding the edge $(v_i,v_{i+1})$ does not allow us to shortcut both vertices by deleting $v_i$ and $v_{i+1}$ and adding the edge $(v_{i-1},v_{i+2})$, since $\{v_i,v_{i+1}\}$ would be a cut of value 2 not accounted for in the shortcut version.

In \Cref{alg:augment-1} we actually add edges in batches of $\lceil\frac{n}{4}\rceil$ edges at a time, where $n$ is the number of vertices in the cycle, so it is important to understand which edges are compatible with each other. Consider the example in \Cref{fig:augment-cycle-2}. After adding the edge $(v_1,v_3)$ we can delete and shortcut those two vertices as discussed previously, and effectively work with $C_2$ instead. Now, the second edge given in the example is $(v_2,v_5)$, which in $C_2$ connects two non-adjacent vertices. Then, applying the same reasoning, we can delete and shortcut $v_2$ and $v_5$, and work with $C_3$ from \Cref{fig:augment-cycle-2} instead. This is because the two edges to be added were compatible with each other in the sense that there is a sequence of them such that in the sequence of deletions and shortcuts, each edge connects non-adjacent vertices in the sequence of simplified cycles.

On the other hand, \Cref{fig:augment-cycle-3} displays an example in which two edges are not compatible. That is, once we add one edge and apply deletion and the shortcut, the other edge connects adjacent vertices in the resulting graph. 

In \Cref{sec:augment-1} we show that any compatible sequence of additional edges can be completed to an optimal connectivity augmentation of any graph by 1. Then, we analyze the probability that a randomly selected partial matching is compatible in the sense that we have just described.

\section{Data Structures for Lazy Updates}
\label{sec:lazy}

In this section, we sketch the running time analysis using lazy data structures in Cen et al.\ \cite{CenLP22a} for maintaining supreme sets tree while the algorithm adds augmentation chains. That gives proofs for \Cref{thm: aug-chain-runtime} and \Cref{thm: runtime-deg-constr-aug-chain}.

\eat{\begin{lemma}[\cite{goldberg1991use}]\label{lem: data structure}
    Given a tree $T$ and a function $c:V(T)\to \mathbb{R}$, there is a data structure such that the following operations on $T$ can all be done in $O(\log n)$ time, where $n=|V(T)|$.
    \begin{enumerate}
        \item \emph{AddPath}($a,b,x$): Set $c(v)=c(v)+x$ for all $v$ on the unique path in $T$ from $a$ to $b$.
        \item \emph{MinPath}($a,b$): Return the minimum value of all vertices on the $a-b$ path in the tree, and
        \item \emph{MinSubtree}($a$): Return the minimum value of all vertices in the subtree rooted at $a$. 
    \end{enumerate}
\end{lemma}}

\begin{proof}[Proof of \Cref{thm: aug-chain-runtime}]
    We begin with an initial augmentation chain $\{(a_i,b_{i+1})\}_{1\leq i\leq r-1}$ with $a_i,b_i\in R_i$, where each $R_i$ is a root in the supreme sets forest with demand at least 2. This chain can be calculated in $O(n)$ time. We add it until it is no longer an augmentation chain, in which case we say it expires.
    This happens when one of the following occurs:
    \begin{enumerate}
        \item\label{out of vacant degree} Some vertex $a_i$ uses up all of its vacant degree.
        \item\label{demand satisfied} For some $i\le r$, $\text{dem}(R_i)\leq 1$.
        \item\label{no longer supreme} For some $i$ and $W\in \text{ch}(R_i)$, $d(W)\geq d(R_i)$.
    \end{enumerate}
    At this point, we replace $\{(a_i,b_{i+1})\}_{1\leq i\leq r-1}$ with a new augmentation chain $\{(a_i',b_{i+1}')\}_{1\leq i\leq r-1}$ and keep using the new chain until it expires again. Each of the three cases 
    can occur at most $n$ times because each vertex can use up degree once, and each node in the supreme sets forest can be deleted once. 

    Given that an expiration occurred, we first show how to select a new augmentation chain. If expiration occurred because of Case \ref{out of vacant degree}, we replace the terminal $a_i$ with no more vacant degree by another vacant vertex in the same supreme set. This takes constant time. If Case \ref{demand satisfied} occurs so that some $X_i$ has demand at most 1, then we remove the edges $(a_i,b_{i+1})$ and $(a_{i-1},b_i)$ from the augmentation chain and replace them by an edge $(a_{i-1},b_{i+1})$. This also takes constant time. If Case \ref{no longer supreme} occurs for some $R_i$, then we replace $(a_i,b_{i+1})$ and $(a_{i-1},b_i)$ by a partial augmentation chain connecting all children of $R_i$. The running time is proportional to the number of children. Since each node is added at most once, the total cost of Case \ref{no longer supreme} is $O(n)$.
    
    When some edge expires and gets updated, other chain edges remain and contribute to the cut values.
    So instead of maintaining the exact cut values of the supreme sets, we maintain them lazily by recording the linear growth (in the number of times an augmentation chain is added) of cut values. The expiration time does not change as long as the incident chain edges remain the same. When an edge of the augmentation chain is modified, we calculate the new expiration time of the chain as follows in time $O(\log(n))$. Each update of chain edges costs constant data structure operations, and there are $O(n)$ chain updates in total, so the total running time is $O(n\log n)$. 

    Now we show how to determine the expiration time of a new augmentation chain given by modifying one edge of an old augmentation chain. This is easy given the correct cut values for all supreme terminal sets, because we have that the expiration time for the chain $F$ is $t_F=\min(t_1,t_2,t_3)$, where
    \begin{align*}
        t_1&=\min_{v\in V}\lfloor{\frac{\beta(v)}{d_F(v)}}\rfloor,\\
        t_2&=\min_{i\in [r]}\lfloor \frac{\tau-d(X_i)}{d_F(X_i)}  \rfloor,\\
        t_3&=\min_{i\in[r]}\min_{W\in\text{ch}(X_i)}\lceil \frac{d(W)-d(X_i)}{d_F(X_i)-d_F(W)} \rceil.
    \end{align*}
    The values $t_1, t_2, t_3$ can be maintained by priority queues, so finding these minimums can be done in $O(\log(n))$ time. 
    
    Finally, we need to maintain the expiration time under an update of augmentation chain. 
    When adding a new chain edge, we first carry out previous lazy updates of relevant nodes to the current time, and then apply new lazy updates. The updates are uniformly adding a constant to the vacant degree and cut values of nodes on a tree path, so they can be handled in constant dynamic tree operations.
    (For each edge $(a_i,b_{i+1})$ in the chain $F$, the supreme sets that have their corresponding cut values increased are those on the forest path in $L$ from $a_i$ to $b_{i+1}$ excluding the least common ancestor of $a_i$ and $b_{i+1}$.)
\end{proof}

\begin{proof}[Proof of \Cref{thm: runtime-deg-constr-aug-chain}]
We show that the degree-constrained augmentation algorithm behaves in the same way as the unconstrained augmentation algorithm in terms of maintaining the supreme sets tree while greedily adding augmentation chains, despite several differences in algorithm details. It follows that we can use the same lazy data structures as in the unconstrained setting to get $\tO(n)$ running time.

We summarize the differences and remedies below.
\begin{enumerate}
    \item The algorithm works on the forest $L_{high}$ instead of $L$. So the data structure maintains the part of supreme sets forest corresponding to $L_{high}$.
    \item Because the chain edges incident on nonterminals and the supreme sets will change during the algorithm, we do not know whether an edge contributes to a supreme sets in degree-constrained setting. However, \Cref{lem:deg-vacant-in-current-supreme} shows that we can locate each vacant degree in a current supreme set projected to a node in $L_2$, so by laminarity of current supreme sets, we know the contribution of an edge to all supreme sets of projection in $L_{high}$. 
    \item The algorithm needs to remove a forest node when it is no longer the terminal projection of a supreme set. By \Cref{lem:deg-current-supreme-defined} and \Cref{lem:deg-delete-non-extreme}, this is detected by the algorithm. Moreover, the operation is the same as in unconstrained setting: by comparing whether $c(R)\ge c(W)$ for any node $R$ and its child $W$. So the same data structures can be applied to maintain the forest structure in degree-constrained setting.
\end{enumerate}
\end{proof}

\end{document}